\title{When are Stochastic Transition Systems Tameable?}
\titlerunning{When are Stochastic Transition Systems Tameable?}
\author[1]{Nathalie Bertrand}
\author[2]{Patricia Bouyer\thanks{Supported by ERC project EQualIS}}
\author[3]{Thomas Brihaye}
\author[2,3]{Pierre Carlier$^*$}
\affil[1]{Inria Rennes, France}
\affil[2]{LSV, CNRS \&  ENS Cachan, France}
\affil[3]{Universit\'e de Mons, Belgium}
\authorrunning{N. Bertrand, P. Bouyer, T. Brihaye, and P. Carlier}
\begin{document}

\maketitle

\begin{abstract}
A decade ago, Abdulla, Ben Henda and Mayr
introduced the elegant concept of decisiveness for denumerable Markov
chains~\cite{ABM07}.  Roughly speaking, decisiveness
allows one to lift most good properties from finite Markov chains to
denumerable ones, and therefore to adapt existing verification
algorithms to infinite-state models.  Decisive Markov chains however
do not encompass stochastic real-time systems, and general stochastic
transition systems (STSs for short) are needed. In this article, we
provide a framework to perform both the qualitative and the
quantitative analysis of STSs. 
First, we define various notions of decisiveness (inherited
from~\cite{ABM07}), notions of fairness and of attractors for STSs,
and make explicit the relationships between them.  Then, we define a
notion of abstraction, together with natural concepts of soundness and
completeness, and we give general transfer properties, which will be
central to several verification algorithms on STSs.  We further design
a generic construction which will be useful for the analysis of
$\omega$-regular properties, when a finite attractor exists, either in
the system (if it is denumerable), or in a sound denumerable
abstraction of the system. We next provide algorithms for qualitative
model-checking, and generic approximation procedures for quantitative
model-checking.
%
%
Finally, we instantiate 
our framework with stochastic timed automata (STA), generalized
semi-Markov processes (GSMPs) and stochastic time Petri nets (STPNs),
three models combining dense-time and probabilities.  This allows us
to derive decidability and approximability results for the
verification of these models. Some of these results were known from
the literature, but our generic approach permits to view them in a
unified framework, and to obtain them with less effort. We also derive
interesting new approximability results for STA, GSMPs and STPNs.

\end{abstract}

\newpage
\tableofcontents
\newpage



\section{Introduction}
Given its success for finite-state systems, the model checking
approach to verification has been extended to various models based on
automata, and including features such as time, probability and
infinite data structures. These models allow one to represent software
systems more faithfully, by representing timing constraints,
randomization, and \emph{e.g.} unbounded call stacks.  At the same
time, they often offer the possibility to consider \emph{quantitative}
verification questions, such as whether the best execution time meets
a requirement, or whether the system is reliable with high
probability.  Quantitative verification is notably hard for
infinite-state systems, and often requires the development of
techniques dedicated to each class of models.

A decade ago, Abdulla, Ben Henda and Mayr introduced the concept of
decisiveness for denumerable Markov chains~\cite{ABM07}.  Formally, a
Markov chain is decisive w.r.t. a set of states $F$ if runs
almost-surely reach $F$ or a state from which $F$ can no longer be
reached. The concept of decisiveness thus forbids some weird
behaviours in denumerable Markov chains, and allows one to lift most
good properties from finite Markov chains to denumerable ones, and
therefore to adapt existing verification algorithms to infinite-state
models. In particular, assuming decisiveness enables the quantitative
model checking of (repeated) reachability properties, by providing an
approximation scheme, which is guaranteed to terminate for any given
precision for decisive Markov chains. Decisiveness also elegantly
subsumes other concepts such as the existence of finite attractors, or
coarseness~\cite{ABM07}.

\medskip Decisive Markov chains however are not general enough to
represent stochastic real-time systems. Indeed, to faithfully model
time in real-time systems, it is adequate to use dense
time~\cite{alur91}, that is, timestamps of events are taken from a
dense domain (like the set of rational or of real numbers). This
source of infinity for the state-space of the system is particularly
difficult to handle: the state-space is non-denumerable (even
continuous), the branching in the transition system is also
non-denumerable, \textit{etc}. For those reasons, stochastic real-time
systems do not fit in the framework of decisive Markov chains
of~\cite{ABM07}.

Also, standard analysis techniques for non-stochastic real-time
systems (when they exist) cannot be easily adapted to stochastic
real-time systems. Traditionally, these techniques rely on the design
of appropriate finite abstractions, which preserve good properties of
the original model. A prominent example of such an abstraction is that
of the region automaton for timed automata~\cite{AD94}. However, these
abstractions 
do not preserve all quantitative properties and, in the context of
stochastic systems they may be too coarse already for the evaluation
of the probability of properties as simple as reachability properties.


\medskip A general framework to analyse a large class of stochastic
real-time systems, or more generally continuous stochastic systems, is
thus lacking. In this article, we face this issue and provide a
framework to perform the 
\emph{stochastic transition systems} (STSs for short). To do so, we
generalize the main concepts of~\cite{ABM07} (such as decisiveness,
attractors), and standard notions for Markov chains (like
fairness). STSs are purely stochastic (i.e. without non-determinism)
Markov processes~\cite{Pan01,Pan09},
that is, Markov chains with a continuous state-space. Note that, while
this journal version builds on the conference paper~\cite{BBBC16}, we
choose here to phrase our results for time-homogeneous and Markovian
models. As mentioned in~\cite{Pan01}, 
the Markovian assumption
is not a severe restriction since many apparently non Markovian
processes can be recast to Markovian models by changing the state
space. In our opinion, this choice furthermore enables the design of a
richer and more elegant theory (compared to~\cite{BBBC16}).



\medskip Our first contribution is to define various notions of
decisiveness (inherited from~\cite{ABM07}), notions of fairness and of
attractors in the general context of STSs.  To complete the semantical
picture, we make explicit the relationships between these notions, in the
general case of STSs, and also when restricting to denumerable Markov
chains. Decisiveness or the existence of attractors will be later
exploited to analyze properties for STSs.

As mentioned earlier, the analysis of real-time systems often requires
the development of abstractions. As a second contribution, we define a
notion of abstraction, which makes sense for STSs. Concepts of
soundness and completeness are naturally defined for those
abstractions, and general transfer properties are given, which will be
central to several verification algorithms on STSs. The special case
of denumerable abstractions is discussed, since it allows one to
transfer more properties from the abstract system to the concrete one.

We then focus on denumerable Markov chains with a finite attractor, or
more generally STSs admitting a sound abstraction satisfying this
property, and an $\omega$-regular property represented by a
deterministic Muller automaton. Our third contribution consists in
building a graph for the attractor, which contains enough information
to analyze the probability that the STS satisfies the property. This
is is completely new compared to the original results of~\cite{ABM07}
and our conference paper~\cite{BBBC16}. It is inspired by a procedure
of~\cite{ABRS05} for probabilistic lossy channel systems, a special
class of denumerable Markov chains with a finite attractor.

Our fourth contribution concerns the qualitative model checking
problem for various properties. In particular, we extend the results
of~\cite{ABM07} and show that, under some decisiveness assumptions,
the almost-sure model checking of (repeated) reachability properties
reduces to a simpler problem, namely to a reachability problem with
probability $0$. We advocate that this reduction simplifies the
problem: in countable models, the $0$-reachability amounts to the non
existence of a path, in the underlying non-probabilistic system;
beyond countable models, checking that a reachability property is
satisfied with probability $0$ amounts to exhibiting a somehow regular
set of executions with positive measure. Beyond (repeated)
reachability properties, we apply our above-mentioned approach via the
graph of an attractor for the qualitative analysis of $\omega$-regular
properties.

Our fifth contribution is the design of generic approximation
procedures for the quantitative model-checking problem, inspired by
the path enumeration algorithm of Purushothoman Iyer and
Narashima~\cite{IN97}. Under some decisiveness assumptions, we prove
that these approximation schemes are guaranteed to terminate.
Assuming the STSs can be represented finitely and enjoy some smooth
effectiveness properties, one derives approximation algorithms: one
can approximate, up to a desired (arbitrary) precision, the
probability of (repeated) reachability properties. Note that without
these effectiveness properties, one cannot hope for algorithms, and
this motivates our above formulation of ``procedures''.  Further, once
again via the use of the graph of an attractor, we design an
approximation algorithm for $\omega$-regular properties; this
algorithm reduces the quantitative analysis of an $\omega$-regular
property to the quantitative verification of a reachability property
in the concrete model. Up to our knowledge, this approach is
completely new, and provides an interesting framework for quantitative
verification of stochastic systems.

Our last contribution consists in instantiating our framework with
high-level stochastic models, stochastic timed automata (STA),
generalized semi-Markov processes (GSMP) and stochastic time Petri
Nets (STPN), which are three classes of models combining dense-time
and probabilities.  This allows us to derive decidability and
approximability results for their verification. Some of these results
were known from the literature, \emph{e.g.} the ones
from~\cite{BBB+14}, but our generic approach permits to view them in a
unified framework, and to obtain them with less effort. We also derive
interesting new approximability results for STA and GSMPs. In
particular, the approximability results implied by this paper for STA
are far more general than those obtained using an \emph{ad-hoc}
approach in~\cite{BBBM08}. In the case of STPNs, we also interestingly
embed the framework of~\cite{HPRV-peva12,PHV16} into our setting,
which allows to show that we can relax some assumptions while
preserving approximability results.

The paper concludes with an overview of our main results, organized as
a guided tour of the STSs: it summarizes the relationships between all
notions, and provides the reader recipes to analyze STSs.

In the interest of readability, most technical proofs are postponed to
the appendix, with clear pointers.  The emphasis is put on our new
approach to the  analysis of $\omega$-regular properties, which
remains in the core of the paper (Section~\ref{sec:main}).

\subparagraph*{Other related works.}  Apart from direct related works
that we have already mentioned (like~\cite{ABM07}), let us review
related work from the literature. First, in~\cite{DDP03}, approximants
are given, which rely on refinements of a partitioning of the
state-space of an STS (via conditional expectations).  However, there
is no stopping criteria if we want to turn these approximants to a
proper approximation scheme. And the approach is also very different.

Then, for specific classes of stochastic systems, approximation
algorithms exist, which do however focus more on expressing
mathematical properties of (integral) equations that one should solve,
not really on convergence of the schemes. Sometimes, strong conditions
are put on the system, so that convergence is obvious. This is for
instance the case of \cite{AB06,BA07,PHV16}.

The literature on stochastic hybrid systems is very rich, and since
there is little hope to have some decidability results, approximation
methods are very much developed. We give here some examples of works
that have been done, but this is obviously not
exhaustive. In~\cite{FHH+11}, an over-approximation method based on a
discrete abstraction is proposed for stochastic hybrid automata, but
no converging approximation scheme is provided. In~\cite{SMA16}, an
approximation (with some guarantee on the error made) is provided,
which can be used for time-bounded verification of safety
properties. Some other papers focus on discrete-time, allowing the use
of constraint-solving methods, see e.g.~\cite{FHT08}.

Continuous stochastic systems as mentioned above are hard to analyze:
first, it is difficult (and sometimes even impossible) to compute
  the exact value of the probability of some property (as simple as a
  reachability property) in such a system; and, for such complex
  systems, %
  there is no generic proofs of convergence for approximation schemes.
The key contribution of
the current paper is to identify conditions to have correct decision
procedures and approximation schemes, and to provide full proofs of
convergence and correctness.

\section{Preliminaries}
\label{sec:prelim}

In this section, we define the general model of stochastic transition
systems, which are 
Markov chains with a continuous state-space. This model corresponds to
labelled Markov processes of~\cite{Pan01} with a single action (hence
removing non-determinism). We then define several probability
measures, on infinite paths, but also on the state-space, which give
different point of views over the behaviour of the systems.  We
continue by defining regular measurable events, and end up with
defining deterministic Muller automata, and technical material for
handling properties specified by these automata.  In the interest of
space, for basics on probability and measure theory, we refer the
reader to~\cite{Pan01}.

\subsection{Stochastic transition systems}

Given $(S,\Sigma)$ a measurable space ($\Sigma$ is a $\sigma$-algebra
over $S$), we write $\Dist(S,\Sigma)$ for the set of probability
distributions over $(S,\Sigma)$. In the sequel, when the context is
clear, we will omit the $\sigma$-algebra and simply write this set as
$\Dist(S)$.

\begin{definition}
  A \emph{stochastic transition system} (STS) is a tuple $\calT =
  (S,\Sigma,\kappa)$ consisting of a measurable  space
  $(S,\Sigma)$, and $\kappa : S \times\Sigma \to [0,1]$ such that for
  every fixed $s \in S$, $\kappa(s,\cdot)$ is a probability measure
  and for each fixed $A \in \Sigma$, $\kappa(\cdot, A)$ is a
  measurable function.  Function $\kappa$ is the \emph{Markov kernel}
  of $\calT$.
\end{definition}
Note that it is sufficient to define $\kappa(s,\cdot)$ (for every $s
\in S$) over a subset which generates the $\sigma$-algebra $\Sigma$.

Observe that if $S$ is a denumerable set and $\Sigma=2^S$, then
$\calT$ is a denumerable Markov chain (DMC for short). If $S$ is
finite, the kernel $\kappa$ then coincides with the standard
probability matrix of the Markov chain. We now give two examples of
STSs.

\begin{example}[Denumerable Markov chain]\label{Example:DMCRandomWalk}
  The first example is the DMC depicted in
  Figure~\ref{Figure:RandomWalk}. We consider here
  $\calT_{\rw}=(S_\rw,\Sigma_\rw,\kappa_\rw)$ where
  \begin{itemize}
  \item $S_\rw=\IN$,
  \item $\Sigma_\rw=2^{S_\rw}$,
  \item for each $i\ge 1$, $\kappa_\rw(i, \lbrace i+1\rbrace)=p$ and
    $\kappa_\rw(i, \lbrace i-1\rbrace)=1-p$ with $p\in\intervaloo{0,1}$,
    and
  \item $\kappa_\rw(0,\lbrace 1\rbrace)=1$.
  \end{itemize}
  This represents a random walk~--~hence the index $\rw$~--~over the
  natural numbers.
\begin{figure}[!h]
  \centering
  \begin{tikzpicture}
    \tikzstyle{ptt}=[scale=1]
    \tikzstyle{loc}=[ptt,draw,circle,minimum size =1cm, thick];
    \tikzstyle{inv}=[ptt,circle,minimum size =1cm, thick];
    \tikzstyle{fleche}=[->,>=stealth', thick, rounded corners=1pt];
    \node[loc] (lzero) at (0,0) {$0$};
    \node[loc] (lun) at (2.2,0) {$1$};
    \node[loc] (ldeux) at (4.4,0) {$2$};
    \node[inv] (lint) at (6.6, 0) {$\cdots$};
    \draw[fleche] (lzero) to[bend left=30] node[ptt,midway, above] {$1$} (lun);
    \draw[fleche] (lun) to[bend left=30] node[ptt,midway, below] {$1-p$} (lzero);
    \draw[fleche] (lun) to[bend left=30] node[ptt,midway, above] {$p$} (ldeux);
    \draw[fleche] (ldeux) to[bend left=30] node[ptt,midway, below] {$1-p$} (lun);
    \draw[fleche] (ldeux) to[bend left=30] node[ptt,midway, above] {$p$} (lint);
    \draw[fleche] (lint) to[bend left=30] node[ptt,midway, below] {$1-p$} (ldeux);
  \end{tikzpicture} \caption{Random walk over $\IN$.}
  \label{Figure:RandomWalk}
\end{figure}
\end{example}

In the sequel, given a DMC $\calT=(S,\Sigma,\kappa)$ and two states
$s,s'\in S$, we will often write $\kappa(s,s')$ instead of
$\kappa(s,\lbrace s'\rbrace)$.

\begin{example}[Continuous-time Markov chain]
  \label{Example:Continuous} 
  We now give a continuous variant of the previous random walk which
  models a simple queueing system. Precisely, we consider a queueing
  system with a single queue, a parameter $\lambda$ for arrivals and
  $\nu$ for serving times. Each state $i\in\IN$ is equipped with a
  non-negative real number that corresponds to the time that has
  elapsed since the beginning. Formally, we consider
  $\calT_\qs=(S_\qs, \Sigma_\qs, \kappa_\qs)$ with
  $S_\qs=\IN\times\Rpos$. We equip $S_\qs$ with the $\sigma$-algebra
  generated by $2^{\IN}\times\calB(\Rpos)$ where $\calB(\Rpos)$ is the
  Borel $\sigma$-algebra on $\Rpos$. Then intuitively, $\kappa_\qs$
  describes how the length of the queue evolves with time. Formally,
  for each $t, d\in \Rpos$, 
  \[
  \kappa_\qs((0,t),\{1\} \times \intervalcc{0,t+d}) =
  \kappa_\qs((0,t),\{1\} \times \intervalcc{t,t+d})
  =\int_0^{d}\lambda e^{-\lambda x}\ud x
  \]
  and for every $i \ge 1$,
  \[
  \begin{array}{c}
    {\displaystyle  \kappa_\qs((i,t),\{i+1\}
      \times \intervalcc{0,t+d}) =
      \kappa_\qs((i,t),\{i+1\} \times \intervalcc{t,t+d}) =
      \int_0^{d}\lambda\,
      e^{-(\lambda+\nu) x}\ud x} \\
    {\displaystyle  \kappa_\qs((i,t),\{i-1\} \times \intervalcc{0,t+d}) =
      \kappa_\qs((i,t),\{i-1\} \times \intervalcc{t,t+d}) =
      \int_0^{d} \nu\,
      e^{-(\lambda+\nu) x}\ud x}
  \end{array}
  \]
\end{example}
There will be more examples of STSs with a continuous set of states in
Section~\ref{sec:appli}.

\medskip In the sequel, we fix an STS $\calT=(S,\Sigma,\kappa)$. We
will give two semantical views on the behaviour of $\calT$: the first
one is operational, in the sense that $\calT$ generates executions,
with a measure over these executions; the second one observes how the
state-space evolves over time.  The first point-of-view is the
standard semantics of probabilistic systems and is widely used in the
model-checking community, where the temporal aspects are
important. From a state, a probabilistic transition is performed
according to a fixed distribution, and the system resumes from one of
the successor states. Among others, \emph{e.g.} \cite{BHHK03}, uses
this semantics for the standard model of continuous-time Markov
chains. The second point-of-view bloomed more recently. It proposes to
view probabilistic systems as transformers of probability
distributions. Compared to the previous point-of-view, here one is
interested not in states the system can be in, but rather in how the
probability mass evolves along steps. This semantics was motivated by
the ability to express different properties than the previous
one~\cite{BRS-csl02}. It has been considered for both discrete-time
Markov chains~\cite{KA-icfem04,AAGT-lics12} and continuous-time
models, \emph{e.g.} those induced by stochastic Petri
nets~\cite{HPRV-peva12}.  These two point-of-views are two sides of
the same coin, and we will use both in the following, though we are
ultimately interested in properties related to the operational
semantics.

\subsection{A $\sigma$-algebra for measuring sets of infinite paths}
\label{section:PrelimMeasure}

The objective is now to interpret $\calT$ in an operational manner,
and to define a probability measure over the set of infinite paths of
$\calT$.  We follow the lines of~\cite{DP03}. A \emph{finite
  (resp. infinite) path} of $\calT$ is a finite (resp. infinite)
sequence of states.  We write $\Paths(\calT)$ for the set of infinite
paths of $\calT$. In order to get a probability measure over
$\Paths(\calT)$, we need to equip this set with a $\sigma$-algebra. We
therefore define for each finite sequence of measurable sets
$(A_i)_{0 \le i \le n} \in \Sigma^{n+1}$ the following set of infinite
paths:
\[ \Cyl(A_0,A_1,\dots,A_n) = \{\rho = s_0 s_1 \dots s_n\dots \in
\Paths(\calT) \mid \forall 0 \le i \le n,\ s_i \in A_i\} \enspace. \]
This set is called a \emph{cylinder}. We then equip $\Paths(\calT)$
with the $\sigma$-algebra generated by the cylinders. We denote it by
$\calF_{\calT}$.

Let $\mu$ be a(n initial) probability measure over $\Sigma$, that is,
$\mu \in \Dist(S)$.  We define a probability measure
$\Prob^{\calT}_\mu$ as follows. First we inductively define a
probability measure over the cylinders. For every finite sequence of
measurable subsets $(A_i)_{0 \le i \le n} \in \Sigma^{n+1}$, we set:
\[
\Prob^{\calT}_\mu(\Cyl(A_0,A_1,\dots,A_n)) = \int_{s_0 \in A_0}
\Prob^{\calT}_{\kappa(s_0,\cdot)}(\Cyl(A_1,\ldots,A_n))\ud\mu(s_0)
\enspace,
\]
and we initialize with $\Prob_{\mu}^{\calT}(\Cyl(A_0))=\mu(A_0)$.
From now on, we will use the classical notation $\mu(\ud
s_0)=\ud\mu(s_0)$. It should be noted that the value
$\Prob^{\calT}_\mu(\Cyl(A_0,A_1,\dots,A_n))$ is the result of $n$
successive integrals and can be expressed as follows:
\begin{multline*}
  \Prob^{\calT}_\mu(\Cyl(A_0,A_1,\dots,A_n)) = \\
  \int_{s_0 \in A_0} \int_{s_1 \in A_1} \dots \int_{s_{n-1} \in
    A_{n-1}} \kappa(s_0,\ud s_1) \cdot \kappa(s_1, \ud s_2) \cdots
  \kappa(s_{n-2},\ud s_{n-1}) \cdot \kappa(s_{n-1},A_n) \cdot \mu(\ud
  s_0).
\end{multline*}
Finally, using the classical Caratheodory's extension theorem,
$\Prob_{\mu}^{\calT}$ can be extended in a unique way to the
$\sigma$-algebra $\calF_{\calT}$.

\begin{restatable}{lemma}{probpaths}\label{lem:probpaths}
  $\Prob_\mu^{\calT}$ defines a probability measure over 
  $(\Paths(\calT), \calF_{\calT})$.
\end{restatable}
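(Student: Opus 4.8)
The plan is to obtain $\Prob_\mu^{\calT}$ as the Carathéodory extension of the pre-measure that the statement already defines on cylinders, so the work splits into four tasks: (i) identify an algebra generating $\calF_{\calT}$ on which a genuine pre-measure lives, (ii) check that the iterated integrals defining $\Prob_\mu^{\calT}$ on cylinders are well-posed and representation-independent, (iii) establish finite additivity and, crucially, $\sigma$-additivity on that algebra, and (iv) verify the normalisation $\Prob_\mu^{\calT}(\Paths(\calT))=1$.

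For (i), I would note that the family of cylinders is stable under intersection (intersect coordinatewise, padding the shorter tuple with $S$), and that the complement of a cylinder is a finite disjoint union of cylinders; hence the family $\mathcal{A}$ of finite disjoint unions of cylinders is an algebra, and by definition it generates $\calF_{\calT}$. For (ii), the point is that a cylinder has many representations, the basic one being $\Cyl(A_0,\dots,A_n)=\Cyl(A_0,\dots,A_n,S)$; appending a trailing $S$ multiplies the innermost integral by $\kappa(s_{n-1},S)=1$, so the value is unchanged, and this already forces the formula to be consistent on all representations. Before that, though, one must know the integrands exist: I would prove \emph{by induction on $n$} that for every cylinder $C=\Cyl(A_1,\dots,A_n)$ the map $s\mapsto\Prob_{\kappa(s,\cdot)}^{\calT}(C)$ is $[0,1]$-valued and $\Sigma$-measurable. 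The base case $n=0$ is exactly the hypothesis that $\kappa(\cdot,A)$ is measurable; the inductive step uses the standard fact that integrating a bounded measurable function against a transition kernel yields again a bounded measurable function (a Fubini/Tonelli argument for kernels). Granting this, finite additivity of $\Prob_\mu^{\calT}$ on $\mathcal{A}$ follows from linearity of the integral together with additivity of $\mu$ and of each $\kappa(s_i,\cdot)$, again by an easy induction on cylinder length.

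The main obstacle is $\sigma$-additivity on $\mathcal{A}$, equivalently continuity at $\emptyset$: if $(C_k)_k$ is a decreasing sequence in $\mathcal{A}$ with $\bigcap_k C_k=\emptyset$, one must show $\Prob_\mu^{\calT}(C_k)\to 0$. I would argue by contraposition in the classical way for path spaces: assuming $\Prob_\mu^{\calT}(C_k)\ge\varepsilon>0$ for all $k$, use the measurability established in (ii) to peel off one coordinate at a time, at each level selecting a state from which the conditional ("tail") probability of all the $C_k$ remains bounded below by a positive constant; since each $C_k$ constrains only finitely many coordinates, this selection at level $i$ only fixes coordinate $i$, and iterating produces an infinite path belonging to every $C_k$, contradicting $\bigcap_k C_k=\emptyset$. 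This is precisely the Ionescu--Tulcea construction specialised to our kernel, and one may alternatively simply cite that theorem. Once $\sigma$-additivity on $\mathcal{A}$ is in place, Carathéodory's extension theorem (invoked already in the text) yields a unique measure on $\calF_{\calT}=\sigma(\mathcal{A})$ extending it. Finally, for (iv), $\Prob_\mu^{\calT}(\Paths(\calT))=\Prob_\mu^{\calT}(\Cyl(S))=\mu(S)=1$, so the extension is a probability measure, which is the claim.
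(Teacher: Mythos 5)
Your proof is correct and is precisely the classical Carath\'eodory/Ionescu--Tulcea argument that the paper itself omits, deferring to the proof of Proposition~3.2 of the cited stochastic timed automata paper. All the key steps you identify (the algebra of finite disjoint unions of cylinders, measurability of the iterated integrands by induction on cylinder length, $\sigma$-additivity via continuity at $\emptyset$ by peeling off coordinates, and normalisation) are exactly what that standard construction requires.
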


The proof of Lemma~\ref{lem:probpaths} is classical and we omit it
here. The interested reader may \emph{e.g.} refer to the proof of
\cite[Proposition~3.2]{BBB+14}, which can easily be adapted to our
context. 

\subsection{STSs as transformers of probability measures}
One can also interpret the dynamics of $\calT$ as a transformer of
probability measures over $(S,\Sigma)$. For $\mu$ a probability
measure over $\Sigma$, its transformation through $\calT$ can be
defined as the measure $\Omega_\calT(\mu)$ defined for every $A \in
\Sigma$ by:
\[
\Omega_\calT(\mu)(A) = \int_{s_0 \in S} \kappa(s_0,A) \cdot \mu(\ud
s_0) \enspace.
\]
It can be shown that $\Omega_\calT(\mu)$ is also a probability measure
over $(S,\Sigma)$.

This interpretation offers a dual view on the STS $\calT$. Indeed,
roughly speaking, $\Omega_\calT(\mu)(A)$ is the probability of being
in $A$ after one step, when $\mu$ is the initial distribution on
$\calT$. Given a distribution $\mu\in\Dist(S)$ and given $A\in\Sigma$
such that $\mu(A)>0$, we write $\mu_A$ for the conditional probability
of $\mu$ given $A$, that is for each $B\in\Sigma$,
$\mu_A(B)=\frac{\mu(A\cap B)}{\mu(A)}$. It should be observed that
$\mu_A\in\Dist(S)$. There is a strong relation between the transformer
$\Omega_{\calT}(\mu)$ and the probability measure $\Prob^\calT_\mu$
over paths defined previously, which we formalize below:
\begin{restatable}{lemma}{lemmaintegration}
\label{lemma:integration}
  Let $\mu\in\Dist(S)$ be an initial distribution and let $(A_i)_{0
    \le i \le n}$ be a sequence of measurable sets. Write $\nu_0 =
  \mu_{A_0}$, the conditional probability of $\mu$ given $A_0$, and for every $1 \le
  j \le n-1$, write $\nu_j = (\Omega_\calT(\nu_{j-1}))_{A_j}$. Then,
  for every $0 \le j \le n$:
\begin{multline*}
\Prob^{\calT}_\mu(\Cyl(A_0,A_1,\dots,A_n)) = \\
\mu(A_0)\cdot\prod_{i=1}^{j}(\Omega_{\calT}(\nu_{i-1}))(A_i)\cdot
\Prob^\calT_{\Omega_\calT(\nu_{j})}(\Cyl(A_{j+1},\dots,A_n))
\enspace. 
\end{multline*}
\end{restatable}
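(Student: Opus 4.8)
The plan is to isolate a single ``one-step transfer'' identity and then obtain the lemma by iterating it. Precisely, I would first prove the following claim: for every $\lambda\in\Dist(S)$ with $\lambda(B_0)>0$ and all measurable sets $B_0,B_1,\dots,B_m$ with $m\ge 1$,
\[
\Prob^\calT_\lambda(\Cyl(B_0,B_1,\dots,B_m)) \;=\; \lambda(B_0)\cdot\Prob^\calT_{\Omega_\calT(\lambda_{B_0})}(\Cyl(B_1,\dots,B_m))\enspace.
\]
Granting the claim, the lemma follows by induction on $j$. The base case $j=0$ is exactly the claim applied with $\lambda=\mu$, $B_0=A_0$, $B_i=A_i$ for $1\le i\le n$, using that $\mu_{A_0}=\nu_0$ and that the product $\prod_{i=1}^{0}$ is empty. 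For the inductive step, one applies the claim to the trailing factor $\Prob^\calT_{\Omega_\calT(\nu_j)}(\Cyl(A_{j+1},\dots,A_n))$ with $\lambda=\Omega_\calT(\nu_j)$ and $B_0=A_{j+1}$; this peels off the factor $(\Omega_\calT(\nu_j))(A_{j+1})$ (which is precisely the $i=j+1$ term of the product) and, since $(\Omega_\calT(\nu_j))_{A_{j+1}}=\nu_{j+1}$, leaves $\Prob^\calT_{\Omega_\calT(\nu_{j+1})}(\Cyl(A_{j+2},\dots,A_n))$, yielding the identity for $j+1$. The combinatorics here are trivial once the claim is available.

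To prove the claim I would unfold the inductive definition of $\Prob^\calT_\lambda$ on cylinders twice. First, $\Prob^\calT_\lambda(\Cyl(B_0,\dots,B_m)) = \int_{s_0\in B_0}\Prob^\calT_{\kappa(s_0,\cdot)}(\Cyl(B_1,\dots,B_m))\,\lambda(\ud s_0)$; since $\lambda_{B_0}$ has density $\mathbf{1}_{B_0}/\lambda(B_0)$ with respect to $\lambda$, this integral over $B_0$ against $\lambda$ equals $\lambda(B_0)$ times the integral over $S$ against $\lambda_{B_0}$. Next I would rewrite the integrand via one more unfolding, $\Prob^\calT_{\kappa(s_0,\cdot)}(\Cyl(B_1,\dots,B_m)) = \int_{s_1\in S}g(s_1)\,\kappa(s_0,\ud s_1)$ where $g(s_1)=\mathbf{1}_{B_1}(s_1)\cdot\Prob^\calT_{\kappa(s_1,\cdot)}(\Cyl(B_2,\dots,B_m))$, so the quantity becomes $\lambda(B_0)\int_{s_0\in S}\big(\int_{s_1\in S}g(s_1)\,\kappa(s_0,\ud s_1)\big)\,\lambda_{B_0}(\ud s_0)$. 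The final move is to recognise, by the defining property of $\Omega_\calT$ extended from indicators to non-negative measurable integrands, that the inner double integral equals $\int_{s_1\in S}g(s_1)\,\Omega_\calT(\lambda_{B_0})(\ud s_1)$, which by the definition of $\Prob^\calT$ on cylinders (read backwards) is exactly $\Prob^\calT_{\Omega_\calT(\lambda_{B_0})}(\Cyl(B_1,\dots,B_m))$. This establishes the claim.

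The main obstacle — and essentially the only step that is not bookkeeping — is justifying the identity $\int_S g\,\ud\Omega_\calT(\nu) = \int_S\big(\int_S g(s_1)\,\kappa(s_0,\ud s_1)\big)\,\nu(\ud s_0)$ for arbitrary non-negative measurable $g$, since the definition of $\Omega_\calT$ only supplies it for $g=\mathbf{1}_A$. I would dispatch this by the standard machinery: linearity extends it to non-negative simple functions, and monotone convergence, applied to an increasing sequence of simple functions approximating $g$ on both sides, extends it to all non-negative measurable $g$; this also tacitly uses measurability of $s_1\mapsto g(s_1)$ and of $s_0\mapsto\int_S g(s_1)\,\kappa(s_0,\ud s_1)$, which for our particular $g$ reduces to the measurability of $s\mapsto\Prob^\calT_{\kappa(s,\cdot)}(\Cyl(B_1,\dots,B_m))$ already used in establishing Lemma~\ref{lem:probpaths}. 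Finally, I would handle the degenerate cases: if $\lambda(B_0)=0$ then $\Prob^\calT_\lambda(\Cyl(B_0,\dots,B_m))\le\lambda(B_0)=0$, so both sides of the claim vanish and the conditional distribution need not be defined; propagating this through the iteration, whenever $\mu(A_0)$ or some $(\Omega_\calT(\nu_{i-1}))(A_i)$ equals $0$, the right-hand side of the stated formula is (by the convention that an undefined trailing factor is multiplied by $0$) equal to $0$, and so is the left-hand side.
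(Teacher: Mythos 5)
Your proposal is correct and follows essentially the same route as the paper's own proof: the paper likewise establishes the one-step identity $\Prob^{\calT}_\mu(\Cyl(A_0,\dots,A_n)) = \mu(A_0)\cdot\Prob^{\calT}_{\Omega_\calT(\mu_{A_0})}(\Cyl(A_1,\dots,A_n))$ as the base case $j=0$ (by unfolding the cylinder definition and swapping the order of integration to recognise an integral against $\Omega_\calT(\mu_{A_0})$), and then iterates it in an induction on $j$ by applying the same identity to the trailing factor. Your treatment is in fact slightly more careful than the paper's, since you explicitly justify the extension of the $\Omega_\calT$-integration identity from indicators to general non-negative measurable integrands and handle the degenerate zero-measure cases, both of which the paper leaves implicit.
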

\noindent The proof of this result is postponed to the technical
appendix (page~\pageref{app:lemma_integration}).

From this result, we can express the probability to reach $A$ in $n$
steps from the initial distribution $\mu$:
\[
(\Omega^{(n)}_{\calT}(\mu))(A)=\Prob_{\mu}^{\calT}(\Cyl(\overbrace{S,\ldots,S}^{n\text{
    times}}, A)) \enspace.
\]

This alternative view of stochastic processes as transformers of
  probability measures is heavily used by Paolieri \emph{et al.} in
  their time-bounded analysis of stochastic Petri nets~\cite{PHV16}:
  the evolution of the probability distributions is tracked through
  stochastic state classes. We advocate that the two views
  (behavioural and probability transformers) need to be used at the
  same time. One of the first results we establish (see
  Lemma~\ref{lemma:attractorGF} later) is a witness of their
  interplay. Also, the measure transformer view will prove quite
  useful when it comes to abstraction in
  Section~\ref{sec:abstractions}. Beyond that, it has been observed by
  Kwon \emph{et al.}  that the classes of properties one can express
  in both views are incomparable~\cite{KA-icfem04, KVAK10}, and
  depending on the application, one or the other can be more
  appropriate.

\subsection{Basic properties of paths in STSs}\label{subsec:Events}
To define properties on the STS $\calT$, we use \LTL-like notations,
that will be interpreted as measurable subsets of $\Paths(\calT)$.
Let $\mathcal{L}_{S,\Sigma}$ be the set of formulas 
defined by the following grammar:
\[
\varphi ::= B \mid \varphi_1 \U[\bowtie k] \varphi_2
\mid \varphi_1 \vee \varphi_2 \mid \varphi_1 \wedge \varphi_2 \mid
\neg \varphi \enspace,
\]
where $B \in \Sigma$, $\mathord{\bowtie} \in \{\geq,\leq,=\}$ is a
comparison operator and $k \in \IN$ is a natural number. Given
$\rho=(s_n)_{n\ge 0}$ we write $\rho_{\ge i}=(s_n)_{n\ge
  i}\in\Paths(\calT)$ for each $i\ge 0$. Then the satisfaction
relation of paths formulas is given as follows:
\begin{alignat}{7}
  {} & \rho \models B \ & {} & \Leftrightarrow \ & {} & s_0\in B\notag\\
  {} & \rho \models \varphi_1 \U[\bowtie k] \varphi_2 \ & {} &
  \Leftrightarrow \ & {} & \exists i\ge 0, \ i \bowtie k,\ \text{s.t.}\
  \rho_{\ge i}\models \varphi_2 \text{ and }
  \forall 0\leq j <i, \ \rho_{\ge j}\models \varphi_1\notag\\
  {} & \rho\models\varphi_1\vee\varphi_2 \ & {} & \Leftrightarrow \ & {} & \rho\models\varphi_1 \text{ or } \rho\models\varphi_2\notag\\
  {} & \rho\models\varphi_1\wedge\varphi_2 \ & {} & \Leftrightarrow \ & {} & \rho\models\varphi_1 \text{ and } \rho\models\varphi_2\notag\\
  {} & \rho\models\neg\varphi \ & {} & \Leftrightarrow \ & {} &
  \rho\nvDash\varphi. \notag
\end{alignat}
We write $\ev{\calT}{\varphi}$ for the set of infinite paths $\rho$ in
${\cal T}$ such that $\rho \models \varphi$. It is standard to show
that the event $\ev{\calT}{\varphi}$ is a measurable subset of
$(\Paths(\calT),\calF_{\calT})$ (see \emph{e.g.}~\cite{vardi85}). In
particular, for every initial probability measure $\mu$,
$\Prob_\mu^\calT(\ev{\calT}{\varphi})$ is well-defined. In the sequel,
for simplicity, we often write $\Prob_\mu^\calT(\varphi)$ instead of
$\Prob_\mu^\calT(\ev{\calT}{\varphi})$.

We will also use classical notations like $\top = S$; $\bot =
\emptyset$; $\varphi_1 \U \varphi_2 = \varphi_1 \U[\geq 0]
\varphi_2$; $\F \varphi = \top \U \varphi$; $\F[\bowtie p] \varphi =
\top \U[\bowtie p] \varphi$; $\G \varphi = \neg \F (\neg \varphi)$;
$\G[\bowtie p] \varphi = \neg \F[\bowtie p] (\neg \varphi)$.
\begin{example}
  We illustrate some properties on STSs $\calT_\rw$ and $\calT_\qs$ of
  Examples~\ref{Example:DMCRandomWalk}
  and~\ref{Example:Continuous}. Consider first $\calT_\rw$ and
  $B=\lbrace 0\rbrace$. The event $\ev{\calT_\rw}{\F B}$ corresponds
  to the set of runs that eventually reach state $0$, \emph{i.e.}
  $\ev{\calT_\rw}{\F B}=\lbrace \rho=(s_n)_{n\ge
    0}\in\Paths(\calT_\rw)\mid \exists n\ge 0,\ s_n=0\rbrace$, while
  the event $\ev{\calT_\rw}{\G\F B}$ corresponds to the set of runs
  that infinitely often visit $0$, \textit{i.e.} $\ev{\calT_\rw}{\G\F
    B}=\lbrace \rho=(s_n)_{n\ge 0}\in\Paths(\calT_\rw)\mid \forall
  n\ge 0, \ \exists k\ge n,\ s_k=0\rbrace$. Now in $\calT_\qs$, we could
  consider the set of states $B'=\lbrace 0\rbrace \times
  \intervalcc{0, T}$ with $T>0$. Then the event $\ev{\calT_\qs}{\F
    B'}$ corresponds to the set of runs that eventually reach $0$
  within $T$ time units, \textit{i.e.} $\ev{\calT_\qs}{\F B'}=\lbrace
  \rho=((s_n,t_n))_{n\ge 0}\in\Paths(\calT_\qs)\mid \exists n\ge 0,\
  (s_n=0 \wedge t_n\leq T)\rbrace$.
\end{example}

\subsection{Labelled STSs and their properties}
\label{subsec:LSTS}
To ease the expression of rich properties over STSs, we extend the
model with a labelling with atomic propositions.
\begin{definition}
  A \emph{labelled STS} (LSTS for short) is a tuple $\calT =
  (S,\Sigma,\kappa,\AP,\calL)$, where $(S,\Sigma,\kappa)$ is an STS,
  $\AP$ is a finite set of atomic propositions, and $\calL:
  S\rightarrow 2^{\AP}$ is a measurable labelling function.
\end{definition}
Measures and other notions are extended in a straightforward way from
STSs to LSTSs. We fix a finite set $\AP$ of atomic propositions and an
LSTS $\calT = (S,\Sigma,\kappa,\AP,\calL)$.

A property over $\AP$ is a subset $P$ of
$\left(2^\AP\right)^\omega$. An infinite path $\rho = s_0 s_1 \ldots$
of $\calT$ satisfies the property $P$ whenever $\mathcal{L}(s_0)
\mathcal{L}(s_1) \mathcal{L}(s_2) \ldots \in P$, written $\rho \models
P$.
$\omega$-regularity is a standard notion in computer science to
characterise simple sets of infinite behaviours, and typical
$\omega$-regular properties are B\"uchi and Muller properties. In
order to express such properties, we introduce a new notation for the
set of atomic propositions that are true infinitely often along a
sequence of labels: for $\varpi = u_0 u_1 u_2 \ldots \in
\left(2^\AP\right)^\omega$, we define $\Inf(\varpi)= \lbrace
a\in\AP\mid |\lbrace j\in \IN\mid a\in u_j\rbrace| = \infty\rbrace$.
We extend this notation to paths in a natural way: if $\rho = s_0 s_1
s_2 \ldots \in S^\omega$, writing $\varpi = \calL(s_0) \calL(s_1)
\calL(s_2) \ldots$, we define (with a slight abuse of notation)
$\Inf(\rho) = \Inf(\varpi)$.
A \emph{B\"uchi property} $P$
over $\AP$ can be specified by a subset of atomic propositions $F
\subseteq \AP$ as $P = \{\varpi\in \left(2^\AP\right)^\omega \mid
\Inf(\varpi)\cap F \neq \emptyset \}$.  A \emph{Muller property} over
$\AP$ is a property $P$ such that there exists $\mathcal{F} \subseteq
2^\AP$ with $P = \{\varpi\in \left(2^\AP\right)^\omega \mid
\Inf(\varpi)\in\calF\}$.
\begin{rk}\label{remark:DMAMes}
  It should be noted that the set of infinite paths satisfying B\"uchi
  or Muller properties can be expressed using events as in
  Section~\ref{subsec:Events}. Indeed, for $F\subseteq \AP$ 
  we write $2^{\AP}_F=\lbrace u\in 2^{\AP}\mid u\cap F\neq
  \emptyset\rbrace$ and given $a\in\AP$, $2^{\AP}_a=\lbrace u\in
  2^{\AP}\mid a\in u\rbrace$. Then,
  \begin{itemize}
  \item the set of paths satisfying the B\"uchi property with
    acceptance condition $F$ is 
    \[
    \mathsf{Ev}_{\calT}\Big(\G\F \big(\bigvee_{u\in 2^{\AP}_F}
    \calL^{-1}(u)\big)\Big) \enspace; \]
  \item the set of paths satisfying the Muller property with acceptance
    condition $\calF$ is
    \[
    \mathsf{Ev}_{\calT}\Big(\bigvee_{F\in\calF} \Big(\bigwedge_{a\in
      F} \big(\G\F \bigvee_{u\in 2^{\AP}_a} \calL^{-1}(u)\big) \wedge
    \bigwedge_{a\notin F} \bigwedge_{u\in 2^{\AP}_a} \F\G
    \big(\calL^{-1}(u)\big)^c\Big)\Big) \enspace. \]
\end{itemize}
\end{rk}

It is well known that automata equipped with B\"uchi or Muller
acceptance conditions capture all $\omega$-regular properties, and
this also holds for deterministic Muller automata.

\begin{definition}
A \emph{deterministic Muller automaton} (DMA) over $\AP$ is a tuple $\calM =
(Q,q_0,E,\mathcal{F})$ where:
\begin{itemize}
\item $Q$ is a finite set of locations, and $q_0\in Q$ is the initial location;
\item $E \subseteq Q \times 2^\AP \times Q$ is a finite set of edges;
\item $\mathcal{F}$ is a Muller condition over $Q$;
\end{itemize}
and such that
\begin{itemize}
\item $\calM$ is deterministic: for all pair of edges $(q,u,q_1)$ and
  $(q,u,q_2)$ in $E$, $q_1=q_2$;
\item $\calM$ is complete: for every $q \in Q$, for every $u \in 2^{\AP}$,
  there exists $(q,u,q') \in E$. 
\end{itemize}
\end{definition}
A deterministic Muller automaton $\calM$ naturally gives rise to a
property $P_{\calM}$ defined by the language (over $2^{\AP}$) accepted
by $\calM$. Its semantics over infinite paths of $\calT$ is derived
from that of property $P_\calM$: if $\rho \in \Paths(\calT)$, we write
$\rho \models \calM$ whenever $\rho \models P_\calM$.  Expanding Remark~\ref{remark:DMAMes}, one derives the standard fact that the set $\calT[\calM]
\stackrel{\text{def}}{=} \{\rho \in \Paths(\calT) \mid \rho \models
\calM\}$ is measurable, and we write $\Prob_\mu^\calT(\calM)$ for
$\Prob_\mu^\calT(\calT[\calM])$.

\begin{rk}
  It is well known (see~\cite{VW94} and \cite[Chapter~3]{lncs2500})
  that for any \LTL formula $\varphi$ (the syntax given in the
  previous subsection, where we replace sets $B$ by inverse images by
  $\calL$ of atomic propositions from $\AP$), there is a deterministic
  Muller automaton $\calM_\varphi$ that characterises $\varphi$, that
  is: for every run $\rho$, $\rho \models \varphi$ iff $\rho \models
  \calM_\varphi$.
\end{rk}

\paragraph*{Product STS}
To measure the probability of properties specified by a DMA $\calM =
(Q,q_0,E,\calF)$, it is standard to build a new STS consisting of the
product of $\calT$ with $\calM$~\cite{BK08}. To this aim, we consider
the discrete $\sigma$-algebra $2^Q$ on the finite set of locations $Q$
of $\calM$. The product $S\times Q$ can then be equipped
  with the product $\sigma$-algebra $\Sigma_p$ defined as the smallest
  $\sigma$-algebra generated by the rectangles, where a rectangle is set
  of the form $X \times Q'$ where $X\in \Sigma$ and $Q' \subseteq
  Q$; i.e. $X \times Q' = \{(s,q)~|~ s \in X, \, q \in Q'\}$.  Given $Y$ an element of $\Sigma_p$, for all $q \in Q$, one
  can define the set $\pi_q(Y) \stackrel{\text{def}}{=} \{s \in S \mid
  (s,q) \in Y\}$. We therefore write (abusively) $Y = \bigcup_{q \in
    Q} \pi_q(Y) \times \{q\}$. Then, one can show that $\Sigma_p$
  coincides with $\Sigma'$, the set of all subsets of $S \times Q$ of
  the form $\bigcup_{q\in Q} C_q \times \{q\}$, where $C_q \in \Sigma$
  for every $q \in Q$ (see the proof in the appendix,
  page~\pageref{app:product-sigma-algebra}).
\label{product-sigma-algebra}

Note that in the sequel, we will sometimes write $(C_q, q)$ instead of
$C_q\times \lbrace q\rbrace$.

We now define the product of $\calT$ with $\calM$ as follows.
\begin{definition}
  Given $\calT = (S,\Sigma,\kappa,\AP,\calL)$ an LSTS and $\calM =
  (Q,q_0,E,\calF)$ a DMA over $\AP$, we define the product of $\calT$
  with $\calM$ as the LSTS $\calT \ltimes
  \calM=(S',\Sigma',\kappa',\AP',\calL')$ such that:
  \begin{itemize}
  \item $S' = S \times Q$;
  \item $\Sigma'$ is the product $\sigma$-algebra $\Sigma\times 2^Q$;
  \item $\kappa'((s,q),(A,q')) = \begin{cases}\kappa(s,A) & \textrm{
        if } (q,\calL(s),q') \in E, \textrm{
        and}\\ 
      0 & \textrm{ otherwise;\footnotemark}\end{cases}$
    \item $\AP' = Q$;
    \item $\calL'(s,q) = q$.
  \end{itemize}  
\end{definition}
\footnotetext{{Note that the above definition of $\kappa'$ extends
    naturally to all elements of the $\sigma$-algebra $\Sigma'$:
          for each pair $(q, u)$ with $q\in Q$ and $u\in 2^\AP$, there is a
    unique $q'\in Q$ such that $(q,u,q')\in E$. Fix $(s,q)\in S\times
    Q$, write $q'$ for the unique location such that $(q, \calL(s),
    q')\in E$. Then for each $A=\bigcup_{q\in Q} C_q\times \lbrace q
    \rbrace$,
    $\kappa'((s,q),A)=\kappa'((s,q),(C_{q'},q'))=\kappa(s,C_{q'})$.
}}

\begin{example}
  We consider again $\calT_{\rw}$ the random walk over $\IN$ of
  Example~\ref{Example:DMCRandomWalk}. We assume that it is equipped
  with the simple set of atomic propositions $\AP=\lbrace a\rbrace$
  and we assume that each state of the STS is labelled with $a$. Let
  $\calM$ be the DMA depicted on the left-hand side of
  Figure~\ref{Figure:MullerAutomaton}. The winning condition is given
  by $\calF=\lbrace\lbrace q_1,q_2\rbrace\rbrace$. The product
  $\calT_\rw\ltimes\calM$ is then depicted on the right-hand side of
  Figure~\ref{Figure:MullerAutomaton}. It should be noted that we
  there assume that the system starts at $(0,q_0)$ however, there
  should be similar chains starting in $(i,q_0)$ for each $i\ge
  1$. Note also that we did not specify the labels on the states:
  according to the definition each state is labelled with its current
  position in $\calM$.
\begin{figure}[!h]
  \centering
  \begin{tikzpicture}
    \tikzstyle{ptt}=[scale=1]
    \tikzstyle{loc}=[ptt,draw,circle,minimum size =0.8cm];
    \tikzstyle{locRec}=[ptt,draw,rectangle,minimum height =0.7cm, minimum width =0.9cm, rounded corners = 3pt];
    \tikzstyle{locRecInv}=[ptt,rectangle,minimum height =0.7cm, minimum width =0.9cm, rounded corners = 3pt];
    \tikzstyle{inv}=[ptt,circle,minimum size =1cm];
    \tikzstyle{fleche}=[->,>=stealth', rounded corners=1pt];
    \node[loc] (qzero) at (0,0) {$q_0$};
    \node[loc] (qun) at (1.5,0) {$q_1$};
    \node[loc] (qdeux) at (3,0) {$q_2$};
    \draw[fleche] (qzero) -- (qun) node[midway, above] {$\{a\}$};
    \draw[fleche] (qun) to[bend left=30] node[midway, above] {$\{a\}$} (qdeux);
    \draw[fleche] (qdeux) to[bend left=30] node[midway, below] {$\{a\}$} (qun);
    \draw[fleche] (-1,0) -- (qzero);
    
    \node[locRec] (lzero) at (4.5,1) {$(0,q_0)$};
    \node[locRec] (lpzero) at (4.5, -1) {$(0, q_2)$};
    \node[locRec] (lun) at (6.5, 0) {$(1, q_1)$};
    \node[locRec] (ldeux) at (8.5, 0) {$(2, q_2)$};
    \node[locRec] (ltrois) at (10.5, 0) {$(3, q_1)$};
    \node[locRecInv] (linv) at (12.5, 0) {$\cdots$};
    
    \draw[fleche] (lzero) -- (lun) node[midway, above] {$1$};
    \draw[fleche] (lun) to[bend left=30] node[pos=.2, below=5pt] {$1-p$} (lpzero);
    \draw[fleche] (lun) to[bend left=30] node[midway, above] {$p$} (ldeux);
    \draw[fleche] (lpzero) to[bend left=30] node[midway, above] {$1$} (lun);
    \draw[fleche] (ldeux) to[bend left=30] node[midway, below] {$1-p$} (lun);
    \draw[fleche] (ldeux) to[bend left=30] node[midway, above] {$p$} (ltrois);
    \draw[fleche] (ltrois) to[bend left=30] node[midway, below]
    {$1-p$} (ldeux);
    \draw[fleche] (ltrois) to[bend left=30] node[midway, above]
    {$p$} (linv);
    \draw[fleche] (linv) to[bend left=30] node[midway, below] {$1-p$} (ltrois);
  \end{tikzpicture} \caption{A Muller automaton $\calM$ and the product $\calT_\rw\ltimes\calM$.}
  \label{Figure:MullerAutomaton}
\end{figure}
\end{example}

We define on $\calT \ltimes \calM$ a Muller condition which is
inherited from the one of $\calM$ via the new labelling function
$\calL'$: a run $\rho$ satisfies the Muller condition $\calF'$
whenever $\calL'(\rho)$ satisfies the Muller condition $\calF$. We
thus later simply use $\calF$ instead of $\calF'$.

We now explain how initial distributions for $\calT$ are lifted to the
product $\calT \ltimes \calM$. The idea is simple: the
$\calT$-component is inherited from $\calT$, and the $\calM$-component
is a Dirac\footnote{We recall that given a state $s$, the \emph{Dirac
    distribution over $s$} is defined by $\delta_s(A)=1$ if $s\in A$
  and $\delta_s(A)=0$ otherwise, for every $A\in\Sigma$.} distribution
over $q_0$, the initial state of $\calM$.  In other words, when an
initial distribution $\mu \in \Dist(S)$ is fixed for $\calT$, the
initial distribution of $\calT \ltimes \calM$ will be $\mu \times
\delta_{q_0}$.  We show that this allows to properly express the
probability of a property described by a DMA, with the following
correspondence (whose proof is given in the appendix,
page~\pageref{app:produit-technique}).
\begin{restatable}{proposition}{produit}
  Let $\mu \in \Dist(S)$ be an initial distribution for $\calT$, and
  $\calM = (Q,q_0,E,\calF)$ be a DMA. Then:
  \[
  \Prob_\mu^\calT(\calT[\calM]) = \Prob_{\mu \times
    \delta_{q_0}}^{\calT \ltimes \calM}(\{\rho \in \Paths(\calT
  \ltimes \calM) \mid \rho \models \calF\})\enspace.
  \]
\end{restatable}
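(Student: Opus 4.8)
The plan is to set up a measure-preserving correspondence between infinite paths of $\calT$ equipped with $\Prob_\mu^\calT$ and infinite paths of $\calT\ltimes\calM$ equipped with $\Prob_{\mu\times\delta_{q_0}}^{\calT\ltimes\calM}$, and then observe that this correspondence matches the two acceptance events. Since $\calM$ is deterministic and complete, every infinite path $\rho = s_0 s_1 s_2 \dots$ of $\calT$ determines a unique run of $\calM$ on $\calL(s_0)\calL(s_1)\dots$, namely a unique sequence $q_0 q_1 q_2 \dots$ with $(q_i,\calL(s_i),q_{i+1})\in E$. Define $\Phi(\rho) = (s_0,q_0)(s_1,q_1)(s_2,q_2)\dots$. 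The first step is to check that $\Phi$ is a bijection from $\Paths(\calT)$ onto the set of paths of $\calT\ltimes\calM$ that have positive probability under $\Prob_{\mu\times\delta_{q_0}}^{\calT\ltimes\calM}$ (equivalently, the paths whose $\calM$-component starts at $q_0$ and follows valid edges); measurability of $\Phi$ and of its inverse follows from measurability of $\calL$ and the fact that $\Sigma'$ is generated by rectangles, as recalled on page~\pageref{product-sigma-algebra}.

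The second and main step is to show that $\Phi$ pushes $\Prob_\mu^\calT$ forward to $\Prob_{\mu\times\delta_{q_0}}^{\calT\ltimes\calM}$. By Caratheodory's extension theorem (invoked in the construction of both measures), it suffices to verify this on cylinders. So I would compute, for measurable sets $A_0,\dots,A_n\in\Sigma$,
\[
\Prob_{\mu\times\delta_{q_0}}^{\calT\ltimes\calM}\bigl(\Cyl(A_0\times\{q_0\},\, A_1\times\{r_1\},\,\dots,\,A_n\times\{r_n\})\bigr),
\]
where $q_0 r_1 \dots r_n$ is the unique partial run of $\calM$ forced by any labelling sequence compatible with the $A_i$'s — here one has to be slightly careful: the $\calM$-component is determined by the actual labels $\calL(s_i)$, not by the sets $A_i$, so a clean way is to first reduce to cylinders of the form $\Cyl(A_0\cap\calL^{-1}(u_0), A_1\cap\calL^{-1}(u_1),\dots)$ with fixed label words $u_0\dots u_n$, which partition each original cylinder, and on such refined cylinders the $\calM$-trajectory is genuinely constant. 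On these refined cylinders, unfolding the inductive definition of $\Prob^{\calT\ltimes\calM}$ and using the definition of $\kappa'$ (which equals $\kappa$ on the forced edge and $0$ otherwise, cf.\ the footnoted extension to $\Sigma'$), the nested integrals collapse exactly to the nested integrals defining $\Prob_\mu^\calT(\Cyl(A_0\cap\calL^{-1}(u_0),\dots,A_n\cap\calL^{-1}(u_n)))$, while the $\delta_{q_0}$ factor accounts for the initial location and contributes a factor $1$. Summing over the finitely many label words recovers the equality on arbitrary cylinders.

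Finally, the third step is purely combinatorial: by construction of the labelling $\calL'$ and of the Muller condition $\calF$ on $\calT\ltimes\calM$ (inherited via $\calL'$, as discussed just before the proposition), a path $\sigma = (s_0,q_0)(s_1,q_1)\dots$ in the image of $\Phi$ satisfies $\sigma\models\calF$ iff $\Inf(q_0 q_1 q_2\dots)\in\calF$ iff the unique run of $\calM$ on $\calL(s_0)\calL(s_1)\dots$ is accepting iff $\rho = \Phi^{-1}(\sigma)\models\calM$, i.e.\ $\rho\in\calT[\calM]$. Paths of $\calT\ltimes\calM$ outside the image of $\Phi$ form a $\Prob_{\mu\times\delta_{q_0}}^{\calT\ltimes\calM}$-null set (their $\calM$-component uses a non-edge at some finite prefix, which has probability $0$), so they do not affect the measure of the acceptance event. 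Combining the three steps gives
\[
\Prob_\mu^\calT(\calT[\calM]) = \Prob_\mu^\calT\bigl(\Phi^{-1}(\{\sigma\mid\sigma\models\calF\})\bigr) = \Prob_{\mu\times\delta_{q_0}}^{\calT\ltimes\calM}(\{\rho\mid\rho\models\calF\}).
\]
I expect the main obstacle to be bookkeeping: carefully handling the fact that the $\calM$-component depends on the concrete labels rather than on the cylinder sets, which forces the refinement by label words before the integral computation goes through cleanly; everything else is a routine induction plus the standard $\pi$-$\lambda$/Caratheodory uniqueness argument.
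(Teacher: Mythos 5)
Your proposal is correct and follows essentially the same route as the paper: the core of the paper's argument is exactly your second step, namely a cylinder identity obtained by refining with label words $u_1,\dots,u_n$ and unfolding the nested integrals by induction (Lemma~\ref{lemma:ProbProduct} in the appendix), after which the proposition is declared a direct consequence. Your packaging of that identity as a pushforward along an explicit bijection $\Phi$, together with the observation that the non-image paths are null, is only a presentational variant that makes the final transfer of the acceptance event slightly more explicit than the paper's terse conclusion.
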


\section{Nice properties of STSs}
\label{sec:prop}
In~\cite{ABM07}, Abdulla \emph{et al.} introduced the elegant concept
of decisive Markov chain. Intuitively, decisiveness allows one to lift
the good properties of finite Markov chains to infinite (but
denumerable) Markov chains. We explain here how to extend and refine
this concept and some related concepts to general STSs, and we
establish relationships between these properties.

\subsection{Several decisiveness notions}
Decisiveness has been defined in~\cite{ABM07} as a desirable property
of denumerable Markov chains, since it implies that they behave
essentially like finite Markov chains.

For $B \in \Sigma$ a measurable set, we define its \emph{avoid-set}
$\widetilde{B} = \{s \in S \mid \Prob_{\delta_s}^{\calT}(\F B) =
0\}$. It corresponds to the set of states from which the system will
always avoid the set $B$ with probability $1$ (or equivalently, reach
$B$ with probability $0$).  The set $\widetilde{B}$ enjoys the
following properties, that obviously hold also in the context of
denumerable Markov chains, but require proofs in our general context
of STSs (those proofs are postponed to the technical appendix,
page~\pageref{app-BtildeMes}).
\begin{restatable}{lemma}{BtildeMes}
  \label{lem:Btilde}\label{lemma:BTildeComplementaire}
  \label{lemma:BTildeEquivFGF}
  Given $B\in\Sigma$, it holds that: 
  \begin{enumerate}
  \item $\widetilde{B}$ belongs to the $\sigma$-algebra $\Sigma$;
  \item for every $\mu\in\Dist(\Btilde)$, $\Prob_{\mu}^{\calT}(\F
    B)=0$;
  \item for every $\mu \in \Dist(S)$, if $\mu((\widetilde{B})^c)>0$,
    then $\Prob_{\mu}^{\calT}(\F B)>0$;
  \item for every $\mu\in\Dist(S)$, $\Prob_{\mu}^{\calT}(\F
    \Btilde)=\Prob_\mu^\calT(\F \G \Btilde) =
    \Prob_{\mu}^{\calT}(\G\F \Btilde)$;
  \item for every $\mu\in\Dist(S)$, $\Prob_{\mu}^\calT(\F B\vee \F
    \Btilde)= \Prob_{\mu}^{\calT}(\F B \vee (\neg B \U \Btilde))$.
  \end{enumerate}
\end{restatable}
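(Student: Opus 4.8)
The plan is to prove the five items of Lemma~\ref{lem:Btilde} more or less in order, since several later items rely on earlier ones. Throughout I will freely use the operational semantics $\Prob^{\calT}_\mu$ built from cylinders, the fact (Caratheodory) that it is a bona fide probability measure on $(\Paths(\calT),\calF_{\calT})$, and the standard identity $\Prob^{\calT}_\mu(\F B)=\int_{s\in S}\Prob^{\calT}_{\delta_s}(\F B)\,\mu(\ud s)$, which follows by decomposing $\F B$ over the first state and using the inductive definition of the cylinder measure. I will also use the ``one-step'' relation $\Prob^{\calT}_{\delta_s}(\F B) = \mathds{1}_{s\in B} + \mathds{1}_{s\notin B}\int_{s'\in S}\Prob^{\calT}_{\delta_{s'}}(\F B)\,\kappa(s,\ud s')$, obtained by splitting on whether $s\in B$.

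\medskip\noindent\textbf{Item 1 ($\widetilde B\in\Sigma$).} The plan is to show that $s\mapsto \Prob^{\calT}_{\delta_s}(\F B)$ is a measurable function $S\to[0,1]$; then $\widetilde B$ is the preimage of $\{0\}$ and hence lies in $\Sigma$. For measurability I would write $\F B = \bigcup_{n\ge 0}(\F[= n] B)$ where $\F[=n]B$ is the event of hitting $B$ at step exactly $n$ for the first time, a disjoint union, so $\Prob^{\calT}_{\delta_s}(\F B)=\sup_N \Prob^{\calT}_{\delta_s}(\F[\le N]B)$. Each $\Prob^{\calT}_{\delta_s}(\F[\le N]B)$ is a finite composition of integrals against the kernel $\kappa$ (exactly of the form displayed for cylinder probabilities, with the $A_i$ taken among $B$ and $B^c$), and since $\kappa(\cdot,A)$ is measurable for each fixed $A$ and the kernel integrals preserve measurability (a routine monotone-class / Fubini-type argument, or cite the analogous statement behind Lemma~\ref{lem:probpaths}), each of these is measurable in $s$; the countable supremum is then measurable. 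This measurability fact is the technical heart of item~1 and I expect it to be the least routine part of the whole lemma.

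\medskip\noindent\textbf{Items 2 and 3.} Item~2 is immediate from the integral identity above: if $\mu(\widetilde B)=1$ then $\Prob^{\calT}_\mu(\F B)=\int_{\widetilde B}\Prob^{\calT}_{\delta_s}(\F B)\,\mu(\ud s)=\int_{\widetilde B}0\,\mu(\ud s)=0$. Item~3 is the contrapositive-flavoured converse: if $\mu((\widetilde B)^c)>0$, then since $\Prob^{\calT}_{\delta_s}(\F B)>0$ for every $s\in(\widetilde B)^c$ (by definition of $\widetilde B$), I would conclude $\Prob^{\calT}_\mu(\F B)\ge \int_{(\widetilde B)^c}\Prob^{\calT}_{\delta_s}(\F B)\,\mu(\ud s)>0$ — the strict positivity of the integral of a strictly positive measurable function over a set of positive measure is standard (e.g.\ bound below by $\mu(\{s\in(\widetilde B)^c : \Prob^{\calT}_{\delta_s}(\F B)\ge 1/k\})\cdot 1/k$ for $k$ large, at least one such set having positive measure by continuity from below).

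\medskip\noindent\textbf{Items 4 and 5.} For item~4 the plan is: first, $\F\G\widetilde B\subseteq\G\F\widetilde B\subseteq\F\widetilde B$ as events, giving $\Prob^{\calT}_\mu(\F\G\widetilde B)\le\Prob^{\calT}_\mu(\G\F\widetilde B)\le\Prob^{\calT}_\mu(\F\widetilde B)$; so it suffices to prove $\Prob^{\calT}_\mu(\F\widetilde B)\le\Prob^{\calT}_\mu(\F\G\widetilde B)$. The key observation is that $\widetilde B$ is \emph{absorbing for reaching $B$} in the following measure-theoretic sense: from $s\in\widetilde B$, $\Prob^{\calT}_{\delta_s}(\F B)=0$, hence by the one-step identity $\int\Prob^{\calT}_{\delta_{s'}}(\F B)\,\kappa(s,\ud s')=0$, so $\kappa(s,(\widetilde B)^c)=0$, i.e.\ $\kappa(s,\widetilde B)=1$. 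Thus from $\mu$-almost every state of $\widetilde B$ the next state is again in $\widetilde B$ almost surely; iterating (a Borel--Cantelli / induction over finite prefixes argument, formalised via cylinders), a path that is in $\widetilde B$ at some step is in $\widetilde B$ at all later steps almost surely, i.e.\ $\Prob^{\calT}_\mu(\F\widetilde B\setminus\F\G\widetilde B)=0$. This gives the remaining inequality and closes item~4. For item~5, note $\F B\vee\F\widetilde B$ and $\F B\vee(\neg B\,\U\,\widetilde B)$ differ only on paths that reach $\widetilde B$ while having visited $B$ strictly before first reaching $\widetilde B$; but any such path satisfies $\F B$, so it is counted on both sides — hence the two events coincide up to the set $\{\F\widetilde B\}\setminus\{\F B\}\setminus\{\neg B\,\U\,\widetilde B\}$, and on this set $B$ is hit before $\widetilde B$, contradicting that the path is in $\{\F\widetilde B\}\setminus\{\F B\}$ unless... — more carefully, I claim $\F B\vee\F\widetilde B = \F B\vee(\neg B\,\U\,\widetilde B)$ as a set equality: ``$\supseteq$'' is clear since $\neg B\,\U\,\widetilde B$ implies $\F\widetilde B$; for ``$\subseteq$'', a path in $\F\widetilde B$ either visits $B$ before its first visit to $\widetilde B$ (then it is in $\F B$) or it does not (then it is in $\neg B\,\U\,\widetilde B$). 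So the equality holds pathwise, with no exceptional null set needed, and taking $\Prob^{\calT}_\mu$ of both sides finishes. The main obstacle overall is the rigorous measurability argument in item~1 and the careful cylinder-level induction behind the ``almost-sure absorption'' claim in item~4; everything else is bookkeeping with the integral decomposition of $\Prob^{\calT}_\mu(\F\,\cdot\,)$.
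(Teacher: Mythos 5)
Your proposal is correct and follows essentially the same route as the paper's proof: item~1 via measurability of the iterated kernel integrals defining the $n$-step reachability probabilities, items~2--3 via the integral decomposition $\Prob^{\calT}_\mu(\F B)=\int_S\Prob^{\calT}_{\delta_s}(\F B)\,\mu(\ud s)$, and item~5 as a pathwise set equality. The only (cosmetic) deviation is in item~4, where you package the contradiction as a one-step absorption property $\kappa(s,\widetilde B)=1$ for $s\in\widetilde B$ and iterate, whereas the paper applies items~2 and~3 directly to an $m$-step cylinder $\Cyl(\widetilde B,S,\ldots,S,(\widetilde B)^c)$ of positive measure; both rest on the same interplay between items~2 and~3.
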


\smallskip
\noindent Let us comment on the third and fourth properties stated in
this lemma. The third item indicates that if we start from outside
$\widetilde{B}$, then we will always have a positive probability to
hit $B$. The fourth property says that $\widetilde{B}$ is some kind of
sink: once we hit $\widetilde{B}$, we cannot escape it.
The other properties are rather straightforward to understand (even
though proving the first property requires some technical
developments).

\medskip We are now ready to define different decisiveness
concepts. Two stem from~\cite{ABM07} (though no initial distribution
was fixed) while the third one was identified in~\cite{BBBC16} as a
useful alternative.
\begin{definition}
  Let $\mu$ be an initial probability distribution ($\mu \in
  \Dist(S)$). Then:
  \begin{itemize}
  \item $\calT$ is \emph{decisive w.r.t. $B$ from $\mu$} whenever
    $\Prob^{\calT}_\mu(\F B \vee \F \Btilde) = 1$; we then write that
    $\calT$ is $\D(\mu,B)$.
  \item $\calT$ is \emph{strongly decisive w.r.t. $B$ from $\mu$}
    whenever $\Prob^{\calT}_\mu(\G \F B \vee \F \Btilde) = 1$; we then
    write that $\calT$ is $\SD(\mu,B)$.
  \item $\calT$ is \emph{persistently decisive w.r.t. $B$ from
      $\mu$} whenever for every $p \ge 0$, $\Prob^{\calT}_\mu(\F[\ge
    p] B \vee \F[\ge p] \Btilde) = 1$; we then write that $\calT$ is
    $\PD(\mu,B)$.
  \end{itemize}
  Furthermore: $\calT$ is (\emph{strongly}, \emph{persistently}) \emph{decisive
  w.r.t. $B$} whenever it is (strongly, persistently) decisive
  w.r.t. $B$ from every initial distribution $\mu$; we then write that
  $\calT$ is $\D(B)$ (resp. $\SD(B)$, $\PD(B)$).  Also, given
  $\calB\subseteq\Sigma$, $\calT$ is (\emph{strongly},
  \emph{persistently}) \emph{decisive w.r.t. $\calB$ from $\mu$} if it is $\D(\mu,
  B)$ ($\SD(\mu, B)$, $\PD(\mu,B)$) for each $B\in\calB$. We write
  $\calT$ is $\D(\mu,\calB)$ ($\SD(\mu,\calB)$, $\PD(\mu,\calB)$). Similarly $\calT$ is (\emph{strongly}, \emph{persistently}) \emph{decisive
  w.r.t. $\calB$} if it is $\D(B)$ ($\SD(B)$, $\PD(B)$) for each
  $B\in\calB$. We write $\calT$ is $\D(\calB)$ ($\SD(\calB)$,
  $\PD(\calB)$). 
\end{definition}
Intuitively, the (simple) decisiveness property says that,
almost-surely, either $B$ will eventually be visited, or states from
which $B$ can no more be reached will eventually be visited. It
denotes a dichotomy between the behaviours of the STS $\calT$: there
are those behaviours that visit $B$, and those that do not visit $B$,
but then visit $\widetilde{B}$; other behaviours have probability $0$
to occur. Strong decisiveness imposes a similar dichotomy, but between
behaviours that visit $B$ infinitely often and behaviours that visit
$\widetilde{B}$. Persistent decisiveness refines simple decisiveness,
but by looking at an arbitrary horizon. It can also be seen as being
decisive from $\Omega^{(n)}_{\calT}(\mu)$ for every $n\ge 0$.

Note that if $\calT$ is finite, then it is decisive, strongly decisive
and persistently decisive.  Let us now illustrate the subtleties of
the various decisiveness notions on examples.

\begin{example}\label{example:BtildeDeci}
  Let us consider again the STS $\calT_\rw$ of
  Example~\ref{Example:DMCRandomWalk}, representing a discrete-time
  random walk, and assume $p>1/2$. As a classical result, the random
  walk is then diverging, see \emph{e.g.}~\cite{Bremaud}. Since the
  chain is strongly connected, for each $B\subseteq \IN$,
  $\Btilde=\emptyset$.  Let us first assume that the initial
  distribution is $\mu=\delta_0$, the Dirac distribution over state
  $0$. Then it can be shown that for each set of states $B$,
  $\Prob_{\mu}^{\calT_\rw}(\F B)=1$ and thus, $\calT_\rw$ is
  $\D(\mu, B)$.

  Assuming now that the initial distribution is $\mu'=\delta_1$, if
  $B=\lbrace 0\rbrace$, then
  $\Prob_{\mu'}^{\calT_\rw}(\F \lbrace 0\rbrace)<1$. But since
  $\Btilde=\emptyset$, we derive that $\calT_\rw$ is not
  $\D(\mu', B)$.

  Consider now for each $i\ge 0$, $B_i=\lbrace i \rbrace$. Since
  $p>1/2$, classical results on random walks imply that for each $i$,
  $\Prob_{\mu}^{\calT_\rw}(\G\F B_i)=0$. And since
  $\Btilde_i=\emptyset$, we obtain that $\calT_\rw$ is not $\SD(\mu,
  B_i)$.

  Consider now the STS $\calT_\qs$ of
  Example~\ref{Example:Continuous}. Assume that $\lambda>\nu$ and that
  $\mu=\delta_{(0,0)}$ and fix some $T>0$. We consider $B_1=\lbrace
  1\rbrace\times \intervalcc{0,T}$. Then one can compute
  $\Btilde=\IN\times\intervaloo{T,\infty}$.  Note that here, as time
  almost-surely always progresses, $\Prob_{\mu}^{\calT_\qs}(\F
  \Btilde)=1$. It thus follows that $\calT_\qs$ is $\D(\mu, B)$ and
  $\SD(\mu, B)$.
\end{example}

\subsection{Attractors}

The notion of finite attractor has been used in several contexts like
probabilistic lossy channel systems (see
\emph{e.g.}~\cite{ABRS05,rabinovitch06})
and abstracted in~\cite{ABM07} in the context of denumerable Markov
chains. A finite attractor is a finite set of states which is reached
almost-surely from every state of the system. We lift this definition
to our context, obviously relaxing the finiteness assumption, since it
is very unlikely that systems with a continuous state-space will have
finite attractors. Since the whole set of states is a trivial
attractor, this general definition will prove useful once we are able
to define attractors with some finiteness property, which will be done
through \emph{abstractions} in Section~\ref{sec:abstractions}.

\begin{definition}
  Let $\mu \in \Dist(S)$ be an initial distribution.  $B\in\Sigma$ is
  a \emph{$\mu$-attractor for $\calT$} if
  $\Prob^{\calT}_{\mu}(\F B)=1$ . Further, $B$ is an \emph{attractor
    for $\calT$} if it is a $\mu$-attractor for every
  $\mu\in\Dist(S)$.
\end{definition}

\begin{example}
  Consider again the random walk $\calT_\rw$ of
  Example~\ref{Example:DMCRandomWalk} and assume again that
  $p>1/2$. For $B=\lbrace 5\rbrace$, it can be shown that $B$ is a
  $\mu$-attractor for $\mu=\delta_0$. However, for any distribution
  $\mu'\in\Dist(\IN_{\ge 6})$ over natural numbers larger than or
  equal to $6$,
  $\Prob^{\calT_1}_{\mu'}(\F B)<1$ and thus $B$ is not a
  $\mu'$-attractor.

  On the other hand, if we assume $p \le 1/2$, it is a well-known
  property of random walks that $\{0\}$ is reached almost-surely from
  every state, hence we can infer that any bounded subset $A$ of $\IN$
  is an attractor (for every initial distribution).
\end{example}

The existence of an attractor is a rather strong property for an
STS. It will actually imply that attractors are almost-surely visited
infinitely often. While this result was already known for DMCs
(see~\cite[Lemmas 3.2 and 3.4]{ABM07}), the general case of STSs
requires a specific treatment. We choose to present the proof of this
result in the core of the paper, since it illustrates how to use at
the same time the two views on STSs.

\begin{restatable}{lemma}{attractorGF}
  \label{lemma:attractorGF}
  If $B$ is an attractor for $\calT$ then for every initial
  distribution $\mu\in \Dist(S)$,
  \[\Prob^{\calT}_{\mu}(\G \F B)=1. \]
\end{restatable}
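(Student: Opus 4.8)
The plan is to show that from any state, the probability of reaching $B$ again, after any number of steps, stays at $1$, and then use a Borel–Cantelli / monotone-convergence argument to conclude that $B$ is visited infinitely often almost surely. First I would observe that since $B$ is an attractor, $\Prob_{\delta_s}^{\calT}(\F B) = 1$ for every $s \in S$, and more generally $\Prob_{\nu}^{\calT}(\F B) = 1$ for every $\nu \in \Dist(S)$. The key step is to prove, by induction on $n$, the statement
\[
\Prob_{\mu}^{\calT}\Big(\bigwedge_{k=0}^{n} \F[\ge k] B\Big) = 1
\]
for every $n$ and every initial distribution $\mu$; equivalently, the event ``$B$ is visited at least $n+1$ times'' has probability $1$. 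Here the measure-transformer view is useful: decomposing $\Paths(\calT)$ according to the first hitting time of $B$ and the state reached there, one expresses the probability of ``visit $B$ at least once more after the first visit'' as an integral of $\Prob_{\delta_s}^{\calT}(\F[\ge 1] B)$ against the distribution of the first landing point in $B$, and $\Prob_{\delta_s}^{\calT}(\F[\ge 1] B) = \int_S \Prob_{\delta_{s'}}^{\calT}(\F B)\, \kappa(s,\ud s') = 1$ since $B$ is an attractor. Lemma~\ref{lemma:integration} (on conditional distributions $\mu_{A}$ and their images under $\Omega_{\calT}$) is exactly the tool that makes this decomposition rigorous in the continuous setting, and it drives the induction step: having visited $B$ at least $n$ times with probability $1$, the conditional distribution at the $n$-th visit is again some $\nu \in \Dist(S)$, from which $\F B$ holds almost surely.

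Once the finite statements are in hand, I would take the intersection over all $n$: the events $E_n = \{\rho \mid \rho \text{ visits } B \text{ at least } n \text{ times}\}$ form a decreasing sequence, each of probability $1$, so by continuity of measure $\Prob_{\mu}^{\calT}(\bigcap_n E_n) = 1$. But $\bigcap_n E_n$ is precisely the event that $B$ is visited infinitely often, i.e. $\ev{\calT}{\G\F B}$, so $\Prob_{\mu}^{\calT}(\G\F B) = 1$, as desired. I would need a small lemma that each $E_n$ is measurable — this is routine, since $E_n = \ev{\calT}{\F(B \wedge \F[\ge 1](B \wedge \cdots))}$ is built from finitely many nested until-operators and is therefore in $\calF_{\calT}$ by the remarks in Section~\ref{subsec:Events}.

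The main obstacle is making the "visit $B$ again" decomposition clean in the non-denumerable setting: in a DMC one would simply condition on the state at the hitting time, but here the first hitting time is itself a random (stopping) time and the landing distribution lives on a continuous space, so one must be careful that the relevant conditional measures are well-defined probability distributions in $\Dist(S)$ and that the strong-Markov-type splitting is justified. My expectation is that Lemma~\ref{lemma:integration} combined with a decomposition of $\F B$ over the exact step $k$ of first entry into $B$ (a countable disjoint union) handles this: for each fixed $k$ one has an honest cylinder-based expression to which Lemma~\ref{lemma:integration} applies, and summing over $k$ recovers the full event. The rest — the induction bookkeeping and the passage to the limit — is straightforward.
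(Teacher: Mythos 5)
Your argument is correct, but it takes a genuinely different route from the paper's. The paper proceeds by contradiction: if $\Prob^{\calT}_{\mu}(\G \F B)<1$ then $\Prob^{\calT}_{\mu}(\F\G B^c)>0$, and writing $\ev{\calT}{\F\G B^c}$ as $\bigcup_{n}\bigcap_{m}\Cyl(S,\ldots,S,B^c,\ldots,B^c)$ exhibits a \emph{deterministic} index $n$ after which $B^c$ holds forever with positive probability; a single application of Lemma~\ref{lemma:integration} (with $A_0=\cdots=A_{n-1}=S$) then yields $\Prob^{\calT}_{\nu}(\G B^c)>0$ for $\nu=\Omega^{(n)}_{\calT}(\mu)$, contradicting the fact that $B$ is a $\nu$-attractor. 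Because the splitting time is deterministic, no first-hitting-time decomposition is needed at all. Your proof instead runs a positive induction on the number of visits and splits at the (random) successive entry times into $B$; this forces you to sum over all tuples of hitting steps and to extend the cylinder identity of Lemma~\ref{lemma:integration} to events of the form ``cylinder followed by a shifted $\F B$'' via a monotone limit over finite horizons. That is all doable --- and it is essentially the bookkeeping the paper itself carries out later, e.g.\ in the proof of Proposition~\ref{prop:attractorSound}, where paths are decomposed according to visits to the attractor --- but it is heavier than necessary for this lemma. What your approach buys is the stronger intermediate statement that $B$ is visited at least $n$ times almost surely for every $n$, and a template closer to the classical strong-Markov argument for countable chains; what the paper's approach buys is brevity: one fixed-index application of Lemma~\ref{lemma:integration}, no induction, and no stopping-time machinery.
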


\begin{proof}
  Let $B$ be an attractor for $\calT$, i.e. for each initial
  distribution $\mu\in\Dist(S)$, $\Prob_{\mu}^{\calT}(\F
  B)=1$. Towards a contradiction, assume that there is
  $\mu\in\Dist(S)$ such that $\Prob_{\mu}^{\calT}(\G\F B)<1$. Then,
  $\Prob_{\mu}^{\calT}(\F\G B^c)>0$. Now remember that from the
  definitions, we have that
  \[
  \ev{\calT}{\F\G B^c} = \bigcup_{n\ge 0} \bigcap_{m\ge 0}
  \Cyl(\overbrace{S,\ldots,S}^{n\text{
      times}},\overbrace{B^c,\ldots,B^c}^{m\text{ times}}).
  \]
  It follows that there is $n\in\IN$ such that
  \[
  \lim_{m\rightarrow\infty}\Prob_{\mu}^{\calT}(\Cyl(\overbrace{S,\ldots,S}^{n\text{
      times}},\overbrace{B^c,\ldots,B^c}^{m\text{ times}}))>0. 
  \] 
  From Lemma~\ref{lemma:integration}, if we write $\nu_0=\mu$ and
  $\nu_j=\Omega_{\calT}(\nu_{j-1})$ for each $1\leq j\leq n-1$, we get
  that for each $m\ge 1$,
  \[
  \Prob_{\mu}^{\calT}(\Cyl(\overbrace{S,\ldots,S}^{n\text{
      times}},\overbrace{B^c,\ldots,B^c}^{m\text{ times}})) =
  \Prob_{\Omega_{\calT}(\nu_{n-1})}^{\calT}(\Cyl(\overbrace{B^c,\ldots,B^c}^{m\text{
      times}})) 
  \] 
  since $\mu(S)=1$ and for each $0\leq j\leq n-2$,
  $(\Omega_{\calT}(\nu_j))(S)=1$. It can be seen that in this case,
  for each $0\leq j\leq n-1$, $\nu_j=\Omega_{\calT}^{(j+1)}(\mu)$. We
  write
  $\nu=\Omega_{\calT}(\nu_{n-1})=\Omega_{\calT}^{(n)}(\mu)\in\Dist(S)$. We
  thus get that
  \[
  \lim_{m\rightarrow\infty}\Prob_{\nu}^{\calT}(\Cyl(\overbrace{B^c,\ldots,B^c}^{m\text{
      times}}))=\Prob_{\nu}^{\calT}(\G B^c)>0,
  \] 
  which contradicts the fact that $B$ is an attractor, hence a
  $\nu$-attractor, for $\calT$.
\end{proof}


\subsection{Fairness}
Fairness is a standard notion in probabilistic
systems~\cite{Pnu83,PZ93,BK98}, saying that something which is allowed
infinitely often should happen infinitely often almost-surely. This
can for instance be instantiated in denumerable Markov chains as
follows: if a state $s$ is visited infinitely often, and the
probability to move from $s$ to $s'$ is positive, then, almost-surely,
infinitely often the state $s'$ is visited. Not all Markov chains are
fair, but finitely-branching Markov chains are. Note that other
notions of fairness are discussed in~\cite{BK98}.

The above notion of fairness cannot be lifted directly to continuous
state-space STSs (since for two states $s$ and $s'$, the probability
to move from $s$ to $s'$ is likely to be $0$). A more careful
definition of this notion must be provided for general STSs. For
$B\in\Sigma$, we define
\[
\PreProb^\calT(B)=\lbrace
B'\in\Sigma\mid\forall\mu'\in \Dist(B'), \
\Prob^{\calT}_{\mu'}(\Cyl(B',B))>0\rbrace \enspace,
\] as the set of measurable sets $B'$ ``from which'' $B$ can be
reached with positive probability.  Note that, ideally we would like
to define the maximal set that allows one to reach $B$, but the union
of all such sets may not be measurable in our general context.

\begin{definition}
  Let $\mu \in \Dist(S)$ be some initial distribution, and $B \in \Sigma$.
  The STS $\calT$ is \emph{fair w.r.t. $B$ from $\mu$}, written
  $\calT$ is $\fair(\mu,B)$, if for every 
  $B'\in\PreProb^\calT(B)$, $\Prob^{\calT}_{\mu}(\G\F B')>0$ implies
  \[
  \Prob^{\calT}_{\mu}(\G\F B\mid \G\F B')=1.
  \]

  {As for decisiveness, we extend this definition to sets $\calB
    \subseteq \Sigma$, and use similar notations when we relax the
    fixed initial measure $\mu$. Finally, we say that $\calT$ is
    \emph{strongly fair} whenever it is fair w.r.t. $B$ from $\mu$ for
    every $B \in \Sigma$ and every $\mu \in \Dist(S)$.}
\end{definition}

\begin{example}\label{example:fairness}
  Consider again the random walk of
  Example~\ref{Example:DMCRandomWalk}. In $\calT_\rw$, there is a
  positive lower bound on each single move in the chain: take
  $\varepsilon=\min(p,1-p)>0$, and realize then that for every
  $B\subseteq S$, for every $B'\in\PreProb^{\calT_\rw}(B)$ and for every
  $s\in B'$, $\kappa_1(s,B)\ge\varepsilon$. Then using a proof similar
  to~\cite[Theo. 2]{BK98}, we deduce that $\calT_\rw$ is fair
  w.r.t. $B$ from any initial distribution. Hence, $\calT_\rw$ is
  strongly fair.
\end{example}

\begin{example}[Counter-example]\label{counterexample:fairness}
  Consider now the DMC $\calT_{\textsf{\upshape unfair}}$ depicted in
  Figure~\ref{Figure:Fairness}. From each state $a_n$ ($n \ge 1$), the
  probability to move to $b$ is $\frac{1}{3^n}$ whereas the
  probability to move to $a_{n+1}$ is $1-\frac{1}{3^n}$. From $b$, the
  probability to move to $a_1$ is $1$. 
  
  Consider $B = \{b\}$, $\mu=\delta_{b}$ and
  $B'=\lbrace a_n\mid n\in\IN\rbrace$. It is easy to see that,
  $B'\in\PreProb^{\calT_{\textsf{\upshape unfair}}}(B)$ and that
  $\Prob_{\mu}^{\calT_{\textsf{\upshape unfair}}}(\G\F
  B')>0$. However, $\Prob_{\mu}^{\calT_{\textsf{\upshape
        unfair}}}(\G\F B\mid \G\F B')<1$ 
  and thus $\calT_{\textsf{\upshape unfair}}$ is not $\fair(\mu,B)$.


\begin{figure}[!h]
\centering
\begin{tikzpicture}
\tikzstyle{ptt}=[scale=1]
\tikzstyle{loc}=[ptt,draw,circle,minimum size =.9cm];
\tikzstyle{inv}=[ptt,circle,minimum size =.9cm];
\tikzstyle{fleche}=[->,>=stealth', rounded corners=1pt];

\node[loc] (A1) at (0,-4) {$a_1$};
\node[loc] (A2) at (2, -4) {$a_2$};
\node[loc] (A3) at (4, -4) {$a_3$};
\node[loc] (A4) at (6, -4) {$a_4$};
\node[loc] (B) at (0, -1.5) {$b$};
\draw[fleche] (A1) -- (A2) node[midway, below] {$\frac{2}{3}$};
\draw[fleche] (A2) -- (A3) node[midway, below] {$\frac{8}{9}$};
\draw[fleche] (A3) -- (A4) node[midway, below] {$\frac{26}{27}$};
\draw[fleche,dashed] (A4) -- (8, -4);
\draw[fleche] (A1) -- (B) node[pos=.4, right] {$\frac{1}{3}$};
\draw[fleche] (A2.150) -- (B.300) node[pos=.28, above] {$\frac{1}{9}$};
\draw[fleche] (A3.120) -- (B.330) node[pos=.2, above] {$\frac{1}{27}$};
\draw[fleche, dashed] (A4.north) -- (B.east);
\draw[fleche] (B) to[bend right=45] node[midway, left] {$1$} (A1);
\end{tikzpicture} \caption{A denumerable Markov chain $\calT_3$ that is not strongly fair.}
\label{Figure:Fairness}
\end{figure}
\end{example}


\subsection{Relationships between the various properties}
\label{subsec:link}


In this section, we compare all the notions, and give the precise
links between all of them. We first analyze the general case, and
reinforce the results in the case of DMCs.

\begin{restatable}{proposition}{proplink}
  \label{prop:links}
  Let $\calB\subseteq \Sigma$ and $\mu\in\Dist(S)$. The following
  implications 
hold:
  \begin{enumerate}
  \item $\calT$ is $\D(\mu,\calB)$ $\Longleftarrow$ $\calT$ is
    $\SD(\mu,\calB)$ $\iff$ $\calT$ is $\PD(\mu,\calB)$ $\implies$
    $\calT$ is $\fair(\mu,\calB)$
  \item $\calT$ is $\D(\calB)$ $\iff$ $\calT$ is $\SD(\calB)$ $\iff$
    $\calT$ is $\PD(\calB)$ $\implies$ $\calT$ is $\fair(\calB)$
  \end{enumerate}
      

\end{restatable}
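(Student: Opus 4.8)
The plan is to fix an arbitrary $B\in\calB$ throughout (every notion quantifies over $B\in\calB$, so it suffices to argue for a single measurable set), to prove item~1 first, and then to derive item~2 from it together with the transformer characterisation of persistent decisiveness.

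\textbf{Item~1.} I would prove the chain $\D(\mu,B)\Longleftarrow\SD(\mu,B)\Longleftrightarrow\PD(\mu,B)\Longrightarrow\fair(\mu,B)$. The implication $\SD(\mu,B)\Rightarrow\D(\mu,B)$ is immediate from $\ev{\calT}{\G\F B\vee\F\Btilde}\subseteq\ev{\calT}{\F B\vee\F\Btilde}$ and monotonicity of $\Prob^\calT_\mu$. For $\PD(\mu,B)\Rightarrow\SD(\mu,B)$, put $E_p=\ev{\calT}{\F[\ge p]B}$ and $G=\ev{\calT}{\F\Btilde}$; since $\ev{\calT}{\F[\ge p]\Btilde}\subseteq G$, persistent decisiveness gives $\Prob^\calT_\mu(E_p\cup G)=1$ for every $p$, the sets $E_p$ are non-increasing in $p$, and one checks that $\bigcap_{p\ge 0}(E_p\cup G)=\ev{\calT}{\G\F B}\cup G$ (a path lying in every $E_p\cup G$ but not in $G$ lies in every $E_p$, hence in $\bigcap_p E_p=\ev{\calT}{\G\F B}$); continuity from above then yields $\Prob^\calT_\mu(\G\F B\vee\F\Btilde)=1$. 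For the converse $\SD(\mu,B)\Rightarrow\PD(\mu,B)$, I would combine $\ev{\calT}{\G\F B}\subseteq\ev{\calT}{\F[\ge p]B}$ with Lemma~\ref{lem:Btilde}(4), which makes $\ev{\calT}{\F\Btilde}$ coincide up to a $\Prob^\calT_\mu$-null set with $\ev{\calT}{\F\G\Btilde}\subseteq\ev{\calT}{\F[\ge p]\Btilde}$; hence $\Prob^\calT_\mu(\F[\ge p]B\vee\F[\ge p]\Btilde)\ge\Prob^\calT_\mu(\G\F B\vee\F\G\Btilde)=\Prob^\calT_\mu(\G\F B\vee\F\Btilde)=1$.

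\textbf{The delicate step $\PD(\mu,B)\Rightarrow\fair(\mu,B)$.} Using the equivalence just proved, it is enough to show $\SD(\mu,B)\Rightarrow\fair(\mu,B)$. Fix $B'\in\PreProb^\calT(B)$ with $\Prob^\calT_\mu(\G\F B')>0$; the goal reduces to $\Prob^\calT_\mu(\G\F B'\wedge\neg\G\F B)=0$. The key observation is that $B'\cap\Btilde=\emptyset$: for $s\in B'$, instantiating the definition of $\PreProb^\calT(B)$ with $\mu'=\delta_s\in\Dist(B')$ forces $\kappa(s,B)=\Prob^\calT_{\delta_s}(\Cyl(B',B))>0$, hence $\Prob^\calT_{\delta_s}(\F B)>0$ and $s\notin\Btilde$. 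By Lemma~\ref{lem:Btilde}(4), $\ev{\calT}{\F\Btilde}$ coincides up to a $\Prob^\calT_\mu$-null set with $\ev{\calT}{\F\G\Btilde}$, and since $\Btilde\subseteq(B')^c$ we get $\ev{\calT}{\F\G\Btilde}\subseteq\ev{\calT}{\F\G(B')^c}=\ev{\calT}{\neg\G\F B'}$; therefore $\Prob^\calT_\mu(\F\Btilde\wedge\G\F B')=0$. Since $\SD(\mu,B)$ is precisely $\Prob^\calT_\mu(\neg\G\F B\wedge\neg\F\Btilde)=0$, we conclude $\Prob^\calT_\mu(\G\F B'\wedge\neg\G\F B)\le\Prob^\calT_\mu(\G\F B'\wedge\F\Btilde)=0$, i.e.\ $\Prob^\calT_\mu(\G\F B\mid\G\F B')=1$, which is fairness w.r.t.\ $B$ from $\mu$.

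\textbf{Item~2.} The implications $\PD(\calB)\Rightarrow\SD(\calB)\Rightarrow\D(\calB)$ and $\PD(\calB)\Rightarrow\fair(\calB)$ follow by applying item~1 for every initial distribution, so the only genuinely new implication is $\D(\calB)\Rightarrow\PD(\calB)$. Here I would apply Lemma~\ref{lemma:integration} with $A_0=\dots=A_{p-1}=S$ to obtain $\Prob^\calT_\mu(\F[\ge p]B\vee\F[\ge p]\Btilde)=\Prob^\calT_{\Omega^{(p)}_\calT(\mu)}(\F B\vee\F\Btilde)$, so that $\PD(\mu,B)$ amounts to $\calT$ being decisive w.r.t.\ $B$ from $\Omega^{(p)}_\calT(\mu)$ for every $p$, which holds automatically once $\calT$ is $\D(B)$ (recall $\Omega^{(p)}_\calT(\mu)\in\Dist(S)$). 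I expect the main obstacle to be the null-set bookkeeping around $\Btilde$: the fact that reaching $\Btilde$ forces staying in $\Btilde$ forever is only an almost-sure statement (Lemma~\ref{lem:Btilde}(4)), so every seemingly pointwise inclusion relating $\F\Btilde$ to $\F\G\Btilde$ or to $\F[\ge p]\Btilde$, as well as the set identity $\bigcap_p(E_p\cup G)=\ev{\calT}{\G\F B}\cup G$, must be justified with care.
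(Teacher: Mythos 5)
Your proof is correct, and for most of the chain it follows the paper's own route: $\SD\Rightarrow\D$ by monotonicity, $\SD\Leftrightarrow\PD$ via Lemma~\ref{lem:Btilde}(4) in one direction and a limiting argument over $p$ in the other (you use continuity from above on $\bigcap_p(E_p\cup G)$ where the paper uses the dual union bound $\Prob(\F\G B^c\wedge\G(\Btilde)^c)\le\sum_p\Prob(\G_{\ge p}(B^c\cap(\Btilde)^c))=0$ --- these are interchangeable), and $\D(\calB)\Rightarrow\PD(\calB)$ by shifting the initial distribution to $\Omega^{(p)}_\calT(\mu)$ via Lemma~\ref{lemma:integration}, exactly as in the appendix. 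Where you genuinely depart from the paper is the fairness implication. The paper establishes $\Prob^\calT_\mu(\G\F B'\wedge\F\Btilde)=0$ by contradiction: it decomposes the event into cylinders $\Cyl(S^n,\Btilde,S^m,B')$, applies Lemma~\ref{lemma:integration} to extract a distribution $\nu$ with $\nu(\Btilde)>0$ from which $B$ is reached with positive probability, contradicting Lemma~\ref{lem:Btilde}(2). You instead prove the stronger pointwise fact $B'\cap\Btilde=\emptyset$ by instantiating the definition of $\PreProb^\calT(B)$ with Dirac distributions $\delta_s$ for $s\in B'$ (which is legitimate: $\delta_s\in\Dist(B')$ and $\Prob^\calT_{\delta_s}(\Cyl(B',B))=\kappa(s,B)$), and then conclude from the sink property $\Prob(\F\Btilde)=\Prob(\F\G\Btilde)$ that $\ev{\calT}{\F\Btilde}$ and $\ev{\calT}{\G\F B'}$ intersect in a null set. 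This buys a shorter argument that avoids the cylinder bookkeeping entirely; the paper's version is more robust in that it never needs the pointwise disjointness, but for this statement your route is sound and arguably cleaner. Your closing caveat about null-set bookkeeping is well placed, and you do handle it correctly (all inclusions involving $\F\G\Btilde$ are pointwise; only the passage from $\F\Btilde$ to $\F\G\Btilde$ is up to a null set).
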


The three missing implications in the above proposition do actually
not hold, as proved by the following example. We also illustrate the
fact that $\D(\mu,\calB)$ and $\fair(\mu,\calB)$ are incomparable.

\begin{example}[Counter-example]
  Consider again the random walk $\calT_\rw$ of
  Example~\ref{Example:DMCRandomWalk}. We have shown in
  Example~\ref{example:fairness} that $\calT_\rw$ is strongly fair,
  whatever the choice of $p$. Now let us assume that $p>1/2$ and let
  us consider the initial distribution $\mu=\delta_0$, the Dirac
  distribution over $0$. Then from Example~\ref{example:BtildeDeci},
  $\calT_\rw$ is decisive from $\mu$ w.r.t. any set of states.
  Again in this example, we have observed that it is not strongly
  decisive w.r.t. any set of the form $B=\lbrace i\rbrace$ with $i\ge
  0$. This shows that $\calT_\rw$ is neither
  $\D(\mu,\calB)\Rightarrow\SD(\mu,\calB)$, nor
  $\fair(\mu,\calB)\Rightarrow \SD(\mu,\calB)$, nor
  $\fair(\calB)\Rightarrow \SD(\calB)$. And since $\calT_\rw$ is not
  decisive from $\delta_1$ w.r.t. $\lbrace 0\rbrace$, this also proves
  that $\fair(\mu,\calB)$ does not imply $\D(\mu,\calB)$.

  In order to illustrate that $\D(\mu,\calB)$ does not imply
  $\fair(\mu,\calB)$ in general, we consider the DMC chain
  $\calT_{\textsf{\upshape unfair}}$ of
  Example~\ref{counterexample:fairness}. We consider $B=\{ b\}$ and
  $\mu=\delta_{b}$. It is easily observed that
  $\calT_{\textsf{\upshape unfair}}$ is $\D(\mu,B)$ as we start in $b$
  with probability $1$, but we have shown that
  $\calT_{\textsf{\upshape unfair}}$ is not $\fair(\mu,B)$.
\end{example}


If $\calT$ is a DMC, \emph{i.e}. if $S$ is at most denumerable and
$\Sigma=2^S$, we can complete the picture using furthermore the result
of~\cite[Lemma 3.4]{ABM07}, which says that any DMC with a finite
attractor is decisive w.r.t. any set of states.

\begin{proposition}
  \label{prop:linksDMC}
  Assume $\calT$ is a DMC. The following implications 
  hold:
\begin{center}
  \begin{tikzpicture}
    \node (attractor) {\begin{tabular}{c} {\normalsize $\calT$ has} \\
        {\normalsize a finite  attractor} \end{tabular}};

    \node (finite) [above=of attractor] {$\calT$ is finite};

    \node (strongly_decisive) [below=of attractor]
    {$\calT$ is $\SD(2^S)$};

    \node (decisive) [left=of strongly_decisive] {$\calT$ is
      $\D(2^S)$};

    \node(persistently_decisive) [right=of strongly_decisive] {$\calT$
      is $\PD(2^S)$};
      
    \node (fair) [below=of strongly_decisive,yshift=.3cm] {$\calT$ is
      strongly fair};

    \path (finite) -- (attractor) node [midway] {\rotatebox{-90}{$\implies$}};
    \path (attractor) -- (strongly_decisive) node [midway] {\rotatebox{-90}{$\implies$}};
    \path (decisive) -- (strongly_decisive) node [midway] {$\iff$};
    \path (strongly_decisive) --  (persistently_decisive) node
    [midway] {$\iff$};
    \path (strongly_decisive) -- (fair)  node [midway] {\rotatebox{-90}{$\implies$}};
  \end{tikzpicture}
\end{center}
\end{proposition}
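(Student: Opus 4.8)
The plan is to dispatch the four edges of the diagram separately, noting that three of them are either trivial or already subsumed by Proposition~\ref{prop:links}, so that the only substantial work is the step ``finite attractor $\Rightarrow \SD(2^S)$''. For the top edge, if $\calT$ is finite then $S$ is itself a finite set with $\Prob^\calT_\mu(\F S)=1$ for every $\mu\in\Dist(S)$, hence $S$ is a finite attractor. For the bottom part of the diagram, since here $\Sigma = 2^S$, the equivalences $\D(2^S)\iff\SD(2^S)\iff\PD(2^S)$ and the implication to ``$\calT$ is strongly fair'' (which is exactly $\fair(2^S)$ in our terminology) are nothing but the instance $\calB = 2^S$ of Proposition~\ref{prop:links}(2). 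So it remains to treat the middle edge.

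For ``finite attractor $\Rightarrow \SD(2^S)$'' I would recall the reasoning behind~\cite[Lemma~3.4]{ABM07}, adapted so that it delivers strong decisiveness directly. Let $A$ be a finite attractor, fix $B\subseteq S$ and $\mu\in\Dist(S)$, and partition $A$ into $A_0 = A\cap\Btilde$ and $A_1 = A\setminus\Btilde$. Because $A_1$ is finite and $\Prob^\calT_{\delta_a}(\F B)>0$ for every $a\in A_1$, the number $p_{\min} := \min_{a\in A_1}\Prob^\calT_{\delta_a}(\F B)$ is a strictly positive real. By Lemma~\ref{lemma:attractorGF}, $\Prob^\calT_\mu(\G\F A)=1$, and on the event $\neg\F\Btilde$ the run never visits $A_0\subseteq\Btilde$, hence visits $A_1$ infinitely often. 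A standard strong-Markov/Borel--Cantelli argument, applied at the successive (stopping) times of return to $A_1$, shows that a run visiting $A_1$ infinitely often hits $B$ infinitely often almost surely: conditioned on the history up to the $n$-th return to $A_1$, the chance of never reaching $B$ afterwards is at most $1-p_{\min}$, so the probability of visiting $A_1$ infinitely often while visiting $B$ only finitely often is at most $(1-p_{\min})^n$ for every $n$ (taking the countable union over the last time $B$ is seen keeps this null). Hence $\Prob^\calT_\mu(\G\F B\vee\F\Btilde)=1$, i.e.\ $\calT$ is $\SD(\mu,B)$; since $\mu$ and $B$ were arbitrary, $\calT$ is $\SD(2^S)$, and then Proposition~\ref{prop:links}(2) gives $\D(2^S)$, $\PD(2^S)$ and strong fairness for free.

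I expect the delicate point to be the rigorous phrasing of the Borel--Cantelli step: one must invoke the strong Markov property at the stopping times ``$n$-th visit to $A_1$'' and argue that each such visit contributes a fresh, bounded-below chance $\geq p_{\min}$ of eventually reaching $B$, independent of the past. It is exactly the finiteness of $A_1$ that guarantees $p_{\min}>0$; on an infinite attractor the infimum of the $\Prob^\calT_{\delta_a}(\F B)$ could be $0$ and (strong) decisiveness would genuinely fail, which is why the finiteness hypothesis is indispensable here.
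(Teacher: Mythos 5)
Your architecture is right, and the top edge and the bottom row are unproblematic (the latter is indeed just Proposition~\ref{prop:links}(2) instantiated with $\calB=2^S$; note that the equivalence $\D\iff\SD\iff\PD$ is exactly what quantifying over \emph{all} initial distributions buys you). For the middle edge you take a genuinely different route from the paper: the paper does not reprove anything there, it simply invokes \cite[Lemma~3.4]{ABM07} to get ``finite attractor $\implies \D(2^S)$'' and then obtains $\SD(2^S)$ and $\PD(2^S)$ for free from the equivalences of Proposition~\ref{prop:links}(2). Your proof instead establishes $\SD(2^S)$ directly; this is self-contained and is essentially the argument the paper runs (in a more general setting) for Proposition~\ref{prop:attractorSound}.

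There is, however, a concrete gap in your quantitative step. With $p_{\min}:=\min_{a\in A_1}\Prob^{\calT}_{\delta_a}(\F B)$ defined via \emph{unbounded-horizon} reachability, the bound $(1-p_{\min})^n$ does not follow from applying the strong Markov property at the successive returns to $A_1$: at the $j$-th return the event whose conditional probability is $\le 1-p_{\min}$ is ``never reach $B$ ever again'', and these events for different $j$ are nested, so the argument yields a single factor $1-p_{\min}$ rather than a telescoping product (the event ``no visit to $B$ before the next return to $A_1$'' can have conditional probability arbitrarily close to $1$). The standard repair --- and the place where finiteness of $A_1$ is really used --- is to pass to bounded horizons first: for each $a\in A_1$ pick $k_a$ with $\Prob^{\calT}_{\delta_a}(\F[\le k_a] B)>0$, set $k=\max_a k_a$ and $p=\min_a \Prob^{\calT}_{\delta_a}(\F[\le k] B)>0$, and then run the product bound over returns to $A_1$ spaced at least $k$ steps apart. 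This is precisely the r\^ole of the step bound $k$ in hypothesis $(\dag)$ of Proposition~\ref{prop:attractorSound}, whose appendix proof carries out the telescoping carefully, and the paper explicitly remarks there that the bound $k$ is technically required in that argument. (Alternatively, L\'evy's $0$--$1$ law gives the conclusion with only the unbounded-horizon $p_{\min}$.) The statement you are proving is true; it is only the justification of the $(1-p_{\min})^n$ bound as written that does not go through.
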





\section{Abstractions between STSs}
\label{sec:abstractions}
While decisiveness is well-defined for general STSs, proving that a
given STS $\calT$ is decisive might be technical in general. A
standard approach in model-checking to avoid such difficulties is to
abstract the system into a simpler one, that can be analyzed and
provides guarantees on the concrete system.  We thus propose a notion
of abstraction, which will help proving  properties of
general STSs. Also, through abstractions, we will be able to
characterize meaningful attractors.

Abstractions only preserve the positivity of probabilities, which is
rather weak. However, by strengthening the notion, we will see that
they will be very useful, not only for the qualitative analysis of
STSs, but also for their quantitative analysis.

\subsection{Abstraction}
\label{subsec:abst}
Let $\calT_1 = (S_1,\Sigma_1,\kappa_1)$ and $\calT_2 = (S_2,\Sigma_2,
\kappa_2)$ be two STSs. Let $\alpha : (S_1,\Sigma_1) \to
(S_2,\Sigma_2)$ be a measurable function.  A set $B \in \Sigma_1$ is
said \emph{$\alpha$-closed} whenever $B = \alpha^{-1}(\alpha(B))$: for
every $s,s' \in S_1$, if $s \in B$ and $\alpha(s) = \alpha(s')$, then
$s' \in B$.  Following~\cite{GBK16}, we define the pushforward of
$\alpha$ as $\alpha_\# : \Dist(S_1) \to \Dist(S_2)$ by
$\alpha_\#(\mu)(M_2) = \mu(\alpha^{-1}(M_2))$ for every $\mu \in
\Dist(S_1)$ and for every $M_2 \in \Sigma_2$. The role of the
pushforward $\alpha_{\#}$ is to transfer the measures from
$(S_1,\Sigma_1)$ to $(S_2,\Sigma_2)$.  In the following we will say
that two probability distributions $\mu$ and $\nu$ over some
probability space $(S,\Sigma)$ are \emph{qualitatively equivalent} if
for each $A\in\Sigma$, $\mu(A)=0\Leftrightarrow\nu(A)=0$.

\begin{definition}
  $\calT_2$ is an \emph{$\alpha$-abstraction} of $\calT_1$ if
  \begin{equation*}
    \forall \mu \in
    \Dist(S_1),\ \alpha_{\#}(\Omega_{\calT_1}(\mu))\ \text{is equivalent to} \ \Omega_{\calT_2}(\alpha_{\#}(\mu)) \enspace.
  \end{equation*}
\end{definition}
Later, one may speak of abstraction instead of $\alpha$-abstraction if
$\alpha$ is clear in the context.

From the definitions of $\Omega_{\calT}$, $\alpha_{\#}$ and equivalent
measures, the notion of $\alpha$-abstraction equivalently requires
that for every $\mu\in\Dist(S_1)$ and every $A\in\Sigma_2$,
\[
\Prob_{\mu}^{\calT_1}(\Cyl(S_1,\alpha^{-1}(A)))>0 \Longleftrightarrow
\Prob_{\alpha_{\#}(\mu)}^{\calT_2}(\Cyl(S_2, A))>0 \enspace. 
\]
Intuitively, the two STSs have the same ``qualitative'' single steps.


Let us now provide two examples of $\alpha$-abstraction.
  
  \begin{example}\label{ex:abstract1}
    Consider the two STSs $\calT_{\rw}=(S_\rw,\Sigma_\rw,\kappa_\rw)$
    --the discrete random walk on $\IN$ from
    Example~\ref{Example:DMCRandomWalk}-- and
    $\calT_\qs=(S_\qs, \Sigma_\qs, \kappa_\qs)$ --its continuous
    variant from Example~\ref{Example:Continuous}. Recall that
    $S_\rw=\IN$ and $S_\qs=\IN\times\Rpos$. Letting
    $\alpha: S_\qs \to S_\rw$ be the function defined by
    $\alpha((n,t))=n$, one can easily be convinced that $\calT_{\rw}$ is an
    $\alpha$-abstraction of $\calT_{\qs}$.
  \end{example}

    \begin{example}\label{ex:abstract2}
      Consider the random walk on $\IN$ from
      Example~\ref{Example:DMCRandomWalk}
      $\calT_{\rw}=(S_\rw,\Sigma_\rw,\kappa_\rw)$, and the finite
      Markov chain $\calT_f=(S_f, \Sigma_f,\kappa_f)$ depicted on
      Fig.~\ref{Figure:fMC}. We define a function
      $\alpha: S_\rw \to S_f$ by $\alpha(n)=s_n$ if $n\in \lbrace 0,1\rbrace$, and
      $\alpha(n)=2$ otherwise. Clearly enough $\calT_{f}$ is an
      $\alpha$-abstraction of $\calT_{\rw}$.

\begin{figure}[!h]
  \centering
  \begin{tikzpicture}
    \tikzstyle{ptt}=[scale=1]
    \tikzstyle{loc}=[ptt,draw,circle,minimum size =1cm, thick];
    \tikzstyle{inv}=[ptt,circle,minimum size =1cm, thick];
    \tikzstyle{fleche}=[->,>=stealth', thick, rounded corners=1pt];
    \node[loc] (lzero) at (0,0) {$s_0$};
    \node[loc] (lun) at (2.2,0) {$s_1$};
    \node[loc] (ldeux) at (4.4,0) {$s_2$};
    \draw[fleche] (lzero) to[bend left=30] node[ptt,midway, above] {$1$} (lun);
    \draw[fleche] (lun) to[bend left=30] node[ptt,midway, below] {$1-q$} (lzero);
    \draw[fleche] (lun) to[bend left=30] node[ptt,midway, above] {$q$} (ldeux);
    \draw[fleche] (ldeux) to[bend left=30] node[ptt,midway, below] {$1-q$} (lun);
 
    \draw[fleche] (ldeux)  to[in=-30,out=30,loop] node[ptt,midway,right] {$q$}(ldeux);
  \end{tikzpicture} \caption{A finite Markov chain.}
  \label{Figure:fMC}
\end{figure}
  \end{example}

  The notion of $\alpha$-abstraction naturally extends to labelled
  STSs.  A labelled STS $\calT_2=(S_2, \Sigma_2, \kappa_2, \AP_2,
  \calL_2)$ is an $\alpha$-abstraction of $\calT_1=(S_1, \Sigma_1,
  \kappa_1, \AP_1, \calL_1)$ whenever:
\begin{itemize}
\item $(S_2, \Sigma_2, \kappa_2)$ is an $\alpha$-abstraction of $(S_1, \Sigma_1, \kappa_1)$;
\item $\AP_1=\AP_2$;
\item for every $s_1$, $s'_1\in S_1$,
  $\alpha(s_1)=\alpha(s'_1)\Rightarrow \calL_1(s_1)=\calL_1(s'_1)$;
\item for every $s\in S_1$,
  $\calL_1(s)=a\Rightarrow\calL_2(\alpha(s))=a$.
\end{itemize}
The two last conditions imply that for each
$a\in 2^{\AP}$, $\calL_1^{-1}(\lbrace a\rbrace)$ is
$\alpha$-closed. Moreover, for each $a\in 2^{\AP}$,
$\alpha^{-1}(\calL_2^{-1}(\lbrace a\rbrace))=\calL_1^{-1}(\lbrace
a\rbrace)$. 

\medskip By definition, abstractions preserve the positivity of the
probability of single-step moves. More generally, one easily shows
(see the appendix page~\pageref{app:pos} for details) that the
positivity of reachability properties or more generally of properties
with bounded witnesses is also preserved through
$\alpha$-abstractions: assuming $\calT_2$ is an $\alpha$-abstraction
of $\calT_1$, for every $\mu \in \Dist(S_1)$, for every $A,B \in
\Sigma_2$:
\begin{equation}
\label{eq:PosReach}
\Prob_\mu^{\calT_1}(\ev{\calT_1}{\alpha^{-1}(A) \U \alpha^{-1}(B)}) >
0 \Longleftrightarrow
\Prob_{\alpha_{\#}(\mu)}^{\calT_2}(\ev{\calT_2}{A \U B}) > 0 
\end{equation}
However this does not apply to liveness properties, such as
$\ev{\calT}{\G \F A}$ with $A \in \Sigma_2$, or to other qualitative
questions like almost-sure reachability (reach $B$ with probability
$1$). To ensure that these more involved properties are preserved via
abstraction, we will strengthen the assumptions on the abstraction and
on the STSs. \label{PosReach}

\paragraph*{Soundness and completeness of abstractions}

We now define soundness and completeness of abstractions, that allow
one to lift properties of an abstraction $\calT_2$ to the concrete STS
$\calT_1$.  For the rest of this section, we fix an STS $\calT_1$ and
let $\calT_2$ be an $\alpha$-abstraction of $\calT_1$.

\begin{definition}
  Let $\mu\in\Dist(S_1)$. The $\alpha$-abstraction $\calT_2$ is
  \emph{$\mu$-sound} whenever for every $B \in \Sigma_2$:
  \[
  \Prob^{\calT_2}_{\alpha_{\#}(\mu)} (\F B) = 1 \Longrightarrow
  \Prob^{\calT_1}_{\mu} (\F \alpha^{-1}(B)) = 1\enspace.
  \]
  $\calT_2$ is a \emph{sound} $\alpha$-abstraction of $\calT_1$ if it
  is $\mu$-sound for every $\mu \in \Dist(S_1)$.
\end{definition}
Assuming soundness, it will thus be sufficient to prove that a
reachability property holds almost surely in the abstraction to derive
that the corresponding reachability property also holds almost surely
in the concrete STS.

\begin{definition}
  Let $\mu\in\Dist(S_1)$.  The $\alpha$-abstraction $\calT_2$ is
  \emph{$\mu$-complete} whenever for every $B \in \Sigma_2$,
  \[
  \Prob^{\calT_1}_{\mu} (\F \alpha^{-1}(B)) = 1 \Longrightarrow
  \Prob^{\calT_2}_{\alpha_{\#}(\mu)} (\F B) = 1
  \]
  $\calT_2$ is a \emph{complete} $\alpha$-abstraction of $\calT_1$ if
  it is $\mu$-complete for every $\mu \in \Dist(S_1)$.
\end{definition}
Assuming completeness, if a reachability property holds with
probability smaller than $1$ in the abstraction, then the
corresponding property will also have probability smaller than $1$ in
the concrete model.

Altogether, sound and complete abstractions will guarantee that, up to
$\alpha$, the same reachability properties are satisfied almost-surely
in $\calT_1$ and in $\calT_2$.


\begin{example}\label{example:abstr} We continue here
    Example~\ref{ex:abstract1} with $\calT_{\rw}$ (with parameter
    $0<p<1$) and $\calT_\qs$ (with parameters $\lambda>0$ and
    $\nu>0$).  One can show that $\calT_\rw$ is a sound and complete
    abstraction of $\calT_\qs$ whenever
    $p>1/2\Longleftrightarrow \lambda>\nu$.
    
\end{example}

\begin{example}\label{example:abstrbis} We continue here
  Example~\ref{ex:abstract2} with $\calT_{\rw}$ (with parameter
  $0<p<1$) and $\calT_f$ (with parameter $0<q<1$).  One can show that
  $\calT_f$ is always a complete abstraction of $\calT_\rw$; Moreover,
  $\calT_f$ is also sound if and only if $p \le 1/2$.
\end{example}

\subsection{Transfer of properties through abstractions}
\label{subsec:transfer}

In this section, we explain how and under which conditions one can
transfer interesting decisiveness, attractor and fairness properties
of STSs through abstractions.

\subsubsection{The case of sound abstractions}
\begin{restatable}{proposition}{MuDecisiveAbstr}
\label{thm:MuDecisiveAbstr}\label{coro:SoundDecisive}
  If $\calT_2$ is a $\mu$-sound $\alpha$-abstraction of $\calT_1$,
  then for every $B \in \Sigma_2$:
  \[
  \calT_2\ \text{is}\ \D(\alpha_{\#}(\mu),B)\ \implies\
  \calT_1\ \text{is}\ \D(\mu,\alpha^{-1}(B)) \enspace.
  \]
\end{restatable}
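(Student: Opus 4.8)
The plan is to collapse the disjunction $\F B \vee \F \widetilde{B}$ into reachability of a single measurable set and then apply $\mu$-soundness exactly once.

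First I would record the elementary identity $\alpha_{\#}(\delta_s) = \delta_{\alpha(s)}$ for every $s \in S_1$, which is immediate from the definition of the pushforward. Instantiating the positivity-transfer property~\eqref{eq:PosReach} at $A = S_2$ (so that $\alpha^{-1}(A) = S_1$, $S_1 \U \alpha^{-1}(B) = \F \alpha^{-1}(B)$ and $S_2 \U B = \F B$) and at $\mu = \delta_s$ then yields, for every $s \in S_1$ and every $B \in \Sigma_2$,
\[
\Prob^{\calT_1}_{\delta_s}(\F \alpha^{-1}(B)) > 0 \iff \Prob^{\calT_2}_{\delta_{\alpha(s)}}(\F B) > 0 \enspace.
\]
Taking the contrapositive gives the key inclusion $\alpha^{-1}(\widetilde{B}) \subseteq \widetilde{\alpha^{-1}(B)}$: if $\alpha(s) \in \widetilde{B}$, i.e.\ $\Prob^{\calT_2}_{\delta_{\alpha(s)}}(\F B) = 0$, then $\Prob^{\calT_1}_{\delta_s}(\F \alpha^{-1}(B)) = 0$, so $s \in \widetilde{\alpha^{-1}(B)}$. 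I expect this inclusion to be the only non-bookkeeping step, and even it is a one-line consequence of~\eqref{eq:PosReach}.

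Next I would observe that, as events over $\Paths(\calT_1)$,
\[
\ev{\calT_1}{\F \alpha^{-1}(B) \vee \F \alpha^{-1}(\widetilde{B})} \subseteq \ev{\calT_1}{\F \alpha^{-1}(B) \vee \F \widetilde{\alpha^{-1}(B)}}
\]
by the inclusion just established (any path reaching $\alpha^{-1}(\widetilde{B})$ a fortiori reaches $\widetilde{\alpha^{-1}(B)}$). Hence it suffices to prove $\Prob^{\calT_1}_{\mu}(\F \alpha^{-1}(B) \vee \F \alpha^{-1}(\widetilde{B})) = 1$. Since $\F X \vee \F Y$ and $\F(X \cup Y)$ define the same set of paths and $\alpha^{-1}(B) \cup \alpha^{-1}(\widetilde{B}) = \alpha^{-1}(B \cup \widetilde{B})$, this amounts to $\Prob^{\calT_1}_{\mu}(\F \alpha^{-1}(B \cup \widetilde{B})) = 1$.

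Finally I would close with $\mu$-soundness. The set $B \cup \widetilde{B}$ lies in $\Sigma_2$ because $\widetilde{B} \in \Sigma_2$ by Lemma~\ref{lem:Btilde}. The hypothesis that $\calT_2$ is $\D(\alpha_{\#}(\mu),B)$ reads $\Prob^{\calT_2}_{\alpha_{\#}(\mu)}(\F B \vee \F \widetilde{B}) = 1$, i.e.\ $\Prob^{\calT_2}_{\alpha_{\#}(\mu)}(\F(B \cup \widetilde{B})) = 1$. Applying $\mu$-soundness to the measurable set $B \cup \widetilde{B}$ gives $\Prob^{\calT_1}_{\mu}(\F \alpha^{-1}(B \cup \widetilde{B})) = 1$, which by the previous paragraph is exactly $\calT_1$ being $\D(\mu, \alpha^{-1}(B))$.
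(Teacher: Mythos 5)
Your proof is correct and follows essentially the same route as the paper: the paper's Lemma~\ref{lemma:BtildeAbstr} derives the relation between the avoid-sets from the positivity-transfer property (Corollary~\ref{coro:Until}, i.e.\ Equation~\eqref{eq:PosReach} applied to Dirac measures), and then soundness is applied once to the set $B \cup \widetilde{B}$. The only cosmetic difference is that the paper establishes the full equality $\widetilde{\alpha^{-1}(B)} = \alpha^{-1}(\widetilde{B})$ whereas you prove only the inclusion $\alpha^{-1}(\widetilde{B}) \subseteq \widetilde{\alpha^{-1}(B)}$, which is indeed the only direction needed here.
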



Using equivalences between the various properties
stated in Proposition~\ref{prop:links}, we can extend the above result
to other decisiveness properties: assuming $\calT_2$ is a sound
$\alpha$-abstraction of $\calT_1$, for every $B \in \Sigma_2$, if
$\calT_2$ is $\D(B)$ (or equiv. $\SD(B)$, $\PD(B)$) then $\calT_1$ is
$\D(\alpha^{-1}(B))$ (or equiv. $\SD(\alpha^{-1}(B))$,
$\PD(\alpha^{-1}(B))$).



The definitions of attractor and of sound $\alpha$-abstraction yield a
similar result:
\begin{proposition}
  \label{lem:attr-via-sound}
  If $\calT_2$ is a sound $\alpha$-abstraction of $\calT_1$ and if
  $A\in\Sigma_2$ is an attractor for $\calT_2$, then $\alpha^{-1}(A)$
  is an attractor for $\calT_1$.
\end{proposition}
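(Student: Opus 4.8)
The plan is to unfold both definitions and reduce everything to the single-step positivity property that abstractions guarantee, namely Equation~\eqref{eq:PosReach}. Fix $\mu \in \Dist(S_1)$; since $A$ is an attractor for $\calT_2$, it is in particular an $\alpha_{\#}(\mu)$-attractor, so $\Prob^{\calT_2}_{\alpha_{\#}(\mu)}(\F A) = 1$. By $\mu$-soundness applied with $B = A$, we get $\Prob^{\calT_1}_{\mu}(\F \alpha^{-1}(A)) = 1$, i.e. $\alpha^{-1}(A)$ is a $\mu$-attractor for $\calT_1$. Since $\mu \in \Dist(S_1)$ was arbitrary, $\alpha^{-1}(A)$ is an attractor for $\calT_1$.

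In fact, soundness alone already delivers the statement almost immediately, so the main content is just to check that the hypothesis ``$A$ is an attractor for $\calT_2$'' supplies $\Prob^{\calT_2}_{\alpha_{\#}(\mu)}(\F A) = 1$ for every $\mu \in \Dist(S_1)$. This is where one must be slightly careful: $\alpha_{\#}$ maps $\Dist(S_1)$ into $\Dist(S_2)$, but not necessarily onto it; however, we only need the weaker fact that $\alpha_{\#}(\mu) \in \Dist(S_2)$, and ``$A$ is an attractor for $\calT_2$'' means $\Prob^{\calT_2}_{\nu}(\F A) = 1$ for \emph{every} $\nu \in \Dist(S_2)$, in particular for $\nu = \alpha_{\#}(\mu)$. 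So no surjectivity is required.

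I would write this up as a two-line argument: first recall the definition of attractor to obtain $\Prob^{\calT_2}_{\alpha_{\#}(\mu)}(\F A) = 1$, then invoke $\mu$-soundness of $\calT_2$ with $B := A$ to conclude $\Prob^{\calT_1}_{\mu}(\F \alpha^{-1}(A)) = 1$, and finally quantify over $\mu$. There is essentially no obstacle here; the only thing worth a sentence is that this proposition is the ``$\mu$-uniform'' packaging of soundness, mirroring exactly how Proposition~\ref{thm:MuDecisiveAbstr} transfers decisiveness, and that it does not need completeness — completeness would be required only if one wanted the converse implication (that $\alpha^{-1}(A)$ being an attractor for $\calT_1$ forces $A$ to be one for $\calT_2$).
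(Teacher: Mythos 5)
Your proof is correct and follows exactly the route the paper intends: the paper states this proposition with no written proof, noting only that ``the definitions of attractor and of sound $\alpha$-abstraction yield'' the result, and your two-step argument (attractor property of $A$ applied to $\alpha_{\#}(\mu)$, then $\mu$-soundness with $B=A$) is precisely that immediate unfolding. Your side remarks — that no surjectivity of $\alpha_{\#}$ is needed and that completeness would only be required for the converse — are accurate but not needed for the statement.
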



Denumerable (and in particular finite) abstractions play an important
role, hence we summarize all interesting and useful implications and
equivalences for DMCs, which are direct consequences of
Propositions~\ref{prop:linksDMC} and~\ref{thm:MuDecisiveAbstr}.


\begin{proposition}
  \label{prop:DMCabs}
  Assume $\calT_2$ is a DMC, which is an $\alpha$-abstraction of
  $\calT_1$. Let $\calB=\lbrace \alpha^{-1}(B)\mid
  B\in\Sigma_2\rbrace$ be the set of $\alpha$-closed sets of
  $\Sigma_1$. The following implications and equivalences hold true:
  \begin{center}
    \begin{tikzpicture}
      \node (attractor) {\begin{tabular}{c} {\normalsize $\calT_2$ has a finite
          attractor} \\ {\normalsize and is a sound abstraction} \end{tabular}};
      \node (strongly_decisive) [below=of attractor,yshift=.15cm] {$\calT_1$ is $\SD(\calB)$};
      \node (decisive) [left=of strongly_decisive] {$\calT_1$ is $\D(\calB)$};
      \node (finite) [above=of attractor,yshift=-.3cm] {\begin{tabular}{r}
          {\normalsize $\calT_2$ is a finite and } \\ {\normalsize 
            sound abstraction} \end{tabular}};
      \node (persistently_decisive) [right=of strongly_decisive]
      {$\calT_1$ is $\PD(\calB)$};
      
      \node (fair) [below=of strongly_decisive,yshift=.3cm] {$\calT_1$ is $\fair(\calB)$};
      
      \path (finite) -- (attractor)  node [midway] {\rotatebox{-90}{$\implies$}};
      \path (attractor) -- (strongly_decisive) node [midway] {\rotatebox{-90}{$\implies$}};
      \path (decisive) -- (strongly_decisive) node [midway] {$\iff$};
      \path (strongly_decisive) --  (persistently_decisive) node
      [midway] {$\iff$};
      \path (strongly_decisive) -- (fair)  node [midway] {\rotatebox{-90}{$\implies$}};
    \end{tikzpicture}
  \end{center}
\end{proposition}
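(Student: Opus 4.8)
The statement is essentially a combination of two results already in the excerpt, assembled for the special case where the abstraction $\calT_2$ is a DMC. The plan is to derive each arrow by invoking the relevant earlier proposition, taking care that the hypotheses line up.

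First I would establish the two vertical implications at the top. The arrow ``$\calT_2$ is a finite and sound abstraction $\implies$ $\calT_2$ has a finite attractor and is a sound abstraction'' is immediate: in any finite Markov chain the whole (finite) state space $S_2$ is trivially an attractor, since $\Prob^{\calT_2}_{\nu}(\F S_2)=1$ for every initial distribution $\nu$; and soundness is carried over verbatim. The next arrow, ``$\calT_2$ has a finite attractor and is sound $\implies$ $\calT_1$ is $\SD(\calB)$'', is the heart of the diagram. Here I would argue: since $\calT_2$ is a DMC with a finite attractor, Proposition~\ref{prop:linksDMC} (itself relying on \cite[Lemma~3.4]{ABM07}) gives that $\calT_2$ is $\SD(2^{S_2})$, hence in particular $\SD(B)$, equivalently $\D(B)$, for every $B\in\Sigma_2=2^{S_2}$. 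Now invoke the sound-abstraction transfer result, Proposition~\ref{thm:MuDecisiveAbstr} (together with the remark following it, extending it from $\D$ to $\SD$/$\PD$ via Proposition~\ref{prop:links}): soundness of the $\alpha$-abstraction lifts $\D(B)$ on $\calT_2$ to $\D(\alpha^{-1}(B))$ on $\calT_1$, for every $B\in\Sigma_2$. Since $\calB=\{\alpha^{-1}(B)\mid B\in\Sigma_2\}$ by definition, this says exactly $\calT_1$ is $\D(\calB)$, equivalently $\SD(\calB)$, equivalently $\PD(\calB)$.

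The remaining arrows of the diagram are then pure applications of Proposition~\ref{prop:links}(2): for the subclass $\calB\subseteq\Sigma_1$, the equivalences $\D(\calB)\iff\SD(\calB)\iff\PD(\calB)$ hold, and $\SD(\calB)\implies\fair(\calB)$. One small point worth spelling out: Proposition~\ref{prop:links} is stated for an arbitrary $\calB\subseteq\Sigma$, so it applies directly to our particular $\calB$; nothing about $\alpha$-closedness is needed for those equivalences, only for the transfer step. Thus the three horizontal/diagonal arrows in the bottom row come for free.

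I expect the only genuine subtlety — and the step I would write most carefully — to be the alignment of quantifiers in the transfer step: Proposition~\ref{thm:MuDecisiveAbstr} is phrased ``for every $B\in\Sigma_2$, $\calT_2$ is $\D(\alpha_\#(\mu),B)\implies\calT_1$ is $\D(\mu,\alpha^{-1}(B))$'', so to conclude $\calT_1$ is $\D(\alpha^{-1}(B))$ (no fixed $\mu$) one needs $\mu$-soundness for \emph{every} $\mu\in\Dist(S_1)$, i.e. plain soundness, which is exactly what the hypothesis ``sound abstraction'' supplies; and one needs $\calT_2$ to be $\D(\alpha_\#(\mu),B)$ for every such $\mu$, which follows because $\calT_2$ being $\D(B)$ (from the finite attractor via Proposition~\ref{prop:linksDMC}) means decisive from \emph{every} initial distribution, in particular from every $\alpha_\#(\mu)$. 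Everything else is bookkeeping.
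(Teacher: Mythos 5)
Your proposal is correct and follows exactly the route the paper intends: the paper gives no separate proof, stating only that the proposition is a direct consequence of Propositions~\ref{prop:linksDMC} and~\ref{thm:MuDecisiveAbstr}, which is precisely the combination you assemble (attractor $\Rightarrow$ decisiveness of the DMC $\calT_2$, transfer through soundness to $\calT_1$, then the equivalences of Proposition~\ref{prop:links} on $\calB$). Your careful note on the quantifier alignment in the transfer step is a useful explicit check that the paper leaves implicit.
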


\subsubsection{Trickier transfers of properties}
We established that decisiveness properties could be transferred
through sound abstractions. However, proving soundness of an
abstraction is not easy in general, and one way to do it is by proving
some decisiveness properties. It is therefore relevant to explore
alternatives to prove decisiveness properties. We provide here two
such alternatives.

First, we assume a denumerable abstraction, and lower bounds on
probabilities of reachability properties. 
\begin{restatable}{proposition}{attractorSound}
  \label{prop:attractorSound}
  Let $\calT_2$ be a DMC such that $\calT_2$ is an
  $\alpha$-abstraction of $\calT_1$.
  \begin{enumerate}
    \item Assume that there is a finite set
  $A_2=\lbrace s_1, \ldots, s_n\rbrace\subseteq S_2$ such that $A_2$
  is an attractor for $\calT_2$ and $A_1=\bigcup_{i=1}^n
  \alpha^{-1}(s_i)=\alpha^{-1}(A_2)$ is an attractor for
  $\calT_1$.
\item Assume moreover that for every $1\leq i\leq n$, for every
  $\alpha$-closed set $B$ in $\Sigma_1$, there exist $p>0$ and
  $k\in\IN$ such that:
  \begin{itemize}
  \item for every $\mu\in\Dist(\alpha^{-1}(s_i))$,
    $\Prob^{\calT_1}_{\mu}(\F[\leq k] B)\geq p$, or
  \item for every $\mu\in\Dist(\alpha^{-1}(s_i))$,
    $\Prob^{\calT_1}_{\mu}(\F B)=0$.
  \end{itemize}
\end{enumerate}
  Then $\calT_1$ is decisive w.r.t. every $\alpha$-closed set.
\end{restatable}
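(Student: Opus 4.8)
The plan is to fix an $\alpha$-closed set $B = \alpha^{-1}(B_2)$ with $B_2 \in \Sigma_2$, an arbitrary initial distribution $\mu \in \Dist(S_1)$, and to show $\Prob_\mu^{\calT_1}(\F B \vee \F \widetilde{B}) = 1$. The key idea is to use the attractor $A_1 = \alpha^{-1}(A_2)$ to restart the analysis infinitely often: by Lemma~\ref{lemma:attractorGF}, almost every run visits $A_1$ infinitely often, and $A_1$ is a finite union of the ``fibers'' $\alpha^{-1}(s_i)$. So it suffices to argue that from each fiber $\alpha^{-1}(s_i)$, a run almost surely eventually either hits $B$ or hits $\widetilde{B}$, and then invoke a Borel--Cantelli / second-countable-state argument across the infinitely many returns.

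The main steps, in order: First I would observe that condition~(2) gives a genuine dichotomy on each fiber: either from every $\mu' \in \Dist(\alpha^{-1}(s_i))$ we have $\Prob_{\mu'}^{\calT_1}(\F[\le k] B) \ge p > 0$ (call $s_i$ a \emph{good} index), or from every such $\mu'$ we have $\Prob_{\mu'}^{\calT_1}(\F B) = 0$, i.e.\ $\alpha^{-1}(s_i) \subseteq \widetilde{B}$ (call $s_i$ a \emph{bad} index). For a bad index, by Lemma~\ref{lemma:BTildeComplementaire}(2)--(4), once a run enters $\alpha^{-1}(s_i)$ it stays in $\widetilde{B}$ forever and thus satisfies $\F \widetilde{B}$. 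Second, I would set up the return times: let $\tau_0 < \tau_1 < \tau_2 < \cdots$ be the successive visit times to $A_1$ (almost surely infinite in number and well-defined, by the attractor property). Conditioned on the history up to $\tau_j$ and on the run being at some state in fiber $\alpha^{-1}(s_i)$, the conditional distribution at time $\tau_j$ is some $\mu' \in \Dist(\alpha^{-1}(s_i))$, so by the Markov property and step~one, if $s_i$ is good the run hits $B$ within the next $k$ steps with probability at least $p$; if $s_i$ is bad, the run has already entered $\widetilde{B}$, so we are done on that branch. Third, I would conclude: restricting to the event $\neg \F \widetilde{B}$, every return must be to a good fiber (returns to bad fibers force $\F \widetilde B$), and at each such return there is an independent-in-the-conditional-sense chance $\ge p$ of hitting $B$ within $k$ steps; since there are infinitely many returns, a standard second-Borel--Cantelli-type estimate (as in the proof of decisiveness under a finite attractor in~\cite{ABM07}, or as in Lemma~\ref{lemma:attractorGF}) shows $B$ is hit almost surely. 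Hence $\Prob_\mu^{\calT_1}(\F B \mid \neg \F \widetilde B) = 1$, which rearranges to $\D(\mu, B)$.

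I expect the main obstacle to be making the ``infinitely many independent tries'' argument rigorous in the continuous-state, non-denumerable-branching setting. The delicate point is that the conditional distribution at return time $\tau_j$ is not a single fixed $\mu'$ but a random element of $\Dist(\alpha^{-1}(s_i))$; the uniform lower bound in condition~(2) (``for \emph{every} $\mu' \in \Dist(\alpha^{-1}(s_i))$'') is exactly what is needed to push past this, but one must phrase the argument via conditional probabilities $\Prob_\mu^{\calT_1}(\,\cdot \mid \calF_{\tau_j})$ and a suitable strong-Markov-type property for STSs rather than via a naive product of probabilities. A clean way to organize this is to define, for $N \in \IN$, the events $E_N = $ ``the first $N$ returns to $A_1$ are all to good fibers and none of the corresponding length-$k$ windows hits $B$'' and to show $\Prob_\mu^{\calT_1}(E_N) \le (1-p)^N$ by induction on $N$ using the tower property and step~one; then $\Prob_\mu^{\calT_1}(\bigcap_N E_N) = 0$, and on $\neg \F \widetilde B \setminus \bigcap_N E_N$ the run hits $B$. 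One should also double-check measurability of the return times and of the good/bad classification of visited fibers, but since $A_2$ is finite this is routine. Finally, the two degenerate cases — a run that from some point on never returns to $A_1$ (probability $0$ by the attractor property) and a run where some $\tau_j$ is undefined — are handled directly by Lemma~\ref{lemma:attractorGF}.
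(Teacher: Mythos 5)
Your proposal is correct and follows essentially the same route as the paper's proof: both rely on Lemma~\ref{lemma:attractorGF} to get infinitely many visits to the attractor, split the fibers $\alpha^{-1}(s_i)$ according to the dichotomy in hypothesis~(2), take uniform $p$ and $k$ over the finitely many relevant fibers, and drive the probability of never hitting $B$ down via a $(1-p)^N$ bound established by induction on cylinder decompositions (your events $E_N$ and the tower-property induction are exactly the paper's careful unfolding via Lemma~\ref{lemma:integration}). The only cosmetic difference is that you argue directly on $\neg\F\widetilde{B}$ while the paper argues by contradiction from $\Prob_\mu^{\calT_1}(\G\alpha^{-1}(B^c)\wedge\G\alpha^{-1}((\widetilde{B})^c))>0$.
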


While the first condition on transfer of attractors is easily
readable, let us discuss the second one. It intuitively says that,
whenever some ($\alpha$-closed) set $B$ can be reached with positive
probability from some distribution $\mu$ with support
$\alpha^{-1}(s_i)$, where $s_i$ is an element of the finite attractor
of $\calT_2$, then it should be reachable with a probability
lower-bounded by some $p$, $p$ being independent of the choice of
$\mu$; furthermore an upper bound $k$ on the number of steps for
reaching $B$ is technically required in the proof, but we do not know
whether it is needed for the result to hold.

We write $(\dag)$ for the hypotheses over $\calT_1$ in this
proposition.\label{notation:dag}
The idea behind this result is that, with probability $1$, the
attractor of $\calT_1$ will be visited infinitely often, and, if at
each visit of the attractor, there is a positive probability to reach
some ($\alpha$-closed) set $B$, since that probability is by
assumption bounded from below, then $B$ will indeed be visited
infinitely often with probability $1$. This will allow to show the
dichotomy between reachability of $B$ and reachability of
$\widetilde{B}$, which is required for proving the decisiveness
property. The full proof is given in the appendix,
page~\pageref{app-attractorsound}, but we give here a sketch. Note
that this kind of proofs appears quite often in the literature (see
\emph{e.g.}~\cite[Lemma 3.4]{ABM07}, but we have to do it carefully
here, since the framework is rather general).

\begin{proof}[Sketch of proof]
    Fix $B\subseteq S_2$ and $\mu\in\Dist(S_1)$. Towards a
    contradiction, assume that $\calT_1$ is not $\mu$-decisive
    w.r.t. $B$: this means that
    $\Prob_{\mu}^{\calT_1}(\G\alpha^{-1}(B^c)\wedge \G
    \alpha^{-1}((\Btilde)^c))>0$.

    Since $A_1$ is an attractor, we deduce from
    Lemma~\ref{lemma:attractorGF} that
    \[
    \Prob_{\mu}^{\calT_1}(\G\alpha^{-1}(B^c)\wedge \G
    \alpha^{-1}((\Btilde)^c)\wedge\G\F A_1)>0\enspace.
    \]
    We write $A'_2\subseteq A_2$ for the non-empty\footnote{Since $A_1
      = \bigcup_{s \in A_2} \alpha^{-1}(s)$.} set of states $s$ of
    $A_2$ such that
    \[
    \Prob_{\mu}^{\calT_1}(\G\alpha^{-1}(B^c)\wedge \G
    \alpha^{-1}((\Btilde)^c)\wedge\G\F \alpha^{-1}(s))>0
    \]
    Then obviously $A'_2\subseteq B^c\cap(\Btilde)^c$.

    Since $A'_2\subseteq(\Btilde)^c$, from
    Lemma~\ref{lemma:BTildeComplementaire} (third item), the
    hypothesis $(\dag)$
    and the finiteness of $A'_2$, we get that
    there is $p>0$ and $k\in\IN$ such that for every $s\in A'_2$ and
    every $\mu\in\Dist(\alpha^{-1}(s))$, 
    \[
    \Prob_\mu^{\calT_1}(\F[{\leq k}] B)\ge p\enspace.
    \]
    Writing $A'_1$ for $\alpha^{-1}(A'_2)$, we can show that
    \begin{eqnarray*}
      0 & < & \Prob_{\mu}^{\calT_1}(\G\alpha^{-1}(B^c)\wedge \G \alpha^{-1}((\Btilde)^c)\wedge\G\F A'_1)\notag\\
      {} & \leq & \Prob_{\mu}^{\calT_1}(\G\alpha^{-1}(B^c)\wedge\G\F A'_1)\notag\\
      {} & \leq & \lim_{n\to\infty} (1-p)^n =0\notag
    \end{eqnarray*}
    which is the required contradiction.
\end{proof}

Second, we strengthen the hypothesis on the abstraction, assuming it
is finite, but we relax the condition on $\calT_1$, requiring only a
fairness property.

\begin{restatable}{proposition}{propfairness}
  \label{prop:fairness}
  Let $\calT_2$ be a finite Markov chain such that $\calT_2$ is an
  $\alpha$-abstraction of $\calT_1$. 
  Fix $\mu \in \Dist(S_1)$, and assume that $\calT_1$ is $\mu$-fair
  w.r.t. every $\alpha$-closed set.
  Then $\calT_1$ is $\mu$-decisive w.r.t. every $\alpha$-closed set.
\end{restatable}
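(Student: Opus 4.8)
The plan is to fix $\mu \in \Dist(S_1)$ and an $\alpha$-closed set $B = \alpha^{-1}(B_2)$ with $B_2 \in \Sigma_2 = 2^{S_2}$, and to show $\Prob_\mu^{\calT_1}(\F B \vee \F \widetilde B) = 1$. First I would exploit the finiteness of $\calT_2$: a finite Markov chain is itself decisive, strongly decisive and persistently decisive (as noted right after the definition of decisiveness), and moreover it admits a finite attractor — indeed, the set of bottom strongly connected components (BSCCs) of $\calT_2$ is a finite attractor, since from every state a BSCC is reached almost surely. Call this finite attractor $A_2 \subseteq S_2$, and set $A_1 = \alpha^{-1}(A_2)$. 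The key structural fact I would establish is that $A_1$ is an attractor for $\calT_1$: one uses Proposition~\ref{lem:attr-via-sound} if soundness is available, but soundness is not assumed here, so instead I would argue directly. Actually the cleanest route: it suffices to show $\calT_2$ is a sound abstraction of $\calT_1$ under the fairness hypothesis — but that is essentially the conclusion we are heading toward, so one must be careful about circularity. Let me instead argue the attractor transfer by hand: since $A_2$ is reached a.s. in $\calT_2$ from $\alpha_\#(\delta_s)$ for every $s \in S_1$, and since $\calT_2$ being finite means ``reached a.s.'' is equivalent to ``reached with positive probability within a bounded number of steps uniformly'' (finitely many starting states, so a uniform bound and uniform positive probability exist), the positivity-of-bounded-reachability transfer~\eqref{eq:PosReach} together with a Borel–Cantelli / geometric-tail argument gives that $\alpha^{-1}(A_2)$ is reached a.s. in $\calT_1$ from every initial distribution; hence $A_1$ is an attractor for $\calT_1$.

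Next, by Lemma~\ref{lemma:attractorGF} applied to $\calT_1$, the attractor $A_1$ is visited infinitely often almost surely: $\Prob_\mu^{\calT_1}(\G\F A_1) = 1$. Now decompose $A_2$ into its BSCCs (or just single states of $A_2$) $a^{(1)}, \dots, a^{(m)}$. Since $A_1 = \bigcup_i \alpha^{-1}(a^{(i)})$ is a finite union, almost every run that visits $A_1$ infinitely often visits $\alpha^{-1}(a^{(i)})$ infinitely often for at least one fixed $i$; partitioning on which such $i$ occurs, it suffices to show that on the event $\G\F \alpha^{-1}(a^{(i)})$, almost surely $\F B \vee \F \widetilde B$ holds. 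Fix such an $i$ and write $C = \alpha^{-1}(a^{(i)})$, an $\alpha$-closed set. There are two cases. \emph{Case 1:} $a^{(i)}$ can reach $B_2$ in $\calT_2$, i.e.\ for some $\mu' \in \Dist(C)$, $\Prob_{\mu'}^{\calT_1}(\F B) > 0$ — equivalently, via~\eqref{eq:PosReach}, $B_2$ is reachable from $a^{(i)}$ in the finite chain $\calT_2$. Then $C \in \PreProb^{\calT_1}(\hat B)$ for a suitable bounded-reachability set $\hat B$, and here is where fairness enters: $\calT_1$ is $\mu$-fair w.r.t.\ the $\alpha$-closed set $B$ (and w.r.t.\ the relevant intermediate $\alpha$-closed sets witnessing reachability of $B$ from $C$). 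Since $\Prob_\mu^{\calT_1}(\G\F C) > 0$ on this sub-event, $\mu$-fairness forces $\Prob_\mu^{\calT_1}(\G\F B \mid \G\F C) = 1$, hence in particular $\F B$ holds almost surely on $\G\F C$. \emph{Case 2:} $a^{(i)}$ cannot reach $B_2$ in $\calT_2$; then by~\eqref{eq:PosReach} again, from every $\mu' \in \Dist(C)$ we have $\Prob_{\mu'}^{\calT_1}(\F B) = 0$, so $C \subseteq \widetilde B$, and hence on $\G\F C$ the property $\F \widetilde B$ holds trivially.

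Combining: on the full-measure event $\G\F A_1$, partitioned according to which $\alpha^{-1}(a^{(i)})$ is visited infinitely often, every part satisfies $\F B \vee \F \widetilde B$ almost surely, so $\Prob_\mu^{\calT_1}(\F B \vee \F \widetilde B) = 1$, i.e.\ $\calT_1$ is $\mu$-decisive w.r.t.\ $B$. Since $B$ was an arbitrary $\alpha$-closed set, this completes the proof. The main obstacle I anticipate is making Case~1 rigorous: one has to carefully identify, from the finite-chain reachability $a^{(i)} \leadsto B_2$, a specific member $B' \in \PreProb^{\calT_1}(\cdot)$ so that the fairness hypothesis applies verbatim and yields $\G\F B$ rather than merely $\F B$ — in particular one may need to chain fairness through a short sequence of $\alpha$-closed ``stepping-stone'' sets corresponding to a finite path $a^{(i)} \to \cdots \to B_2$ in $\calT_2$, using at each step that the relevant preimage set is $\alpha$-closed and lies in the appropriate $\PreProb^{\calT_1}$. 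A secondary (but routine) subtlety is the uniform-bound argument establishing that $A_1$ is an attractor for $\calT_1$ without invoking soundness, which must be done from the positivity transfer~\eqref{eq:PosReach} plus finiteness of $\calT_2$.
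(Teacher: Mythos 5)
Your overall architecture matches the paper's: both proofs pass through the bottom strongly connected components of the finite chain $\calT_2$, show that the preimage of their union is a $\mu$-attractor for $\calT_1$, and then observe that each BSCC $C$ of $\calT_2$ satisfies either $C \cap B_2 \neq \emptyset$ or $C \subseteq \widetilde{B_2}$, so that a Bayes decomposition over the BSCC reached yields decisiveness. The second half of your argument (the case split on whether $B_2$ is reachable from the BSCC, and the need to chain fairness through $\alpha$-closed stepping stones along a finite path of $\calT_2$) is exactly what the paper does.

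The genuine gap is in your attractor-transfer step, which you dismiss as a ``routine subtlety'' but which is in fact where all the work lies — and your proposed argument for it is invalid. You claim that since $A_2$ is reached from every state of $\calT_2$ with probability at least some uniform $p>0$ within $k$ steps, the positivity transfer~\eqref{eq:PosReach} plus a geometric-tail argument gives almost-sure reachability of $\alpha^{-1}(A_2)$ in $\calT_1$. But \eqref{eq:PosReach} preserves only the \emph{positivity} of probabilities, not quantitative lower bounds: it tells you that for every $\mu'\in\Dist(\alpha^{-1}(\{s\}))$ one has $\Prob^{\calT_1}_{\mu'}(\F[\le k]\alpha^{-1}(A_2))>0$, but gives no uniform $p>0$ valid for all such $\mu'$, and without uniformity the Borel--Cantelli/geometric argument collapses. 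A uniform bound of this kind is precisely hypothesis $(\dag)$ of Proposition~\ref{prop:attractorSound}, which is \emph{not} implied by the abstraction relation; the chain $\calT_{\textsf{unfair}}$ of Example~\ref{counterexample:fairness} (make $b$ absorbing and collapse all $a_n$ to one abstract state) is a direct counterexample to the step as you argue it. The paper avoids this by deriving the attractor property \emph{from the fairness hypothesis}: it first shows, by chaining $\mu$-fairness along single edges of $\calT_2$ (each $\alpha^{-1}(\{s_i\})$ lies in $\PreProb^{\calT_1}(\alpha^{-1}(\{s_{i+1}\}))$ when $\kappa_2(s_i,s_{i+1})>0$), that any $s\in S_2$ with $\Prob^{\calT_1}_\mu(\G\F\alpha^{-1}(\{s\}))>0$ must belong to a BSCC, and then partitions paths according to which abstract states are visited infinitely often to conclude $\Prob^{\calT_1}_\mu(\F\alpha^{-1}(\calC))=1$. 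So fairness must enter already at the attractor stage, not only in your Case~1; as written, your proof does not go through.
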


\begin{proof}[Sketch of proof]
  We give here the main steps of the proof, the details are postponed
  to the appendix (page~\pageref{app-fairness}).

  A key element of the proof relies on the fact that, since $\calT_2$
  is a finite MC, it can be viewed as a graph and we can talk of the
  \emph{bottom strongly connected components (BSCC)} of $\calT_2$. The
  first step of the proof aims at showing that, roughly speaking, the
  union of all BSCCs of $\calT_2$ is a $\mu$-attractor for
  $\calT_1$. More precisely, if $\calC=\lbrace s\in S_2\mid\exists
  C\in\mathsf{BSCC}(\calT_2), \ s\in C\rbrace$, we prove that
  $\Prob_\mu^{\calT_1}(\alpha^{-1}(\calC))=1$. This is shown thanks to
  the following arguments:
  \begin{itemize}
  \item for each $s\in S_2$,
    $\Prob_\mu^{\calT_1}(\G\F\alpha^{-1}(s))>0$ implies that
    $s\in\calC$ -- this uses the $\mu$-fairness assumption of $\calT_1$
    w.r.t. $\alpha$-closed sets, and the core property of BSCCs (we
    cannot escape from them);
  \item using Bayes formula, one can decompose the set of paths
    according to the states which are visited infinitely often (which
    corresponds to a decomposition according to the BSCC the path
    ultimately visit).
  \end{itemize}

  Once we have shown that $\alpha^{-1}(\calC)$ is a $\mu$-attractor
  for $\calT_1$, it suffices to observe that for each
  $B \subseteq S_2$ and each BSCC $C$ of $\calT_2$, either
  $B\cap C\neq\emptyset$, or $C\subseteq\Btilde$. Transferring those
  observations to $\calT_1$ and using Bayes formula to decompose
  $\Prob_\mu^{\calT_1}(\F\alpha^{-1}(B)\vee\F\alpha^{-1}(\Btilde))$
  according to which BSCC is reached, it is easy to check that
  $\Prob_\mu^{\calT_1}(\F\alpha^{-1}(B)\vee\F\alpha^{-1}(\Btilde))=1$.
\end{proof}

\subsection{Conditions for completeness and soundness}

In our applications (Section~\ref{sec:appli}), completeness will be
for free. Indeed, a simple condition (finiteness) implies completeness
as stated in the next lemma.
\begin{lemma}
  \label{lemma:finite-soundness}
  If $\calT_2$ is a finite Markov chain and an $\alpha$-abstraction of
  $\calT_1$, then $\calT_2$ is complete.
\end{lemma}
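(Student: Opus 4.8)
The plan is to prove the contrapositive-free direction directly: assume $\calT_2$ is a finite Markov chain which is an $\alpha$-abstraction of $\calT_1$, fix an arbitrary $\mu \in \Dist(S_1)$ and an arbitrary $B \in \Sigma_2$, and assume that $\Prob^{\calT_1}_\mu(\F \alpha^{-1}(B)) = 1$; I must then show $\Prob^{\calT_2}_{\alpha_\#(\mu)}(\F B) = 1$. The natural first step is to view $\calT_2$ as a finite graph and recall the standard fact that from any state a finite Markov chain almost-surely reaches a bottom strongly connected component (BSCC), and that inside a BSCC $C$ every state of $C$ is visited infinitely often almost-surely. So the key reduction is: $\Prob^{\calT_2}_{\alpha_\#(\mu)}(\F B) < 1$ would force the existence of a BSCC $C$ of $\calT_2$ with $C \cap B = \emptyset$ that is reached with positive probability from $\alpha_\#(\mu)$.

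First I would make precise the "reachable BSCC avoiding $B$" idea. Suppose for contradiction that $\Prob^{\calT_2}_{\alpha_\#(\mu)}(\F B) < 1$. Since $\calT_2$ is finite, with probability $1$ a path of $\calT_2$ enters some BSCC, so there is a BSCC $C$ with $\Prob^{\calT_2}_{\alpha_\#(\mu)}(\F C) > 0$ and (by the assumption) such that the set of paths entering $C$ and never visiting $B$ has positive measure; since $C$ is a BSCC, once a path enters $C$ it stays in $C$ forever, so in fact $C \cap B = \emptyset$ is forced for at least one such $C$. Now I use the transfer of positivity of bounded/reachability witnesses, Equation~\eqref{eq:PosReach} (or more basically the defining property of $\alpha$-abstraction iterated along a finite path): since $\Prob^{\calT_2}_{\alpha_\#(\mu)}(\F C) > 0$, and $C$ is reachable in finitely many steps with positive probability, the corresponding event in $\calT_1$ has positive probability, i.e. $\Prob^{\calT_1}_\mu(\F \alpha^{-1}(C)) > 0$.

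The heart of the argument is then to show that from $\alpha^{-1}(C)$ the concrete system $\calT_1$ has zero probability of ever reaching $\alpha^{-1}(B)$, contradicting $\Prob^{\calT_1}_\mu(\F \alpha^{-1}(B)) = 1$. For this I would argue that $\alpha^{-1}(C)$ is "$\alpha$-absorbing" for $\calT_1$ in the qualitative sense: because $C$ is a BSCC of $\calT_2$, from any state of $C$ the kernel $\kappa_2$ puts mass $0$ outside $C$, and by the $\alpha$-abstraction property (applied to $\Omega_{\calT_1}$ versus $\Omega_{\calT_2}$, using that $C$ and hence $\alpha^{-1}(C)$, $\alpha^{-1}(C^c)$ are $\alpha$-closed), any single step of $\calT_1$ from a distribution supported on $\alpha^{-1}(C)$ assigns probability $0$ to $\alpha^{-1}(C^c)$; iterating, $\Prob^{\calT_1}_\nu(\G \alpha^{-1}(C)) = 1$ for every $\nu \in \Dist(\alpha^{-1}(C))$. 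Since $C \cap B = \emptyset$, we have $\alpha^{-1}(C) \subseteq \alpha^{-1}(B)^c$, and moreover $\alpha^{-1}(C) \subseteq \widetilde{\alpha^{-1}(B)}$ (the avoid-set), so $\Prob^{\calT_1}_\mu(\F \alpha^{-1}(C)) > 0$ together with Lemma~\ref{lem:Btilde}(2) applied to the conditional distribution on $\alpha^{-1}(C)$ yields $\Prob^{\calT_1}_\mu(\F \alpha^{-1}(B)) < 1$, the desired contradiction. (Concretely: split $\Prob^{\calT_1}_\mu(\F \alpha^{-1}(B))$ using Lemma~\ref{lemma:integration}/Bayes on the first hitting time of $\alpha^{-1}(C)$, which carries positive mass onto a component that can no longer reach $\alpha^{-1}(B)$.)

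I expect the main obstacle to be the formal justification that $C$ a BSCC of $\calT_2$ implies $\alpha^{-1}(C)$ is almost-surely absorbing in $\calT_1$ — this requires carefully invoking the $\alpha$-abstraction equivalence "$\Prob^{\calT_1}_\nu(\Cyl(S_1,\alpha^{-1}(A))) > 0 \Leftrightarrow \Prob^{\calT_2}_{\alpha_\#(\nu)}(\Cyl(S_2,A)) > 0$" with $A = C^c$, observing $\Prob^{\calT_2}_{\alpha_\#(\nu)}(\Cyl(S_2,C^c)) = 0$ whenever $\mathrm{supp}(\alpha_\#(\nu)) \subseteq C$ because no edge of $\calT_2$ leaves the BSCC $C$, and then lifting "single step stays inside" to "all steps stay inside" by induction on path length (using $\sigma$-additivity over the countable union of cylinders). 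The rest — reachability of an avoiding BSCC in the finite abstraction, and the Bayes decomposition in $\calT_1$ — is routine given the tools already established in the paper, especially Equation~\eqref{eq:PosReach} and Lemma~\ref{lem:Btilde}.
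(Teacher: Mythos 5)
Your overall strategy is the same as the paper's: exhibit, in the finite abstraction, a positive-probability finite witness for ``avoid $B$ forever'', transfer its positivity to $\calT_1$ via Equation~\eqref{eq:PosReach}, and contradict $\Prob^{\calT_1}_\mu(\F\alpha^{-1}(B))=1$. The paper does this with a single finite path ending in a state from which $B$ is unreachable; you make the underlying BSCC structure explicit and additionally prove that $\alpha^{-1}(C)$ is almost-surely absorbing in $\calT_1$, which is a valid (if slightly longer) way of obtaining $\alpha^{-1}(C)\subseteq\widetilde{\alpha^{-1}(B)}$ --- the same inclusion follows more directly from Lemma~\ref{lemma:BtildeAbstr} together with $C\subseteq\widetilde{B}$ in the finite chain.

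However, the last inference of your third paragraph does not hold as written. From $\Prob^{\calT_1}_\mu(\F\alpha^{-1}(C))>0$ and $\alpha^{-1}(C)\subseteq\widetilde{\alpha^{-1}(B)}$ one cannot conclude $\Prob^{\calT_1}_\mu(\F\alpha^{-1}(B))<1$: a path may visit $\alpha^{-1}(B)$ \emph{before} it first enters $\alpha^{-1}(C)$, so conditioning on the first hitting time of $\alpha^{-1}(C)$ (your parenthetical Bayes split) only controls what happens after that time. Concretely, if from $\mu$ the system reaches $\alpha^{-1}(B)$ almost surely and only afterwards moves into $\alpha^{-1}(C)$ with positive probability, both of your hypotheses hold while $\Prob^{\calT_1}_\mu(\F\alpha^{-1}(B))=1$. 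The repair is short and uses material you already set up: in your second paragraph you established that the event ``enter $C$ without ever visiting $B$'' has positive probability in $\calT_2$, i.e.\ $\Prob^{\calT_2}_{\alpha_{\#}(\mu)}(\neg B \U C)>0$; Equation~\eqref{eq:PosReach} is stated for Until properties, so it transfers this to $\Prob^{\calT_1}_\mu(\neg\alpha^{-1}(B)\U\alpha^{-1}(C))>0$, hence $\Prob^{\calT_1}_\mu(\neg\alpha^{-1}(B)\U\widetilde{\alpha^{-1}(B)})>0$, and Lemma~\ref{lemma:QualTechnical}(i) then yields $\Prob^{\calT_1}_\mu(\F\alpha^{-1}(B))<1$, the desired contradiction. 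With that single step corrected the argument is sound.
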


\begin{proof}
  Pick $s_0 \in S_2$, and $\mu \in \Dist(\alpha^{-1}(\{s_0\}))$ (in
  particular, $\alpha_{\#}(\mu) = \delta_{s_0}$, the Dirac measure
  over $\{s_0\}$).  Assume that $\Prob^{\calT_1}_\mu(\F
  \alpha^{-1}(B)) = 1$ but $\Prob^{\calT_2}_{\alpha_{\#}(\mu)} (\F
  B) < 1$.

  Since $\calT_2$ is a finite Markov chain, there are
  $s_1,\ldots,s_n\in S_2$ such that
  \[
  \Prob_{\delta_{s_0}}^{\calT_2}(\Cyl(s_0,s_1,\ldots,s_n))>0
  \]
  and for each $\rho=(s_i)_{i\ge 0}\in\Cyl(s_0,\ldots,s_n)$ and for
  each $i\ge 0$, $s_i\notin B$.

  For each $0\leq i\leq n$, we write $A_i =
  \alpha^{-1}(\{s_i\})$. Then, following Equation~\eqref{eq:PosReach}
  (page~\pageref{eq:PosReach}), we get that
  $\Prob^{\calT_1}_\mu(\Cyl(A_0,A_1,\dots,A_n))>0$. However,
  $\Cyl(A_0,A_1,\dots,A_n) \cap \ev{\calT}{\F \alpha^{-1}(B)} =
  \emptyset$, yielding a contradiction.
\end{proof}
Note that the above lemma does not hold for denumerable
abstractions. To illustrate this, any two random walks over
$\mathbb{N}$ are abstractions of each other, and it is well-known
that almost-sure reachability depends on the probability values.

In general, completeness can be guaranteed by some decisiveness
condition on the abstract system. Note that, since finite Markov
chains are always decisive, the next lemma actually subsumes the
latter one, that we however found interesting to have as such.

\begin{lemma}
  \label{lemma:completeness}
  Let $\mu \in \Dist(S_1)$. Assume that $\calT_2$ is an
  $\alpha$-abstraction of $\calT_1$ and that $\calT_2$ is
  $\D(\alpha^{\#}(\mu))$. Then, $\calT_2$ is a $\mu$-complete
  $\alpha$-abstraction.
\end{lemma}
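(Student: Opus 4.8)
The plan is to unfold the definition of $\mu$-completeness and reduce it, via the decisiveness of $\calT_2$, to a statement about positivity of reachability probabilities, which is exactly what an $\alpha$-abstraction preserves (Equation~\eqref{eq:PosReach}). So fix $B\in\Sigma_2$ and assume $\Prob^{\calT_1}_{\mu}(\F\alpha^{-1}(B))=1$; we must show $\Prob^{\calT_2}_{\alpha_{\#}(\mu)}(\F B)=1$. First I would argue contrapositively: suppose $\Prob^{\calT_2}_{\alpha_{\#}(\mu)}(\F B)<1$. Since $\calT_2$ is $\D(\alpha_{\#}(\mu))$, hence $\D(\alpha_{\#}(\mu),B)$, we have $\Prob^{\calT_2}_{\alpha_{\#}(\mu)}(\F B\vee\F\widetilde B)=1$, and therefore $\Prob^{\calT_2}_{\alpha_{\#}(\mu)}(\F\widetilde B)>0$ (where $\widetilde B$ is the avoid-set of $B$ computed in $\calT_2$). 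The key observation is that $\widetilde B$ is a single measurable set from which $B$ is reached with probability $0$ in $\calT_2$.

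Next I would transfer this positivity back to $\calT_1$. From $\Prob^{\calT_2}_{\alpha_{\#}(\mu)}(\F\widetilde B)>0$ and Equation~\eqref{eq:PosReach} (with the roles of the sets chosen as $A=S_2$, $B=\widetilde B$, so $\alpha^{-1}(A)=S_1$ and $\alpha^{-1}(S_2)\U\alpha^{-1}(\widetilde B)$ is just $\F\alpha^{-1}(\widetilde B)$), we obtain $\Prob^{\calT_1}_{\mu}(\F\alpha^{-1}(\widetilde B))>0$. Now I claim that from any state of $\alpha^{-1}(\widetilde B)$, the set $\alpha^{-1}(B)$ is reached with probability $0$ in $\calT_1$: indeed, if some $\mu'\in\Dist(\alpha^{-1}(\widetilde B))$ had $\Prob^{\calT_1}_{\mu'}(\F\alpha^{-1}(B))>0$, then by a finite-horizon version of positivity preservation (again Equation~\eqref{eq:PosReach}, or directly the definition of $\alpha$-abstraction iterated, since reachability has bounded witnesses) we would get $\Prob^{\calT_2}_{\alpha_{\#}(\mu')}(\F B)>0$; but $\alpha_{\#}(\mu')$ is supported on $\widetilde B$, contradicting $\Prob^{\calT_2}_{\delta_s}(\F B)=0$ for $s\in\widetilde B$ together with Lemma~\ref{lem:Btilde}(2). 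Hence $\alpha^{-1}(\widetilde B)\subseteq\widetilde{\alpha^{-1}(B)}$ (the avoid-set computed in $\calT_1$), and in particular any run of $\calT_1$ that reaches $\alpha^{-1}(\widetilde B)$ never afterwards reaches $\alpha^{-1}(B)$.

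Finally, combining the two displayed facts, the set of runs of $\calT_1$ reaching $\alpha^{-1}(\widetilde B)$ has positive $\Prob^{\calT_1}_{\mu}$-measure and is disjoint from $\ev{\calT_1}{\F\alpha^{-1}(B)}$; this contradicts $\Prob^{\calT_1}_{\mu}(\F\alpha^{-1}(B))=1$. Therefore $\Prob^{\calT_2}_{\alpha_{\#}(\mu)}(\F B)=1$, and since $B$ was arbitrary, $\calT_2$ is a $\mu$-complete $\alpha$-abstraction. I expect the main obstacle to be the careful handling of the ``from any state of $\alpha^{-1}(\widetilde B)$'' claim: one needs the positivity-transfer equation not just from the fixed $\mu$ but uniformly for every distribution supported on the relevant $\alpha$-closed set, and one must be sure that $\alpha^{-1}(\widetilde B)$ is $\alpha$-closed (which it is, being an inverse image under $\alpha$) so that Equation~\eqref{eq:PosReach} genuinely applies.
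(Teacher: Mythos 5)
Your overall strategy is the right one, and two of your three ingredients are sound: extracting positivity in $\calT_2$ from decisiveness, and showing $\alpha^{-1}(\widetilde{B})\subseteq\widetilde{\alpha^{-1}(B)}$ by pushing distributions supported on $\alpha^{-1}(\widetilde{B})$ forward and invoking Lemma~\ref{lem:Btilde} (this inclusion is in fact an equality, cf.\ Lemma~\ref{lemma:BtildeAbstr}). But the final step contains a genuine error: from $\Prob^{\calT_1}_{\mu}(\F \alpha^{-1}(\widetilde{B}))>0$ and the fact that $\alpha^{-1}(B)$ is unreachable \emph{from} $\alpha^{-1}(\widetilde{B})$, you conclude that the event $\ev{\calT_1}{\F\alpha^{-1}(\widetilde{B})}$ is disjoint from $\ev{\calT_1}{\F\alpha^{-1}(B)}$. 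That is false: a run may visit $\alpha^{-1}(B)$ \emph{first} and enter the avoid-set afterwards, so it lies in both events. (Toy example with $\alpha=\mathrm{id}$: states $a\to b\to c\to c$, $B=\{b\}$, $\widetilde{B}=\{c\}$; from $\delta_a$ both $\F B$ and $\F\widetilde{B}$ hold with probability $1$.) Consequently $\Prob^{\calT_1}_{\mu}(\F\alpha^{-1}(\widetilde{B}))>0$ is perfectly compatible with $\Prob^{\calT_1}_{\mu}(\F\alpha^{-1}(B))=1$, and no contradiction follows. The problem originates earlier: when you pass from $\Prob^{\calT_2}_{\alpha_{\#}(\mu)}(\F B\vee\F\widetilde{B})=1$ to ``$\Prob^{\calT_2}_{\alpha_{\#}(\mu)}(\F\widetilde{B})>0$'', you discard exactly the information (``$\widetilde{B}$ is reached \emph{without first visiting} $B$'') that the contradiction needs.

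The repair is to transfer the Until event rather than the plain reachability event, which is what the paper's proof does. By Lemma~\ref{lem:Btilde} (fifth item), decisiveness gives $\Prob^{\calT_2}_{\alpha_{\#}(\mu)}(\F B\vee(\neg B\U\widetilde{B}))=1$, hence $\Prob^{\calT_2}_{\alpha_{\#}(\mu)}((\neg B)\U\widetilde{B})>0$ when $\Prob^{\calT_2}_{\alpha_{\#}(\mu)}(\F B)<1$. Equation~\eqref{eq:PosReach} applies to Until (take $A=B^{c}$, target $\widetilde{B}$) and yields $\Prob^{\calT_1}_{\mu}\bigl(\alpha^{-1}(\neg B)\U\alpha^{-1}(\widetilde{B})\bigr)>0$. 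This event \emph{does} have null intersection with $\ev{\calT_1}{\F\alpha^{-1}(B)}$ (Lemma~\ref{lemma:QualTechnical}(i), using your correct observation that $\alpha^{-1}(\widetilde{B})=\widetilde{\alpha^{-1}(B)}$), which contradicts $\Prob^{\calT_1}_{\mu}(\F\alpha^{-1}(B))=1$. With that substitution your argument becomes essentially the paper's proof.
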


\begin{proof}
  Fix $B\in\calB$ and assume that $\Prob^{\calT_1}_\mu(\F
  \alpha^{-1}(B)) = 1$ but $\Prob^{\calT_2}_{\alpha_{\#}(\mu)} (\F B)
  < 1$.

  Since $\calT_2$ is $\D(\alpha^{\#}(\mu))$, we infer from
  Lemma~\ref{lem:Btilde} (fifth item) that
  $\Prob^{\calT_2}_{\alpha_{\#}(\mu)} ({(\neg B) \U \widetilde{B}})
  >0$, and applying Equation~\eqref{eq:PosReach}
  (page~\pageref{eq:PosReach}),
  we get that $\Prob^{\calT_1}_{\mu} ({\alpha^{-1}(\neg B) \U
    \alpha^{-1}(\widetilde{B})}) >0$. This contradicts the hypothesis
  that $\Prob^{\calT_1}_\mu(\F \alpha^{-1}(B)) = 1$.
\end{proof}

\medskip Proving soundness is more delicate. We nevertheless show
  that a decisiveness condition on the concrete system will ensure soundness.

\begin{proposition}\label{coro:DecSound}
  Let $\calT_2$ be an $\alpha$-abstraction of $\calT_1$. Assume
  $\calT_1$ is decisive w.r.t. every $\alpha$-closed set. Then
  $\calT_2$ is a sound $\alpha$-abstraction of $\calT_1$.
\end{proposition}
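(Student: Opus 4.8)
The goal is to show that if $\calT_1$ is decisive w.r.t.\ every $\alpha$-closed set, then the $\alpha$-abstraction $\calT_2$ is sound, i.e.\ for every $\mu \in \Dist(S_1)$ and every $B \in \Sigma_2$, $\Prob^{\calT_2}_{\alpha_{\#}(\mu)}(\F B) = 1$ implies $\Prob^{\calT_1}_{\mu}(\F \alpha^{-1}(B)) = 1$. The plan is to argue by contraposition: assume $\Prob^{\calT_1}_{\mu}(\F \alpha^{-1}(B)) < 1$ and derive $\Prob^{\calT_2}_{\alpha_{\#}(\mu)}(\F B) < 1$. The set $\alpha^{-1}(B)$ is $\alpha$-closed, so we may invoke decisiveness of $\calT_1$ w.r.t.\ $\alpha^{-1}(B)$.

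First I would apply decisiveness: since $\calT_1$ is $\D(\mu, \alpha^{-1}(B))$, we have $\Prob^{\calT_1}_\mu(\F \alpha^{-1}(B) \vee \F \widetilde{\alpha^{-1}(B)}) = 1$, where $\widetilde{\alpha^{-1}(B)}$ is the avoid-set of $\alpha^{-1}(B)$ in $\calT_1$. Combined with the assumption $\Prob^{\calT_1}_{\mu}(\F \alpha^{-1}(B)) < 1$, this forces $\Prob^{\calT_1}_\mu(\F \widetilde{\alpha^{-1}(B)}) > 0$. The next key step is to relate $\widetilde{\alpha^{-1}(B)}$ to an $\alpha$-closed set. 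I would show that $\widetilde{\alpha^{-1}(B)}$ is $\alpha$-closed (using that $\alpha^{-1}(B)$ is $\alpha$-closed, so from $s$ and $s'$ with $\alpha(s)=\alpha(s')$ the laws $\kappa_1(s,\cdot)$ and $\kappa_1(s',\cdot)$ give the same probability to $\alpha^{-1}(B)$ at every step, hence $s \in \widetilde{\alpha^{-1}(B)} \iff s' \in \widetilde{\alpha^{-1}(B)}$), and set $\widetilde B_2 = \alpha(\widetilde{\alpha^{-1}(B)}) \in \Sigma_2$, so that $\widetilde{\alpha^{-1}(B)} = \alpha^{-1}(\widetilde B_2)$. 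Then $\Prob^{\calT_1}_\mu(\F \alpha^{-1}(\widetilde B_2)) > 0$, so by the transfer of positivity of reachability through $\alpha$-abstractions (Equation~\eqref{eq:PosReach} with $A = S_2$), $\Prob^{\calT_2}_{\alpha_{\#}(\mu)}(\F \widetilde B_2) > 0$.

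Finally I would argue that $\widetilde B_2 \cap B = \emptyset$ and, more importantly, that from states of $\widetilde B_2$ the set $B$ is unreachable with positive probability in $\calT_2$, so that reaching $\widetilde B_2$ precludes later reaching $B$ and hence $\Prob^{\calT_2}_{\alpha_{\#}(\mu)}(\F B) < 1$. Here I would again use Equation~\eqref{eq:PosReach}: for $s \in \widetilde{\alpha^{-1}(B)}$ we have $\Prob^{\calT_1}_{\delta_s}(\F \alpha^{-1}(B)) = 0$, so by positivity transfer $\Prob^{\calT_2}_{\delta_{\alpha(s)}}(\F B) = 0$, i.e.\ every state of $\widetilde B_2$ lies in the avoid-set of $B$ in $\calT_2$. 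Then any run of $\calT_2$ that reaches $\widetilde B_2$ (a set of positive probability from $\alpha_{\#}(\mu)$) never reaches $B$ thereafter, and since it also does not pass through $B$ on the way to $\widetilde B_2$ (as $\widetilde B_2 \cap B = \emptyset$ and earlier visits to $B$ would be a separate event), we obtain $\Prob^{\calT_2}_{\alpha_{\#}(\mu)}(\neg B \, \U\, \widetilde B_2) > 0$, which is incompatible with $\Prob^{\calT_2}_{\alpha_{\#}(\mu)}(\F B) = 1$.

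The main obstacle I anticipate is the measurability/closedness bookkeeping around $\widetilde{\alpha^{-1}(B)}$: one must carefully justify that it is $\alpha$-closed and that its $\alpha$-image is the ``right'' avoid-type set on the abstract side, being precise about the fact that $\alpha$-abstractions only preserve \emph{positivity} of single steps (and, via Equation~\eqref{eq:PosReach}, of bounded-witness properties), not of unbounded liveness events. The clean way around this is to route everything through Equation~\eqref{eq:PosReach} applied to reachability (a bounded-witness property), rather than trying to transfer $\F \widetilde B$ or $\G \F$ events directly; with that discipline the argument goes through without needing any stronger transfer lemma.
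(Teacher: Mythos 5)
Your overall strategy matches the paper's: argue by contradiction/contraposition, invoke decisiveness of $\calT_1$ w.r.t.\ the $\alpha$-closed set $\alpha^{-1}(B)$, identify $\widetilde{\alpha^{-1}(B)}$ with $\alpha^{-1}(\widetilde{B})$ (this is exactly Lemma~\ref{lemma:BtildeAbstr}, proved via the positivity transfer for reachability), and push positivity through the abstraction via Equation~\eqref{eq:PosReach}. However, there is a genuine gap in your last step. You only transfer the fact $\Prob^{\calT_1}_\mu(\F \widetilde{\alpha^{-1}(B)})>0$, obtaining $\Prob^{\calT_2}_{\alpha_{\#}(\mu)}(\F \widetilde{B})>0$, and then claim that a run reaching $\widetilde{B}$ ``does not pass through $B$ on the way'' because $\widetilde{B}\cap B=\emptyset$. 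That inference is false: disjointness of $B$ and $\widetilde{B}$ does not prevent a run from visiting $B$ strictly before reaching $\widetilde{B}$. In a chain $s_0 \to b \to \tilde b$ with $b\in B$, $\tilde b\in\widetilde{B}$, one has $\Prob(\F B)=\Prob(\F\widetilde{B})=1$ while $\Prob(\neg B \U \widetilde{B})=0$, so $\Prob(\F\widetilde{B})>0$ alone never yields $\Prob(\neg B\U\widetilde{B})>0$, and hence no contradiction with $\Prob(\F B)=1$.

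The fix is to extract the stronger \emph{Until} statement already on the concrete side and transfer \emph{that}. From $\Prob^{\calT_1}_\mu(\F \alpha^{-1}(B) \vee \F\widetilde{\alpha^{-1}(B)})=1$ and $\Prob^{\calT_1}_\mu(\F\alpha^{-1}(B))<1$ you get a positive-probability set of runs satisfying $\G\neg\alpha^{-1}(B) \wedge \F\widetilde{\alpha^{-1}(B)}$, which is contained in $\neg\alpha^{-1}(B)\U\widetilde{\alpha^{-1}(B)}$ (equivalently, use the fifth item of Lemma~\ref{lem:Btilde} as the paper does). Equation~\eqref{eq:PosReach} is stated precisely for Until properties, so it yields $\Prob^{\calT_2}_{\alpha_{\#}(\mu)}(\neg B\U\widetilde{B})>0$, which contradicts $\Prob^{\calT_2}_{\alpha_{\#}(\mu)}(\F B)=1$ by Lemma~\ref{lemma:QualTechnical}(i). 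Two smaller points: your justification that $\widetilde{\alpha^{-1}(B)}$ is $\alpha$-closed overstates what an abstraction gives (the kernels from $s$ and $s'$ with $\alpha(s)=\alpha(s')$ need \emph{not} assign the same value to $\alpha^{-1}(B)$, only the same nullity/positivity, which suffices); and you should take $\widetilde{B}$ to be the avoid-set computed in $\calT_2$ rather than the direct image $\alpha(\widetilde{\alpha^{-1}(B)})$, since images of measurable sets need not be measurable.
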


\begin{proof}
  Towards a contradiction assume that $B \in \Sigma_2$ is such that
  $\Prob_\mu^{\calT_1}(\F \alpha^{-1}(B)) < 1$. Since $\calT_1$ is
  decisive w.r.t. $\alpha^{-1}(B)$ from $\mu$, it holds from
  Lemma~\ref{lem:Btilde} (fifth item) that
  $\Prob_\mu^{\calT_1}(\neg \alpha^{-1}(B) \U
  \widetilde{\alpha^{-1}(B)}) > 0$. Applying
  Equation~\eqref{eq:PosReach} again (page~\pageref{eq:PosReach}), we
  get that
  $\Prob_{\alpha_{\#}(\mu)}^{\calT_2}((\neg B \U \widetilde{B}))>0$,
  which contradicts the assumption that
  $\Prob_{\alpha_{\#}(\mu)}^{\calT_2}( \F B) = 1$.
\end{proof}

For $\calT_2$ an $\alpha$-abstraction of $\calT_1$, notice that
completeness is ensured by a decisiveness assumption on $\calT_2$,
whereas soundness requires $\calT_1$ being decisive w.r.t. every
$\alpha$-closed set. While these conditions look very similar, the
condition for soundness is actually harder to check since the abstract
STS $\calT_2$ is expected to be simpler than the original concrete STS
$\calT_1$.

\section{Using attractors for analyzing STSs}

\label{sec:main}

In this section we emphasize a generic approach to the analysis of
STSs w.r.t. properties given by a DMA, when the STS satisfies some
attractor-based property. This approach is inspired by the works
of~\cite{ABRS05,bertrand06} on lossy channel systems, but is new (as
far as we know) in the general context of DMCs, and \emph{a fortiori} of
STSs.

As we will see in the next sections, this will yield procedures (which
can be turned to effective algorithms for some classes of systems) for
the qualitative as well as for the approximate quantitative analysis
of STSs. Many of the proofs are inlined here, since they convey
interesting ideas.

\subsection{The case of DMCs with a finite attractor}
\label{subsec:dmc-attractor}

We fix a finite set of atomic propositions $\AP$, and we let $\calT =
(S,\Sigma,\kappa,\AP,\calL)$ be a labelled DMC. We also let $\calM =
(Q,q_0,E,\calF)$ be a DMA. The product $\calT \ltimes \calM$ has been
defined in Section~\ref{subsec:LSTS}.
First, attractors transfer from $\calT$ to the product $\calT \ltimes
\calM$, as stated below, and proven in the technical appendix
(page~\pageref{app-attractorproduit}).

\begin{restatable}{lemma}{attractorproduit}
  \label{lemma:soundproduct}\label{attractorproduit}
  Assume that $A$ is an attractor for $\calT$. Then $A \times Q$ is an
  attractor for $\calT \ltimes \calM$. Furthermore, if $A$ is finite,
  then so is $A \times Q$.
\end{restatable}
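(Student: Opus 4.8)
The plan is to exploit the fact that in $\calT \ltimes \calM$ the $\calM$-component is driven deterministically by the $\calL$-labels of the $S$-component, so that projecting onto $S$ turns $\calT \ltimes \calM$ back into $\calT$ at the level of path measures. Concretely, let $\pi : S \times Q \to S$ be the first projection; it is measurable since $\pi^{-1}(B) = B \times Q \in \Sigma'$ for every $B \in \Sigma$, and it induces a measurable map (still written $\pi$) from $\Paths(\calT \ltimes \calM)$ to $\Paths(\calT)$, because $\pi^{-1}(\Cyl(A_0,\dots,A_n)) = \Cyl(A_0 \times Q,\dots,A_n \times Q)$ and cylinders generate $\calF_\calT$. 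For $\mu' \in \Dist(S \times Q)$, write $\pi_{\#}(\mu')$ for its pushforward, i.e.\ $\pi_{\#}(\mu')(B) = \mu'(B \times Q)$.

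The core claim is that for every $\mu' \in \Dist(S \times Q)$ and every measurable $\mathcal{E} \subseteq \Paths(\calT)$,
\[
\Prob^{\calT \ltimes \calM}_{\mu'}\bigl(\pi^{-1}(\mathcal{E})\bigr) = \Prob^{\calT}_{\pi_{\#}(\mu')}(\mathcal{E}) \enspace.
\]
By Caratheodory's extension theorem it suffices to prove this for cylinders $\mathcal{E} = \Cyl(A_0,\dots,A_n)$, which I would do by induction on $n$. The key computation is that $\kappa'((s,q), A \times Q) = \kappa(s,A)$ for every $(s,q) \in S \times Q$ and every $A \in \Sigma$: indeed $A \times Q = \bigcup_{q' \in Q} A \times \{q'\}$, and since $\calM$ is deterministic and complete there is a unique $q'$ with $(q,\calL(s),q') \in E$, so $\kappa'((s,q),(A,q'')) = 0$ for $q'' \neq q'$ while $\kappa'((s,q),(A,q')) = \kappa(s,A)$. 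Feeding this into the inductive integral formula defining $\Prob^{\calT \ltimes \calM}_{\mu'}(\Cyl(A_0 \times Q,\dots,A_n \times Q))$ — with base case $\Prob^{\calT \ltimes \calM}_{\mu'}(\Cyl(A_0 \times Q)) = \mu'(A_0 \times Q) = \pi_{\#}(\mu')(A_0)$, and using $\pi_{\#}(\kappa'((s,q),\cdot)) = \kappa(s,\cdot)$ so that the induction hypothesis applies to the sub-cylinder after one step — collapses it exactly onto the integral defining $\Prob^{\calT}_{\pi_{\#}(\mu')}(\Cyl(A_0,\dots,A_n))$.

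With the claim in hand, note that a path $\rho$ of $\calT \ltimes \calM$ visits $A \times Q$ iff $\pi(\rho)$ visits $A$, since $(s,q) \in A \times Q \Leftrightarrow s \in A$; hence $\ev{\calT \ltimes \calM}{\F (A \times Q)} = \pi^{-1}\bigl(\ev{\calT}{\F A}\bigr)$. Consequently, for every $\mu' \in \Dist(S \times Q)$,
\[
\Prob^{\calT \ltimes \calM}_{\mu'}\bigl(\F (A \times Q)\bigr) = \Prob^{\calT}_{\pi_{\#}(\mu')}(\F A) = 1 \enspace,
\]
the last equality holding because $A$ is an attractor for $\calT$. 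Since $\mu'$ is arbitrary, $A \times Q$ is an attractor for $\calT \ltimes \calM$. For the ``furthermore'' part, $Q$ is finite by definition of a DMA, so if $A$ is finite then $A \times Q$ is finite.

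The only delicate point is the bookkeeping in the inductive cylinder computation: one must account for the fact that $\kappa'$ vanishes on all $\calM$-components except the unique successor, and check that pushing forward through $\pi$ commutes with the one-step kernel so that the induction closes. The remaining ingredients — measurability of $\pi$, the extension from cylinders to arbitrary measurable sets, and the finiteness claim — are routine.
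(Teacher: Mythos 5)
Your proof is correct, but it takes a different route from the paper's. The paper first proves the statement for initial distributions of the special form $\mu \times \delta_q$, by invoking an auxiliary identity (its Lemma on $\Prob_{\mu\times\delta_q}^{\calT\ltimes\calM}$ of product cylinders) that decomposes the product path measure as a sum, over all label sequences $u_1,\dots,u_n$, of path probabilities in $\calT$ of cylinders refined by $\calL^{-1}(u_i)$; applied to $\Cyl(S',\dots,S',A\times Q)$ this sum collapses to $\Prob_\mu^\calT(\Cyl(S,\dots,S,A))$. It then handles an arbitrary $\nu\in\Dist(S\times Q)$ by writing it as a convex combination of distributions $\nu_q\times\delta_q$. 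You instead construct the projection $\pi$ on paths and prove the pushforward identity $\Prob^{\calT\ltimes\calM}_{\mu'}\circ\pi^{-1}=\Prob^\calT_{\pi_\#(\mu')}$ for arbitrary $\mu'$, using the key one-step fact $\kappa'((s,q),A\times Q)=\kappa(s,A)$ (which is exactly where determinism and completeness of $\calM$ enter, just as in the paper's label-sum argument). Your route is cleaner and somewhat stronger: it works for every initial distribution at once, avoids the convex-decomposition step, and transfers \emph{any} measurable $S$-only path event (so it would also give, e.g., $\Prob^{\calT\ltimes\calM}_{\mu'}(\G\F(A\times Q))=\Prob^\calT_{\pi_\#(\mu')}(\G\F A)$ for free). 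The paper's approach buys reuse of a finer lemma that tracks the $\calM$-component, which it needs elsewhere for relating $\Prob_\mu^\calT(\calT[\calM])$ to the product. Your inductive cylinder computation and the appeal to uniqueness of the Caratheodory extension on the $\pi$-system of cylinders are both sound; no gap.
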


For the rest of this subsection, we assume that $\calT$ has a finite
attractor.  Applying Lemma~\ref{attractorproduit}, the product $\calT
\ltimes \calM$ admits a finite attractor that we denote $B$. We write
$\mathsf{Graph}_{\calT \ltimes \calM}(B)$ (or simply
$\mathsf{Graph}(B)$ when $\calT$ and $\calM$ are clear from the
context) for the finite graph whose vertices are states of $B$, and in
which there is an edge from $(s_1,q_1)$ to $(s_2,q_2)$ if there is a
path from $(s_1,q_1)$ to $(s_2,q_2)$ in $\calT \ltimes \calM$. The
\emph{bottom strongly connected components} (BSCCs)\footnote{Those are
  strongly connected components which cannot be left.} of the graph
$\mathsf{Graph}_{\calT \ltimes \calM}(B)$ play a central role in the
model checking of $\omega$-regular properties of $\calT$. Let us first
discuss the relationships between the BSCCs and attractors for $\calT
\ltimes \calM$.

\begin{lemma}
  \label{lemma:bscc}
  The following properties are satisfied:
  \begin{itemize}
  \item The set $\{(s,q) \in C \mid C\ \text{BSCC of}\
    \mathsf{Graph}_{\calT \ltimes \calM}(B)\}$ is an attractor of
    $\calT \ltimes \calM$.
  \item If $C$ and $C'$ are two distinct BSCCs of
    $\mathsf{Graph}_{\calT \ltimes \calM}(B)$, for every $\mu \in
    \Dist(S\times Q)$, $\Prob_\mu^{\calT \ltimes \calM}(\F C \wedge \F C') =
    0$.
  \item If $C$ is a BSCC of $\mathsf{Graph}_{\calT \ltimes \calM}(B)$,
    for every $\mu \in \Dist(C)$, $\Prob_\mu^{\calT \ltimes \calM}(\G
    \F C) = 1$.
  \end{itemize}
\end{lemma}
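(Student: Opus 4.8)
The plan is to exploit the fact that $B$ is a finite attractor for $\calT \ltimes \calM$ together with the elementary combinatorics of the finite graph $\mathsf{Graph}(B)$. Throughout, I abbreviate $\calU = \calT \ltimes \calM$ and write $C_1,\dots,C_r$ for the BSCCs of $\mathsf{Graph}(B)$. The three items are handled in turn, each reducing to Lemma~\ref{lemma:attractorGF} (attractors are visited infinitely often a.s.) plus a one-step argument that every path leaving $B$ eventually returns to $B$, hence hits some vertex of the finite graph infinitely often.

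\emph{First item.} Let $\mathsf{BScc} = \{(s,q)\in C \mid C \text{ a BSCC of } \mathsf{Graph}(B)\}$. Since $B$ is an attractor, by Lemma~\ref{lemma:attractorGF} we have $\Prob_\mu^{\calU}(\G\F B)=1$ for every $\mu$. Because $B$ is finite, a.s. some single state $(s,q)\in B$ is visited infinitely often. Now observe that if $(s_1,q_1)$ and $(s_2,q_2)$ are both visited infinitely often along a path $\rho$, then $(s_2,q_2)$ is reachable from $(s_1,q_1)$ in $\calU$ and conversely, so these states lie in the same SCC of $\mathsf{Graph}(B)$; moreover that SCC must be bottom, for if there were an edge from it to a vertex outside it, then from that vertex $B$ would be re-reached (attractor) but the path, visiting $(s_1,q_1)$ infinitely often, would have to come back infinitely often, contradicting bottom-ness only if... — more carefully: the set of states visited infinitely often by $\rho$ is, restricted to $B$, exactly one SCC of $\mathsf{Graph}(B)$, and it must be bottom because from any state infinitely often visited, every state reachable in $\calU$ and lying in $B$ is also visited infinitely often (using that $B$ is an attractor, so after each step $B$ is reached again a.s., and the finitely many $B$-states force recurrence). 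Hence a.s. the path ends up visiting (only) states of a single BSCC infinitely often, so in particular $\Prob_\mu^{\calU}(\F\, \mathsf{BScc})=1$, i.e. $\mathsf{BScc}$ is an attractor.

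\emph{Second item.} Let $C\neq C'$ be distinct BSCCs. By the analysis above, along almost every path the set of $B$-states visited infinitely often is contained in a single BSCC; hence a path that visits both $C$ and $C'$, and therefore (being infinite and re-entering $B$ a.s.) visits some state of $C$ or of $C'$ infinitely often, cannot also return to the other one infinitely often — and since each $C_i$ is bottom, once the path's infinitely-often set sits inside $C_i$ it can never again reach the disjoint set $C_j$ in the graph $\mathsf{Graph}(B)$, hence (soundness of the graph: an edge exists iff a path exists in $\calU$) it cannot reach $C_j$ in $\calU$ either. So the event $\F C \wedge \F C'$ is contained (up to a null set) in the empty event, giving $\Prob_\mu^{\calU}(\F C\wedge\F C')=0$ for every $\mu\in\Dist(S\times Q)$.

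\emph{Third item.} Fix a BSCC $C$ and $\mu\in\Dist(C)$. Since $C$ is bottom in $\mathsf{Graph}(B)$, no state outside $C$ is reachable from $C$ in $\calU$ (again by soundness of the graph: reachability in $\calU$ between $B$-states is witnessed by an edge of $\mathsf{Graph}(B)$), so a.s. the path stays forever in $\alpha$-terms within states that project into $C\cup(S\times Q\setminus B)$ and, whenever in $B$, within $C$. Starting inside $C$ and using that $B$ is an attractor ($\Prob_\mu^{\calU}(\G\F B)=1$), a.s. the path returns to $B$ infinitely often, and every such return lands in $C$; so $\Prob_\mu^{\calU}(\G\F C)=1$.

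The main obstacle, and the only place needing genuine care, is the \emph{first item}: one must show that almost every path has its "infinitely-often set within $B$" equal to exactly one SCC of $\mathsf{Graph}(B)$, and that this SCC is bottom. The inclusion "infinitely-often set $\subseteq$ one SCC" is the easy combinatorial direction (mutual reachability). The subtle direction is that it is a \emph{full} SCC that is bottom: this needs the argument that whenever $(s,q)\in B$ is visited infinitely often, every $B$-state graph-reachable from $(s,q)$ is also visited infinitely often — which follows because after reaching such a state the attractor property forces infinitely many further returns to $B$, and among the finitely many $B$-states one is recurrent, pushing the recurrent set "downward" in the DAG of SCCs until it lands in a BSCC. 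Making this last recurrence argument precise (e.g. via a pumping/Borel--Cantelli style estimate that from any state of $B$ a fixed graph-successor in $B$ is reached with positive probability, so is reached infinitely often a.s. on the event that the source is visited infinitely often) is the technical heart; everything else is bookkeeping on the finite graph and applications of Lemma~\ref{lemma:attractorGF}.
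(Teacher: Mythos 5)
Your proof is correct, and for the second and third items it takes essentially the paper's route: item~2 rests on the fact that an edge of $\mathsf{Graph}_{\calT\ltimes\calM}(B)$ exists iff a path exists in $\calT\ltimes\calM$, so bottomness of a BSCC forbids any path out of it into another BSCC; item~3 combines item~2 with Lemma~\ref{lemma:attractorGF}. Two remarks. For item~2 your detour through infinitely-often-visited sets is unnecessary and slightly off target (a path visiting both $C$ and $C'$ need not visit either infinitely often --- it could settle in a third BSCC); the clean argument is one-shot: visiting $C$ and later $C'$ would exhibit a finite positive-probability path from a state of $C$ to a state of $C'$, hence an edge of the graph leaving the bottom component $C$, which is impossible, so $\F C\wedge\F C'$ is contained in the null set of sequences that are not positive-probability paths. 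For item~1 the paper merely declares the property ``obvious'', whereas you prove the stronger statement that almost surely the set of $B$-states visited infinitely often is exactly one BSCC; the Borel--Cantelli recurrence step you flag as the technical heart (from a $B$-state visited infinitely often, each graph-successor in $B$ is reached within some fixed $k$ steps with probability at least some fixed $p>0$, hence is a.s.\ visited infinitely often) is precisely the argument the paper develops in Proposition~\ref{prop:attractorSound} and re-uses in Lemma~\ref{lemma:BSCCproba}, so nothing is missing --- you have simply front-loaded work the paper postpones. If you only want item~1 itself, a lighter argument suffices: every vertex of the finite graph can reach some BSCC, so from each state of $B$ the union of the BSCCs is reached within a uniformly bounded number of steps with uniformly positive probability, and $\Prob^{\calT\ltimes\calM}_\mu(\G\F B)=1$ then forces that union to be reached almost surely.
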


\begin{proof}
  The first property is obvious. The second property is a consequence
  of the fact that there is no path between two states of two
  different BSCCs. This second property implies that for each BSCC
  $C'\neq C$ of $\mathsf{Graph}_{\calT\ltimes\calM}(B)$ and for each
  $\mu\in\Dist(C)$, $\Prob_{\mu}^{\calT\ltimes\calM}(\F C')=0$. From
  the first property and Lemma~\ref{lemma:attractorGF}, we know that
  for each $\mu\in\Dist(S\times Q)$,
  $\Prob_{\mu}^{\calT\ltimes\calM}(\G\F
  \bigvee_{C'\in\mathsf{BSCC}(\mathsf{Graph}_{\calT\ltimes\calM}(B))}
  C')=1$. This holds true in particular for each $\mu\in\Dist(C)$ and
  thus, from the previous observation for such initial distributions,
  we get that $\Prob_{\mu}^{\calT\ltimes\calM}(\G\F C)=1$ for each
  $\mu\in\Dist(C)$.
\end{proof}

From Lemma~\ref{lemma:bscc}, the BSCCs of
$\mathsf{Graph}_{\calT \ltimes \calM}(B)$ form an attractor, and once
the system enters a BSCC $C$, only that BSCC will be visited again,
and this will happen infinitely often with probability $1$.  In
particular, the satisfaction of the Muller condition in
$\calT \ltimes \calM$, inherited from $\calF$, can be characterized by
the BSCCs satisfying the Muller condition $\calF$ (in a sense that we
will make precise).

\begin{definition}[Good BSCC]
  A BSCC $C$ of $\mathsf{Graph}_{\cal T \ltimes \calM}(B)$ is
  \emph{good for $\calF$}, written $C \in \mathsf{Good}^B_{\calT
    \ltimes \calM}(\calF)$, 
  if there exists $F \in \calF$ such that
  \begin{enumerate}
  \item[(a)] for every state $(s,q) \in S \times Q$, if there exists $(r,p) \in C$
    with a path from $(r,p)$ to $(s,q)$ in $\calT \ltimes \calM$, then
    $q \in F$; and
  \item[(b)] for every $q\in F$ there exists $s\in S$, there exists a
    state $(r,p) \in C$ with a path from $(r,p)$ to $(s,q)$ in $\calT
    \ltimes \calM$.
\end{enumerate}
\end{definition}

Let $C$ be an arbitrary BSCC of $\mathsf{Graph}_{\calT \ltimes
  \calM}(B)$. We define the set $F_C = \{q \in Q \mid \exists s \in
S,\ \exists (r,p) \in C\ \text{s.t. there is a path from}\ (r,p)\
\text{to}\ (s,q)\}$ as the set of states of the Muller automaton that
can be reached from $C$.  Within a BSCC, all reachable states will
actually be visited infinitely often almost-surely. More precisely, we
state the following result:

\begin{lemma}
  \label{lemma:BSCCproba}
  For every $(s,q) \in C$, $\Prob_{\delta_{(s,q)}}^{\calT \ltimes
    \calM}(\mathsf{Inf} =F_C) =1$.\footnote{We recall that
      $\mathsf{Inf} =F_C$ characterizes the set of runs $\rho'$ in
      $\calT \ltimes \calM$ such that $\Inf(\calL'(\rho')) = F_C$
      ($\calL'$ is the labelling function of $\calT \ltimes \calM$
      such that $\calL'(s,q)=q$).}
\end{lemma}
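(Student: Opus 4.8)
The plan is to prove that, $\Prob_{\delta_{(s,q)}}^{\calT\ltimes\calM}$-almost surely, the set of $\calM$-components seen infinitely often along a run equals $F_C$, by establishing the two inclusions $\Inf(\rho)\subseteq F_C$ and $F_C\subseteq\Inf(\rho)$ separately and intersecting the two almost-sure events. Here I use that $\calT\ltimes\calM$ is itself a DMC (countable state space $S\times Q$ with the discrete $\sigma$-algebra), so every subset of $S\times Q$ is measurable.

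For the inclusion $\Inf(\rho)\subseteq F_C$, I would argue essentially deterministically. Let $R$ be the set of states of $\calT\ltimes\calM$ reachable, in $\calT\ltimes\calM$, from some state of $C$. By the very definition of $F_C$, the $\calM$-component of every state of $R$ lies in $F_C$; moreover $(s,q)\in R$, and $R$ is stable under positive-probability steps of $\kappa'$ (if $(r,p)\in R$ and $\kappa'((r,p),\{(r',p')\})>0$ then $(r',p')\in R$). An immediate induction on cylinder length gives $\Prob_{\delta_{(s,q)}}^{\calT\ltimes\calM}(\G R)=1$, so almost every run stays in $R$ forever, hence only visits states whose $\calM$-component is in $F_C$, whence $\Prob_{\delta_{(s,q)}}^{\calT\ltimes\calM}(\Inf\subseteq F_C)=1$.

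For the reverse inclusion I would fix $q'\in F_C$ together with a witness: a state $c_{q'}\in C$, a state $d_{q'}$ of $\calT\ltimes\calM$ whose $\calM$-component is $q'$, and a path from $c_{q'}$ to $d_{q'}$. Since $C$ is a strongly connected component of the finite graph $\mathsf{Graph}_{\calT\ltimes\calM}(B)$, there is, for every $x\in C$, a path from $x$ to $c_{q'}$ in $\calT\ltimes\calM$, and concatenation yields a path from $x$ to $d_{q'}$ (of length at least $1$, prepending a cycle through $x$ if need be). As $C$ is finite, there exist $k_{q'}\in\IN$ and $\varepsilon_{q'}>0$ such that $\Prob_{\delta_x}^{\calT\ltimes\calM}(\F[\leq k_{q'}]\{d_{q'}\})\geq\varepsilon_{q'}$ for every $x\in C$. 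Now, Lemma~\ref{lemma:bscc} (third item) with $\mu=\delta_{(s,q)}$ gives $\Prob_{\delta_{(s,q)}}^{\calT\ltimes\calM}(\G\F C)=1$, so I can define almost-surely finite times $\tau_0=0$ and, for $j\geq 0$, $\tau_{j+1}$ the first instant strictly after $\tau_j+k_{q'}$ at which the run lies in $C$. The windows $(\tau_j,\tau_j+k_{q'}]$ are then pairwise disjoint, and by the (strong) Markov property at $\tau_j$ the conditional probability that $d_{q'}$ is visited during window $j$, given the history up to $\tau_j$, is at least $\varepsilon_{q'}$ (the run being in a state of $C$ at time $\tau_j$). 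Conditioning successively over the windows gives, for every $M\in\IN$, that the probability of missing $d_{q'}$ in all windows $j\geq M$ is at most $\lim_{N\to\infty}(1-\varepsilon_{q'})^{N-M}=0$; hence $d_{q'}$, and therefore $q'$, is visited infinitely often almost surely. Intersecting over the finitely many $q'\in F_C$ and combining with the first inclusion yields $\Prob_{\delta_{(s,q)}}^{\calT\ltimes\calM}(\Inf=F_C)=1$.

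The main obstacle is the second inclusion: turning the graph-level statement ``there is a path'' into a uniform positive lower bound on a bounded-horizon reachability probability from every state of $C$, and then packaging the repeated-attempts argument rigorously through the Markov property at the return times to $C$. This is the same recurrent-attempts pattern used in the sketch of Proposition~\ref{prop:attractorSound}, so a template is available; the only delicate point is to keep the windows disjoint (hence the ``$+k_{q'}$'' in the definition of $\tau_{j+1}$) so that the conditional bounds may be multiplied.
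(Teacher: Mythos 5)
Your proof is correct and follows essentially the same route as the paper's: the forward inclusion $\Inf\subseteq F_C$ via stability of the reachable set from $C$, and the reverse inclusion via a uniform bounded-horizon lower bound $\varepsilon_{q'}>0$ over the finite set $C$ combined with $\G\F C$ holding almost surely (Lemma~\ref{lemma:bscc}) and a repeated-attempts argument giving $\lim_n(1-\varepsilon_{q'})^n=0$. The paper delegates the repeated-attempts step to the reasoning of Proposition~\ref{prop:attractorSound}, whereas you spell it out with explicit disjoint windows at return times to $C$; this is just a more detailed rendering of the same argument.
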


\begin{proof}
  Let $(s,q) \in C$, and $\rho = (s,q) (s_1,q_1) (s_2,q_2) \dots$ a
  path in $\calT \ltimes \calM$ starting at $(s,q)$. By definition of
  $F_C$, all $q_i$'s are in $F_C$, hence
  $\Prob_{\delta_{(s,q)}}^{\calT \ltimes \calM}(\mathsf{Inf} \subseteq
  F_C) =1$.  

  \medskip We now argue why all elements of $F_C$ are actually
  almost-surely visited infinitely often.
  Fix $p \in F_C$ and $(r,p)$ that is reachable from $C$. All two
  states of $C$ are reachable one from each other; thus, from every state of $C$, $(r,p)$
  is reachable through a finite path. Hence there is some $\iota>0$
  and $k \in \IN$ such that for every state $(s',q') \in C$, 
  \[
  \Prob_{\delta_{(s',q')}}^{\calT \ltimes \calM}( \F[\le k] (r,p)) \ge
  \iota\enspace.
  \]
  Applying a reasoning similar to the proof of
  Proposition~\ref{prop:attractorSound}, we get that
  $\Prob_{\delta_{(s,q)}}^{\calT \ltimes \calM}(\G \F (r,p) \mid \G \F
  C) = 1$. Indeed,
  $\Prob_{\delta_{(s,q)}}^{\calT \ltimes \calM}(\F \G \neg (r,p)
  \wedge \G \F C) \leq \lim_{n\to\infty} (1-\iota)^n= 0$. Thanks to
  the third item of Lemma~\ref{lemma:bscc}, we obtain that
  \[
  \Prob_{\delta_{(s,q)}}^{\calT \ltimes \calM}(\G \F (r,p)) =
  1\enspace.
  \]
  We conclude that $\Prob_{\delta_{(s,q)}}^{\calT \ltimes
    \calM}(\mathsf{Inf} \supseteq F_C) =1$, which completes the proof.
\end{proof}

As a consequence:

\begin{corollary} \label{coro:BSCCproba} For every initial
  distribution $\mu\in\Dist(S)$ for $\calT$ and for every $q\in Q$,
  $\Prob_{\mu \times \delta_{q_0}}^{\calT \ltimes \calM}(\F C) > 0$
  implies \( \Prob_{\mu \times \delta_{q_0}}^{\calT \ltimes
    \calM}(\mathsf{Inf} =F_C \mid \F C) =1.  \)
\end{corollary}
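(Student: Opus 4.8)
The statement is a ``first-visit'' reformulation of Lemma~\ref{lemma:BSCCproba}, and the plan is to decompose the event $\F C$ according to the first state of the finite set $C$ that is reached, and then to combine the Markov property of $\calT \ltimes \calM$ with Lemma~\ref{lemma:BSCCproba}.

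First I would observe that, since $\calT \ltimes \calM$ is a DMC and $C$ is a set of states, every path $\rho$ with $\rho \models \F C$ has a well-defined first index $n$ at which it enters $C$; hence $\F C$ is the \emph{disjoint} union, over all finite paths $w = c_0 c_1 \cdots c_n$ of $\calT \ltimes \calM$ with $c_0,\dots,c_{n-1} \notin C$ and $c_n \in C$, of the cylinders $\Cyl(\{c_0\},\dots,\{c_n\})$, and there are only countably many such $w$. Next I would use the Markov property of $\calT \ltimes \calM$ --- a routine consequence of the inductive definition of the measures $\Prob^{\calT \ltimes \calM}_\nu$ on cylinders (extended to all tail events by a monotone-class argument) --- which states that, for such a $w$ ending in $c_n$ and for every event of the form $\{\rho \mid \rho_{\ge n} \in E\}$ with $E \in \calF_{\calT \ltimes \calM}$,
\[
\Prob_{\mu \times \delta_{q_0}}^{\calT \ltimes \calM}\big(\Cyl(\{c_0\},\dots,\{c_n\}) \cap \{\rho \mid \rho_{\ge n} \in E\}\big) = \Prob_{\mu \times \delta_{q_0}}^{\calT \ltimes \calM}\big(\Cyl(\{c_0\},\dots,\{c_n\})\big)\cdot\Prob_{\delta_{c_n}}^{\calT \ltimes \calM}(E).
\]
The crucial observation is that $\{\mathsf{Inf} = F_C\}$ is such an event for \emph{every} $n$, because $\mathsf{Inf}(\rho) = \mathsf{Inf}(\rho_{\ge n})$: it equals $\{\rho \mid \rho_{\ge n} \in E_0\}$ with $E_0 = \{\rho' \mid \mathsf{Inf}(\rho') = F_C\}$, uniformly in $w$. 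Since every admissible $w$ ends in some $c_n \in C$, Lemma~\ref{lemma:BSCCproba} gives $\Prob_{\delta_{c_n}}^{\calT \ltimes \calM}(E_0) = 1$.

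Finally I would sum the displayed identity over all admissible $w$, which yields
\[
\Prob_{\mu \times \delta_{q_0}}^{\calT \ltimes \calM}(\F C \wedge \mathsf{Inf} = F_C) = \sum_{w} \Prob_{\mu \times \delta_{q_0}}^{\calT \ltimes \calM}\big(\Cyl(\{c_0\},\dots,\{c_n\})\big) = \Prob_{\mu \times \delta_{q_0}}^{\calT \ltimes \calM}(\F C),
\]
and since $\Prob_{\mu \times \delta_{q_0}}^{\calT \ltimes \calM}(\F C) > 0$ by hypothesis, dividing gives the claimed conditional probability $1$. The only mildly delicate point is the bookkeeping for this first-visit decomposition and the precise formulation of the Markov property at the corresponding (random, but piecewise constant) time; an alternative is to phrase everything through the stopping time $\tau_C = \inf\{ n \mid (s_n,q_n) \in C\}$ and the strong Markov property, but the cylinder-level computation above avoids setting up that machinery. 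Beyond this, the proof reduces entirely to Lemma~\ref{lemma:BSCCproba}.
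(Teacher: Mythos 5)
Your proof is correct and is exactly the derivation the paper leaves implicit (the corollary is stated with only ``As a consequence:'' after Lemma~\ref{lemma:BSCCproba}): since $\calT\ltimes\calM$ is denumerable, the first-visit decomposition of $\F C$ into countably many singleton cylinders, the factorization of the measure at the entry state (which for singleton cylinders is a direct instance of Lemma~\ref{lemma:integration}, extended to tail events by uniqueness of the extension), the shift-invariance of $\{\mathsf{Inf}=F_C\}$, and Lemma~\ref{lemma:BSCCproba} applied at each entry state $c_n\in C$ together give the claim. No gaps.
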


We can now completely characterize the probability of satisfying an
$\omega$-regular property.

\begin{theorem}
  \label{th:good-bscc-for-quant-analysis}
  Let $\calT$ be a labelled DMC with a finite attractor, and $\calM =
  (Q,q_0,E,\calF)$ be a DMA.  Then, for every initial distribution
  $\mu\in\Dist(S)$ for $\calT$:
  \[
  \Prob_{\mu \times \delta_{q_0}}^{\calT \ltimes \calM}(\mathsf{Inf} \in \calF)
  = \sum_{C \in \mathsf{Good}^B_{\calT
          \ltimes \calM}(\calF)} 
    \Prob_{\mu \times \delta_{q_0}}^{\calT \ltimes \calM}(\F C )
  \]
  where $B$ is an attractor for $\calT \ltimes \calM$.
\end{theorem}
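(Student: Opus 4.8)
The plan is to decompose the event $\{\mathsf{Inf} \in \calF\}$ over the BSCCs of $\mathsf{Graph}_{\calT \ltimes \calM}(B)$, and to show that the set of paths that eventually stay in a \emph{good} BSCC is, up to a null set, exactly the set of paths satisfying the Muller condition. First I would invoke Lemma~\ref{lemma:bscc}: the union $\calC$ of all BSCCs of $\mathsf{Graph}_{\calT \ltimes \calM}(B)$ is an attractor of $\calT \ltimes \calM$, hence by Lemma~\ref{lemma:attractorGF} it is visited infinitely often with probability $1$ from $\mu \times \delta_{q_0}$. Combined with the second item of Lemma~\ref{lemma:bscc} (two distinct BSCCs cannot both be reached with positive probability), this shows that almost every path eventually enters exactly one BSCC $C$, so
\[
\Prob_{\mu \times \delta_{q_0}}^{\calT \ltimes \calM}(\mathsf{Inf} \in \calF) = \sum_{C} \Prob_{\mu \times \delta_{q_0}}^{\calT \ltimes \calM}\big(\mathsf{Inf} \in \calF \wedge \F C\big),
\]
the sum ranging over all BSCCs $C$ of $\mathsf{Graph}_{\calT \ltimes \calM}(B)$, since the events $\F C$ partition a set of full measure.

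Next, for each BSCC $C$ I would use Corollary~\ref{coro:BSCCproba}: conditioned on $\F C$, we have $\mathsf{Inf} = F_C$ almost surely, where $F_C$ is the set of $\calM$-locations reachable from $C$ in $\calT \ltimes \calM$. Therefore $\Prob_{\mu \times \delta_{q_0}}^{\calT \ltimes \calM}(\mathsf{Inf} \in \calF \wedge \F C)$ equals $\Prob_{\mu \times \delta_{q_0}}^{\calT \ltimes \calM}(\F C)$ if $F_C \in \calF$, and equals $0$ otherwise. It then remains to show that $F_C \in \calF$ is equivalent to $C \in \mathsf{Good}^B_{\calT \ltimes \calM}(\calF)$. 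Unfolding the definition of \emph{good BSCC}: condition (a) says every location reachable from $C$ lies in some fixed $F \in \calF$, i.e.\ $F_C \subseteq F$; condition (b) says every $q \in F$ is reachable from $C$, i.e.\ $F \subseteq F_C$. So the two conditions together say precisely $F_C = F$ for some $F \in \calF$, which is exactly $F_C \in \calF$. Substituting back, only the good BSCCs contribute, yielding the claimed formula.

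The main obstacle, I expect, is the bookkeeping in the first step: one must be careful that the decomposition $\Prob(\mathsf{Inf} \in \calF) = \sum_C \Prob(\mathsf{Inf} \in \calF \wedge \F C)$ is valid, which requires knowing that $\Prob_{\mu \times \delta_{q_0}}^{\calT \ltimes \calM}\big(\bigvee_C \F C\big) = 1$ (from $\calC$ being an attractor) and that the events $\F C$ are almost-surely pairwise disjoint (second item of Lemma~\ref{lemma:bscc}); these two facts justify that $\sigma$-additivity applies over the finitely many BSCCs and that no mass is lost. A secondary subtlety is that Corollary~\ref{coro:BSCCproba} is stated with the conditional probability $\Prob(\mathsf{Inf} = F_C \mid \F C) = 1$, which is only meaningful when $\Prob(\F C) > 0$; for BSCCs with $\Prob(\F C) = 0$ the corresponding summand vanishes anyway, so these can simply be discarded at the outset. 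Everything else is a routine rewriting, and the equivalence ``$F_C \in \calF \iff C$ good'' is immediate from the definitions once stated carefully.
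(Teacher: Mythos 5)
Your proposal is correct and follows essentially the same route as the paper's proof: decompose over the BSCCs of $\mathsf{Graph}_{\calT\ltimes\calM}(B)$ using Lemma~\ref{lemma:bscc} (the BSCCs form an attractor and are pairwise probabilistically disjoint), then apply Corollary~\ref{coro:BSCCproba} to replace $\Prob(\mathsf{Inf}\in\calF\mid\F C)$ by $\mathds{1}_{\calF}(F_C)$, and finally identify $F_C\in\calF$ with $C$ being good. The only difference is cosmetic: you write the decomposition as a sum of intersections rather than via Bayes' formula, and you spell out the equivalence ``$F_C\in\calF\iff C\in\mathsf{Good}^B_{\calT\ltimes\calM}(\calF)$'' which the paper leaves implicit.
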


\begin{proof}
  Fix $\mu\in\Dist(S)$ and $q\in Q$. As stated in Lemma~\ref{lemma:bscc}, the BSCCs of
  $\mathsf{Graph}(B)$ form an attractor for $\calT \ltimes \calM$, and
  two BSCCs are probabilistically disjoint.  Using Bayes formula with
  a disjunction over the BSCCs, we can write:
  \[
  \Prob_{\mu \times \delta_{q_0}}^{\calT \ltimes \calM}(\mathsf{Inf} \in \calF)
  = \sum_{\begin{array}{c} {\scriptstyle C\ \text{BSCC of}\
        \mathsf{Graph}(B)} \\[-.2cm]  {\scriptstyle C\
        \mu \times \delta_{q_0}\text{-reachable}} \end{array}}
  \Prob_{\mu \times \delta_{q_0}}^{\calT \ltimes \calM}(\F C) \cdot
  \Prob_{\mu \times \delta_{q_0}}^{\calT \ltimes \calM}(\mathsf{Inf} \in \calF
  \mid \F C)
  \]
  where we say that $C$ is $\mu \times \delta_{q_0}$-reachable
  whenever $\Prob_{\mu \times \delta_{q_0}}^{\calT \ltimes \calM}(\F
  C)>0$. Hence we deduce that:
  \[
  \Prob_{\mu \times \delta_{q_0}}^{\calT \ltimes \calM}(\mathsf{Inf} \in \calF)
  = \sum_{C\ \text{BSCC of}\ \mathsf{Graph}(B)}
  \Prob_{\mu \times \delta_{q_0}}^{\calT \ltimes \calM}(\F C) \cdot
  \mathds{1}_{\calF}(F_C)
  \]
  thanks to Corollary~\ref{coro:BSCCproba}, where $\mathds{1}_\calF$
  is the characteristic function of $\calF$ (that is,
  $\mathds{1}_{\calF}(F)=1$ if $F \in \calF$, and
  $\mathds{1}_{\calF}(F)=0$ otherwise).  This concludes the proof of
  the theorem.
\end{proof}

\subsection{General STSs via an abstraction}
\label{subsec:tototo}

While the previous approach is adapted to DMCs, it does not apply
directly to general STSs: indeed, it is unlikely that general STSs
have \emph{finite} attractors, and finiteness of the attractor is
fundamental for the correctness of the approach.
The idea will then be to rely on an abstraction that admits a
  finite attractor, and to transfer properties through that
abstraction.

Let $\calT_1 = (S_1,\Sigma_1,\kappa_1,\AP,\calL_1)$ and $\calT_2 =
(S_2,\Sigma_2,\kappa_2,\AP,\calL_2)$ be two LSTSs such that $\calT_2$
is a DMC, which is an $\alpha$-abstraction of $\calT_1$.  Under
certain conditions, we show how to express the probability of
satisfying an $\omega$-regular property represented by a DMA $\calM =
(Q,q_0,E,\calF)$ in $\calT_1$ using the abstraction $\calT_2$.
We consider both the product $\calT_1 \ltimes \calM$
and the product $\calT_2 \ltimes \calM$.

First we justify why, within a slight abuse of terminology, $\calT_2
\ltimes \calM$ can be viewed as an $\alpha$-abstraction of $\calT_1
\ltimes \calM$. We also exhibit a sufficient condition under which it
is sound.

\begin{restatable}{lemma}{alphabar}
  \label{lemma:alphabar}
  Let $\alpha_\calM : S_1 \times Q \to S_2 \times Q$ be the lifting of
  $\alpha$ such that $\alpha_\calM(s,q) = (\alpha(s),q)$. If $\calT_2$
  is an $\alpha$-abstraction of $\calT_1$, then $\calT_2 \ltimes
  \calM$ is an $\alpha_\calM$-abstraction of $\calT_1 \ltimes
  \calM$. Furthermore, if $\calT_1\ltimes \calM$ is $\D(\calB)$ where
  $\calB=\lbrace \alpha_\calM^{-1}(B)\mid B\in\Sigma'_2\rbrace$, then
  $\calT_2 \ltimes \calM$ is a sound $\alpha_\calM$-abstraction of
  $\calT_1 \ltimes \calM$.
\end{restatable}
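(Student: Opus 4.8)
The statement has two parts: (i) that $\calT_2 \ltimes \calM$ is an $\alpha_\calM$-abstraction of $\calT_1 \ltimes \calM$, and (ii) that under the decisiveness hypothesis on $\calT_1 \ltimes \calM$, this abstraction is sound. For (i), the plan is to unwind the definition of abstraction in the equivalent ``single-step'' formulation recalled in Section~\ref{subsec:abst}: I must show that for every $\mu \in \Dist(S_1 \times Q)$ and every measurable $Y \in \Sigma_2'$, $\Prob_{\mu}^{\calT_1 \ltimes \calM}(\Cyl(S_1\times Q, \alpha_\calM^{-1}(Y))) > 0$ iff $\Prob_{(\alpha_\calM)_\#(\mu)}^{\calT_2 \ltimes \calM}(\Cyl(S_2\times Q, Y)) > 0$. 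Writing $Y = \bigcup_{q\in Q} C_q \times \{q\}$ with $C_q \in \Sigma_2$, I would use the product-$\sigma$-algebra decomposition from page~\pageref{product-sigma-algebra} together with the explicit formula for $\kappa'$ (the footnoted extension: from $(s,q)$ the only reachable $\calM$-component is the unique $q'$ with $(q,\calL_1(s),q')\in E$). The probability of the one-step cylinder decomposes as a sum/integral over the starting states, and since $Q$ is finite and the $\calM$-transitions are deterministic, positivity of the $\calT_1\ltimes\calM$ quantity reduces to positivity, for some $q$ and some subset of states mapping into the relevant slice, of a $\kappa_1$-step into the corresponding $C_{q'}$; the same reduction holds verbatim on the $\calT_2$ side. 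Matching these two via the single-step abstraction property of $\calT_2$ over $\calT_1$ (applied slice-by-slice, using that $\calL_1(s) = \calL_2(\alpha(s))$ so the edge taken in $\calM$ is the same) gives the equivalence. It is worth noting the measurability point: $\alpha_\calM$ is measurable since $\alpha$ is and $2^Q$ is discrete, and $\alpha_\calM^{-1}(C_q \times \{q\}) = \alpha^{-1}(C_q) \times \{q\}$ is $\alpha_\calM$-closed when $C_q$ ranges over $\Sigma_2$, so the sets in $\calB$ are exactly the $\alpha_\calM$-closed sets — this justifies the phrasing of the soundness hypothesis.

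For (ii), I would invoke Proposition~\ref{coro:DecSound} directly: it states that if $\calT_2$ is an $\alpha$-abstraction of $\calT_1$ and $\calT_1$ is decisive w.r.t.\ every $\alpha$-closed set, then $\calT_2$ is sound. Here the roles are played by $\calT_1 \ltimes \calM$ (concrete), $\calT_2 \ltimes \calM$ (abstract), and $\alpha_\calM$. Part (i) gives the abstraction relation; the hypothesis $\calT_1 \ltimes \calM$ is $\D(\calB)$ with $\calB = \{\alpha_\calM^{-1}(B) \mid B \in \Sigma_2'\}$ is precisely decisiveness w.r.t.\ every $\alpha_\calM$-closed set, once one checks (as above) that $\calB$ coincides with the family of $\alpha_\calM$-closed sets. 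So soundness follows immediately.

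The routine-but-slightly-delicate part, and the one I would write out most carefully, is the slice-by-slice bookkeeping in (i): correctly pushing a measure $\mu \in \Dist(S_1\times Q)$ through $(\alpha_\calM)_\#$, keeping track of which $\calM$-location each state lies in, and observing that because $\calM$ is deterministic and complete the $\kappa'$-kernel never mixes $Q$-components in a way that breaks the correspondence. The genuine mathematical content — preservation of single-step positivity — is entirely inherited from the hypothesis that $\calT_2$ abstracts $\calT_1$; the product construction adds only finite, deterministic synchronisation data, so no new obstacle arises there. The main conceptual point to get right is simply that the label $\calL_1(s)$ that drives the $\calM$-transition in the concrete product equals $\calL_2(\alpha(s))$, guaranteed by the last two conditions in the definition of a labelled $\alpha$-abstraction, so that ``the same edge is taken'' on both sides; everything else is unpacking definitions.
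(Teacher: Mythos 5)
Your proposal is correct and follows essentially the same route as the paper: part (i) is proved by reducing the one-step positivity condition for the product to the one-step abstraction property of $\calT_2$ over $\calT_1$, slice by slice over the finite, deterministic $Q$-component (the paper packages this via the identity $(\alpha_\calM)_{\#}(\mu \times \delta_q) = \alpha_{\#}(\mu) \times \delta_q$ and Lemma~\ref{lemma:ProbProduct}, which is the same bookkeeping you describe), and part (ii) is obtained by invoking Proposition~\ref{coro:DecSound} exactly as you do.
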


While the proof of the first part of the lemma is technical hence
postponed to the appendix (page~\pageref{app-alphabar}), the second
part of the lemma is a consequence of Proposition~\ref{coro:DecSound}.

\begin{rk}
  \label{rk:produit}
  In the sequel, our applications will be smooth enough to meet the
  hypothesis: $\calT_1\ltimes\calM$ is decisive
  w.r.t. $\alpha_\calM$-closed sets. However we still have several
  open questions. The first one is the following: does soundness
  between $\calT_2$ and $\calT_1$ imply soundness between
  $\calT_2\ltimes\calM$ and $\calT_1\ltimes\calM$?  While this seems
  quite natural, it is surprisingly tricky. Although we did not manage
  to find a counter-example for this general question, we found one
  for a fixed initial distribution. It is described in
  Example~\ref{ex:cex-sound} in the appendix (page~\pageref{app:cex})
  and highlights some difficulties we encounter when aiming at
  transferring analysis from the abstraction to the concrete model.

  This justifies the fact that we assumed decisiveness. As we already
  know, if $\calT_2$ is a sound $\alpha$-abstraction of $\calT_1$ and
  $\calT_2$ is decisive w.r.t. any set of states, then $\calT_1$ is
  decisive w.r.t. any $\alpha$-closed sets. Then the second natural
  question is the following: does decisiveness w.r.t. $\alpha$-closed
  sets for $\calT_1$ imply decisiveness w.r.t. $\alpha_\calM$-closed
  sets for $\calT_1\ltimes\calM$? Again, we do not have a general
  counter-example, but we have one for a fixed initial
  distribution. This is described in Example~\ref{ex:cex-sound} in the
  appendix (page~\pageref{app:cex}).
\end{rk}

From now on, whenever $\calT_1\ltimes\calM$ is decisive
w.r.t. $\alpha_\calM$-closed sets and thus the previous result is
applicable, we will abusively write $\alpha$ for $\alpha_\calM$.

We focus now on the case where $\calT_2$ has a finite
attractor.\footnote{As $\calT_2$ has a finite attractor, it is
  decisive and thus $\calT_2$ is a complete $\alpha$-abstraction of
  $\calT_1$ by Lemma~\ref{lemma:completeness}.} Applying
Lemma~\ref{lemma:soundproduct}, $\calT_2 \ltimes \calM$ has also a
finite attractor, which we denote $B_2$. We reuse notations of the
previous subsection, in particular the graph of the attractor
$\mathsf{Graph}_{\calT_2 \ltimes \calM}(B_2)$, and the set $F_C$ of
recurring states when $C$ is a BSCC of that graph.

The following lemma is a counterpart to Lemma~\ref{lemma:bscc} for
$\calT_1$. Under the hypothesis that $\calT_1\ltimes\calM$ is decisive
w.r.t. $\alpha$-closed sets, even though $\calT_1 \ltimes \calM$ does
not have a finite attractor, it has an attractor with an interesting
structure inherited from $\calT_2 \ltimes \calM$. In the sequel, we
write $\calB=\lbrace \alpha^{-1}(B)\mid B\in\Sigma'_2\rbrace$.
\begin{lemma}
  \label{lemma:bscc2}
  Assume $\calT_2$ has a finite attractor, and assume that $\calT_2
  \ltimes \calM$ is a sound $\alpha$-abstraction of $\calT_1 \ltimes
  \calM$.  Write $B_2$ for an attractor of $\calT_2 \ltimes \calM$.
  The following properties are
  satisfied:
  \begin{itemize}
  \item The set $\alpha^{-1}(\{(s,q) \in C \mid C\ \text{BSCC of}\
    \mathsf{Graph}_{\calT_2 \ltimes \calM}(B_2)\})$ is an attractor of
    $\calT_1 \ltimes \calM$. 
  \item If $C$ and $C'$ are two distinct BSCCs of
    $\mathsf{Graph}_{\calT_2 \ltimes \calM}(B_2)$, for every $\mu \in
    \Dist(S_1\times Q)$, $\Prob_{\mu}^{\calT_1 \ltimes \calM}(\F
    \alpha^{-1}(C) \wedge \F \alpha^{-1}(C')) = 0$.
  \item If $C$ is a BSCC of $\mathsf{Graph}_{\calT_2 \ltimes
      \calM}(B_2)$, for every $\mu \in \Dist(\alpha^{-1}(C))$,
    $\Prob_{\mu}^{\calT_1 \ltimes \calM}(\G \F \alpha^{-1}(C)) = 1$.
  \end{itemize}
\end{lemma}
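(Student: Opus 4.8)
The plan is to transport, through the abstraction, the three items of Lemma~\ref{lemma:bscc} --- established for the DMC $\calT_2 \ltimes \calM$ --- to the concrete product $\calT_1 \ltimes \calM$, using the transfer machinery of Section~\ref{sec:abstractions}. Throughout I set $\calC = \{(s,q) \in C \mid C\ \text{BSCC of}\ \mathsf{Graph}_{\calT_2 \ltimes \calM}(B_2)\}$, and I record that, by Lemma~\ref{lemma:soundproduct}, $\calT_2 \ltimes \calM$ is a DMC with the finite attractor $B_2$, so $\mathsf{Graph}_{\calT_2 \ltimes \calM}(B_2)$ is a finite graph with finitely many BSCCs and Lemma~\ref{lemma:bscc} does apply to it; moreover, by Lemma~\ref{lemma:alphabar}, $\calT_2 \ltimes \calM$ is an $\alpha_\calM$-abstraction of $\calT_1 \ltimes \calM$ (abbreviated $\alpha$), which is sound by hypothesis. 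The first item will then be immediate: Lemma~\ref{lemma:bscc} gives that $\calC$ is an attractor of $\calT_2 \ltimes \calM$, and Proposition~\ref{lem:attr-via-sound} (transfer of attractors through a sound abstraction) turns this into the statement that $\alpha^{-1}(\calC)$ is an attractor of $\calT_1 \ltimes \calM$.

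For the second item I would fix two distinct BSCCs $C, C'$ and $\mu \in \Dist(S_1 \times Q)$, and observe that $\F \alpha^{-1}(C) \wedge \F \alpha^{-1}(C')$ is a property with bounded witnesses: a path satisfies it exactly when some finite prefix already visits both $\alpha^{-1}(C)$ and $\alpha^{-1}(C')$, and every extension of such a prefix still satisfies it. Hence the transfer of positivity for properties with bounded witnesses (page~\pageref{PosReach}, generalizing Equation~\eqref{eq:PosReach}), applied to the $\alpha$-abstraction $\calT_2 \ltimes \calM$ of $\calT_1 \ltimes \calM$, yields
\[ \Prob_\mu^{\calT_1 \ltimes \calM}\big(\F \alpha^{-1}(C) \wedge \F \alpha^{-1}(C')\big) > 0 \iff \Prob_{\alpha_{\#}(\mu)}^{\calT_2 \ltimes \calM}\big(\F C \wedge \F C'\big) > 0 . \]
Since the right-hand side fails by Lemma~\ref{lemma:bscc} (second item), the left-hand probability is $0$, which is the second item.

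For the third item, let $C$ be a BSCC and $\mu \in \Dist(\alpha^{-1}(C))$. The first item together with Lemma~\ref{lemma:attractorGF} gives $\Prob_\mu^{\calT_1 \ltimes \calM}(\G \F \alpha^{-1}(\calC)) = 1$. As $\alpha^{-1}(\calC) = \bigcup_{C'} \alpha^{-1}(C')$ is a finite union over the BSCCs $C'$, visiting $\alpha^{-1}(\calC)$ infinitely often is the same as visiting some $\alpha^{-1}(C')$ infinitely often, so $\Prob_\mu^{\calT_1 \ltimes \calM}\big(\bigvee_{C'} \G \F \alpha^{-1}(C')\big) = 1$. For $C' \ne C$: since $\mu(\alpha^{-1}(C)) = 1$ the event $\F \alpha^{-1}(C)$ has $\mu$-probability $1$, so $\Prob_\mu^{\calT_1 \ltimes \calM}(\F \alpha^{-1}(C')) = \Prob_\mu^{\calT_1 \ltimes \calM}(\F \alpha^{-1}(C) \wedge \F \alpha^{-1}(C')) = 0$ by the second item, hence also $\Prob_\mu^{\calT_1 \ltimes \calM}(\G \F \alpha^{-1}(C')) = 0$. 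By subadditivity $1 = \Prob_\mu^{\calT_1 \ltimes \calM}\big(\bigvee_{C'} \G \F \alpha^{-1}(C')\big) \le \Prob_\mu^{\calT_1 \ltimes \calM}(\G \F \alpha^{-1}(C)) + \sum_{C' \ne C} 0$, so $\Prob_\mu^{\calT_1 \ltimes \calM}(\G \F \alpha^{-1}(C)) = 1$.

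The main obstacle will be the second item: BSCC-disjointness is a \emph{zero}-probability fact of the abstract chain, while abstractions are built to preserve \emph{positivity}, not vanishing; the point is that $\F \alpha^{-1}(C) \wedge \F \alpha^{-1}(C')$, although a conjunction of reachability events, still has bounded witnesses, so its positivity --- and hence its vanishing --- transfers in both directions. Once this is granted the first item is a direct application of attractor transfer and the third reduces to routine bookkeeping over the finitely many BSCCs.
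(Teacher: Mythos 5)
Your proof is correct and follows essentially the same route as the paper's: the first item via Proposition~\ref{lem:attr-via-sound} combined with Lemma~\ref{lemma:bscc}, the second via the fact that $\calT_2 \ltimes \calM$ is an $\alpha$-abstraction of $\calT_1 \ltimes \calM$ (the event $\F\alpha^{-1}(C)\wedge\F\alpha^{-1}(C')$ being a countable union of cylinders over $\alpha$-closed sets, so that positivity transfers in both directions), and the third by repeating the argument of Lemma~\ref{lemma:bscc} using Lemma~\ref{lemma:attractorGF} and the second item. You merely spell out the details that the paper leaves implicit, in particular the bounded-witness justification for transferring the zero-probability fact in the second item.
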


\begin{proof}
  Since $\calT_2 \ltimes \calM$ is a sound $\alpha$-abstraction of
  $\calT_1 \ltimes \calM$, 
  the first property is derived from Proposition~\ref{lem:attr-via-sound} and
  Lemma~\ref{lemma:bscc}. The second property is a consequence of
  Lemma~\ref{lemma:bscc}, and of the fact that $\calT_2 \ltimes \calM$
  is an $\alpha$-abstraction of $\calT_1 \ltimes \calM$.  Finally, the
  third property is, as in the proof of Lemma~\ref{lemma:bscc}, a
  consequence of the second point and of
  Lemma~\ref{lemma:attractorGF}.
\end{proof}

We then prove a counterpart to Lemma~\ref{lemma:BSCCproba} for
$\calT_1$, which shows that a BSCC is characterized by the set $F_C$
of states that are visited infinitely often from $C$.

\begin{lemma}
  \label{lemma:bsccT1}
  Assume $\calT_2$ has a finite attractor, and assume that
  $\calT_2 \ltimes \calM$ is a sound $\alpha$-abstraction of
  $\calT_1 \ltimes \calM$.  Let $C$ be a BSCC of
  $\mathsf{Graph}_{\calT_2 \ltimes \calM}(B_2)$, and
  $\mu \in \Dist(\alpha^{-1}(C))$. Then:
  \[
  \Prob_{\mu}^{\calT_1 \ltimes \calM} (\mathsf{Inf} = F_C) = 1.
  \]
\end{lemma}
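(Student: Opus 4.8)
The plan is to lift the statement of Lemma~\ref{lemma:BSCCproba} (which concerns $\calT_2 \ltimes \calM$, a DMC) to $\calT_1 \ltimes \calM$ via the sound $\alpha$-abstraction, using the three properties of Lemma~\ref{lemma:bscc2} as the concrete-level analogues of Lemma~\ref{lemma:bscc}. I would first establish the easy inclusion $\mathsf{Inf} \subseteq F_C$ almost surely. Starting from $\mu \in \Dist(\alpha^{-1}(C))$, any path $\rho$ of $\calT_1 \ltimes \calM$ projects via $\alpha$ to a path of $\calT_2 \ltimes \calM$ starting in $C$ (using that $\calT_2 \ltimes \calM$ is an $\alpha$-abstraction, so single steps are qualitatively preserved, and $\calL'$ is inherited so the $Q$-labels coincide). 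Since $C$ is a BSCC, every state reachable from $C$ lies in $F_C \times$ (its $S_2$-part), hence every $Q$-component visited along $\rho$ belongs to $F_C$; thus $\Inf(\rho) \subseteq F_C$ surely, a fortiori almost surely.

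The reverse inclusion $\mathsf{Inf} \supseteq F_C$ almost surely is the crux. Fix $p \in F_C$; by definition of $F_C$ there is $s \in S_2$ and a state $(r,p) \in C$ with a path in $\calT_2 \ltimes \calM$ from $(r,p)$ to $(s,p)$... more precisely I need a concrete state $(s_1,p) \in \alpha^{-1}(\{(s,p)\})$ of $\calT_1 \ltimes \calM$ that is reachable, with a uniform positive probability and within uniformly bounded time, from every state of $\alpha^{-1}(C)$. The bounded-horizon positive probability is where I would use Equation~\eqref{eq:PosReach} on reachability (positivity of finite-witness properties is preserved through $\alpha$-abstractions) together with the fact that $C$ is strongly connected and finite: for each $(s',q') \in C$ there is a bounded path in $\calT_2 \ltimes \calM$ to $(s,p)$, hence $\Prob^{\calT_1\ltimes\calM}_{\delta_{s_1'}}(\F[\le k](s_1,p)) > 0$ for every $s_1' \in \alpha^{-1}(C)$; taking the finite graph of the attractor $B_2$ into account and a compactness-style argument across the finitely many states of $C$ is delicate because $\alpha^{-1}(C)$ is not finite — this is the main obstacle. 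I expect one must instead invoke the hypothesis $(\dag)$-style uniform lower bound, or more directly mirror the argument in the proof of Proposition~\ref{prop:attractorSound}: from the existence, for each element of the finite set $C$, of a bounded path in $\calT_2\ltimes\calM$, and the qualitative equivalence of single steps, derive a uniform $\iota > 0$ and $k \in \IN$ such that $\Prob^{\calT_1\ltimes\calM}_{\mu'}(\F[\le k]\alpha^{-1}(\{(s,p)\})) \ge \iota$ for every $\mu' \in \Dist(\alpha^{-1}(C))$. Here I would need the abstraction to be well-behaved enough that positivity of the $k$-step reachability is uniform in the choice of $\mu'$ supported on $\alpha^{-1}(C)$; since $\kappa'$ only depends on $\alpha(s)$ up to qualitative equivalence, and $C$ is finite, one extracts finitely many values and takes the minimum — this step I would need to spell out carefully, and it is the hard part.

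Granting the uniform lower bound, the conclusion follows exactly as in the proofs of Lemma~\ref{lemma:BSCCproba} and Proposition~\ref{prop:attractorSound}: using the third item of Lemma~\ref{lemma:bscc2}, $\Prob_\mu^{\calT_1\ltimes\calM}(\G\F\alpha^{-1}(C)) = 1$, and combining this with the uniform $(\iota,k)$ bound gives $\Prob_\mu^{\calT_1\ltimes\calM}(\F\G\neg\alpha^{-1}(\{(s,p)\}) \wedge \G\F\alpha^{-1}(C)) \le \lim_{n\to\infty}(1-\iota)^n = 0$, hence $\Prob_\mu^{\calT_1\ltimes\calM}(\G\F\alpha^{-1}(\{(s,p)\})) = 1$, so the $Q$-component $p$ is visited infinitely often almost surely. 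Since $F_C$ is finite, intersecting over all $p \in F_C$ yields $\Prob_\mu^{\calT_1\ltimes\calM}(\mathsf{Inf} \supseteq F_C) = 1$. Together with the first inclusion this gives $\Prob_\mu^{\calT_1\ltimes\calM}(\mathsf{Inf} = F_C) = 1$, as required.
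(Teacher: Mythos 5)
Your plan for the inclusion $\mathsf{Inf}\subseteq F_C$ is essentially right (though it should be phrased via the preservation of positivity of reachability through the abstraction, as in Equation~\eqref{eq:PosReach}, rather than via a surely-true pointwise projection of paths, which is not what an $\alpha$-abstraction guarantees). The genuine gap is in the reverse inclusion, exactly at the step you yourself flag as "the hard part": the uniform lower bound $\Prob^{\calT_1\ltimes\calM}_{\mu'}(\F[\le k]\alpha^{-1}(\{(s,p)\}))\ge\iota$ for \emph{every} $\mu'\in\Dist(\alpha^{-1}(C))$ does not follow from the hypotheses of the lemma. An $\alpha$-abstraction only preserves the \emph{positivity} of bounded-horizon reachability probabilities; it gives no quantitative control, and since $\alpha^{-1}(C)$ is in general uncountable you cannot "extract finitely many values and take the minimum" --- the infimum over all $\mu'$ supported on $\alpha^{-1}(C)$ may well be $0$ even though each individual probability is positive (this is precisely the phenomenon behind the unfair STA of Example~\ref{Example:pacman}). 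The hypothesis $(\dag)$, which would supply such a bound, is \emph{not} assumed here, so the $(1-\iota)^n\to 0$ argument you borrow from Proposition~\ref{prop:attractorSound} cannot be run.

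The paper's proof avoids this entirely by using the \emph{soundness} hypothesis, which you list but never actually exploit at the crucial point. From Lemma~\ref{lemma:BSCCproba} one has $\Prob^{\calT_2\ltimes\calM}_{\delta_{\mathbf{s}_2}}(\F p)=1$ for every $\mathbf{s}_2\in C$ and every $p\in F_C$ (writing $p$ for $S_2\times\{p\}$); soundness of $\calT_2\ltimes\calM$ as an abstraction of $\calT_1\ltimes\calM$ then transfers this directly to $\Prob^{\calT_1\ltimes\calM}_{\nu}(\F p)=1$ for every $\nu\in\Dist(\alpha^{-1}(C))$ --- an almost-sure statement, with no quantitative bound needed. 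The upgrade from $\F p$ to $\G\F p$ is then done by contradiction: if $\Prob_\nu(\F\G\neg p)>0$, then combining with $\Prob_\nu(\G\F\alpha^{-1}(C))=1$ (third item of Lemma~\ref{lemma:bscc2}) and decomposing over the step $n$ at which the path sits in $\alpha^{-1}(C)$ and never sees $p$ afterwards, Lemma~\ref{lemma:integration} produces a conditional distribution $\nu'_{\alpha^{-1}(C)}\in\Dist(\alpha^{-1}(C))$ with $\Prob_{\nu'_{\alpha^{-1}(C)}}(\G\neg p)>0$, contradicting the almost-sure reachability just established. You should replace your uniform-bound step by this soundness-plus-restart argument; the rest of your proposal (the first inclusion, and the final intersection over the finitely many $p\in F_C$) then goes through.
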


\begin{proof}
  
  As already argued in the proof of Lemma~\ref{lemma:BSCCproba}, for
  every $p \in F_C$, for every state $\mathbf{s}_2 \in C$,
  $\Prob_{\delta_{\mathbf{s}_2}}^{\calT_2 \ltimes \calM}(\F p)=1$ (we
  abusively write $p$ for the measurable set $S_2 \times \{p\}$).
  Since $\calT_2 \ltimes \calM$ is a sound $\alpha$-abstraction of
  $\calT_1 \ltimes \calM$, we derive for every
  $\nu \in \Dist(\alpha^{-1}(C))$ that
  $\Prob_{\nu}^{\calT_1 \ltimes \calM}(\F p)=1$ (as before we
  abusively write $p$ for
  $S_1 \times \{p\} = \alpha^{-1}(S_2 \times \{p\})$). We can then
  show that for each $\nu\in\Dist(\alpha^{-1}(C))$ and for each
  $p\in F_C$,
  \[
  \Prob_\nu^{\calT_1 \ltimes \calM}(\G \F p) = 1.
  \]
  Indeed, towards a contradiction, assume that there is a distribution
  $\nu\in\Dist(\alpha^{-1}(C))$ such that $\Prob_\nu^{\calT_1 \ltimes
    \calM}(\G \F p) < 1$, i.e. $\Prob_\nu^{\calT_1 \ltimes \calM}(\F
  \G \neg p) > 0$. From the third point of Lemma~\ref{lemma:bscc2}, we
  get that $\Prob_\nu^{\calT_1 \ltimes \calM}(\G\F \alpha^{-1}(C)
  \wedge \F \G \neg p) > 0$. Now, observe that
\[\ev{\calT_1\ltimes\calM}{\G\F \alpha^{-1}(C)\wedge\F\G\neg p} \subseteq \ev{\calT_1\ltimes \calM}{\bigcup_{n\in\IN}\big(\F[=n] \alpha^{-1}(C)\wedge\G[\ge n] \neg p\big)}. \]
It follows that there is $n\in\IN$ such that $\Prob_\nu^{\calT_1
  \ltimes \calM}(\F[=n]\alpha^{-1}(C)\wedge\G[\ge n]\neg p)>0$. From
Lemma~\ref{lemma:integration}, we get that there is
$\nu'\in\Dist(S'_1)$ (with $S'_1=S_1\times Q$) such that
\begin{multline*}
  \Prob_\nu^{\calT_1 \ltimes \calM}(\F[=n]\alpha^{-1}(C)\wedge\G[\ge
    n]\neg p) \\
  \begin{array}{cl}
    = & {\displaystyle \lim_{m\to\infty} \Prob_\nu^{\calT_1 \ltimes
      \calM}(\Cyl(\overbrace{S'_1,\ldots,S'_1}^{n\text{ times}},
    \alpha^{-1}(C)\wedge\neg p, \overbrace{\neg p,\ldots,\neg p}^{m
      \text{ times}}))} \\
    \le  & \lim_{m\to\infty} \Prob_{\nu'}^{\calT_1 \ltimes
      \calM}(\Cyl(\alpha^{-1}(C)\wedge\neg p, \overbrace{\neg
      p,\ldots,\neg p}^{m \text{ times}})) \quad \text{from
      Lemma~\ref{lemma:integration}} \\
    = & \lim_{m\to\infty} \nu'(\alpha^{-1}(C))\cdot
    \Prob_{\nu'_{\alpha^{-1}(C)}}^{\calT_1 \ltimes \calM}(\Cyl(\neg p,
    \overbrace{\neg p,\ldots,\neg p}^{m \text{ times}})) \\
    = & \nu'(\alpha^{-1}(C))\cdot \Prob_{\nu'_{\alpha^{-1}(C)}}^{\calT_1 \ltimes \calM}(\G \neg p).
  \end{array}
\end{multline*}
From the assumption, we thus get that $\Prob_{\nu'_{\alpha^{-1}(C)}}^{\calT_1 \ltimes \calM}(\G \neg p)>0$ where $\nu'_{\alpha^{-1}(C)}\in\Dist(\alpha^{-1}(C))$ which is the required contradiction. Hence, for each $\nu\in\Dist(\alpha^{-1}(C))$ and for each $p\in F_C$, $\Prob_{\nu}^{\calT_1\ltimes\calM}(\G\F p)=1$.

It now suffices to show that, from any
$\nu \in \Dist(\alpha^{-1}(C))$, no other state is visited
almost-surely infinitely often.  Fix $p' \notin F_C$. Then, by
definition of $F_C$, we have that
$\Prob_{\alpha_{\#}(\nu)}^{\calT_2 \ltimes \calM}(\F p') = 0$. Since
$\calT_2 \ltimes \calM$ is an $\alpha$-abstraction of
$\calT_1 \ltimes \calM$, we deduce that
$\Prob_{\nu}^{\calT_1 \ltimes \calM}(\F p') =0$.
  
  We conclude that $\Prob_\nu^{\calT_1 \ltimes
    \calM}(\mathsf{Inf}=F_C) = 1$, which is the claim of the lemma.
\end{proof}

We are now in a position to decompose the probability to satisfy the
Muller condition $\calF$ in $\calT_1 \ltimes \calM$ into the
reachability probability of good BSCCs in $\mathsf{Graph}_{\calT_2
  \ltimes \calM}(B_2)$.
\begin{theorem}
  \label{theo:titi}
  Let $\calT_1$ and $\calT_2$ be two LSTSs such that $\calT_2$ is a
  DMC with a finite attractor, and $\calT_2$ is an
  $\alpha$-abstraction of $\calT_1$. Let $\calM = (Q,q_0,E,\calF)$ be
  a DMA. Assume moreover that $\calT_2\ltimes\calM$ is an
  $\alpha$-sound abstraction of $\calT_1\ltimes\calM$, and that $B_2$
  is a finite attractor of $\calT \ltimes \calM$.
  Then, for every initial
  distribution $\mu$ for $\calT_1$:
  \[
  \Prob_{\mu \times \delta_{q_0}}^{\calT_1 \ltimes \calM}(\mathsf{Inf} \in
  \calF) = \sum_{C \in \mathsf{Good}^{B_2}_{\calT_2 \ltimes \calM}(\calF)}
  \Prob_{\mu \times \delta_{q_0}}^{\calT_1 \ltimes \calM}(\F \alpha^{-1}(C)
  )\enspace.
  \]
\end{theorem}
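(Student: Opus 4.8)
The plan is to mirror the proof of Theorem~\ref{th:good-bscc-for-quant-analysis}, but working in $\calT_1 \ltimes \calM$ and exploiting the attractor structure that $\calT_2 \ltimes \calM$ induces on it via the soundness hypothesis. First I would recall from the first item of Lemma~\ref{lemma:bscc2} that the set $\mathcal{A} := \alpha^{-1}(\{(s,q)\in C \mid C\ \text{BSCC of}\ \mathsf{Graph}_{\calT_2\ltimes\calM}(B_2)\})$ is an attractor for $\calT_1\ltimes\calM$; by Lemma~\ref{lemma:attractorGF}, it is therefore visited infinitely often with probability $1$ from every initial distribution, and in particular from $\mu\times\delta_{q_0}$. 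Combined with the second item of Lemma~\ref{lemma:bscc2} (the sets $\alpha^{-1}(C)$ are pairwise probabilistically disjoint), this lets me partition almost every run according to which preimage-of-a-BSCC $\alpha^{-1}(C)$ it eventually settles in. Concretely, applying Bayes' formula with a (finite, since $\mathsf{Graph}_{\calT_2\ltimes\calM}(B_2)$ has finitely many BSCCs) disjunction over the BSCCs $C$ that are $\mu\times\delta_{q_0}$-reachable, I get
\[
\Prob_{\mu \times \delta_{q_0}}^{\calT_1 \ltimes \calM}(\mathsf{Inf} \in \calF)
= \sum_{C} \Prob_{\mu \times \delta_{q_0}}^{\calT_1 \ltimes \calM}(\F \alpha^{-1}(C))
\cdot \Prob_{\mu \times \delta_{q_0}}^{\calT_1 \ltimes \calM}(\mathsf{Inf}\in\calF \mid \F\alpha^{-1}(C))\enspace.
\]

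The second step is to evaluate each conditional factor. Here I would invoke Lemma~\ref{lemma:bsccT1}: conditioned on reaching $\alpha^{-1}(C)$, the run from that point on starts from a distribution in $\Dist(\alpha^{-1}(C))$ (after an application of Lemma~\ref{lemma:integration} to shift to the first hitting time of $\alpha^{-1}(C)$, exactly as done inside the proof of Lemma~\ref{lemma:bsccT1}), hence $\Prob^{\calT_1\ltimes\calM}(\mathsf{Inf}=F_C\mid \F\alpha^{-1}(C))=1$. Therefore $\Prob^{\calT_1\ltimes\calM}(\mathsf{Inf}\in\calF\mid\F\alpha^{-1}(C))$ equals $\mathds{1}_\calF(F_C)$. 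Substituting, the sum collapses to a sum over the BSCCs $C$ with $F_C\in\calF$.

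The last step is to identify $\{C \mid F_C\in\calF\}$ with $\mathsf{Good}^{B_2}_{\calT_2\ltimes\calM}(\calF)$. Unwinding the definition of a good BSCC: condition (a) says every $q$ reachable (in $\calT_2\ltimes\calM$) from $C$ lies in some fixed $F\in\calF$, i.e. $F_C\subseteq F$; condition (b) says every $q\in F$ is reachable from $C$, i.e. $F\subseteq F_C$. So $C$ is good for $\calF$ precisely when $F_C\in\calF$, and I may also drop the "$\mu\times\delta_{q_0}$-reachable" restriction since unreachable $C$ contribute $\Prob(\F\alpha^{-1}(C))=0$. This yields the displayed identity. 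The main obstacle is the careful bookkeeping in the first two steps: justifying that conditioning on $\F\alpha^{-1}(C)$ legitimately reduces (via Lemma~\ref{lemma:integration} applied at the hitting time, together with the first two items of Lemma~\ref{lemma:bscc2}) to a fresh analysis from $\Dist(\alpha^{-1}(C))$, so that Lemma~\ref{lemma:bsccT1} genuinely applies — this is the same subtlety that appears in the proof of Theorem~\ref{th:good-bscc-for-quant-analysis}, now transported across the abstraction, and it relies crucially on the $\alpha$-soundness hypothesis feeding Lemmas~\ref{lemma:bscc2} and~\ref{lemma:bsccT1}.
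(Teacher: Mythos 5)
Your proposal is correct and follows essentially the same route as the paper's own proof: a Bayes decomposition over the BSCCs of $\mathsf{Graph}_{\calT_2\ltimes\calM}(B_2)$ justified by the first two items of Lemma~\ref{lemma:bscc2}, evaluation of each conditional term via Lemma~\ref{lemma:bsccT1} to get $\mathds{1}_{\calF}(F_C)$, and the identification of good BSCCs with those satisfying $F_C\in\calF$. You merely make explicit two steps the paper leaves implicit (the shift to the hitting distribution via Lemma~\ref{lemma:integration}, and the unwinding of the definition of a good BSCC), which is fine.
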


\begin{proof}

  Applying Lemma~\ref{lemma:bsccT1}, for every $\mu \in \Dist(S_1)$,
  assuming $\Prob_{\mu \times \delta_{q_0}}^{\calT_1 \ltimes \calM}(\F
  \alpha^{-1}(C))>0$, then \( \Prob_{\mu \times \delta_{q_0}}^{\calT_1
    \ltimes \calM}(\mathsf{Inf} = F_C \mid \F \alpha^{-1}(C)) = 1.  \)
  By the two first properties of Lemma~\ref{lemma:bscc2}, we can write
  the following Bayes formula, with a disjunction over the BSCCs of
  $\mathsf{Graph}_{\calT_2 \ltimes \calM}(B_2)$:
  \begin{eqnarray*}
    \Prob_{\mu \times \delta_{q_0}}^{\calT_1 \ltimes \calM}(\mathsf{Inf} \in \calF) & = &
    \hspace*{-1.5cm} \sum_{\begin{array}{c} {\scriptstyle C\ \text{BSCC of}\ \mathsf{Graph}_{\calT_2 \ltimes
        \calM}(B_2)} \\[-.2cm] {\scriptstyle C\ \mu \times
      \delta_q\text{-reachable}} \end{array} } \hspace*{-1.5cm}
\Prob_{\mu \times \delta_{q_0}}^{\calT_1 \ltimes \calM}(\F 
    \alpha^{-1}(C)) \cdot \Prob_{\mu \times \delta_{q_0}}^{\calT_1 \ltimes
      \calM}(\mathsf{Inf} \in \calF\mid \F \alpha^{-1}(C)) \\
    & = & \sum_{C\ \text{BSCC of}\ \mathsf{Graph}_{\calT_2 \ltimes
        \calM}(B_2)} \Prob_{\mu \times \delta_{q_0}}^{\calT_1 \ltimes \calM}(\F
    \alpha^{-1}(C)) \cdot \mathds{1}_{\calF}(F_C) \\
    & = & \sum_{C\in \mathsf{Good}^{B_2}_{\calT_2 \ltimes
        \calM}(\calF)} \Prob_{\mu \times \delta_{q_0}}^{\calT_1
      \ltimes \calM}(\F 
    \alpha^{-1}(C))
  \end{eqnarray*}
  This concludes the proof of the theorem.
\end{proof}

\section{Qualitative analysis}
\label{sec:qualitative}

In this section, we rely on the notions previously introduced and
studied to design generic procedures for the qualitative analysis of
properties of STSs, under some assumptions that will be made precise.
We emphasize that these are procedures rather than algorithms, since
algorithms would require some effectiveness conditions on the STSs
(numerical conditions, or decidability of some graph properties in the
underlying non-stochastic model). Next, we will make explicit
necessary conditions to obtain algorithms from the generic procedures.
For most natural STSs (and in particular for our applications~--~see
Section~\ref{sec:appli}), these conditions will be satisfied.

For the next two subsections, we fix an STS $\calT =
(S,\Sigma,\kappa)$.

\subsection{Basic properties under decisiveness hypotheses}
\label{subsec:qual-basic}

Our objective here is to describe generic procedures that capture the
qualitative (almost-sure and positive) satisfaction of reachability
and repeated reachability properties.

Given $B \in \Sigma$ a measurable set, recall that
$\widetilde{B} = \{s \in S \mid \Prob_{\delta_s}^{\calT}(\F B) = 0\}$
denotes its avoid-set. Some properties of this set, while not crucial
for the understanding but required for the proofs, are given in
Appendix~\ref{app:add-qual} (page~\pageref{app:add-qual}).

\medskip Extending the approach of~\cite{ABM07}, we establish
characterizations of the qualitative satisfaction of (repeated)
reachability properties in terms of the positive satisfaction of
reachability-like properties.  We advocate that these are simpler to
check on STSs: positive reachability
amounts to guessing a ``symbolic'' path (or cylinder) leading to the
target, and to showing that this path has a positive measure.  The
next proposition is stated in greater details as
Proposition~\ref{app-qualsimple} in the technical
Appendix~\ref{appendix:qualitative} (page~\pageref{app-qualsimple}).

\begin{proposition}
  \label{prop:qualsimple}
  Let $\mu \in \Dist(S)$. Then we have the following implications,
  yielding various characterizations for the qualitative analysis of
  STSs (under specified assumptions):
  \begin{description}
  \item[Almost-sure reachability] If $\calT$ is $\D(\mu,B)$, then:
    \[
      \Prob^\calT_\mu(\F B) = 1\iff \Prob^\calT_\mu(\neg B \U
      \widetilde{B}) =0 \enspace.
    \]
  \item[Almost-sure repeated reachability] If $\calT$ is $\SD(\mu,B)$,
    then:
    \[
    \Prob^\calT_\mu(\G \F B) = 1 \iff \Prob^\calT_\mu(\F
    \widetilde{B}) =0 \enspace.
    \]
  \item[Positive repeated reachability] If $\calT$ is
    $\D(\mu,\widetilde{B})$ and $\PD(\mu,B)$, then:
    \[
    \Prob^\calT_\mu(\G\F B) >0 \iff \Prob^\calT_\mu(\F
    \widetilde{\widetilde{B}})>0 \enspace.
    \]
  \end{description}
\end{proposition}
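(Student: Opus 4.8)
The plan is to prove the three equivalences in Proposition~\ref{prop:qualsimple} separately, each relying on the relevant decisiveness hypothesis together with the structural properties of avoid-sets from Lemma~\ref{lem:Btilde}.

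\textbf{Almost-sure reachability.}
First I would assume $\calT$ is $\D(\mu,B)$, so $\Prob^\calT_\mu(\F B \vee \F\widetilde{B})=1$. By item~5 of Lemma~\ref{lem:Btilde}, this equals $\Prob^\calT_\mu(\F B \vee (\neg B \U \widetilde B))$. Since the events $\ev{\calT}{\F B}$ and $\ev{\calT}{\neg B \U \widetilde B}$ are disjoint (a path witnessing $\neg B\U\widetilde B$ reaches $\widetilde B$ having avoided $B$ so far, and once in $\widetilde B$ it reaches $B$ with probability $0$ by item~2), we get $\Prob^\calT_\mu(\F B) + \Prob^\calT_\mu(\neg B\U\widetilde B) = 1$. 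This immediately yields $\Prob^\calT_\mu(\F B)=1 \iff \Prob^\calT_\mu(\neg B\U\widetilde B)=0$. The only subtlety is to argue carefully that the union in item~5 is in fact a \emph{disjoint} union at the level of measure, which follows because on the event $\neg B\U\widetilde B$ the prefix before hitting $\widetilde B$ avoids $B$ and, conditioning on the hitting distribution (which is supported on $\widetilde B$) and applying item~2, the future almost surely also avoids $B$.

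\textbf{Almost-sure repeated reachability.}
Next I would assume $\calT$ is $\SD(\mu,B)$, so $\Prob^\calT_\mu(\G\F B \vee \F\widetilde B)=1$. Again the two events are essentially disjoint: if a path reaches $\widetilde B$, then from that point on it visits $B$ only with probability $0$ (item~2 applied to the hitting distribution), so it visits $B$ infinitely often with probability $0$; hence $\Prob^\calT_\mu(\G\F B) + \Prob^\calT_\mu(\F\widetilde B) = 1$, giving $\Prob^\calT_\mu(\G\F B)=1 \iff \Prob^\calT_\mu(\F\widetilde B)=0$. For the forward-to-disjointness argument I would use a decomposition of $\ev{\calT}{\F\widetilde B}$ according to the first time $\widetilde B$ is hit, apply the strong Markov-type property encoded in Lemma~\ref{lemma:integration}, and conclude via item~2 that almost all such paths fail $\G\F B$.

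\textbf{Positive repeated reachability.}
This is the delicate case and I expect it to be the main obstacle. Assume $\calT$ is $\D(\mu,\widetilde B)$ and $\PD(\mu,B)$. The intuition: $\Prob^\calT_\mu(\G\F B)>0$ should be equivalent to reaching, with positive probability, a region from which $B$ is almost-surely reached again and again, and that region is exactly $\widetilde{\widetilde B}$ — the set of states from which $\widetilde B$ is reached with probability $0$, i.e.\ from which $B$ is always (eventually) reachable. For the $(\Leftarrow)$ direction: if $\Prob^\calT_\mu(\F\widetilde{\widetilde B})>0$, then with positive probability we reach $\widetilde{\widetilde B}$; I would apply item~4 of Lemma~\ref{lem:Btilde} to $\widetilde B$ to see that $\widetilde{\widetilde B}=(\widetilde{\widetilde B})$ is a ``sink'' (once in it we stay), and then use $\PD(\mu,B)$: decisiveness at every horizon $p$ forces, from a distribution supported on $\widetilde{\widetilde B}$ (where $\widetilde B$ has probability $0$ of being reached, by definition of $\widetilde{\widetilde B}$), that $\F[\ge p]B$ holds almost surely for every $p$, hence $\G\F B$ holds almost surely; transferring back via Lemma~\ref{lemma:integration} gives $\Prob^\calT_\mu(\G\F B)>0$. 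For the $(\Rightarrow)$ direction: suppose $\Prob^\calT_\mu(\F\widetilde{\widetilde B})=0$, so almost surely we never reach $\widetilde{\widetilde B}$, i.e.\ almost surely the current state is always in $(\widetilde{\widetilde B})^c$, from which $\widetilde B$ is reachable with positive probability (item~3 applied to $\widetilde B$). Using $\D(\mu,\widetilde B)$, almost surely either $\widetilde B$ or $\widetilde{\widetilde B}$ is eventually reached; since the latter has probability $0$, almost surely $\widetilde B$ is reached, and then by item~2 (applied to the hitting distribution on $\widetilde B$) $B$ is visited infinitely often only with probability $0$; hence $\Prob^\calT_\mu(\G\F B)=0$. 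The contrapositive gives the claim.

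The main difficulty throughout is handling the non-denumerable state space rigorously: every ``from that point onwards'' argument must be justified by the concatenation/Markov identity of Lemma~\ref{lemma:integration}, conditioning on the (sub-)distribution reached at the relevant stopping stage, rather than on a single state; and the ``sink'' behaviour of $\widetilde B$ and $\widetilde{\widetilde B}$ (item~4 of Lemma~\ref{lem:Btilde}) is what makes the first-hitting decompositions close up. The positive repeated reachability case additionally requires juggling the nested avoid-set $\widetilde{\widetilde B}$ and carefully invoking persistent decisiveness at all horizons, which is why it needs the extra $\PD$ and $\D(\mu,\widetilde B)$ hypotheses that the first two cases do not.
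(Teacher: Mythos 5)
Your proposal is correct and follows essentially the same route as the paper: each item is obtained from the almost-sure disjunction supplied by the relevant decisiveness hypothesis, made additive (or reduced to one disjunct) via the measure-zero intersections $\Prob^\calT_\mu(\F B \wedge (\neg B \U \widetilde{B}))=0$, $\Prob^\calT_\mu(\G\F B \wedge \F\widetilde{B})=0$, and, for the last item, the sink behaviour of $\widetilde{\widetilde{B}}$ under $\PD(\mu,B)$. The only cosmetic difference is in the $(\Leftarrow)$ direction of positive repeated reachability, where you condition on the event of hitting $\widetilde{\widetilde{B}}$ and invoke $\PD(\mu,B)$ at every horizon directly, whereas the paper packages the same conditioning argument as the auxiliary fact $\Prob^\calT_\mu(\F\widetilde{B} \wedge \F\widetilde{\widetilde{B}})=0$ and then concludes via $\PD\Rightarrow\SD$; both are valid.
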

While the two first characterizations are quite intuitive under the
corresponding decisiveness assumptions, let us comment on the
characterization of positive repeated reachability: the set
$\widetilde{\widetilde{B}}$ is the set from which one cannot reach
$\widetilde{B}$ (or with probability $0$), hence from which we will be
able to revisit $B$ again and again. With this interpretation in mind,
the characterization is somewhat natural.

\medskip\noindent This reduces all these problems to checking the
(non-)positivity of some reachability, or a slight generalization
thereof, property in the STS.
Those are the simplest properties one can hope to be decidable in a
class of models. Effectiveness hence relies here on the computation of
avoid-sets, avoid-sets of avoid-sets, and on the decidability of the
positive reachability (or Until) problem.

\subsection{Basic properties through abstractions}
\label{subsec:qual-basic-abs}

Via abstractions, one can reduce the qualitative analysis of basic
properties (reachability and repeated reachability) from the concrete
model to the abstract model. Indeed, one can use the previous results
(Propositions~\ref{prop:links} and~\ref{thm:MuDecisiveAbstr} together
with Proposition~\ref{prop:qualsimple}), and show:

\begin{proposition}
  Assume $\calT_2$ is an $\alpha$-abstraction of $\calT_1$, and fix $B
  \in \Sigma_2$. 
  \begin{itemize}
  \item Let $\mu \in \Dist(S_1)$ be an initial distribution for
    $\calT_1$.  Assume that $\calT_2$ is $\mu$-sound and
    $\mu$-complete. Then:
    \[
    \Prob^{\calT_1}_{\mu}(\F \alpha^{-1}(B)) =1\ \text{iff}\
    \Prob^{\calT_2}_{\alpha_{\#}(\mu)}(\F B) =1\enspace.
    \]
  \item Assume that $\calT_2$ is sound and complete, and that
    $\calT_2$ is $\SD(B)$. Then for every $\mu \in \Dist(S_1)$:
    \[
    \Prob^{\calT_1}_{\mu}(\G \F \alpha^{-1}(B)) =1\ \text{iff}\
    \Prob^{\calT_2}_{\alpha_{\#}(\mu)}(\G \F B) =1\enspace.
    \]
  \item Assume that $\calT_2$ is sound and complete, and that
    $\calT_2$ is $\PD(B)$ and $\D(\widetilde{B})$. Then for
    every $\mu \in \Dist(S_1)$:
    \[
    \Prob^{\calT_1}_{\mu}(\G \F \alpha^{-1}(B)) =1\ \text{iff}\
    \Prob^{\calT_2}_{\alpha_{\#}(\mu)}(\G \F B) =1\enspace.
    \]
  \end{itemize}
\end{proposition}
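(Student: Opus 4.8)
The plan is to reduce all three equivalences to the single-step transfer property of Equation~\eqref{eq:PosReach}: on the abstract side the decisiveness hypothesis on $\calT_2$ lets Proposition~\ref{prop:qualsimple} convert a qualitative (repeated) reachability statement into the (non-)positivity of a reachability-like event built from avoid-sets; soundness of $\calT_2$ (via Proposition~\ref{thm:MuDecisiveAbstr} and the equivalences of Proposition~\ref{prop:links}) carries the matching decisiveness property to $\calT_1$, so the same conversion applies on the concrete side; and the two positivity statements are then equivalent by Equation~\eqref{eq:PosReach}. The only ingredient not already available is a set-level identity for avoid-sets, which I would record first.

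\emph{Avoid-sets commute with $\alpha^{-1}$.} For any $\alpha$-abstraction $\calT_2$ of $\calT_1$ and any $B\in\Sigma_2$ I would show $\widetilde{\alpha^{-1}(B)}=\alpha^{-1}(\widetilde{B})$, where the left-hand avoid-set is computed in $\calT_1$ and the right-hand one in $\calT_2$. Since $\alpha_{\#}(\delta_{s_1})=\delta_{\alpha(s_1)}$, instantiating Equation~\eqref{eq:PosReach} at $A:=S_2$ (so $\alpha^{-1}(A)=S_1$, $S_1\U\alpha^{-1}(B)=\F\alpha^{-1}(B)$ and $S_2\U B=\F B$) and at the initial distribution $\delta_{s_1}$ gives $\Prob^{\calT_1}_{\delta_{s_1}}(\F\alpha^{-1}(B))>0\iff\Prob^{\calT_2}_{\delta_{\alpha(s_1)}}(\F B)>0$; negating, $s_1\in\widetilde{\alpha^{-1}(B)}\iff\alpha(s_1)\in\widetilde{B}$. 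Applying this once more with $\widetilde{B}$ in place of $B$ yields $\widetilde{\widetilde{\alpha^{-1}(B)}}=\widetilde{\alpha^{-1}(\widetilde{B})}=\alpha^{-1}(\widetilde{\widetilde{B}})$. These are exactly the properties of avoid-sets deferred to Appendix~\ref{app:add-qual}.

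\emph{The three items.} The first item is just the definitions: $\mu$-soundness is the implication $\Prob^{\calT_2}_{\alpha_{\#}(\mu)}(\F B)=1\Rightarrow\Prob^{\calT_1}_{\mu}(\F\alpha^{-1}(B))=1$, and $\mu$-completeness is the converse, so the two together give the stated equivalence. For the second item, $\calT_2$ being sound and $\SD(B)$, Proposition~\ref{thm:MuDecisiveAbstr} (with Proposition~\ref{prop:links}) gives that $\calT_1$ is $\SD(\alpha^{-1}(B))$, hence $\SD(\mu,\alpha^{-1}(B))$; I would then chain the almost-sure repeated reachability characterization of Proposition~\ref{prop:qualsimple} on $\calT_1$, the identity $\widetilde{\alpha^{-1}(B)}=\alpha^{-1}(\widetilde{B})$, Equation~\eqref{eq:PosReach} at $A:=S_2$ with the set $\widetilde{B}$ (negated), and the same characterization on $\calT_2$, obtaining $\Prob^{\calT_1}_{\mu}(\G\F\alpha^{-1}(B))=1\iff\Prob^{\calT_1}_{\mu}(\F\alpha^{-1}(\widetilde{B}))=0\iff\Prob^{\calT_2}_{\alpha_{\#}(\mu)}(\F\widetilde{B})=0\iff\Prob^{\calT_2}_{\alpha_{\#}(\mu)}(\G\F B)=1$. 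For the third item, note that $\PD(B)\iff\SD(B)$ by Proposition~\ref{prop:links}, so the displayed almost-sure equivalence is already covered by the second item; and if the intended statement is instead the positive variant $\Prob^{\calT_1}_{\mu}(\G\F\alpha^{-1}(B))>0\iff\Prob^{\calT_2}_{\alpha_{\#}(\mu)}(\G\F B)>0$, the same scheme applies through the positive repeated reachability characterization of Proposition~\ref{prop:qualsimple}, using that soundness transfers $\PD(B)$ and $\D(\widetilde{B})$ to $\PD(\alpha^{-1}(B))$ and $\D(\widetilde{\alpha^{-1}(B)})$, together with the iterate $\widetilde{\widetilde{\alpha^{-1}(B)}}=\alpha^{-1}(\widetilde{\widetilde{B}})$ and Equation~\eqref{eq:PosReach} applied to the event $\F\widetilde{\widetilde{B}}$.

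\emph{Main difficulty.} There is no deep obstacle: once the avoid-set identity is in place, everything is bookkeeping on top of Propositions~\ref{prop:qualsimple}, \ref{prop:links}, \ref{thm:MuDecisiveAbstr} and Equation~\eqref{eq:PosReach}. The point that needs care is that Proposition~\ref{prop:qualsimple} is stated for a single STS, so at each use one must verify that the decisiveness hypothesis it requires on $\calT_1$ is genuinely available — which it is, precisely because $\calT_2$ is sound and carries the corresponding decisiveness property, so that Proposition~\ref{thm:MuDecisiveAbstr} applies to the relevant $\alpha$-closed set.
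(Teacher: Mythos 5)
Your proposal is correct and follows exactly the route the paper intends: it only sketches this proposition by pointing to Propositions~\ref{prop:links}, \ref{thm:MuDecisiveAbstr} and~\ref{prop:qualsimple}, and your chain (transfer decisiveness via soundness, characterize the qualitative property through avoid-sets on each side, and bridge the two positivity statements with Equation~\eqref{eq:PosReach}) fills in the details faithfully, with your avoid-set identity $\widetilde{\alpha^{-1}(B)}=\alpha^{-1}(\widetilde{B})$ being precisely Lemma~\ref{lemma:BtildeAbstr} of the appendix. Your observation that the third item as printed is subsumed by the second (since $\PD(B)\iff\SD(B)$ over all initial distributions) and was presumably meant to state the positive variant $\Prob^{\calT_1}_{\mu}(\G\F\alpha^{-1}(B))>0\iff\Prob^{\calT_2}_{\alpha_{\#}(\mu)}(\G\F B)>0$ is a legitimate catch, and your treatment of that variant via the iterated identity $\widetilde{\widetilde{\alpha^{-1}(B)}}=\alpha^{-1}(\widetilde{\widetilde{B}})$ is also correct.
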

This allows one to perform the qualitative analysis of (repeated)
reachability properties in $\calT_1$ on its abstraction $\calT_2$,
which is quite useful since $\calT_2$ will usually be simpler than
$\calT_1$.

\subsection{$\omega$-regular properties in DMCs with a finite
  attractor}
\label{subsec:qual-dmc}

Following Section~\ref{subsec:dmc-attractor},
under the assumption that the STS has a finite attractor, we have
completely characterized the probability of satisfying the property
defined by a DMA using the probability of reaching BSCCs of a finite
graph (Theorem~\ref{th:good-bscc-for-quant-analysis}). Using that
result, we get the following characterization of the almost-sure
satisfaction relation.
\begin{corollary}[Almost-sure $\omega$-regular property]
  \label{coro:ASomega}
  Let $\calT$ be a labelled DMC with a finite attractor, and $\calM =
  (Q,q_0,E,\calF)$ be a DMA.  Let $B$ be a finite attractor for $\calT
  \ltimes \calM$. For every initial distribution $\mu\in\Dist(S)$ for
  $\calT$:
  \begin{multline*}
    \Prob_{\mu \times \delta_{q_0}}^{\calT \ltimes \calM}(\mathsf{Inf} \in
    \calF) = 1\quad \text{if and only if} \\ \text{every BSCC $C$ of
      $\mathsf{Graph}_{\calT \ltimes \calM}(B)$ such that
      $\Prob_{\mu \times \delta_{q_0}}^{\calT \ltimes \calM}(\F C) >0$ is good
      for $\calF$.}
  \end{multline*}
\end{corollary}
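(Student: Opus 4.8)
The plan is to obtain this as an immediate corollary of Theorem~\ref{th:good-bscc-for-quant-analysis}, the only extra ingredient being the (elementary) observation that the reachability probabilities of the BSCCs of $\mathsf{Graph}_{\calT \ltimes \calM}(B)$ sum to one.

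First I would record that
\[
\sum_{C\ \text{BSCC of}\ \mathsf{Graph}_{\calT \ltimes \calM}(B)} \Prob_{\mu \times \delta_{q_0}}^{\calT \ltimes \calM}(\F C) = 1 \enspace.
\]
To see this, let $\mathcal{U} = \{(s,q)\in C \mid C\ \text{BSCC of}\ \mathsf{Graph}_{\calT \ltimes \calM}(B)\}$ be the union of all BSCCs. By the first item of Lemma~\ref{lemma:bscc}, $\mathcal{U}$ is an attractor of $\calT \ltimes \calM$, hence $\Prob_{\mu \times \delta_{q_0}}^{\calT \ltimes \calM}(\F \mathcal{U}) = 1$. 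Since a path reaches $\mathcal{U}$ exactly when it reaches one of the (finitely many) BSCCs, $\ev{\calT \ltimes \calM}{\F \mathcal{U}}$ is the union of the events $\ev{\calT \ltimes \calM}{\F C}$ over BSCCs $C$, and by the second item of Lemma~\ref{lemma:bscc} these events are pairwise almost-surely disjoint; the displayed equality follows.

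Then I would subtract the identity of Theorem~\ref{th:good-bscc-for-quant-analysis} from this equality to obtain
\[
1 - \Prob_{\mu \times \delta_{q_0}}^{\calT \ltimes \calM}(\mathsf{Inf} \in \calF) = \sum_{C}\Prob_{\mu \times \delta_{q_0}}^{\calT \ltimes \calM}(\F C)\enspace,
\]
where $C$ ranges over the BSCCs of $\mathsf{Graph}_{\calT \ltimes \calM}(B)$ that are \emph{not} good for $\calF$. As every summand on the right-hand side is nonnegative, the right-hand side equals $0$ --- equivalently $\Prob_{\mu \times \delta_{q_0}}^{\calT \ltimes \calM}(\mathsf{Inf} \in \calF) = 1$ --- if and only if each such summand is $0$, i.e.\ if and only if every BSCC $C$ of $\mathsf{Graph}_{\calT \ltimes \calM}(B)$ with $\Prob_{\mu \times \delta_{q_0}}^{\calT \ltimes \calM}(\F C) > 0$ is good for $\calF$; this is exactly the asserted equivalence.

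I expect no genuine obstacle here: all the content is carried by Theorem~\ref{th:good-bscc-for-quant-analysis}, and the only thing to keep clean is the finite Bayes-style decomposition over BSCCs, which is precisely what Lemma~\ref{lemma:bscc} provides.
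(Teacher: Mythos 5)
Your proof is correct and is precisely the derivation the paper intends: the corollary is stated there without an explicit proof, as a direct consequence of Theorem~\ref{th:good-bscc-for-quant-analysis}, and your argument fills in exactly the missing step (the fact that the BSCC-reachability probabilities sum to $1$, via the first two items of Lemma~\ref{lemma:bscc}, followed by subtraction and nonnegativity of the remaining terms). Nothing to object to.
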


\noindent In order to turn this characterization into a decision
procedure, we need to be able to compute the attractor $B$ for $\calT
\ltimes \calM$, and to build the graph $\mathsf{Graph}_{\calT \ltimes
  \calM}(B)$; also one needs to be able to compute for every BSCC $C$
the set $F_C$.

\subsection{$\omega$-regular properties of general STSs via
  abstraction and finite attractor}
\label{subsec:qual-abs}
Following Section~\ref{subsec:tototo}, under several assumptions over
an abstraction, we have completely characterized the probability for a
concrete system to satisfy a property given as a DMA using a
decomposition of a graph defined for the abstract system
(Theorem~\ref{theo:titi}). From that result, we deduce the following
characterization of the almost-sure satisfaction relation via an
abstraction. It turns out that the value resulting from the
decomposition is equal to $1$ if, and only if, the property is
almost-surely satisfied by the abstract system.

\begin{corollary}\label{coro:theotiti}
  Let $\calT_1$ and $\calT_2$ be two LSTSs such that $\calT_2$ is a
  DMC with a finite attractor, and $\calT_2$ is an
  $\alpha$-abstraction of $\calT_1$. Let $\calM =
  (Q,q_0,E,\calF)$ be a DMA.  Assume moreover that
  $\calT_2\ltimes\calM$ is an $\alpha$-sound abstraction of
  $\calT_1\ltimes\calM$.
  Then, for every initial distribution $\mu$ for $\calT_1$:
  \[
  \Prob_{\mu \times \delta_{q_0}}^{\calT_1 \ltimes \calM}(\mathsf{Inf}
  \in \calF) = 1\quad \text{if and only if}\quad
  \Prob_{\alpha_{\#}(\mu \times \delta_{q_0})}^{\calT_2 \ltimes
    \calM}(\mathsf{Inf} \in \calF) = 1 \enspace.
  \]
\end{corollary}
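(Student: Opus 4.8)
The plan is to deduce this corollary from Theorem~\ref{theo:titi} by showing that the sum $\sum_{C \in \mathsf{Good}^{B_2}_{\calT_2 \ltimes \calM}(\calF)} \Prob_{\mu \times \delta_{q_0}}^{\calT_1 \ltimes \calM}(\F \alpha^{-1}(C))$ equals $1$ if and only if the analogous sum for $\calT_2$, namely $\sum_{C \in \mathsf{Good}^{B_2}_{\calT_2 \ltimes \calM}(\calF)} \Prob_{\alpha_{\#}(\mu) \times \delta_{q_0}}^{\calT_2 \ltimes \calM}(\F C)$, equals $1$; the latter sum equals $\Prob_{\alpha_{\#}(\mu\times\delta_{q_0})}^{\calT_2 \ltimes \calM}(\mathsf{Inf}\in\calF)$ by Theorem~\ref{th:good-bscc-for-quant-analysis} applied to the DMC $\calT_2$. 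First I would note that $\calT_2 \ltimes \calM$ is a complete $\alpha$-abstraction of $\calT_1 \ltimes \calM$: indeed $\calT_2$ has a finite attractor, hence so does $\calT_2 \ltimes \calM$ by Lemma~\ref{lemma:soundproduct}, so $\calT_2\ltimes\calM$ is decisive, and completeness follows from Lemma~\ref{lemma:completeness} (using that $\calT_2\ltimes\calM$ is an $\alpha_\calM$-abstraction of $\calT_1\ltimes\calM$ by Lemma~\ref{lemma:alphabar}). So both soundness and completeness are available for the pair $(\calT_1\ltimes\calM, \calT_2\ltimes\calM)$.

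The core of the argument is a reachability transfer, BSCC by BSCC. Each BSCC $C$ of $\mathsf{Graph}_{\calT_2\ltimes\calM}(B_2)$ is a finite set of states of $\calT_2\ltimes\calM$, hence $C \in \Sigma'_2$ and $\alpha^{-1}(C)$ is an $\alpha_\calM$-closed set. Since the BSCCs of $\mathsf{Graph}(B_2)$ together form a finite attractor-carrying structure (Lemma~\ref{lemma:bscc}), the set $D=\bigcup\{C : C\text{ BSCC}\}$ is an attractor for $\calT_2\ltimes\calM$, so $\Prob_{\alpha_{\#}(\mu)\times\delta_{q_0}}^{\calT_2\ltimes\calM}(\F D)=1$, and by the first item of Lemma~\ref{lemma:bscc2}, $\Prob_{\mu\times\delta_{q_0}}^{\calT_1\ltimes\calM}(\F\alpha^{-1}(D))=1$. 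Thus in both systems the total mass reaching $\bigcup_C \alpha^{-1}(C)$ (resp. $\bigcup_C C$) is $1$, and the events $\F\alpha^{-1}(C)$ for distinct $C$ are pairwise almost-disjoint (second item of Lemma~\ref{lemma:bscc2}), and likewise for $\F C$ in $\calT_2\ltimes\calM$ (second item of Lemma~\ref{lemma:bscc}). Hence $\sum_C \Prob_{\mu\times\delta_{q_0}}^{\calT_1\ltimes\calM}(\F\alpha^{-1}(C)) = 1$ and $\sum_C \Prob_{\alpha_{\#}(\mu)\times\delta_{q_0}}^{\calT_2\ltimes\calM}(\F C) = 1$, the sums ranging over \emph{all} BSCCs. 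Now I would show, for each individual BSCC $C$, the equivalence $\Prob_{\mu\times\delta_{q_0}}^{\calT_1\ltimes\calM}(\F\alpha^{-1}(C))=1 \iff \Prob_{\alpha_{\#}(\mu)\times\delta_{q_0}}^{\calT_2\ltimes\calM}(\F C)=1$: this is precisely one instance of $\mu$-soundness and $\mu$-completeness of $\calT_2\ltimes\calM$ applied to the measurable set $C$, with the initial distribution $\mu\times\delta_{q_0}$ (note $\alpha_{\#}(\mu\times\delta_{q_0})=\alpha_{\#}(\mu)\times\delta_{q_0}$).

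Combining these facts: since both $(\Prob_{\mu\times\delta_{q_0}}^{\calT_1\ltimes\calM}(\F\alpha^{-1}(C)))_C$ and $(\Prob_{\alpha_{\#}(\mu)\times\delta_{q_0}}^{\calT_2\ltimes\calM}(\F C))_C$ are probability vectors (nonnegative, summing to $1$ over all BSCCs) agreeing exactly on which coordinates equal $1$, one coordinate equals $1$ in the first vector iff the same coordinate equals $1$ in the second; and when some coordinate is $1$ all others are $0$ in both vectors. Therefore $\sum_{C\in\mathsf{Good}} \Prob^{\calT_1\ltimes\calM}_{\mu\times\delta_{q_0}}(\F\alpha^{-1}(C)) = 1$ iff there is a good BSCC $C_0$ with $\Prob^{\calT_1\ltimes\calM}_{\mu\times\delta_{q_0}}(\F\alpha^{-1}(C_0))=1$ and all other BSCCs unreachable, iff the same holds with $\calT_2\ltimes\calM$, iff $\sum_{C\in\mathsf{Good}} \Prob^{\calT_2\ltimes\calM}_{\alpha_{\#}(\mu)\times\delta_{q_0}}(\F C)=1$. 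Invoking Theorem~\ref{theo:titi} on the left and Theorem~\ref{th:good-bscc-for-quant-analysis} on the right yields the stated equivalence. The main obstacle I anticipate is the bookkeeping around the ``only if'' direction: $\sum_{C\in\mathsf{Good}}\Prob(\F\alpha^{-1}(C))=1$ does not immediately force a single good BSCC to have reachability probability $1$ unless one exploits that the full sum over \emph{all} BSCCs is $1$ and the summands are nonnegative; so the argument that total mass escapes to the BSCC layer (via the attractor property of Lemma~\ref{lemma:bscc2}) is the load-bearing step and must be invoked explicitly rather than taken for granted.
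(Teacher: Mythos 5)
Your overall architecture is the right one (and matches the paper's intent): apply Theorem~\ref{theo:titi} on the concrete side, Theorem~\ref{th:good-bscc-for-quant-analysis} on the abstract side (noting $\alpha_{\#}(\mu\times\delta_{q_0})=\alpha_{\#}(\mu)\times\delta_{q_0}$), and relate the two sums over good BSCCs, using crucially that the sum over \emph{all} BSCCs equals $1$ in both systems. That last point, which you correctly identify as load-bearing, is fine.

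However, the final combinatorial step contains a genuine error. You claim that
$\sum_{C\in\mathsf{Good}}\Prob^{\calT_1\ltimes\calM}_{\mu\times\delta_{q_0}}(\F\alpha^{-1}(C))=1$
iff there is a \emph{single} good BSCC $C_0$ with $\Prob(\F\alpha^{-1}(C_0))=1$ and all others unreachable. This is false: two good BSCCs each reached with probability $1/2$ (and all bad ones with probability $0$) also make the sum equal to $1$. Consequently the two facts you establish -- both vectors sum to $1$, and they agree on which coordinates equal $1$ -- do not suffice: two probability vectors can sum to $1$, have no coordinate equal to $1$ (so vacuously agree on the $1$-coordinates), and still put different total mass on the good coordinates. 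Your per-BSCC transfer via $\mu$-soundness and $\mu$-completeness (agreement on coordinates equal to $1$) is therefore the wrong tool.

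The correct reduction is through the coordinates equal to $0$: since the full sum is $1$ in each system, $\sum_{C\in\mathsf{Good}}\Prob(\F\,\cdot\,)=1$ iff every \emph{bad} BSCC is reached with probability $0$. That condition transfers between $\calT_1\ltimes\calM$ and $\calT_2\ltimes\calM$ because positivity of reachability of $\alpha_\calM$-closed sets is preserved by \emph{any} abstraction (Equation~\eqref{eq:PosReach}, or Corollary~\ref{coro:Until}); no soundness or completeness is needed for this particular step. With that substitution your proof closes; the rest of your preparation (the attractor argument giving total mass $1$, pairwise disjointness of the $\F\alpha^{-1}(C)$ events via Lemmas~\ref{lemma:bscc} and~\ref{lemma:bscc2}) is correct and is exactly what makes the reduction to the bad BSCCs legitimate.
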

\noindent Hence, this reduces the almost-sure model-checking of a
property given by $\calM$ in $\calT_1$ to the almost-sure
model-checking of a reachability property (applying
Corollary~\ref{coro:ASomega}). For the approach to be effective, it is
sufficient that the analysis at the level of $\calT_2 \ltimes \calM$
is effective.

As already quickly mentioned, under the hypotheses of
Corollary~\ref{coro:theotiti}, the abstraction $\calT_2 \ltimes \calM$
is complete (since it has a finite attractor). Though it is not
explicitely used, we could not have such an equivalence without some
completeness of the abstraction.

\begin{rk}[Discussion on the approach of~\cite{BBB+14}] While the
  notion of abstraction was not precisely defined in~\cite{BBB+14} for
  stochastic timed automata, it was implicitly already there. Also,
  decidability of the almost-sure satisfaction was ensured thanks to a
  fairness condition. Using the terminology of the current paper, the
  framework was the following: $\calT_1$ and $\calT_2$ are two STSs
  such that $\calT_2$ is a \emph{finite} Markov chain which is an
  $\alpha$-abstraction of $\calT_1$. Then the condition for the
  abstraction to yield interesting results was that $\calT_1$ should
  be fair w.r.t. every $\alpha$-closed sets (the latter condition
  implying the fairness of $\calT_1\ltimes\calM$, for $\calM$ a DMA).
  Thanks to Proposition~\ref{prop:fairness}, this implies that
  $\calT_1 \ltimes \calM$ is actually decisive w.r.t. $\alpha$-closed
  sets.  Applying Proposition~\ref{coro:DecSound}, we get that
  $\calT_2 \ltimes \calM$ is sound abstraction of
  $\calT_1 \ltimes \calM$.  Given that $\calT_2$ is finite, it
  trivially has a finite attractor. Hence, the conditions of
  Theorem~\ref{theo:titi} are satisfied, and the approach
  of~\cite{BBB+14} was then a particular case of that theorem, when
  applied to specific subclasses of stochastic timed automata (further
  details are provided in Subsection~\ref{subsec:sta}).
\end{rk}

\section{Approximate quantitative analysis}
\label{sec:quantitative}

Beyond qualitative analysis, we are interested in quantitative
analysis of stochastic systems, that is, in computing the probability
that an STS satisfies a given property. No generic decidability
results can be stated in the very general context of
STSs,\footnote{Note already that there is no uniform effective way to
  represent STSs, so that we can hardly expect generic (and effective)
  procedures or algorithms.}. We thus focus here on approximate
analysis and develop generic approximation procedures which, under
reasonable assumptions, allow one to compute within arbitrary
precision, the probability of a given property.  As for properties, we
consider first reachability, then repeated reachability, later
$\omega$-regular properties, and finally some timed properties.

\medskip
For the next two subsections, we fix an STS $\calT =
(S,\Sigma,\kappa)$, and a
distribution $\mu \in \Dist(S)$.

\subsection{Quantitative reachability under decisiveness hypotheses}
\label{subsec:approx-reach}
In order to approximate the reachability probability of a set $B \in
\Sigma$ in $\calT$, we define the two following sequences, similar to
the ones given for decisive Markov chains~\cite{ABM07}. For every $n
\in \nats$:
\[
\left\{\begin{array}{lcl}
p_n^{\mathsf{Yes}} & = & \Prob^\calT_\mu(\F[\le n] B);\\
p_n^{\mathsf{No}} & = & \Prob^\calT_\mu(\neg B \U[\le n] \widetilde{B}).
\end{array}\right.
\]

Since the sequences of events $(\F[\le n] B)_{n \in \nats}$ and $(\neg
B \U[\le n] \widetilde{B})_{n \in \nats}$ are non-decreasing and
converge respectively to $\F B$ and $\neg B \U \widetilde{B}$, 
the sequences $(p_n^{\mathsf{Yes}})_n$ and $(p_n^{\mathsf{No}})_n$
  are non-decreasing and converge respectively to $\Prob^\calT_\mu(\F
  B)$ and $\Prob^\calT_\mu(\neg B \U \widetilde{B})$.
%
  Assuming now that $\calT$ is decisive w.r.t. $B$, the two limits are
  related, as stated below. The proof of this proposition can be found
  in Appendix, page~\pageref{app:approxreach}.
\begin{restatable}[Approximation scheme for reachability properties]{proposition}{approxreach}
\label{prop-approx-reach}
If $\calT$ is $\D(\mu,B)$, then the two sequences
$(p_n^{\mathsf{Yes}})_n$ and $(1-p_n^{\mathsf{No}})_n$ are
adjacent\footnote{Recall that two sequences $(a_n)_{n \in \IN}$
    and $(b_n)_{n \in \IN}$ are said \emph{adjacent} if w.l.o.g. $(a_n)$
    is non-decreasing, $(b_n)$ is non-increasing and the sequence
    $(a_n-b_n)_{n \in \IN}$ converges to $0$.} and converge to
$\Prob^\calT_\mu(\F B)$.
\end{restatable}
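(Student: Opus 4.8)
The plan is to reduce the proposition to the single identity $\Prob^\calT_\mu(\F B) + \Prob^\calT_\mu(\neg B \U \widetilde{B}) = 1$. Granting this, the statement follows at once from the facts already recorded just above it: $(p_n^{\mathsf{Yes}})_n$ is non-decreasing with $\lim_n p_n^{\mathsf{Yes}} = \Prob^\calT_\mu(\F B)$, and $(p_n^{\mathsf{No}})_n$ is non-decreasing with $\lim_n p_n^{\mathsf{No}} = \Prob^\calT_\mu(\neg B \U \widetilde{B})$, so that $(1-p_n^{\mathsf{No}})_n$ is non-increasing with limit $1 - \Prob^\calT_\mu(\neg B \U \widetilde{B}) = \Prob^\calT_\mu(\F B)$. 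Hence the two sequences share the limit $\Prob^\calT_\mu(\F B)$, their difference $(1-p_n^{\mathsf{No}}) - p_n^{\mathsf{Yes}}$ tends to $0$, and they are adjacent in the sense of the footnote. In addition, for every $n$ one has $p_n^{\mathsf{Yes}} \le \Prob^\calT_\mu(\F B) = 1 - \Prob^\calT_\mu(\neg B \U \widetilde{B}) \le 1 - p_n^{\mathsf{No}}$, so the two sequences even bracket the target value at each step.

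The first half of the identity is $\Prob^\calT_\mu(\F B \vee (\neg B \U \widetilde{B})) = \Prob^\calT_\mu(\F B) + \Prob^\calT_\mu(\neg B \U \widetilde{B})$, which I would get by showing that $\F B$ and $\neg B \U \widetilde{B}$ are disjoint up to a null set. Along a path $\rho = (s_i)_{i \ge 0}$, let $\tau$ be the first index with $s_\tau \in \widetilde{B}$ (set $\tau = \infty$ if there is none). On $\neg B \U \widetilde{B}$ one has $\tau < \infty$ and $s_j \notin B$ for all $j < \tau$; hence, on the further intersection with $\F B$, the set $B$ is visited only at positions $\ge \tau$, so $\rho_{\ge \tau} \models \F B$. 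Partitioning on the value of $\tau$ and using the decomposition of Lemma~\ref{lemma:integration} to condition on the prefix $s_0 \cdots s_\tau$, one obtains $\Prob^\calT_\mu(\F B \wedge (\neg B \U \widetilde{B})) \le \sum_{i \ge 0} \Prob^\calT_\mu(\{\tau = i\} \cap \{\rho_{\ge i} \models \F B\})$, and each summand is an integral of $\Prob^\calT_{\delta_{s_i}}(\F B)$ over the event $\{\tau = i\} \subseteq \{s_i \in \widetilde{B}\}$, which vanishes by the definition of $\widetilde{B}$ (equivalently, by the second item of Lemma~\ref{lem:Btilde}).

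It remains to see that $\Prob^\calT_\mu(\F B \vee (\neg B \U \widetilde{B})) = 1$. Since $\calT$ is $\D(\mu,B)$ we have $\Prob^\calT_\mu(\F B \vee \F \widetilde{B}) = 1$, and by the fifth item of Lemma~\ref{lem:Btilde} this quantity equals $\Prob^\calT_\mu(\F B \vee (\neg B \U \widetilde{B}))$. Combining this with the previous paragraph gives $\Prob^\calT_\mu(\F B) + \Prob^\calT_\mu(\neg B \U \widetilde{B}) = 1$, and the proposition follows as explained in the first paragraph.

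The only genuinely delicate point is the almost-sure disjointness of $\F B$ and $\neg B \U \widetilde{B}$: in a general, possibly non-denumerable STS this cannot be read off syntactically and requires the decomposition of Lemma~\ref{lemma:integration} to split a path cleanly at its first visit to $\widetilde{B}$ and then invoke the absorbing-in-measure character of $\widetilde{B}$. Everything else reduces to routine manipulations of monotone sequences of cylinder events together with one inclusion--exclusion at probability $1$.
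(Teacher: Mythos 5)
Your proof is correct and follows essentially the same route as the paper's: both reduce the statement to $\Prob^\calT_\mu(\F B)+\Prob^\calT_\mu(\neg B \U \widetilde{B})=1$, obtained from the almost-sure disjointness of the two events, the identity $\Prob^\calT_\mu(\F B \vee \F\widetilde{B})=\Prob^\calT_\mu(\F B \vee (\neg B \U \widetilde{B}))$ (fifth item of Lemma~\ref{lem:Btilde}), and the decisiveness assumption. The only difference is that you re-derive the disjointness (point~(i) of Lemma~\ref{lemma:QualTechnical}) by splitting at the first visit to $\widetilde{B}$, whereas the paper simply cites that lemma.
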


To obtain an $\varepsilon$-approximation for $\Prob^\calT_\mu(\F B)$,
it suffices to evaluate $p_n^{\mathsf{Yes}} $ and $p_n^{\mathsf{No}} $
for larger and larger values of $n$, until $1 - p_n^{\mathsf{No}} -
p_n^{\mathsf{Yes}} < \varepsilon$, and to return $p_n^{\mathsf{Yes}}
$. This scheme is effective as soon as one can compute
$\widetilde{B}$, and the probability (from $\mu$) of cylinders of the
forms $\Cyl(\underbrace{S,\ldots,S}_{n\text{ times}},B)$ and
$\Cyl(\underbrace{\neg B,\ldots,\neg B}_{n\text{
    times}},\widetilde{B})$.  In case $p_n^{\mathsf{Yes}}$ and
$p_n^{\mathsf{No}}$ cannot be computed exactly, but can only be
approximated up to any desired error bound, this scheme can be refined
to obtain a $2 \varepsilon$-approximation for $\Prob^\calT_\mu(\F B)$.

\begin{remark}
  The above approximation scheme can be adapted to Until properties of
  the form $B' \U B$ (for $B,B' \in \Sigma$) in a straightforward way
  as follows: for every $n \in \bbN$,
  \[
  \left\{\begin{array}{lcl}
      \hat{p}_n^{\mathsf{Yes}} & = & \Prob^\calT_\mu(B' \U[\le n] B);\\
      \hat{p}_n^{\mathsf{No}} & = & \Prob^\calT_\mu(\neg B \U[\le n]
      (\widetilde{B} \vee \neg B')).
    \end{array}\right.
  \]
  Convergence of that scheme here also relies on a decisiveness
  property w.r.t. $B$.
\end{remark}

\subsection{Quantitative repeated reachability under decisiveness hypotheses}
\label{subsec:quant_repeated_reach}
We now define two sequences that will yield an approximation scheme
for a repeated reachability probability, under stronger assumptions
on the model. For every $n \in \IN$:
\[
\left\{\begin{array}{lcl}
q_n^{\mathsf{Yes}} & = & \Prob^\calT_\mu(\F[\le n] \widetilde{\widetilde{B}}); \\
q_n^{\mathsf{No}} & = & \Prob^\calT_\mu(\F[\le n] \widetilde{B}).
\end{array}\right.
\]

Here again, with no assumption on $\calT$, clearly enough, the
sequences $(q_n^{\mathsf{Yes}})_n$ and $(q_n^{\mathsf{No}})_n$ are
non-decreasing and converge respectively to
$\Prob^\calT_\mu(\F \widetilde{\widetilde{B}})$ and
$\Prob^\calT_\mu(\F \widetilde{B})$.  Assuming now that $\calT$ is
persistently decisive w.r.t to $B$ and decisive
w.r.t. $\widetilde{B}$, the two sequences are closely related, as
stated below. The proof of this result can be found
page~\pageref{app:quantrepreach}.
\begin{restatable}[Approximation scheme for repeated reachability]{proposition}{quantrepreach}
\label{prop-approx-buchi}
If $\calT$ is $\PD(\mu,B)$ and $\D(\mu,\widetilde{B})$, then the two sequences $(q_n^{\mathsf{Yes}})_n$ and $(1-q_n^{\mathsf{No}})_n$ are adjacent
 and converge to $\Prob^\calT_\mu(\G\F B)$. 
\end{restatable}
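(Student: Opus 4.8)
The plan is to reduce the repeated-reachability probability $\Prob^\calT_\mu(\G\F B)$ to a reachability probability to which the already-established reachability scheme (Proposition~\ref{prop-approx-reach}) applies. Concretely, I would first invoke the characterization from Proposition~\ref{prop:qualsimple} in its positive-repeated-reachability form, but more importantly I would reuse the underlying identity $\Prob^\calT_\mu(\G\F B)=\Prob^\calT_\mu(\F \widetilde{\widetilde{B}})$, which holds precisely under the two hypotheses $\PD(\mu,B)$ and $\D(\mu,\widetilde{B})$: persistent decisiveness w.r.t.\ $B$ gives, at every horizon, the dichotomy between revisiting $B$ and hitting $\widetilde B$, so that on paths avoiding $\widetilde{\widetilde B}$ one almost-surely hits $\widetilde B$; and decisiveness w.r.t.\ $\widetilde B$ handles the behaviour of the $\widetilde B$-avoiding part. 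This identity is exactly the content needed to line up $q_n^{\mathsf{Yes}}$ with the true value: since $(q_n^{\mathsf{Yes}})_n$ is non-decreasing and converges to $\Prob^\calT_\mu(\F \widetilde{\widetilde B})$, it converges to $\Prob^\calT_\mu(\G\F B)$.

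Next I would show that $\calT$ is decisive w.r.t.\ $\widetilde{\widetilde B}$, so that Proposition~\ref{prop-approx-reach} can be applied to the reachability target $\widetilde{\widetilde B}$. The avoid-set of $\widetilde{\widetilde B}$ is $\widetilde{\widetilde{\widetilde B}}$, and by the general properties of avoid-sets (Lemma~\ref{lem:Btilde}, together with the additional properties of avoid-sets referenced in Appendix~\ref{app:add-qual}) one has $\widetilde{\widetilde{\widetilde B}}=\widetilde B$. Hence the ``No''-sequence for the target $\widetilde{\widetilde B}$ is $\Prob^\calT_\mu(\neg \widetilde{\widetilde B}\ \U[\le n]\ \widetilde B)$, whose limit is $\Prob^\calT_\mu(\neg\widetilde{\widetilde B}\ \U\ \widetilde B)$. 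I would then argue that this limit equals $\Prob^\calT_\mu(\F\widetilde B)$: indeed, by definition $\widetilde{\widetilde B}$ is disjoint from $\widetilde B$ (from $\widetilde{\widetilde B}$ one reaches $\widetilde B$ with probability $0$, whereas from $\widetilde B\subseteq\widetilde B$ one reaches $\widetilde B$ surely), and using persistent decisiveness w.r.t.\ $B$ one shows that almost every path reaching $\widetilde B$ first stays in $\neg\widetilde{\widetilde B}$ until it does so — this is the $\neg B\,\U\,\widetilde B$ versus $\neg\widetilde{\widetilde B}\,\U\,\widetilde B$ refinement, analogous to item~5 of Lemma~\ref{lem:Btilde}. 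Thus $\Prob^\calT_\mu(\neg\widetilde{\widetilde B}\,\U\,\widetilde B)=\Prob^\calT_\mu(\F\widetilde B)$, i.e.\ $(q_n^{\mathsf{No}})_n$ is exactly the ``No''-sequence produced by the reachability scheme for target $\widetilde{\widetilde B}$.

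Having matched both sequences with those of Proposition~\ref{prop-approx-reach} applied to $\widetilde{\widetilde B}$ under the decisiveness $\D(\mu,\widetilde{\widetilde B})$, that proposition immediately yields that $(q_n^{\mathsf{Yes}})_n$ and $(1-q_n^{\mathsf{No}})_n$ are adjacent and both converge to $\Prob^\calT_\mu(\F\widetilde{\widetilde B})=\Prob^\calT_\mu(\G\F B)$, which is the claim. The main obstacle I anticipate is establishing $\D(\mu,\widetilde{\widetilde B})$ and the identity $\widetilde{\widetilde{\widetilde B}}=\widetilde B$ cleanly from the hypotheses — the iteration of the avoid-set operator has to be controlled carefully, and one must be sure that persistent decisiveness w.r.t.\ $B$ (rather than merely w.r.t.\ $\widetilde B$ or $\widetilde{\widetilde B}$) is what licenses the key ``$\neg B\,\U$'' refinement; keeping track of which decisiveness assumption is used where is the delicate bookkeeping in this proof, and is presumably why the statement carries exactly the pair of hypotheses $\PD(\mu,B)$ and $\D(\mu,\widetilde B)$.
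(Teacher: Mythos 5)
Your first paragraph correctly isolates the heart of the matter, namely the identity $\Prob^\calT_\mu(\G\F B)=\Prob^\calT_\mu(\F\widetilde{\widetilde{B}})$ under the two hypotheses; this is also the pivot of the paper's argument. The gap is in your second paragraph: the identity $\widetilde{\widetilde{\widetilde{B}}}=\widetilde{B}$, on which your reduction to Proposition~\ref{prop-approx-reach} rests, is false in general. Only the inclusion $\widetilde{B}\subseteq\widetilde{\widetilde{\widetilde{B}}}$ holds (for $s\in\widetilde{B}$ one has $\Prob_{\delta_s}^{\calT}(\G\F\widetilde{B})=1$ by Lemma~\ref{lem:Btilde}, while $\Prob_{\delta_s}^{\calT}(\G\F\widetilde{B}\wedge\F\widetilde{\widetilde{B}})=0$ by Lemma~\ref{lemma:QualTechnical}(ii), so $\widetilde{\widetilde{B}}$ is unreachable from $s$). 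For the converse, consider a transient random walk $L_1,L_2,\dots$ from which one can move ($L_1\to b$ with positive probability) to a fresh state $b$, with $b$ leading surely to an absorbing state $d$, and set $B=\{b\}$: then $\widetilde{B}=\{d\}$, $\widetilde{\widetilde{B}}=\emptyset$ and $\widetilde{\widetilde{\widetilde{B}}}=S\neq\widetilde{B}$. The hypotheses $\PD(\mu,B)$ and $\D(\mu,\widetilde{B})$ cannot rescue the identity, because they constrain only the fixed initial distribution $\mu$ (take $\mu=\delta_b$ above: both hold), whereas the iterated avoid-sets are defined state-wise via Dirac distributions. Consequently your term-by-term identification of $(q_n^{\mathsf{No}})_n$ with the ``No''-sequence of the reachability scheme for target $\widetilde{\widetilde{B}}$ (which is $\Prob^\calT_\mu(\neg\widetilde{\widetilde{B}}\U[\le n]\widetilde{\widetilde{\widetilde{B}}})$, not $\Prob^\calT_\mu(\F[\le n]\widetilde{B})$) does not go through.

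The repair is shorter than your detour: Proposition~\ref{prop-approx-reach} is not needed at all. Monotonicity and the limits $q_n^{\mathsf{Yes}}\to\Prob^\calT_\mu(\F\widetilde{\widetilde{B}})$ and $q_n^{\mathsf{No}}\to\Prob^\calT_\mu(\F\widetilde{B})$ are immediate, so adjacency reduces to the single equality $\Prob^\calT_\mu(\F\widetilde{B})+\Prob^\calT_\mu(\F\widetilde{\widetilde{B}})=1$. The union of these two events has probability $1$ by $\D(\mu,\widetilde{B})$, and their intersection is null by Lemma~\ref{lemma:QualTechnicalPD}, which is where $\PD(\mu,B)$ is used; combining with $\Prob^\calT_\mu(\G\F B)=1-\Prob^\calT_\mu(\F\widetilde{B})$, obtained from $\PD(\mu,B)\Leftrightarrow\SD(\mu,B)$ and Lemma~\ref{lemma:QualTechnical}(ii), gives the common limit $\Prob^\calT_\mu(\G\F B)$. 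This direct computation is exactly the paper's proof.
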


Effectiveness of the scheme relies on the computability of the
avoid sets $\widetilde{B}$ and $\widetilde{\widetilde{B}}$, and on the
effective computation of the probability of cylinders of the forms
$\Cyl(\underbrace{\neg \widetilde{B},\ldots,\neg
  \widetilde{B}}_{n\text{ times}},\widetilde{\widetilde{B}})$ and
$\Cyl(\underbrace{\neg \widetilde{\widetilde{B}},\ldots,\neg
  \widetilde{\widetilde{B}}}_{n\text{ times}},\widetilde{B})$.
Similarly as before, in case $q_n^{\mathsf{Yes}}$ and
$q_n^{\mathsf{No}}$ cannot be computed exactly, but can only be
approximated up to any desired error bound, this scheme can be refined
to obtain a $2 \varepsilon$-approximation for $\Prob^\calT_\mu(\G \F
B)$.

\subsection{$\omega$-regular properties in DMC 
  with a finite attractor}
\label{subsec:quant-attractor}

To go beyond reachability and repeated reachability, we now consider
an $\omega$-regular property given by a DMA
$\calM = (Q,q_0,E,\mathcal{F})$. We assume that $\calT =
(S,\Sigma,\kappa,\AP,\calL)$ is a labelled DMC.


In order to approximate the probability that the model satisfies this
external specification, we assume that $\calT$ has a finite attractor.
Following Section~\ref{subsec:qual-dmc}, we consider the finite
attractor $B$ of $\calT \ltimes \calM$, and we apply
Theorem~\ref{th:good-bscc-for-quant-analysis}: for each
$\mu\in\Dist(S)$,
\[
\Prob_{\mu \times \delta_{q_0}}^{\calT \ltimes \calM}(\mathsf{Inf} \in \calF) =
\sum_{C \in \mathsf{Good}^B_{\calT \ltimes \calM}(\calF)}
\Prob_{\mu \times \delta_{q_0}}^{\calT \ltimes \calM}(\F C ) \enspace.
\]

Thus, the computation of the probability that a given model satisfies
a given external specification is reduced to the computation of a
reachability probability.  Now, given that $\calT$ and hence $\calT
\ltimes \calM$ has a finite attractor, $\calT \ltimes \calM$ is
$\D(\mu \times \delta_{q_0},B)$ for any measurable set $B$, so that we can apply
the approximation scheme from Section~\ref{subsec:approx-reach} to
obtain an approximation of the desired value.


The effectiveness of the approach relies on the effectiveness of the
scheme for reachability, but also on the computability of an attractor
for $\calT$, and of the set of good BSCCs of the graph of the
attractor.



\subsection{$\omega$-regular properties of general STSs via
  abstraction and finite attractor}
  \label{subsec:quantMullerAbstr}

We assume the same framework as in Section~\ref{subsec:qual-abs}, that is
$\calT_1 = (S_1,\Sigma_1,\kappa_1,\AP,\calL_1)$ and $\calT_2 =
(S_2,\Sigma_2,\kappa_2,\AP,\calL_2)$ are two LSTSs such that:
\begin{itemize}
\item $\calT_2$ is a sound $\alpha$-abstraction of $\calT_1$
\item $\calT_2$ is a DMC with a finite attractor $B_2$. 
\end{itemize}
We consider again a DMA $\calM = (Q,q_0,E,\mathcal{F})$, as well as
the products $\calT_1\ltimes\calM$ and $\calT_2\ltimes\calM$. Writing
$\calB=\lbrace \alpha_\calM^{-1}(B)\mid B\in \Sigma'_2\rbrace$, we
assume that $\calT_1\ltimes\calM$ is $\D(\calB)$. Remember that this
implies, from Lemma~\ref{lemma:alphabar}, that $\calT_2\ltimes\calM$
is a sound $\alpha_\calM$-abstraction of $\calT_1\ltimes\calM$.

Fix an initial distribution $\mu$ for $\calT_1$.  Thanks to
Theorem~\ref{theo:titi}: 
\[
\Prob_{\mu \times \delta_{q_0}}^{\calT_1 \ltimes \calM}(\mathsf{Inf} \in
\calF) = \sum_{C \in \mathsf{Good}^{B_2}_{\calT_2 \ltimes \calM}(\calF)}
\Prob_{\mu \times \delta_{q_0}}^{\calT_1 \ltimes \calM}(\F \alpha^{-1}(C)
)\enspace.
\]

Thus, as previously, the computation of the probability that a given
model satisfies a given external specification is reduced to the
computation of a reachability probability. 
Since we assumed $\calT_1\ltimes\calM$ to be $\D(\calB)$, we can use
the approximation scheme from Section~\ref{subsec:approx-reach} to
approximate the searched value.

Effectiveness of the approach requires effective numerical
computations for the distributions, as well as good constructivity
properties for various sets, like the BSCCs of the graph of the
attractor, and avoid-sets of these, etc.

\subsection{Time-bounded verification of stochastic real-time systems}
\label{nonzeno}\label{subsec:bounded}

The initial motivation to consider general STSs stems from real-time
stochastic systems, that is, systems with both timing constraints and
probabilistic choices. While everything which precedes holds for any
kind of STSs, we highlight now some specific features of real-time
stochastic systems. 

\begin{definition} 
  A \emph{real-time stochastic transition system} (RT-STS) is an STS
  $\calT = (\widehat{S},\widehat{\Sigma},\kappa)$ such that (i) there
  is a measurable space $(S,\Sigma)$ with $\widehat{S} = S \times
  \bbR_{\ge 0}$, and $\widehat\Sigma$ is the product $\sigma$-algebra
  of $\Sigma$ and the Borel sets of $\bbR_{\ge 0}$; and (ii) for every
  $(s,t) \in S \times \bbR_{\ge 0}$, $\kappa((s,t),\{(s',t') \in S \times
  \bbR_{\ge 0} \mid t' > t\}) =1$.
\end{definition}
The first condition makes explicit the time component of the system
(given by $\bbR_{\ge 0}$; $S$ then contains the spatial information),
while the second condition imposes the time to increase almost-surely.
By explicitly integrating absolute time into DMCs (where it is
increased by one at each new event) or CTMCs (where it is increased by
the time elapsed in each state -- hence it represents absolute time
since the start of the system), they can be interpreted as
RT-STSs. All other examples that we will consider in
Section~\ref{sec:appli} can also be seen as RT-STSs, after explicit
integration of absolute time in the state-space.

Let $\calT = (\widehat{S},\widehat{\Sigma},\kappa)$ be an RT-STS.
A desirable property of a real-time system is that it should be
(almost-surely) \emph{non-Zeno}: a path $\rho = (s_0,t_0) (s_1,t_1)
\ldots \in \Paths(\calT)$ is non-Zeno whenever $\lim_{n \to +\infty}
t_n = +\infty$. Under such an hypothesis, we first identify natural
attractors of an RT-STS.

\begin{lemma}
  Assume that $\calT$ is almost-surely non-Zeno. Then, for every
  $\Delta \in \bbQ_{\ge 0}$, the set $A_\Delta = \{(s,t) \in S \times
  \bbR_{\ge 0} \mid t > \Delta\}$ is an attractor of $\calT$.
\end{lemma}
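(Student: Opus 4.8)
The plan is to observe that $A_\Delta$ is a measurable set which every non-Zeno path must eventually enter, so that almost-sure non-Zenoness forces $\F A_\Delta$ to hold almost surely, from every initial distribution. The whole argument is essentially an unfolding of the definitions.

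First I would record the measurability bookkeeping: $A_\Delta = S \times (\Delta,+\infty)$ is a rectangle with $S \in \Sigma$ and $(\Delta,+\infty)$ a Borel subset of $\bbR_{\ge 0}$, hence $A_\Delta \in \widehat{\Sigma}$. Consequently $\ev{\calT}{\F A_\Delta}$ is a measurable subset of $(\Paths(\calT),\calF_{\calT})$, and $\Prob^{\calT}_\mu(\F A_\Delta)$ is well-defined for every $\mu \in \Dist(\widehat{S})$.

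Next I would fix an arbitrary $\mu \in \Dist(\widehat{S})$ and let $NZ = \{\rho = (s_0,t_0)(s_1,t_1)\cdots \in \Paths(\calT) \mid \lim_{n\to+\infty} t_n = +\infty\}$ denote the set of non-Zeno paths, so that the hypothesis that $\calT$ is almost-surely non-Zeno reads $\Prob^{\calT}_\mu(NZ) = 1$. The key step is then the elementary inclusion $NZ \subseteq \ev{\calT}{\F A_\Delta}$: if $\rho = (s_0,t_0)(s_1,t_1)\cdots$ satisfies $\lim_n t_n = +\infty$, there is some index $n$ with $t_n > \Delta$, i.e. $(s_n,t_n) \in A_\Delta$, hence $\rho \models \F A_\Delta$. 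From this inclusion, $1 = \Prob^{\calT}_\mu(NZ) \le \Prob^{\calT}_\mu(\F A_\Delta) \le 1$, so $\Prob^{\calT}_\mu(\F A_\Delta) = 1$. Since $\mu$ was arbitrary, $A_\Delta$ is an attractor of $\calT$.

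I do not expect a genuine obstacle here: the statement is close to a restatement of the definitions of RT-STS and of attractor. The only points to handle carefully are the measurability of $A_\Delta$ (so the probabilities in play make sense) and a precise reading of \emph{almost-surely non-Zeno} as $\Prob^{\calT}_\mu(NZ)=1$ for every $\mu$; the crucial inclusion $NZ \subseteq \ev{\calT}{\F A_\Delta}$ is immediate from the meaning of $\lim_n t_n = +\infty$, and rationality of $\Delta$ plays no role (it will only matter later for effectiveness).
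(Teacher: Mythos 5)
Your proof is correct and is exactly the argument the paper has in mind: the lemma is stated without proof precisely because it is the definitional unfolding you give (non-Zeno paths eventually satisfy $t_n>\Delta$, hence lie in $\ev{\calT}{\F A_\Delta}$, and almost-sure non-Zenoness from every initial distribution then gives $\Prob^\calT_\mu(\F A_\Delta)=1$ for every $\mu$). Your side remarks are also on point — $A_\Delta$ is a measurable rectangle, and the rationality of $\Delta$ is irrelevant here and only matters for effectiveness later.
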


As a consequence, as soon as it is almost-surely non-Zeno, an RT-STS
is (strongly) decisive w.r.t. every bounded measurable subset and
every initial distribution. Fix $\mu$ an initial distribution
assigning $0$ to the initial timestamp $t_0$. Assume one wants to
compute the probability of property $B' \U[I] B$ from $\mu$, where $I$
is some bounded interval of $\bbR_{\ge 0}$ with rational bounds, and
$B,B' \in \Sigma$; this is the probability of the following set of
paths:
\[
\{ \rho = (s_0,t_0) (s_1,t_1) \ldots \in \Paths(\calT) \mid \exists n
\in \bbN\ \text{s.t.}\ t_n \in I,\ s_n \in B\ \text{and}\ \forall
j<n,\ s_j \in B'\} \enspace.
\]
Then, for any $\Delta \in \bbQ_{\ge 0}$ with $\Delta>\sup I$,
$A_\Delta$ is included in the avoid-set of $B \times I$, and $\calT$
is therefore decisive w.r.t. $B \times I$. In particular, the
approximation scheme of Subsection~\ref{subsec:approx-reach} applies.

Hence, assuming the RT-STS $\calT$ is almost-surely non-Zeno (which
needs to be proven ``by hand'', or structurally obvious), and under
some effectiveness assumption on $\calT$, the quantitative analysis of
time-bounded until or reachability properties is doable.

\section{Applications}\label{sec:appli}

The general approach to the qualitative and quantitative analysis of
stochastic systems over a possibly continuous state-space can be
instantiated in multiple frameworks. To demonstrate its versatility,
we present three types of models to which it applies: stochastic timed
automata, generalized semi-Markov processes and stochastic time Petri
nets. These models are taken from the literature without further
motivations. This section is technical (since the models themselves
are complex), and can be skipped by the reader not necessarily
interested in these models.  However, it is interesting to observe
that several results from the literature can be recovered (and
extended) via our generic approach.

\subsection{Stochastic timed automata}\label{subsec:sta}
Stochastic timed automata (STA)~\cite{BBB+14} are stochastic real-time
processes derived from timed automata~\cite{AD94} by randomizing both
the delays and the edge choices. The semantics of a STA is naturally
given via a STS as defined in this paper,
although this had not been formulated this way originally.

Several decidability results have been proven for subclasses of STA,
requiring the development of ad-hoc
methods~\cite{BBBBG07,BBBBG08,BBBM08,BBJM12}, and in~\cite{BBB+14}, we
proposed the first unifying method capturing all known decidability
results for the qualitative model-checking problem: the so-called
\emph{thick graph} is a finite graph based on the standard region
automaton construction for timed automata~\cite{AD94}, which allows
one to infer good transfer properties from this finite graph to the
original STA when some \emph{fairness} property is satisfied.
The current work improves our understanding of~\cite{BBBC16} and
allows us both to unify all decidability and approximability results
that were known, and to get new approximability results for the
quantitative model-checking problem (of $\omega$-regular properties).

\subsubsection{Definition}

To define the model properly, we first give some notations. Let $X$ be
a finite set of clocks. We write $G(X)$ for the set of guards defined
as finite conjunctions of constraints of the form $x \bowtie c$, where
$x \in X$, $\mathord{\bowtie} \in \{<,\le,=,\ge,>\}$ and $c \in
\mathbb{N}$. Guards are interpreted over clock valuations $\nu \colon
X \to \IRpos$ in a natural way~--~we then write $\nu \models g$.
Also, for $\nu$ a valuation we define $[Y\leftarrow 0](\nu)$ the
valuation assigning $0$ to every $x\in Y$ and $\nu(x)$ to each other
clock, and if $d \in \IRpos$, we write $\nu +d $ for the valuation
assigning $\nu(x)+d$ to every clock $x \in X$.

\begin{definition}
  A \emph{stochastic timed automaton} (STA) is a tuple 
  \[
  \calA=(L, \ell_0, X, E, (\mu_\gamma)_{\gamma \in L \times \IR_{\ge
      0}^X},(w_e)_{e \in E})
  \] 
  where:
  \begin{itemize}
  \item $L$ is a finite set of states (or locations);
  \item $\ell_0 \in L$ is the initial state;
  \item $X$ is a finite set of clocks;
  \item $E \subseteq L \times G(X) \times 2^X \times L $ is a finite
    set of edges; and
  \item for every configuration $\gamma \in L \times \IRpos^X$,
    $\mu_\gamma$ is a(n a priori) continuous distribution over
    possible delays from $\gamma=(\ell,\nu)$, that is, the support of
    distribution $\mu_\gamma$ is precisely $I(\gamma)
    \stackrel{\mathrm{def}}{=} \{d \in \IRpos \mid \exists e =
    (\ell,g,Y,\ell') \in E\ \text{s.t.}\ \nu +d \models g\}$;
  \item and for every $e \in E$, $w_e \in \INpos$ is a positive
    weight.
  \end{itemize}
\end{definition}

Originally, the semantics of an STA $\calA = (L, \ell_0, X, E,
(\mu_\gamma)_{\gamma \in L \times \IR_{\ge 0}^X},(w_e)_{e \in E})$ was
defined as a probability measure on the set of possible runs of the
underlying timed automaton $(L, \ell_0, X, E)$: a run in such a timed
automaton is an alternating sequence of delay transitions and of
discrete transitions. A delay transition is of the form $(\ell,\nu)
\xrightarrow{d} (\ell,\nu+d)$, where $\gamma
\stackrel{\mathrm{def}}{=} (\ell,\nu) \in L \times \IRpos^X$ is a
configuration and $d \in \IRpos$,\footnote{Later we will also write
  $\gamma+d$ for the configuration $(\ell,\nu+d)$.} and a discrete
transition is of the form $(\ell,\nu) \xrightarrow{e} (\ell',\nu')$
where $e = (\ell,g,Y,\ell') \in E$ is such that $\nu \models g$, and
$[Y \leftarrow 0](\nu) = \nu'$. When $\nu \models g$, we say that $e$
is enabled at $\gamma$.

The probability measure was obtained by sampling delay transitions
from a configuration $\gamma$ following distribution $\mu_\gamma$, and
by sampling discrete transitions using the weights: the probability to
take edge $e$ from configuration $\gamma$ is given by $p_\gamma(e)
\stackrel{\mathrm{def}}{=} \frac{w_e}{\sum \{w_{e'} \mid {e'\
    \text{enabled at}\ \gamma} \}}$ if $e$ is enabled at $\gamma$, and
by $p_\gamma(e) \stackrel{\mathrm{def}}{=} 0$ otherwise.

To have properly-defined measures we need some sanity assumptions
on distributions $(\mu_\gamma)_{\gamma \in L \times \IRpos^X}$: If we
write $\lambda$ for the Lebesgue measure over $\Rpos$, it must be the
case that for each $\gamma\in L\times\Rpos^{X}$, if
$\lambda(I(\gamma))>0$ then $\mu_{\gamma}$ is equivalent to the
restriction of $\lambda$ on $I(\gamma)$; Otherwise, it is the uniform
distribution over the points of $I(\gamma)$.

We now give the semantics of an STA $\calA = (L, \ell_0, X, E,
(\mu_\gamma)_{\gamma \in L \times \IR_{\ge 0}^X},(w_e)_{e \in E})$ as
an STS $\calT_{\A} = (S_{\A},\Sigma_{\A},\kappa_{\A})$ as follows. The
set $S_{\A}$ is the set of configurations $L \times \IRpos^X$,
$\Sigma_{\A}$ is the $\sigma$-algebra product between $2^L$ and the
Borel $\sigma$-algebra on $\IRpos^{|X|}$, and the kernel $\kappa_\A$
is defined by:
\[
\kappa_\A(\gamma,B) = \sum_{e = (\ell,g,Y,\ell')\in E} \ \int_{d \in
  \IRpos} \mathds{1}_B(\ell',[Y \leftarrow 0](\nu+d)) \cdot
p_{\gamma+d}(e) \ \ud \mu_\gamma(d)
\]
where $\mathds{1}_B$ is the characteristic function of $B$.
It gives the probability to hit set $B \subseteq
S_{\A}$ from configuration $\gamma$ in one step (composed of a delay
transition followed by a discrete transition).

The probability measure on paths derived from $\calT_\A$ in
Section~\ref{section:PrelimMeasure} coincides with the original
definition of~\cite{BBB+14}. 

We fix for the rest of this section an STA $\calA = (L, \ell_0, X, E,
(\mu_\gamma)_{\gamma \in L \times \IR_{\ge 0}^X},(w_e)_{e \in E})$,
and $\calT_\A = (S_\calA,\Sigma_\A,\kappa_\A)$ its corresponding
STS.

\begin{example}[A stochastic timed automaton with an ``unfair''
  convergence behaviour]
\label{Example:pacman}
Consider the STA $\A$ of Figure~\ref{fig:pacman}, with: $L=\lbrace
  \ell_0,\ldots,\ell_4\rbrace$, $X=\lbrace x,y\rbrace$ and the set of
  edges $E$ as described on the figure.
  We assume that each edge has a weight of~$1$ and that each location
  is either equipped with a uniform distribution over possible delays
  (in $\ell_0$, $\ell_2$ and $\ell_4$) or a Dirac distribution over
  the unique possible delay (in $\ell_1$ and $\ell_3$).\footnote{When
    we reach $\ell_1$, the value $v_y$ of clock $y$ is smaller than
    $1$; since the constraint on the edge between $\ell_1$ and
    $\ell_2$ is constrained by $y=1$, there is a single possible delay
    for taking this edge: wait $d$ such that $v_y+d=1$.}  As said
  previously, it can be considered as an STS $\calT_\A$ where the set
  of states is given by $L\times\Rpos^2$ and the Markov kernel is
  computed according to the distributions over the edges and the
  delays.

\begin{figure}[h]
  \begin{center}
    \begin{tikzpicture}[yscale=.85]
      \path[use as bounding box] (-6,-1.4) -- (6,.7);
      \path (0,0) node[draw,circle,inner sep=2pt] (q0) {$\ell_0$};
      \path (0,-1.1) node[] (q0b) {$\scriptstyle x=0$};
      \path (0,-1.5) node[] (q0c) {$\scriptstyle 0<y<1$};

      \path (3,0) node[draw,circle,inner sep=2pt] (q1) {$\ell_1$};
      
      \path (6,0) node[draw,circle,inner sep=2pt] (q3) {$\ell_2$};

      \path (-3,0) node[draw,circle,inner sep=2pt] (q4) {$\ell_3$};
          
      \path (-6,0) node[draw,circle,inner sep=2pt] (q6) {$\ell_4$};

      \draw[arrows=latex'-] (q0) -- (0,-1);
      
      \draw[arrows=-latex'] (q0) -- (q1) node[pos=.5, above,sloped]
      {$y<1$};
      
      \draw[arrows=-latex'] (q1) -- (q3) node[pos=.5, above,sloped]
      {$y=1$} node[pos=.5, below, sloped] {$y:=0$};
      
      \draw[arrows=-latex',rounded corners] (q3) -- ++(0,-1.5) -- node
      [midway, above] {$x>1\wedge y<1$} node [midway,below] {$x:=0$} ++(-5,0) -- (q0);
      
      \draw[arrows=-latex'] (q0) -- (q4) node[pos=.5, above,sloped]
      {$1<y<2$};
      
      \draw[arrows=-latex'] (q4) -- (q6) node[pos=.5, above,sloped]
      {$y=2$} node[pos=.5, below, sloped] {$y:=0$};
      
      \draw[arrows=-latex',rounded corners] (q6) -- ++(0,-1.5) -- node
      [midway, above] {$x>2\wedge y<1$} node [midway,below] {$x:=0$}  ++(5,0) -- (q0);
    \end{tikzpicture}
  \end{center}
  \caption{A two-clock STA $\A$ with an unfair convergence
    behaviour} \label{fig:pacman}
\end{figure}
We would like to stress that $\A$ suffers from a time-convergence
phenomenon. This convergence phenomenon (that we make precise in the
following lines), is due to the timing constraints, and is in fact
inherent to the underlying timed automaton (without the stochastic
aspects).  We will see later (Example~\ref{Example:pacmanbis}) how it
impacts the stochastic behaviour of the STA $\A$. Let us now discuss
the (non stochastic) time-convergence phenomenon. In order to do so,
for the rest of the paragraph, we see $\A$ as a timed automaton and
forget about the stochastic aspects. Let us imagine that we enter
location $\ell_0$ with the value of clock $x$ (resp. $y$) being $0$
(resp. $0<\nu<1$), we are thus in configuration
$(\ell_0,(0,\nu))$. Let us consider the case where we take the right
loop. We thus first enter location $\ell_1$ and then $\ell_2$, that we
reached with configuration $(\ell_2,1-\nu,0)$, after a total delay of
$1-\nu$ time units spent since the last arrival in $\ell_0$. In order
to return to $\ell_0$, we have to wait a delay $\nu'$ such that
$\nu'<1$ (because of the guard $y<1$) and $\nu < \nu'$ (because of the
guard $x>1$). We thus return to $\ell_0$ with configuration
$(\ell_0,(0,\nu'))$, where $\nu < \nu'$. One can check that a similar
situation occurs when taking the left loop. Thus when considering an
infinite path of $\A$, if we denote by $(\ell_0,(0,\nu_n))$ its
configuration at the $n$-th passage in $\ell_0$, we can infer that the
sequence $(\nu_n)_{n \in \IN}$ is increasing (and bounded by $1$), and
thus converging.
\end{example}

\subsubsection{The thick graph abstraction}
\label{subsubsec:thickgraph}

The thick graph of~\cite{BBB+14} is an abstraction in our context.  To
see this, we recall the concept of regions, that have been designed
for standard timed automata~\cite{AD94}. We write $M_\A$ for the maximal
integer appearing in a guard of $\A$. Let $\nu,\nu' \in \IRpos^X$ be
two valuations over $X$. We say that $\nu$ and $\nu'$ are
\emph{region-equivalent} for $\A$ whenever the following conditions
hold:
\begin{enumerate}
\item for every $x \in X$, either both $\nu(x)$ and $\nu'(x)$ are
  stricly larger than $M_\A$, or  the integral parts of $\nu(x)$ and
  $\nu'(x)$ coincide;
\item for every $x,y \in X$ such that $\nu(x),\nu(y) \le M_\A$,
  writing $\{ \cdot \}$ for the fractional part, $\{\nu(x)\} \le
  \{\nu(y)\}$ if and only if $\{\nu'(x)\} \le \{\nu'(y)\}$.
\end{enumerate}
This region-equivalence has finite-index, and partitions the set of
valuations $\IRpos^X$ into classes which are called \emph{regions},
and we write $R_\A$ for the set of regions. If $\nu \in \IRpos^X$, we
write $[\nu]_\A$ for the region to which $\nu$ belongs.

We define the abstraction $\alpha : L \times \IRpos^X \to L \times
R_\A$ as the projection which associates $(\ell,\nu)$ onto
$(\ell,[\nu]_\A)$. We then define the finite Markov chain
$\calT_\A^{\mathsf{tg}}$ as follows:
\begin{itemize}
\item its set of states is $L \times R_\A$;
\item there is an edge from $(\ell,r)$ to $(\ell',r')$ whenever there
  exists some $\nu \in r$ such that
  $\kappa_{\A}((\ell,\nu),\{\ell'\} \times r'))>0$;\footnote{Note that
    it is a local condition which is easy to check.}
\item from each state $(\ell,r) \in L\times R_\A$, we associate the
  uniform distribution over $\{(\ell',r') \in L \times R_\A \mid
  \text{there is an edge from}\ (\ell,r)\ \text{to}\ (\ell',r')\}$.
\end{itemize}

By construction, we get:

\begin{lemma}
  $\calT_\A^{\mathsf{tg}}$ is a finite $\alpha$-abstraction of
  $\calT_\A$.
\end{lemma}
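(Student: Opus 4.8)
The plan is to reduce the abstraction property to a single statement about one step of the region construction: for fixed source and target locations and regions, positivity of the kernel $\kappa_\A$ toward the target region does not depend on which valuation of the source region is chosen.

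First I would observe that the state space of $\calT_\A^{\mathsf{tg}}$ carries the discrete $\sigma$-algebra $2^{L\times R_\A}$, which is generated by the singletons $\{(\ell',r')\}$; since two measures on a countable space with its full $\sigma$-algebra are (qualitatively) equivalent as soon as they agree on which singletons have zero mass, it suffices to prove that for every $\mu\in\Dist(S_\A)$ and every $(\ell',r')\in L\times R_\A$,
\[
\alpha_\#\big(\Omega_{\calT_\A}(\mu)\big)\big(\{(\ell',r')\}\big)>0\ \Longleftrightarrow\ \Omega_{\calT_\A^{\mathsf{tg}}}\big(\alpha_\#(\mu)\big)\big(\{(\ell',r')\}\big)>0 .
\]
Unwinding the definitions, and using that $\alpha^{-1}(\{(\ell,r)\})=\{\ell\}\times r$, the left-hand quantity equals $\int_{L\times\IRpos^X}\kappa_\A\big((\ell,\nu),\{\ell'\}\times r'\big)\,\ud\mu(\ell,\nu)$, while the right-hand one equals $\sum_{(\ell,r)\in L\times R_\A}\kappa^{\mathsf{tg}}_\A\big((\ell,r),(\ell',r')\big)\cdot\mu(\{\ell\}\times r)$, where $\kappa^{\mathsf{tg}}_\A$ is the kernel of $\calT_\A^{\mathsf{tg}}$; recall $\kappa^{\mathsf{tg}}_\A((\ell,r),(\ell',r'))>0$ exactly when there is an edge from $(\ell,r)$ to $(\ell',r')$ in $\calT_\A^{\mathsf{tg}}$.

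The key step is the claim that for all $\ell,\ell'\in L$ and $r,r'\in R_\A$, the map $\nu\mapsto\mathds{1}\big[\kappa_\A((\ell,\nu),\{\ell'\}\times r')>0\big]$ is constant on $r$, and equal to $1$ there iff there is an edge from $(\ell,r)$ to $(\ell',r')$ in $\calT_\A^{\mathsf{tg}}$. To prove it I would unfold $\kappa_\A((\ell,\nu),\{\ell'\}\times r')$, keeping only the edges with target $\ell'$, as the finite sum over $e=(\ell,g,Y,\ell')\in E$ of $\int_{D_e(\nu)}p_{(\ell,\nu+d)}(e)\,\ud\mu_{(\ell,\nu)}(d)$, where $D_e(\nu)=\{d\in\IRpos\mid \nu+d\models g\text{ and }[Y\leftarrow 0](\nu+d)\in r'\}$. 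On $D_e(\nu)$ one has $p_{(\ell,\nu+d)}(e)\ge w_e/\sum_{e''\in E}w_{e''}>0$, so this integral is positive iff $\mu_{(\ell,\nu)}(D_e(\nu))>0$; by the sanity assumption on the delay distributions this amounts to: either $\lambda(I(\ell,\nu))>0$ and $\lambda(D_e(\nu))>0$, or $\lambda(I(\ell,\nu))=0$ and $D_e(\nu)\neq\emptyset$. Now the standard properties of the region abstraction of~\cite{AD94} give that region-equivalent valuations satisfy the same guards, that the trajectory $d\mapsto\nu+d$ traverses the same ordered sequence of regions (with the same alternation of full-dimensional and degenerate ones) for region-equivalent valuations, and that resets map region-equivalent valuations to region-equivalent ones. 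From these facts, for $\nu,\nu''$ in the same region: $\lambda(I(\ell,\nu))>0\iff\lambda(I(\ell,\nu''))>0$, and $D_e(\nu)\neq\emptyset$ (resp. $\lambda(D_e(\nu))>0$) holds iff $D_e(\nu'')\neq\emptyset$ (resp. $\lambda(D_e(\nu''))>0$). Hence $\mathds{1}[\kappa_\A((\ell,\nu),\{\ell'\}\times r')>0]$ depends only on $r$, and its value there equals the presence of an edge $(\ell,r)\to(\ell',r')$ by the very definition of $\calT_\A^{\mathsf{tg}}$.

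Granting the claim, the measurable set $N_{(\ell',r')}:=\{(\ell,\nu)\mid \kappa_\A((\ell,\nu),\{\ell'\}\times r')>0\}$ is exactly the $\alpha$-closed set $\bigcup\{\{\ell\}\times r\mid \text{there is an edge }(\ell,r)\to(\ell',r')\text{ in }\calT_\A^{\mathsf{tg}}\}$, and $\kappa_\A(\cdot,\{\ell'\}\times r')$ vanishes off it. Therefore $\int \kappa_\A((\ell,\nu),\{\ell'\}\times r')\,\ud\mu>0$ iff $\mu(N_{(\ell',r')})>0$ iff there is a pair $(\ell,r)$ with both an edge to $(\ell',r')$ in $\calT_\A^{\mathsf{tg}}$ and $\mu(\{\ell\}\times r)>0$, iff $\sum_{(\ell,r)}\kappa^{\mathsf{tg}}_\A((\ell,r),(\ell',r'))\cdot\mu(\{\ell\}\times r)>0$; this is the wanted equivalence, and finiteness of $\calT_\A^{\mathsf{tg}}$ is immediate since $L$ and $R_\A$ are finite. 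The only real obstacle is the region-uniformity claim: all the ingredients are classical for timed automata, but one must take some care both to handle the two regimes of the delay distribution $\mu_\gamma$ (absolutely continuous on $I(\gamma)$ versus uniform over the finitely many points of $I(\gamma)$) and to record that whether $\lambda(I(\ell,\nu))>0$ is itself determined by the region of $\nu$.
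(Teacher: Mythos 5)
Your proof is correct and follows exactly the route the paper intends: the paper asserts this lemma ``by construction'' without writing out a proof, and what you supply is the missing content, namely the region-uniformity of the positivity of $\kappa_\A(\cdot,\{\ell'\}\times r')$ across a region (handling both regimes of $\mu_\gamma$), which is what reconciles the existential quantifier in the definition of the thick-graph edges with the universal quantifier implicit in the $\alpha$-abstraction condition (cf.\ the characterisation in Lemma~\ref{lem:dmc}). Nothing to object to.
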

\noindent Let us notice that finiteness of the abstraction implies
completeness (Lemma~\ref{lemma:finite-soundness}).

As witnessed in \cite[Appendix~D.2]{BBB+14}, this abstraction may not
give much information in general about the probability of linear-time
properties in the original STA (see Example~\ref{Example:pacmanbis}).
However we will see that, in several cases, it helps to obtain
decidability and approximability results (among which some are new).

\begin{example}[A stochastic timed automaton with an ``unfair''
  convergence behaviour (continued)]
  \label{Example:pacmanbis}
  We know that the thick graph viewed as a finite Markov chain, is an
  $\alpha$-abstraction of the original STA, but it can be shown that
  in general it is not sound. Let us denote $\calT_\A$ the STS
  naturally associated with the STA of Example~\ref{Example:pacman}
  (see Fig.~\ref{fig:pacman}).  One can show that
  $\calT^{\mathsf{tg}}_\A$, the thick graph associated with
  $\calT_\A$, is the one provided on Fig.~\ref{fig:pacmanthickgraph},
  starting from a Dirac distribution $\delta_{(\ell_0, (0,\nu))}$ with
  $0<\nu<1$. The regions are the following ones: $r_0=\{(x,y) \mid x=0
  \wedge 0<y<1\}$, $r_1=\{(x,y) \mid 0<x<y<1\}$, $r_2=\{(x,y) \mid y=0
  \wedge 0<x<1\}$, $r_3=\{(x,y) \mid 1<x<y<2\}$, $r_4=\{(x,y) \mid y=0
  \wedge 1<x<2\}$. We clearly have that $\calT^{\mathsf{tg}}_\A$ is an
  $\alpha$-abstraction of $\calT_\A$.
   
  However, it can be shown that $\calT^{\mathsf{tg}}_\A$ is not a
  sound abstraction of $\calT_\A$.  The time-convergence phenomenon of
  $\A$ (described in Example~\ref{Example:pacman}) implies that each
  time we return to location $\ell_0$, the probability to take the
  right loop decreases while the probability to take the left loop
  increases. More precisely, it has been shown that in the original
  STA $\A$, the probability to reach $\ell_2$ from $\ell_0$ is stricly
  lower than $1$.
  This has been done formally via a tedious and technical calculation
  of Taylor series in~\cite[Section~6.2.2]{BBB+14}.  This implies that
  $\calT^{\mathsf{tg}}_\A$ is not a sound abstraction of $\calT_\A$
  (since the probability to reach $(\ell_2,r_2)$ from $(\ell_0,r_0)$
  is $1$ in $\calT^{\mathsf{tg}}_\A$).  In fact, a sound abstraction
  of $\calT_\A$ would rather behave as the
  \textbf{non-homogeneous}\footnote{Although we only consider
    homogeneous systems, the non-homogeneous ones can fit our general
    model of STS by unfolding it. For instance
    Example~\ref{counterexample:fairness} can be seen as the unfolding
    of a finite non-homogeneous Markov chain with two states.} finite
  Markov chain of Fig.~\ref{fig:pacmannonhom}, where $n$ represents
  the $n$-th passage in $(\ell_0,r_0)$. This shows in particular that
  general STA are not fair (and thus not decisive). This is why we
  focus on two subclasses of STA in the rest of this section.
    
  \begin{figure}[h]
  \begin{center}
    \begin{tikzpicture}[yscale=.85]
       \everymath{\scriptstyle}
      \path[use as bounding box] (-6,-1.4) -- (6,.7);
      \path (0,0) node[draw,rectangle,inner sep=3pt] (q0) {$(\ell_0,r_0)$};
      
      \path (2,0) node[draw,rectangle,inner sep=3pt] (q1) {$(\ell_1,r_1)$};
      
      \path (4,0) node[draw,rectangle,inner sep=3pt] (q3) {$(\ell_2,r_2)$};

      \path (-2,0) node[draw,rectangle,inner sep=3pt] (q4) {$(\ell_3,r_3)$};
          
      \path (-4,0) node[draw,rectangle,inner sep=3pt] (q6) {$(\ell_4,r_4)$};
   
      \draw[arrows=latex'-] (q0) -- (0,-.75);
      
      \draw[arrows=-latex'] (q0) -- (q1) node[pos=.5, above,sloped]
      {$\frac{1}{2}$};
      
      \draw[arrows=-latex'] (q1) -- (q3) ;
      
      \draw[arrows=-latex',rounded corners] (q3) -- ++(0,-1) --
      ++(-3,0) -- (q0);
      
      \draw[arrows=-latex'] (q0) -- (q4) node[pos=.5, above,sloped]
      {$\frac{1}{2}$};
      
      \draw[arrows=-latex'] (q4) -- (q6) ;
      
      \draw[arrows=-latex',rounded corners] (q6) -- ++(0,-1) --   ++(3,0) -- (q0);
    \end{tikzpicture}
  \end{center}
  \caption{$\calT^{\mathsf{tg}}_\A$, the thick graph (viewed as a
    finite Markov chain) associated with the two-clock STA with an
    ``unfair'' convergence behaviour
    (Fig.~\ref{fig:pacman}).} \label{fig:pacmanthickgraph}
\end{figure}

  \begin{figure}[h]
  \begin{center}
    \begin{tikzpicture}[yscale=.85]
       \everymath{\scriptstyle}
      \path[use as bounding box] (-6,-1.4) -- (6,.7);
      \path (0,0) node[draw,rectangle,inner sep=3pt] (q0) {$(\ell_0,r_0)$};
      
      \path (2,0) node[draw,rectangle,inner sep=3pt] (q1) {$(\ell_1,r_1)$};
      
      \path (4,0) node[draw,rectangle,inner sep=3pt] (q3) {$(\ell_2,r_2)$};

      \path (-2,0) node[draw,rectangle,inner sep=3pt] (q4) {$(\ell_3,r_3)$};
          
      \path (-4,0) node[draw,rectangle,inner sep=3pt] (q6) {$(\ell_4,r_4)$};
   
      \draw[arrows=latex'-] (q0) -- (0,-.75);
      
      \draw[arrows=-latex'] (q0) -- (q1) node[pos=.5, above,sloped]
      {$\frac{1}{2^n}$};
      
      \draw[arrows=-latex'] (q1) -- (q3) ;
      
      \draw[arrows=-latex',rounded corners] (q3) -- ++(0,-1) --
      ++(-3,0) -- (q0);
      
      \draw[arrows=-latex'] (q0) -- (q4) node[pos=.5, above,sloped]
      {$1-\frac{1}{2^n}$};
      
      \draw[arrows=-latex'] (q4) -- (q6) ;
      
      \draw[arrows=-latex',rounded corners] (q6) -- ++(0,-1) --   ++(3,0) -- (q0);
    \end{tikzpicture}
  \end{center}
  \caption{A non-homogeneous finite Markov chain which is in some
    sense equivalent to the STA with an ``unfair'' convergence
    behaviour} \label{fig:pacmannonhom}
\end{figure}

\end{example}

\subsubsection{Reactive STA}

Following~\cite{BBJM12}, the STA $\calA$ is \emph{reactive} whenever
for every configuration $\gamma = (\ell,\nu) \in S_\A$, $I(\gamma) =
\IRpos$, and for every $\ell$, there exists a distribution $\mu_\ell$
with support $\IRpos$ such that for every $\nu \in \IRpos^X$,
$\mu_{(\ell,\nu)} = \mu_\ell$. Note that we do not make any Markovian
hypothesis on time elapsing, and $\mu_\ell$ does not need to be
exponential.

We take the notations used in the previous subsection for defining the
thick-graph abstraction. A region $r$ is \emph{memoryless}
whenever for every clock $x \in X$, either $\nu(x)=0$ for every $\nu
\in r$, or $\nu(x) > M_\A$ for every $\nu \in r$. We write
$R_\A^{\mathsf{mem}}$ for the set of memoryless regions.

From \cite[Lemma~13]{BBB+14}, which states that the set of memoryless
regions is visited infinitely often almost-surely from every
configuration $\gamma \in S_\A$,\footnote{To give all arguments, it is
  easy to see that, in one step, one can ensure reaching a memoryless
  region by delaying at least $M_\A+1$ time units; since there is one
  single distribution which is applied at every configuration of a
  given location, the probability to do so is uniformly bounded from
  below from every configuration.} we get:

\begin{proposition}
  \label{prop:reactive-attractor}
  The set $\alpha^{-1}(L \times R_\A^{\mathsf{mem}})$ is an
  attractor for $\calT_\A$.
\end{proposition}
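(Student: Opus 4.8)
The plan is to exhibit a uniform lower bound on the one-step probability of landing in a memoryless region, and then to run the standard geometric-trials argument for attractors; this is essentially a repackaging of \cite[Lemma~13]{BBB+14}. Write $B = \alpha^{-1}(L \times R_\A^{\mathsf{mem}})$. First I would prove that there is $p>0$ such that $\kappa_\A(\gamma, B) \ge p$ for every configuration $\gamma = (\ell,\nu) \in S_\A$. Fix any delay $d \ge M_\A + 1$. After such a delay every clock has value $> M_\A$, and after the reset of the set $Y$ of clocks prescribed by \emph{any} edge $e = (\ell,g,Y,\ell') \in E$, the clocks in $Y$ have value exactly $0$ while those outside $Y$ still exceed $M_\A$; hence $(\ell', [Y \leftarrow 0](\nu+d))$ lies in a memoryless region, i.e.\ in $B$, no matter which edge is taken. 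Since $\calA$ is reactive, $I(\gamma) = \IRpos$, so at $\gamma + d$ at least one edge is enabled and $p_{\gamma+d}$ is a genuine probability distribution supported on the enabled edges; therefore $\sum_{e \in E} \mathds{1}_B(\ell', [Y \leftarrow 0](\nu+d)) \cdot p_{\gamma+d}(e) = 1$ for every $d \ge M_\A + 1$. Plugging this into the definition of $\kappa_\A$ (and swapping the finite sum with the integral) gives
\[
\kappa_\A(\gamma, B) \;\ge\; \int_{d \ge M_\A + 1} 1 \,\ud\mu_\gamma(d) \;=\; \mu_\gamma\big([M_\A+1,\infty)\big) \;=\; \mu_\ell\big([M_\A+1,\infty)\big) \enspace,
\]
where the last equality uses reactivity again ($\mu_{(\ell,\nu)} = \mu_\ell$ depends only on $\ell$). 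As each $\mu_\ell$ has support $\IRpos$, $\mu_\ell\big([M_\A+1,\infty)\big) > 0$, and since $L$ is finite I may set $p := \min_{\ell \in L} \mu_\ell\big([M_\A+1,\infty)\big) > 0$.

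Next I would conclude for Dirac initial distributions. The bound above, together with the inductive definition of $\Prob^{\calT_\A}_{\delta_\gamma}$ on cylinders (alternatively Lemma~\ref{lemma:integration}), yields by induction on $n$ that
\[
\Prob^{\calT_\A}_{\delta_\gamma}\Big(\Cyl(S_\A, \underbrace{B^c, \dots, B^c}_{n \text{ times}})\Big) \;\le\; (1-p)^n \enspace,
\]
since at every step the kernel puts mass at least $p$ on $B$ from whatever state is currently reached. Letting $n \to \infty$ gives $\Prob^{\calT_\A}_{\delta_\gamma}(\G B^c) = 0$, hence $\Prob^{\calT_\A}_{\delta_\gamma}(\F B) = 1$ for every $\gamma \in S_\A$. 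Finally, for an arbitrary $\mu \in \Dist(S_\A)$, I would invoke the standard identity $\Prob^{\calT_\A}_\mu(X) = \int_{S_\A} \Prob^{\calT_\A}_{\delta_\gamma}(X)\,\ud\mu(\gamma)$, valid for every measurable $X \in \calF_{\calT_\A}$ (immediate on cylinders from the definition of $\Prob_\mu$, and extended by uniqueness in Caratheodory's theorem), to get $\Prob^{\calT_\A}_\mu(\F B) = \int_{S_\A} 1 \,\ud\mu(\gamma) = 1$. Thus $B = \alpha^{-1}(L \times R_\A^{\mathsf{mem}})$ is an attractor for $\calT_\A$.

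The only real subtlety is obtaining a one-step lower bound that does not depend on $\gamma$, and this is precisely where reactivity is used twice: once so that delays are never blocked by guards (making a delay $\ge M_\A+1$ always available, and guaranteeing that the mandatory subsequent discrete step lands in a memoryless region), and once so that the quantity $\mu_\ell\big([M_\A+1,\infty)\big)$ does not depend on the clock valuation, letting the minimum over the finitely many locations be strictly positive. The rest is the routine geometric-decay computation, and is already contained in \cite[Lemma~13]{BBB+14}.
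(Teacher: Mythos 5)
Your proof is correct and follows essentially the same route as the paper, which derives the proposition from \cite[Lemma~13]{BBB+14} and sketches in a footnote exactly your key step: delaying at least $M_\A+1$ time units forces any discrete successor into a memoryless region, and reactivity makes the probability of such a delay uniformly bounded below (independently of the valuation), after which the geometric-decay argument is routine. You have merely written out in full what the paper leaves as a one-line footnote, so there is nothing to add.
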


Using Propositions~\ref{prop:attractorSound} and~\ref{coro:DecSound}, we also get that:

\begin{proposition}\label{prop:staReactiveSound}
  $\calT_\A^{\mathsf{tg}}$ is a sound $\alpha$-abstraction of
  $\calT_\A$.
\end{proposition}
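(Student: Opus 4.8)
The plan is to verify the two hypotheses of Proposition~\ref{prop:attractorSound} for $\calT_1 = \calT_\A$ and $\calT_2 = \calT_\A^{\mathsf{tg}}$, with $\alpha$ the region projection; this yields that $\calT_\A$ is decisive w.r.t. every $\alpha$-closed set, and then Proposition~\ref{coro:DecSound} immediately gives soundness of $\calT_\A^{\mathsf{tg}}$ (which is a finite Markov chain and an $\alpha$-abstraction of $\calT_\A$).

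First I would fix the finite attractor required by Proposition~\ref{prop:attractorSound} to be $A_2 = L \times R_\A^{\mathsf{mem}}$, which is finite since $L$ and $R_\A$ are. By Proposition~\ref{prop:reactive-attractor}, $A_1 = \alpha^{-1}(A_2)$ is an attractor of $\calT_\A$. To see that $A_2$ is an attractor of $\calT_\A^{\mathsf{tg}}$, note that $\calT_\A^{\mathsf{tg}}$ is complete, being finite (Lemma~\ref{lemma:finite-soundness}); hence from every $\alpha_\#(\delta_{(\ell,\nu)}) = \delta_{(\ell,[\nu]_\A)}$ the set $A_2$ is reached almost surely, and since $\calT_\A^{\mathsf{tg}}$ is a finite Markov chain this suffices for $A_2$ to be an attractor of $\calT_\A^{\mathsf{tg}}$. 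So the first hypothesis of Proposition~\ref{prop:attractorSound} holds.

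Next I would check hypothesis $(\dag)$. Fix a memoryless state $s_i = (\ell,r)$ and an $\alpha$-closed $B \in \Sigma_\A$, so $B = \alpha^{-1}(\alpha(B))$ with $\alpha(B)$ a set of states of $\calT_\A^{\mathsf{tg}}$. If no state of $\alpha(B)$ is reachable from $s_i$ in the finite graph of $\calT_\A^{\mathsf{tg}}$, then $\Prob^{\calT_\A^{\mathsf{tg}}}_{\delta_{s_i}}(\F \alpha(B)) = 0$, so by Equation~\eqref{eq:PosReach} we get $\Prob^{\calT_\A}_{\delta_{(\ell,\nu)}}(\F B) = 0$ for every $\nu \in r$, hence $\Prob^{\calT_\A}_\mu(\F B) = 0$ for every $\mu \in \Dist(\alpha^{-1}(s_i))$: this is the second alternative of $(\dag)$. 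Otherwise there is a path of some length $k$ from $s_i$ to $\alpha(B)$ in $\calT_\A^{\mathsf{tg}}$; since every edge of $\calT_\A^{\mathsf{tg}}$ carries positive probability, this produces a positive-probability cylinder, and iterating the single-step positivity that defines an $\alpha$-abstraction (as in Equation~\eqref{eq:PosReach}) gives $\Prob^{\calT_\A}_{\delta_{(\ell,\nu)}}(\F[\le k] B) > 0$ for every $\nu \in r$. I would then argue that this probability does not depend on the representative $\nu \in r$; calling its common value $p > 0$, integration over the initial distribution yields $\Prob^{\calT_\A}_\mu(\F[\le k] B) = p$ for every $\mu \in \Dist(\alpha^{-1}(s_i))$, which is the first alternative of $(\dag)$.

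The hard part will be the constancy of $\nu \mapsto \Prob^{\calT_\A}_{\delta_{(\ell,\nu)}}(\F[\le k] B)$ on a memoryless region $r$. This should follow from a coupling argument specific to reactive STA: if $\nu,\nu' \in r$ with $r$ memoryless, then for every delay $d$ the valuations $\nu + d$ and $\nu' + d$ are region-equivalent (clocks equal to $0$ both become $d$, clocks above $M_\A$ stay above $M_\A$), so they enable the same edges with the same weights, and after any reset the two resulting valuations again share a memoryless region; reactivity moreover forces $\mu_{(\ell,\nu)} = \mu_{(\ell,\nu')} = \mu_\ell$. An induction on path length then produces a measure-preserving correspondence between the runs of $\calT_\A$ from $(\ell,\nu)$ and those from $(\ell,\nu')$ that preserves the induced sequence of regions, so any event depending only on that sequence — in particular $\F[\le k] B$ for $\alpha$-closed $B$ — has the same probability from both configurations. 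This is essentially the region-automaton observation underlying~\cite[Lemma~13]{BBB+14}, here recast as constancy of an STS probability on a memoryless region. With hypotheses~(1) and~$(\dag)$ established, Proposition~\ref{prop:attractorSound} gives that $\calT_\A$ is decisive w.r.t. every $\alpha$-closed set, and Proposition~\ref{coro:DecSound} then yields that $\calT_\A^{\mathsf{tg}}$ is a sound $\alpha$-abstraction of $\calT_\A$.
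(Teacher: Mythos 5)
Your proof is correct and follows essentially the same route as the paper's: both establish the two hypotheses of Proposition~\ref{prop:attractorSound} using the memoryless-region attractor from Proposition~\ref{prop:reactive-attractor} together with the fact (Lemma~F.4 of~\cite{BBB+14}) that from a memoryless region the probability of any event determined by the induced region sequence is independent of the representative configuration, and then conclude soundness via Proposition~\ref{coro:DecSound}. One small inaccuracy in your coupling sketch: after a delay and reset the two coupled valuations need not again lie in a \emph{memoryless} region --- the invariant actually preserved is that they agree on every clock with value at most $M_\A$ and both exceed $M_\A$ on the remaining clocks --- but this does not affect the correctness of the induction or of the overall argument.
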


\begin{proof}
  It can easily be shown that $L \times R_\A^{\mathsf{mem}}$ is
  a finite attractor of $\calT_\A^{\mathsf{tg}}$.  Thanks to
  Proposition~\ref{prop:reactive-attractor}, $\alpha^{-1}(L \times
  R_\A^{\mathsf{mem}})$ is an attractor for
  $\calT_\A^{\mathsf{tg}}$. It remains to show the last condition of
  the hypotheses of Proposition~\ref{prop:attractorSound}. We
  therefore need to prove that for each
  $(\ell_{\mathsf{m}},r_{\mathsf{m}}) \in L \times
  R_\A^{\mathsf{mem}}$, there are $p>0$ and $k\in\IN$ such that
  for each region $(\ell,r) \in L \times R_\A$:
  \begin{itemize}
  \item for each $\mu\in\Dist(\alpha^{-1}(\ell_{\mathsf{m}},r_{\mathsf{m}}))$,
    $\Prob^{\calT_\A}_{\mu}(\F[\leq k] \alpha^{-1}(\ell,r))\geq p$, or
  \item for each
    $\mu\in\Dist(\alpha^{-1}(\ell_{\mathsf{m}},r_{\mathsf{m}}))$,
    $\Prob^{\calT_\A}_{\mu}(\F \alpha^{-1}(\ell,r))=0$.
  \end{itemize}

  This is a consequence of \cite[Lemma~F.4]{BBB+14} which says that
  from a memoryless region, the future (and its probability) is
  independent of the precise current configuration. This in particular
  implies that for two configurations $\gamma,\gamma' \in
  \alpha^{-1}(\ell_{\mathsf{m}} ,r_{\mathsf{m}})$, for every
  $\alpha$-closed set $B$, for every integer $k$,
  $\Prob_{\delta_\gamma}^{\calT_\A}(\F[=k] B) =
  \Prob_{\delta_{\gamma'}}^{\calT_\A}(\F[=k] B)$.  By extension, for
  every $\mu\in\Dist(\alpha^{-1}(\ell_{\mathsf{m}},r_{\mathsf{m}}))$,
  $\Prob_{\mu}^{\calT_\A}(\F[=k] B) =
  \Prob_{\delta_{\gamma}}^{\calT_\A}(\F[=k] B)$. This implies the
  expected bounds, by taking $B = \alpha^{-1}(\ell,r)$.
\end{proof}

Similarly to labelled STS, we consider labelled STA, where each
location is labelled by atomic propositions.  As consequences of
Sections~\ref{sec:qualitative} and~\ref{sec:quantitative}, we get the following
decidability and approximability results for reactive STA:

\begin{corollary}\label{coro:staResults}
  Let $\A$ be a reactive labelled STA, and $\calM$ a DMA. Then:
  \begin{enumerate}
  \item we can decide whether $\A$ satisfies almost-surely $\calM$;

  \item for every initial distribution $\mu$ which is numerically
    amenable w.r.t. $\A$\footnote{We say that a distribution $\mu$ is
      numerically amenable w.r.t. $\calA$ if, given $k \in \IN$, given
      $\varepsilon>0$ and given a sequence of locations and regions
      $(\ell_0,r_0), (\ell_1,r_1), \ldots, (\ell_k,r_k)$, one can
      compute a numerical  approximation
      $\Prob^{\A}_\mu(\Cyl((\ell_0,r_0),(\ell_1,r_1),\dots,(\ell_k,r_k)))$
      up to $\varepsilon$.},
    we can compute arbitrary
    approximations of $\Prob_\mu^{\calT_\A}(\calM)$.
  \end{enumerate}
\end{corollary}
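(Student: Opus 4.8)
The plan is to instantiate the generic results of Sections~\ref{sec:main}--\ref{sec:quantitative} with the thick-graph abstraction, and to reduce both items to computations in the finite Markov chain $\calT_\A^{\mathsf{tg}}\ltimes\calM$. First I would gather the ingredients already available for reactive STA: $\calT_\A^{\mathsf{tg}}$ is a finite $\alpha$-abstraction of $\calT_\A$ (hence complete by Lemma~\ref{lemma:finite-soundness}), the set $L\times R_\A^{\mathsf{mem}}$ is a finite attractor of $\calT_\A^{\mathsf{tg}}$ while $\alpha^{-1}(L\times R_\A^{\mathsf{mem}})$ is an attractor of $\calT_\A$ (Proposition~\ref{prop:reactive-attractor}), and $\calT_\A^{\mathsf{tg}}$ is a sound $\alpha$-abstraction of $\calT_\A$ (Proposition~\ref{prop:staReactiveSound}). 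Forming the products with $\calM$ and writing $\alpha_\calM(\gamma,q)=(\alpha(\gamma),q)$, Lemma~\ref{attractorproduit} gives that $\calT_\A^{\mathsf{tg}}\ltimes\calM$ is a \emph{finite} Markov chain with finite attractor $B_2:=(L\times R_\A^{\mathsf{mem}})\times Q$, and that $\alpha_\calM^{-1}(B_2)$ is an attractor of $\calT_\A\ltimes\calM$.

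The key step, which I expect to be the main obstacle, is to prove that $\calT_\A\ltimes\calM$ is decisive with respect to every $\alpha_\calM$-closed set. I would obtain this from Proposition~\ref{prop:attractorSound} applied to $\calT_1:=\calT_\A\ltimes\calM$ and $\calT_2:=\calT_\A^{\mathsf{tg}}\ltimes\calM$: the first hypothesis (transfer of a finite attractor) is exactly the previous paragraph, and the remaining hypothesis~$(\dag)$ amounts to re-running the argument in the proof of Proposition~\ref{prop:staReactiveSound} at the level of the product. The crux is that from a memoryless region the future of $\calT_\A$, and in particular the probability of every $\alpha$-closed event, does not depend on the precise current configuration (this is \cite[Lemma~F.4]{BBB+14}). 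Since the $\calM$-component of $\calT_\A\ltimes\calM$ is deterministic and is driven solely by the location label, which is constant on a region, this configuration-independence lifts verbatim to $\calT_\A\ltimes\calM$; hence from any distribution supported by $\alpha_\calM^{-1}(s_i)$ with $s_i\in B_2$, an $\alpha_\calM$-closed set is reached either with a probability bounded below uniformly within some fixed horizon $k$, or with probability $0$, which is precisely~$(\dag)$. Decisiveness of $\calT_\A\ltimes\calM$ w.r.t. $\alpha_\calM$-closed sets then follows, and by Lemma~\ref{lemma:alphabar} it implies that $\calT_\A^{\mathsf{tg}}\ltimes\calM$ is a \emph{sound} $\alpha_\calM$-abstraction of $\calT_\A\ltimes\calM$.

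With these facts in hand, all hypotheses of Corollary~\ref{coro:theotiti} and Theorem~\ref{theo:titi} are met (with $\calT_2=\calT_\A^{\mathsf{tg}}$, a DMC with finite attractor), and the two items follow. For item~1, the product construction of Section~\ref{subsec:LSTS} gives $\Prob_\mu^{\calT_\A}(\calM)=\Prob_{\mu\times\delta_{q_0}}^{\calT_\A\ltimes\calM}(\mathsf{Inf}\in\calF)$, and Corollary~\ref{coro:theotiti} then yields that $\A$ satisfies $\calM$ almost-surely if and only if $\Prob_{\alpha_\#(\mu\times\delta_{q_0})}^{\calT_\A^{\mathsf{tg}}\ltimes\calM}(\mathsf{Inf}\in\calF)=1$; since $\calT_\A^{\mathsf{tg}}\ltimes\calM$ is a finite Markov chain and the pushforward of the initial distribution is an explicit distribution over regions, this is decidable by Corollary~\ref{coro:ASomega}: build the graph of the attractor, isolate its BSCCs, and check that every BSCC reachable from the initial support is good for $\calF$.

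For item~2, Theorem~\ref{theo:titi} gives the finite decomposition
\[
\Prob_\mu^{\calT_\A}(\calM) = \sum_{C\in\mathsf{Good}^{B_2}_{\calT_\A^{\mathsf{tg}}\ltimes\calM}(\calF)} \Prob_{\mu\times\delta_{q_0}}^{\calT_\A\ltimes\calM}(\F \alpha_\calM^{-1}(C)) \enspace.
\]
Each $\alpha_\calM^{-1}(C)$ is $\alpha_\calM$-closed, and $\calT_\A\ltimes\calM$ is decisive w.r.t. it, so the approximation scheme of Section~\ref{subsec:approx-reach} (Proposition~\ref{prop-approx-reach}) converges for each term. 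Its avoid-set equals $\alpha_\calM^{-1}(\widetilde{C})$, where, by Equation~\eqref{eq:PosReach}, $\widetilde{C}$ is the (computable, by graph reachability) set of states of $\calT_\A^{\mathsf{tg}}\ltimes\calM$ having no path to $C$; and the truncated quantities $p_n^{\mathsf{Yes}}$, $p_n^{\mathsf{No}}$ are finite sums of probabilities of cylinders over sequences of locations and regions, each of which can be approximated up to any desired error because $\mu$ is numerically amenable w.r.t. $\A$. Refining each of the finitely many adjacent schemes as explained in Section~\ref{subsec:approx-reach} and summing yields arbitrarily precise approximations of $\Prob_\mu^{\calT_\A}(\calM)$.
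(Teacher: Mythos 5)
Your proposal is correct and follows essentially the same route as the paper: both reduce to Theorem~\ref{theo:titi}, Corollary~\ref{coro:theotiti} and the approximation scheme of Sections~\ref{subsec:approx-reach} and~\ref{subsec:quantMullerAbstr}, with the crux being that the product $\calT_\A\ltimes\calM$ still behaves as a reactive STA whose thick-graph abstraction is $\calT_\A^{\mathsf{tg}}\ltimes\calM$. The only cosmetic difference is the order of deductions — you establish hypothesis $(\dag)$ for the product and derive decisiveness before soundness via Proposition~\ref{prop:attractorSound} and Lemma~\ref{lemma:alphabar}, whereas the paper invokes Proposition~\ref{prop:staReactiveSound} on the product to get soundness first and then transfers decisiveness by Proposition~\ref{thm:MuDecisiveAbstr}; both orderings rest on the same configuration-independence argument from memoryless regions.
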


\begin{proof}
  This is an application of Theorem~\ref{theo:titi},
    Corollary~\ref{coro:theotiti} and of
    Sections~\ref{subsec:approx-reach}
    and~\ref{subsec:quantMullerAbstr}. It should be noted that all the
    hypotheses are met:
  \begin{itemize}
  \item $\calT_\A^{\mathsf{tg}}\ltimes\calM$ has a finite attractor:
    since $\calT_\A^{\mathsf{tg}}$ is a finite MC then so is
    $\calT_\A^{\mathsf{tg}}\ltimes\calM$ and we get a trivial finite
    attractor;
  \item $\calT_\A\ltimes\calM$ is decisive w.r.t. any
    $\alpha_\calM$-closed sets.
  \end{itemize}
  This second point is a little more tricky. First one should realise
  that since $\calT_\A$ is reactive, then $\calT_\A\ltimes\calM$ is
  also reactive, since the condition to be reactive concerns only the
  distributions over the delays on each location of the STA and those
  distributions are not modified from the product with $\calM$. It
  should be noted that $\calT_\A^{\mathsf{tg}}\ltimes\calM$
  corresponds to the thick region graph abstraction of
  $\calT_\A\ltimes\calM$ since $\calM$ does not influence the
  behaviour of $\calT_\A$. Then from
  Proposition~\ref{prop:staReactiveSound}, we know that
  $\calT_\A^{\mathsf{tg}}\ltimes\calM$ is a sound
  $\alpha_\calM$-abstraction of $\calT_\A\ltimes\calM$. Since
  $\calT_\A^{\mathsf{tg}}\ltimes\calM$ is a finite MC, we get that it
  is decisive w.r.t. any set of states. We can thus conclude from
  Proposition~\ref{thm:MuDecisiveAbstr}.
\end{proof}

\begin{rk}
  We believe that the proposed approach through abstractions and
  finite attractors simplifies drastically the proof of decidability
  of almost-sure model-checking, and in particular avoids the
ad-hoc but long and technical
  proof of~\cite[Lemma~7.14]{BBB+14}. Furthermore, we obtain
  interesting approximability results, some of them being consequences
  of~\cite{BBBC16}, but the general case of $\omega$-regular
  properties (in particular \LTL properties) being new to this paper.
\end{rk}


\begin{rk} Corollary~\ref{coro:staResults} can be extended to
  properties expressed as deterministic and complete Muller
  \emph{timed} automata (DCMTA), which are standard deterministic and
  complete\footnote{In this context, complete means that from every
    configuration, for every subset of $\AP$, and every
    $t \in \IRpos$, there is an edge labelled by that subset which is
    enabled after $t$ time units. So this is complete w.r.t time and
    actions.}  timed automata~\cite{AD94} with a Muller accepting
  condition. Indeed, the product of a reactive STA with such a DCMTA
  is reactive. Hence, the whole theory that we have developed applies:
  the STS of the product admits a sound finite abstraction. DCMTA
  allows one to express rich properties with timing constraints and
  one can evaluate their likelihood in the STA. It should be noticed
  that the convergence proof of the approximation scheme
  of~\cite{CDKM11} can be recovered as a byproduct, since CTMCs are
  particular cases of reactive STA. Another way to obtain this result
  would have been to apply the approach of Subsection~\ref{nonzeno} on
  time-bounded verification.
\end{rk}

\subsubsection{Single-clock STA}

We will apply a similar reasoning to single-clock STA. We therefore
assume that $\A$ is now a single-clock STA. As in~\cite[Section
7.1]{BBB+14}, we assume the following conditions: 
\begin{enumerate}[(i)]
\item for all $\ell\in L$, for all $\intervalcc{a,b}\subseteq\Rpos$,
  the function $\nu \mapsto \mu_{(\ell,\nu)}(\intervalcc{a,b})$ is
  continuous;
\item if $\gamma'=\gamma+t$ for some $t\ge 0$, and if $0\notin
  I(\gamma+t',e)$ for each $0\leq t'\leq t$, then
  $\mu_{\gamma}(I(\gamma,e))\leq\mu_{\gamma'}(I(\gamma',e))$;
\item there is $0<\lambda_0<1$ such that for every state $\gamma$ with
  $I(\gamma)$ unbounded, $\mu_\gamma(\intervalcc{0,\frac{1}{2}})\leq
  \lambda_0$,
\end{enumerate}
where for each $\gamma=(\ell,\nu)\in L\times\Rpos^X$ and for each
$e=(\ell,g,Y,\ell')\in E$, $I(\gamma, e)=\lbrace d\in \Rpos\mid
\nu+d\models g\rbrace$. These requirements are technical, but they are
rather natural and easily satisfiable. For instance, a timed automaton
equipped with uniform (resp. exponential) distributions on bounded
(resp. unbounded) intervals satisfy these conditions. If we assume
exponential distributions on unbounded intervals, the very last
requirement corresponds to the bounded transition rate condition
in~\cite{DP03}, required to have reasonable and realistic behaviours.

In~\cite[Section 7.1]{BBB+14}, there is no clear attractor
property. From the details of the proofs we can nevertheless define
$A_\A^{\max} = \{(\ell,r_0) \mid \ell \in L\} \cup \{(\ell,r) \in L
\times R_\A \mid \forall (\ell',r')\in L\times R_\A, \ (\ell,r) \to^*
(\ell',r')\ \text{in}\ \calT_\A^{\mathsf{tg}}\ \text{implies}\ r'
=r\}$ where $r_0$ is the region composed of the single null valuation.

\begin{proposition}
  \label{prop:oneclock-attractor}
  The set $\alpha^{-1}(A_\A^{\max})$ is an attractor for $\calT_\A$.
\end{proposition}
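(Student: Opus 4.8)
The plan is to reduce the statement to a \emph{uniform} bounded‑horizon reachability property and then conclude by a routine geometric argument. Concretely, the first step is to prove that there exist $k \in \IN$ and $p>0$ such that for every configuration $\gamma \in S_\A$,
\[
\Prob^{\calT_\A}_{\delta_\gamma}\big(\F[\le k]\,\alpha^{-1}(A_\A^{\max})\big) \ge p \enspace.
\]
Granting this, for an arbitrary $\mu \in \Dist(S_\A)$ one conditions on the state reached after $jk$ steps and obtains, by induction on $j$, that the probability under $\Prob^{\calT_\A}_\mu$ of not having visited $\alpha^{-1}(A_\A^{\max})$ during the first $jk$ steps is at most $(1-p)^j$, which tends to $0$; hence $\Prob^{\calT_\A}_\mu(\F \alpha^{-1}(A_\A^{\max}))=1$, i.e.\ $\alpha^{-1}(A_\A^{\max})$ is an attractor for $\calT_\A$.

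To set up the uniform bound I would first record a purely combinatorial fact about the thick graph: from \emph{every} state of $\calT_\A^{\mathsf{tg}}$ one can reach $A_\A^{\max}$ within $N := |L \times R_\A|$ steps. Indeed, since $\calT_\A^{\mathsf{tg}}$ is finite, a shortest path from a given state to the set of BSCC states is simple, hence of length $<N$; and every BSCC $\mathcal{C}$ meets $A_\A^{\max}$. For the latter, note that with a single clock, along any step the region either stays unchanged, strictly increases (after a delay with no reset), or is reset to $r_0$. If some edge inside $\mathcal{C}$ changes the region, strong connectivity forces a reset edge inside $\mathcal{C}$, so $\mathcal{C}$ contains a state with region $r_0$, which lies in $A_\A^{\max}$; otherwise all states of $\mathcal{C}$ share a common region $r$ and, $\mathcal{C}$ being a BSCC, every state reachable from any of them stays in $\mathcal{C}$ and hence keeps region $r$, so $\mathcal{C} \subseteq A_\A^{\max}$. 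Thus a shortest path to $A_\A^{\max}$ has length $<N$.

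The crux is to turn this into the uniform lower bound in the concrete STS $\calT_\A$, and this is precisely where the standing assumptions (i)--(iii) on the single‑clock STA enter; this is the content of the technical development of~\cite[Section~7.1]{BBB+14}, which I would invoke here. The delicacy is that the probability of a concrete step is governed by the exact clock value, not merely by the region, and in an open bounded region an edge of the thick graph may be enabled only on a sub‑range. I would structure the argument in two stages. In the first stage one shows, using the bounded‑transition‑rate assumption~(iii) (which forbids a run from making only arbitrarily small delays in an unbounded region) together with the continuity assumption~(i) (which, by compactness of the closures of the finitely many bounded regions, turns local estimates into bounds uniform over each region), that from any configuration the run reaches, within a bounded number of steps and with probability bounded below by some $p_1>0$, either a configuration with clock value $0$ (hence in $\alpha^{-1}(\{(\ell,r_0)\mid \ell\in L\}) \subseteq \alpha^{-1}(A_\A^{\max})$) or a configuration in the maximal region $(M_\A,\infty)$. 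In the second stage, from a configuration $(\ell,\nu)$ with $\nu>M_\A$: either $(\ell,(M_\A,\infty)) \in A_\A^{\max}$ and we are done, or by the combinatorial fact a reset edge is reachable in $\calT_\A^{\mathsf{tg}}$ within $<N$ steps; one then uses the monotonicity assumption~(ii) to lower bound, uniformly in the (arbitrarily large) clock value, the probability of taking each relevant edge along such a witness path — in the maximal region a delay never changes the region, so (ii) applies and the relevant masses do not shrink as the clock grows — reaching a configuration with clock value $0$ within a bounded number of steps with probability bounded below by some $p_2>0$. Combining the two stages with the bound $N$ on witness lengths yields the desired $k$ and $p=p_1 p_2$. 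The main obstacle is exactly this transfer step: obtaining bounds that are genuinely uniform over configurations with arbitrarily large clock values in the maximal region, and over the open bounded regions where an edge need not be enabled everywhere — which is what (i)--(iii) are designed to handle, and which rests on the monotonicity and continuity lemmas of~\cite[Section~7.1]{BBB+14}.
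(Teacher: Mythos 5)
Your overall strategy is not the paper's, and it rests on an intermediate claim that is both unproven and, in the form you state it, false. You reduce the proposition to the existence of uniform $k\in\IN$ and $p>0$ with $\Prob^{\calT_\A}_{\delta_\gamma}(\F[\le k]\,\alpha^{-1}(A_\A^{\max}))\ge p$ for \emph{every} configuration $\gamma$. This is much stronger than almost-sure reachability and is not what \cite[Section~7.1]{BBB+14} establishes: that reference (and the paper here) argues asymptotically, and the paper's own follow-up soundness proof only extracts bounded-horizon bounds \emph{from states of the attractor}, not from arbitrary configurations. Worse, your ``first stage'' --- that from any configuration the run reaches, within boundedly many steps and with probability at least $p_1$, either clock value $0$ or the maximal region --- is simply false: take a single-clock STA whose locations are connected only by non-resetting edges guarded by $x<1$; from any $\nu\in(0,1)$ the delays are supported on $[0,1-\nu)$, so the clock converges to a limit below $1$ and the run reaches neither $x=0$ nor $(M_\A,\infty)$, ever (probability $0$, not $\ge p_1$). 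Such configurations happen to lie in a terminal region and hence already in $\alpha^{-1}(A_\A^{\max})$, so the proposition itself is unharmed, but your decomposition misses exactly this third alternative. The compactness argument is also not sound: continuity of $\nu\mapsto\mu_{(\ell,\nu)}(\intervalcc{a,b})$ on the closure of a bounded region does not yield a uniform positive lower bound for taking a given edge, because the probability of an edge enabled only inside the current region genuinely tends to $0$ as the clock approaches the region's right endpoint.

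The paper's proof avoids all of this by working asymptotically. It partitions the runs into those taking resetting edges infinitely often --- which almost surely visit $\alpha^{-1}(\{(\ell,r_0)\mid\ell\in L\})$ --- and those taking finitely many resets, along which the clock is eventually non-decreasing and a ``final region'' $r$ is almost surely reached. Conditioning on that final region, the only quantitative ingredient is a lower bound $\alpha>0$, uniform over $\nu\in r$, for taking an edge whose guard contains a full strict successor region $r'$ of $r$; such edges, if enabled infinitely often, would be taken infinitely often, contradicting finality of $r$, whence eventually no region-changing or resetting edge is enabled and the run is trapped in the terminal set defining $A_\A^{\max}$. If you want to salvage your route you would have to (a) add ``absorption in a terminal bounded region'' as a third outcome of your first stage, and (b) replace the compactness and monotonicity claims by the paper's region-to-successor-region bound --- at which point you have essentially reconstructed the paper's argument rather than given an alternative one.
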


\begin{proof}
  Let $C = \{0\} \cup \{c \mid c\ \text{constant appearing in a guard
    of}\ \A\} \stackrel{\mathrm{def}}{=} \{c_0<c_1< \dots< c_h\}$.
  The set of regions for $\A$ can be chosen as $\{\{c_i\} \mid 0 \le i
  \le h\} \cup \{]c_{i-1};c_i[ \mid 1 \le i \le h\}$
  (see~\cite{LMS04}).

  Following the proof of~\cite[Theorem 7.2]{BBB+14}, the set of
  infinite paths in $\A$ can be divided into (a) the set of paths that
  take resetting edges infinitely often, and (b) the set of paths that
  take resetting edges only finitely often.

  We assume that the probability that (a) happens is positive, and we
  reason now in the $\sigma$-algebra which is conditioned by (a). Then
  under condition (a), $\alpha^{-1}(\{(\ell,r_0) \mid \ell \in L\})$
  is reached almost-surely.

  We assume that the probability that (b) happens is positive, and we
  reason now in the $\sigma$-algebra which is conditioned by
  (b). Under condition (b), almost-surely the value of the clock is
  non-decreasing along the path, and almost-surely a final region $r$
  is reached (that is, ultimately the value of the clock along the
  path belongs to $r$ forever). We fix such a region $r$, and we
  condition again with regard to that ``final region'' $r$. We write
  $E_r$ for the event (b) intersected with ``the path ends up in
  $r$''. Let $r'$ be a strict successor region of $r$, with dimension
  at least as big as that of $r$ (if $r$ is an open interval, then
  $r'$ has to be an open interval). There exists $\alpha>0$ such that
  for every $\nu \in r$, for every $\ell \in L$, for every $e =
  (\ell,g,Y,\ell')$ with $r' \subseteq g$,
  $\Prob_{\delta_{(\ell,\nu)}}^{\calT_\A}((\ell,\nu) \xrightarrow{e} )
  \ge \alpha$. Hence, using standard technics, we show that with
  probability $1$, if infinitely often such edges are enabled,
  infinitely often they will be taken; this contradicts hypothesis
  $E_r$. Hence, under condition $E_r$, with probability $1$, one
  cannot visit infinitely often configurations enabling edges guarded
  by some strict time-successor $r'$ of $r$. Once this is assumed, we
  can then show that almost-surely, only finitely many resetting edges
  can be enabled. This means that, under condition $E_r$,
  almost-surely, ultimately only states of $\alpha^{-1}(\{(\ell,r) \in
  L \times R_\A \mid \forall (\ell',r')\in L\times R_\A, \ (\ell,r) \to^* (\ell',r')\ \text{in}\
  \calT_\A^{\mathsf{tg}}\ \text{implies}\ r' =r\})$ are
  visited. Hence, that set is an attractor, under condition (b).

  Using some Bayes formula w.r.t. conditions (a) and (b), we conclude
  that $\alpha^{-1}(A_\A^{\max})$ is an attractor; this ends the
  proof.
%
\end{proof}

As before, we get:

\begin{proposition}
  $\calT_\A^{\mathsf{tg}}$ is a sound $\alpha$-abstraction of
  $\calT_\A$.
\end{proposition}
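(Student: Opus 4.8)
The goal is to prove that the thick-graph abstraction $\calT_\A^{\mathsf{tg}}$ is a sound $\alpha$-abstraction of $\calT_\A$ for a single-clock STA $\A$ satisfying conditions (i)--(iii). The natural route, paralleling the reactive case (Proposition~\ref{prop:staReactiveSound}), is to invoke Proposition~\ref{prop:attractorSound}: it suffices to exhibit a finite attractor $A_2$ for $\calT_\A^{\mathsf{tg}}$ whose $\alpha$-preimage is an attractor for $\calT_\A$, and to verify the uniform-lower-bound condition $(\dag)$ on reachability probabilities from the attractor states. For the attractor part, I would take $A_2 = A_\A^{\max}$; this is a finite subset of $L \times R_\A$, it is clearly an attractor for the \emph{finite} Markov chain $\calT_\A^{\mathsf{tg}}$ (in a finite MC, the set of states from which one's ``final'' region configuration is a fixpoint, together with all null-region states, is reached almost surely once one observes resets happen finitely or infinitely often), and Proposition~\ref{prop:oneclock-attractor} gives precisely that $\alpha^{-1}(A_\A^{\max})$ is an attractor for $\calT_\A$. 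So the first hypothesis of Proposition~\ref{prop:attractorSound} is discharged.

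\textbf{Key steps.} First I would record that $\calT_\A^{\mathsf{tg}}$ is a finite $\alpha$-abstraction of $\calT_\A$ (already established), so all the machinery applies. Second, I would take $A_2 = A_\A^{\max} = \{(\ell,r_0) : \ell \in L\} \cup \{(\ell,r) \in L\times R_\A : r \text{ is a fixpoint region from } (\ell,r) \text{ in } \calT_\A^{\mathsf{tg}}\}$, and verify it is a finite attractor for $\calT_\A^{\mathsf{tg}}$; combine this with Proposition~\ref{prop:oneclock-attractor} for the preimage. Third — the heart of the proof — I would verify condition~$(\dag)$: for each $(\ell_{\mathsf{m}}, r_{\mathsf{m}}) \in A_\A^{\max}$ and each $\alpha$-closed set $B$ (equivalently, each region $(\ell,r)$), either $\Prob_\mu^{\calT_\A}(\F B) = 0$ uniformly over $\mu \in \Dist(\alpha^{-1}(\ell_{\mathsf{m}}, r_{\mathsf{m}}))$, or there exist $p>0$, $k \in \IN$ with $\Prob_\mu^{\calT_\A}(\F[\le k] B) \ge p$ uniformly. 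There are two cases. If $r_{\mathsf{m}} = r_0$ is the null region, all configurations in $\alpha^{-1}(\ell_{\mathsf{m}}, r_0)$ are literally the single configuration $(\ell_{\mathsf{m}}, \mathbf{0})$, so uniformity is trivial and one picks $p$ as the actual (positive) reachability probability and $k$ large enough. If $r_{\mathsf{m}}$ is a fixpoint region (an interval or singleton $\{c_i\}$), then from a configuration $(\ell_{\mathsf{m}}, \nu)$ with $\nu \in r_{\mathsf{m}}$ one can no longer leave $r_{\mathsf{m}}$ along resetting-free moves, so the clock stays in $r_{\mathsf{m}}$ forever; the relevant one-step probabilities $p_{\gamma+d}(e)$ depend only on which edges are enabled, which is constant across $r_{\mathsf{m}}$, and conditions (i)--(ii) give that the delay-distribution probabilities $\mu_{(\ell,\nu)}(I(\gamma,e))$ are continuous and monotone in $\nu$; together with (iii) bounding the mass on $[0,\tfrac12]$ away from $1$ when intervals are unbounded, one extracts a uniform positive lower bound $p$ over all $\nu \in r_{\mathsf{m}}$ for reaching $(\ell,r)$ in a bounded number $k$ of steps, \emph{whenever} it is reachable at all; otherwise the probability is $0$ for every such $\nu$. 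This mirrors the argument in~\cite[Section 7.1]{BBB+14}. Finally, having checked both hypotheses of Proposition~\ref{prop:attractorSound}, we conclude $\calT_\A$ is decisive w.r.t. every $\alpha$-closed set, and then Proposition~\ref{coro:DecSound} yields that $\calT_\A^{\mathsf{tg}}$ is a sound $\alpha$-abstraction of $\calT_\A$.

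\textbf{Main obstacle.} The delicate point is the uniform lower bound in the fixpoint-region case: unlike reactive STA, where a single distribution $\mu_\ell$ is used at every configuration of a location and~\cite[Lemma~F.4]{BBB+14} gives configuration-independence outright, here the delay distribution $\mu_{(\ell,\nu)}$ genuinely varies with $\nu$ inside the region. One must therefore use conditions (i)--(iii) — continuity of $\nu \mapsto \mu_{(\ell,\nu)}(\intervalcc{a,b})$, monotonicity of $\nu \mapsto \mu_{\gamma}(I(\gamma,e))$ along time-successors without resets, and the bounded-rate-type bound (iii) — to show that over a fixpoint region the probability of each enabled edge, and hence of any fixed finite path pattern in the region graph, is bounded below by a positive constant independent of the precise valuation. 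Propagating this along the (bounded-length) witness path reaching $(\ell,r)$, using that the region stays fixed so the same bounds recur at each step, gives the required uniform $p$ and $k$. This is exactly the kind of estimate carried out, in a different guise, in the proof of~\cite[Theorem~7.2]{BBB+14}; packaging it cleanly here is the real work, but it is routine given those lemmas. Everything else — the attractor identifications and the final two-line invocation of Propositions~\ref{prop:attractorSound} and~\ref{coro:DecSound} — is immediate.
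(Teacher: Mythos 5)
Your proposal is correct and follows essentially the same route as the paper: take $A_2 = A_\A^{\max}$, use Proposition~\ref{prop:oneclock-attractor} for the attractor transfer, verify condition~$(\dag)$ (trivially for $r_0$, and via the fixpoint-region structure for the rest), and conclude with Propositions~\ref{prop:attractorSound} and~\ref{coro:DecSound}. If anything, your handling of the fixpoint-region case via conditions (i)--(iii) is more explicit than the paper's, which simply observes that the STA ultimately behaves like a finite Markov chain and asserts the bounds can be inferred.
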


\begin{proof}
  We easily get that $A_\A^{\max}$ is a finite attractor for
  $\calT_\A^{\mathsf{tg}}$, whereas $\alpha^{-1}(A_\A^{\max})$ is an
  attractor for $\calT_\A$
  (Proposition~\ref{prop:oneclock-attractor}).

  As for reactive STA, it remains to show the last property appearing
  in the hypotheses of Proposition~\ref{prop:attractorSound}. The
  required bounds obviously exist for the region $r_0$ (since only a
  single valuation belongs to $r_0$). Furthermore, as argued in the
  proof of Proposition~\ref{prop:oneclock-attractor}, when condition
  (b) is assumed, ultimately, the paths almost surely end up in
  $\alpha^{-1}(\{(\ell,r) \in L \times R_\A \mid \forall (\ell,r)
  \to^* (\ell',r')\ \text{in}\ \calT_\A^{\mathsf{tg}}\ \text{implies}\
  r' =r\})$, hence, ultimately, the STA behaves like a \textbf{finite}
  Markov chain. The required bounds can be inferred.

  This allows to conclude that $\calT_\A^{\mathsf{tg}}$ is a sound
  $\alpha$-abstraction of $\calT_\A$ (using
  Propositions~\ref{prop:attractorSound} and~\ref{coro:DecSound}).
\end{proof}

As a consequence, we get the following decidability and approximability
results for one-clock STA:
\begin{corollary}\label{coro:staOneClockResults}
  Let $\A$ be a one-clock labelled STA, and $\calM$ a DMA. Then:
  \begin{enumerate}
  \item we can decide whether $\A$ satisfies almost-surely $\calM$;
  \item for every initial distribution $\mu$ which is numerically
    amenable w.r.t. $\A$,
      we can compute arbitrary
        approximations of $\Prob_\mu^{\calT_\A}(\calM)$. 
  \end{enumerate}
\end{corollary}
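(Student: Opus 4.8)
The plan is to follow the very same route as the proof of Corollary~\ref{coro:staResults} for reactive STA: reduce both problems, via Theorem~\ref{theo:titi} and Corollary~\ref{coro:theotiti}, to effectively solvable problems on the finite thick graph $\calT_\A^{\mathsf{tg}}$, and then invoke the approximation scheme of Section~\ref{subsec:approx-reach}. Concretely, I would first check that all the hypotheses of Theorem~\ref{theo:titi} and Corollary~\ref{coro:theotiti} are met. We already know (the proposition stated just above) that $\calT_\A^{\mathsf{tg}}$ is a finite Markov chain and a sound $\alpha$-abstraction of $\calT_\A$; being finite it is also complete (Lemma~\ref{lemma:finite-soundness}) and has a trivial finite attractor, and so does $\calT_\A^{\mathsf{tg}} \ltimes \calM$ (Lemma~\ref{lemma:soundproduct}). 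The one point that requires care is that $\calT_\A \ltimes \calM$ must be decisive w.r.t. every $\alpha_\calM$-closed set.

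For that I would argue, as in the reactive case, that the product of a one-clock STA with a DMA is again (the STS of) a one-clock STA: the product only refines the discrete location component (locations become $L \times Q$, the $\calM$-component of each transition being determined by the label of the current $\calA$-location), while the single clock, the guards, the resets, and crucially the delay distributions $\mu_{(\ell,\nu)}$ are left untouched. Hence $\calA \ltimes \calM$ still satisfies conditions (i)--(iii) (these constrain only the delay distributions), has the same maximal constant $M_\calA$ and region set $R_\calA$, and its thick graph coincides with $\calT_\A^{\mathsf{tg}} \ltimes \calM$ up to the identification $(L \times Q) \times R_\calA \cong (L \times R_\calA) \times Q$. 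Applying the preceding soundness proposition to $\calA \ltimes \calM$ then shows that $\calT_\A^{\mathsf{tg}} \ltimes \calM$ is a sound $\alpha_\calM$-abstraction of $\calT_\A \ltimes \calM$; since it is a finite Markov chain it is decisive w.r.t. every set, so by Proposition~\ref{thm:MuDecisiveAbstr} $\calT_\A \ltimes \calM$ is decisive w.r.t. every $\alpha_\calM$-closed set.

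With all hypotheses in place, item~(1) follows from Corollary~\ref{coro:theotiti}: the almost-sure satisfaction of $\calM$ by $\calT_\A$ is equivalent to $\Prob_{\alpha_{\#}(\mu \times \delta_{q_0})}^{\calT_\A^{\mathsf{tg}} \ltimes \calM}(\mathsf{Inf} \in \calF) = 1$, which by Corollary~\ref{coro:ASomega} amounts to checking that every reachable BSCC of $\mathsf{Graph}_{\calT_\A^{\mathsf{tg}} \ltimes \calM}(B_2)$ is good for $\calF$ — a purely finite-graph computation, effective because the edges of $\calT_\A^{\mathsf{tg}}$ are given by a decidable local condition on regions. For item~(2), Theorem~\ref{theo:titi} gives $\Prob^{\calT_\A \ltimes \calM}_{\mu \times \delta_{q_0}}(\mathsf{Inf} \in \calF) = \sum_{C \in \mathsf{Good}^{B_2}_{\calT_\A^{\mathsf{tg}} \ltimes \calM}(\calF)} \Prob^{\calT_\A \ltimes \calM}_{\mu \times \delta_{q_0}}(\F\, \alpha^{-1}(C))$; each summand is the reachability probability of an $\alpha_\calM$-closed set in a system decisive w.r.t. it, so the adjacent-sequence scheme of Section~\ref{subsec:approx-reach} (Proposition~\ref{prop-approx-reach}, in its $2\varepsilon$-refinement) converges, as explained in Section~\ref{subsec:quantMullerAbstr}. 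To turn this into an algorithm I would note that $\widetilde{\alpha^{-1}(C)}$ is itself $\alpha_\calM$-closed and computable as the $\alpha_\calM$-preimage of the set of thick-graph states from which $C$ is unreachable, and that the required cylinder probabilities $\Prob(\Cyl(S,\dots,S,\alpha^{-1}(C)))$ and $\Prob(\Cyl(\neg\alpha^{-1}(C),\dots,\widetilde{\alpha^{-1}(C)}))$ decompose as finite sums of cylinder probabilities over sequences of locations-and-regions, which the numerical amenability of $\mu$ w.r.t. $\A$ lets us approximate to arbitrary precision.

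I expect the main obstacle to be exactly the verification that $\calT_\A \ltimes \calM$ meets the decisiveness requirement, i.e. the bookkeeping that the one-clock STA structure and conditions (i)--(iii) survive the product with the Muller automaton and that the thick graph of the product is the product of the thick graph; once this is settled, everything else is a routine instantiation of the generic machinery of Sections~\ref{sec:main}--\ref{sec:quantitative} together with the finiteness and effectivity of the thick graph.
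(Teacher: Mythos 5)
Your proposal is correct and follows essentially the same route as the paper: reduce to Theorem~\ref{theo:titi} and Corollary~\ref{coro:theotiti} exactly as in the reactive case, with the key observation that the product of a one-clock STA with a DMA is again a one-clock STA satisfying conditions (i)--(iii) (since these constrain only the delay distributions, untouched by the product), so the soundness proposition applies to the product and decisiveness of $\calT_\A \ltimes \calM$ w.r.t. $\alpha_\calM$-closed sets follows via Proposition~\ref{thm:MuDecisiveAbstr}. The additional effectiveness details you supply (computability of the avoid-sets and the cylinder decomposition) are consistent with what the paper delegates to Sections~\ref{subsec:approx-reach} and~\ref{subsec:quantMullerAbstr}.
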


\begin{proof}
  {Similarly to the proof of Corollary~\ref{coro:staResults}, this is an
    application of Theorem~\ref{theo:titi},
    Corollary~\ref{coro:theotiti} and of
    Sections~\ref{subsec:approx-reach}
    and~\ref{subsec:quantMullerAbstr}. The facts that:
  \begin{itemize}
  \item $\calT_\A^{\mathsf{tg}}\ltimes\calM$ has a finite attractor, and
  \item $\calT_\A\ltimes\calM$ is decisive w.r.t. any $\alpha$-closed sets.  
  \end{itemize}
  can be deduced by similar arguments. We only observe that if $\calT_\A$ is a single-clock STA, then so is $\calT_\A\ltimes\calM$ and that hypotheses (i), (ii) and (iii) are preserved through the product with $\calM$ as those only concern distributions over the STA which are not altered from the product with $\calM$.}
\end{proof}

\begin{rk}
  The proof of the existence of an attractor is very similar to the
  one we used for proving the fairness property in~\cite[Section
  7.1]{BBB+14}. However, for free, we get all the approximation
  results (as previously only few results could be inferred
  from~\cite{BBBC16})! It is worth noting that these results encompass
  the results of~\cite{BBBM08}, where a strong assumption on cycles of
  the STA was made (but a closed-form for the probability could be
  computed). We remark here that the graph used in~\cite{BBBM08} is
  actually the graph of the attractor, as done in
  Section~\ref{subsec:qual-dmc}.
\end{rk}





\begin{rk}[Time-bounded analysis] Let us finish by discussing how and
  when the approach of Subsection~\ref{nonzeno} for timed properties
  can be applied to STA. Similarly to CTMCs~\cite{BHHK03} that they
  extend, reactive STA are almost-surely non-Zeno. Hence one can apply
  the approximation scheme of Subsection~\ref{subsec:approx-reach} to
  (time-)bounded until formulas or time-bounded reachability
  properties.

  One can decide whether a single-clock STA is almost-surely
  non-Zeno~\cite{BBBBG08}. In the positive case, the approximation
  scheme of Subsection~\ref{subsec:approx-reach} can therefore be
  applied as well.
\end{rk}

\subsection{Generalized semi-Markov processes}

A generalized semi-Markov process~\cite{BKK+11b, Glynn89} is a
stochastic process with a finite control, built on a set of events.
Each event is equipped with a random variable representing its
duration: an event can either be a variable-delay event, whose
duration is given by a probability distribution defined by a density
function, or be a fixed-delay event, modelled by a Dirac
distribution. A transition is characterized by a set of events which
expire, and schedules a set of new events. This model is known to
generalize CTMCs. In this section, we show how to exploit our
techniques to recover and generalize results from the literature on
quantitative verification of generalized semi-Markov processes.

\subsubsection{Definition}

\begin{definition}
  A \emph{generalized semi-Markov process} (GSMP) is a tuple
  $\calG=(Q, \calE, \ell, u, f, \bfE, \Succ)$ where
  \begin{itemize}
  \item $Q$ is a finite set of states; 
  \item $\calE = \{e_1,\dots,e_p\}$ is a finite set of events;
  \item $\ell : \calE \to \IN_{\ge 0}$ and $u : \calE \to
    \IN_{>0}\cup\lbrace\infty\rbrace$ are bounds such that for every
    $e \in \calE$, $\ell(e) \le u(e)$; 
  \item $f : \calE \to \Dist([\ell(e);u(e)])$ assigns distributions
    to every event $e \in \calE$;
  \item $\bfE: Q\rightarrow 2^{\calE}$ assigns to each state $q$ a set
    of events enabled (or active) in $q$;
  \item $\Succ:Q\times 2^{\calE} \rightarrow \Dist(Q)$ is the
    successor function defined for $(q, E)$ whenever $E\subseteq
    \bfE(q)$;
  \end{itemize}
\end{definition} 
Each event $e\in\calE$ has an upper (resp. lower) bound $u_e
\stackrel{\textrm{def}}{=} u(e)$ (resp. $\ell_e
\stackrel{\textrm{def}}{=} \ell(e)$) on its delay: the duration of
event $e$ is randomly chosen in the interval $[\ell_e,u_e]$ according
to density $f_e \stackrel{\textrm{def}}{=} f(e)$. In contrast to
\emph{fixed-delay} events, $e$ is called a \emph{variable-delay}
event, if $\ell_e< u_e$. Events can alternatively be seen as random
variables: with a variable-delay event is associated a density
function and with a fixed-delay event is associated the corresponding
Dirac distribution.

The semantics of a GSMP $\calG$ is given as an STS
$\calT_\calG=(S_\calG,\Sigma_\calG,\kappa_\calG)$.  There are two
points-of-view to define the semantics of $\calG$, one is through a
residual-time semantics using races between events~\cite{BKK+11b}
(clocks behave like in timed automata), and the other is to sample the
delay of an event once, when it is scheduled~\cite{AK05} (clocks are
``countdown''). Though the results of~\cite{BKK+11b} are stated using
the first convention, we prefer the second option, since it is easier
to understand the semantics. Note that the duality between the two
allows obviously to interpret the results of~\cite{BKK+11b} in our
setting.

Let $q \in Q$ be a state; a valuation $\nu \in
  (\IRpos^\perp)^{\calE}$, where
  $\IRpos^\perp=\IRpos\cup\lbrace\perp\rbrace$, is compatible with $q$
  whenever $\nu(e) = \perp$ if $e \notin \bfE(q)$, and $\nu(e) \in
  \IRpos$ otherwise; in the latter case, $\nu(e)$ is the remaining time
  for $e$ before expiring. Configurations of $\calG$ are then given
  by:
\[
S_\calG = \{(q,\nu) \in Q \times (\IRpos^\perp)^{\calE} \mid \nu\
\text{is compatible with}\ q\}.
\]

Let $\gamma = (q,\nu) \in S_\calG$ be a configuration, and define
$E_0(\gamma) = \{e \in \bfE(q) \mid \forall e' \in \bfE(q),\ \nu(e)
\le \nu(e')\}$ and $d(\gamma) = \nu(e)$ for $e \in E_0(\gamma)$.  From
configuration $\gamma$, there is a transition to configuration
$\gamma' = (q', \nu')$ on occurrence of the set of events
$E_0(\gamma)$ after delay $d(\gamma)$ whenever:
\[
\nu'(e) = \begin{cases}
  \perp & \textrm{ if } e\notin \bfE(q')\\
  \nu(e)-d(\gamma) & \textrm{ if } e\in (\bfE(q)\cap\bfE(q'))\setminus E_0(\gamma) \\
  t & \textrm{ otherwise, with $\ell_e \le t \le u_e$}.
\end{cases}
\]

The $\sigma$-algebra $\Sigma_{\calG}$ is obtained as the product
  between $2^{\calE}$ and the Borel $\sigma$-algebra on
  $(\IRpos^\perp)^{\calE}$.  Let $\gamma = (q,\nu)$, and $B = \{q'\}
\times B'$. Then, assuming $\Succ(q,E_0(\gamma))(q')>0$, we define the
Markov kernel $\kappa_\calG$ by:
\[
\kappa_\calG(\gamma,B) = \Succ(q,E_0(\gamma))(q') \cdot
\int_{(t_1,\dots,t_p) \in B'} \Big(\prod_{e \in \bfE(q')} g_e(t_e)
\Big)\ud t_{e_1} \dots \ud t_{e_p}
\]
where $g_e(t) = f_e(t)$ if $e \in \bfE(q') \setminus (\bfE(q)
\setminus E_0(\gamma))$; $g_e(t) = \delta_{\nu(e)-d(\gamma)}$ if $e
\in (\bfE(q) \cap \bfE(q')) \setminus E_0(\gamma)$; $g_e(t) =
\delta_\perp$ if $e \notin \bfE(q')$. In other words, for a newly
activated event $e$, its timestamp $t_e$ is sampled (independently
from the other events) according to density $f_e$; for events
inherited from the previous state, the delay which has elapsed is
applied (hence the Dirac distributions in the definition).

\begin{example} 
    Consider a two-machine network (call $M_1$ and $M_2$ the two
    machines), in which crash times (event denoted $\textsf{crash}_i$
    for machine $M_i$) follow an exponential distribution with
    parameter $\lambda_i$ ($\lambda_i \in \mathbb{R}_{>0}$) and reboot
    times (event denoted $\textsf{reboot}_i$ for machine $M_i$) follow
    a uniform distirbution over interval $[0,U_i]$ for some positive
    integer $U_i$. A GSMP model for the network is given on
    Figure~\ref{fig:ex:gsmp}.
    \begin{figure}
      \centering
      \begin{tikzpicture}
        \draw (0,0) node [draw,rounded corners=2pt] (UpUp)
        {\begin{tabular}{c} $M_1$ up \\ $M_2$ up \end{tabular}} node
        [left=.9cm] {$\{\textsf{crash}_1,\textsf{crash}_2\}$};
        \draw (8,0) node [draw,rounded corners=2pt]  (DownDown)
        {\begin{tabular}{c} $M_1$ down \\ $M_2$ down \end{tabular}} node
        [right=1cm] {$\{\textsf{reboot}_1,\textsf{reboot}_2\}$};;

        \draw (4,2) node [draw,rounded corners=2pt]  (UpDown)
        {\begin{tabular}{c} $M_1$ up \\ $M_2$ down \end{tabular}} node
        [above=.7cm] {$\{\textsf{reboot}_1,\textsf{crash}_2\}$};;
        \draw (4,-2) node [draw,rounded corners=2pt]  (DownUp)
        {\begin{tabular}{c} $M_1$ down \\ $M_2$ up \end{tabular}} node
        [below=.7cm] {$\{\textsf{crash}_1,\textsf{reboot}_2\}$};;

        \draw [-latex',rounded corners=3pt] (UpUp) |- (UpDown) node
        [pos=.8,above] {$\textsf{crash}_2$}; 
        \draw [-latex',rounded corners=3pt] (UpUp) |- (DownUp) node
        [pos=.8,below] {$\textsf{crash}_1$}; 
        \draw [-latex',rounded corners=3pt] (DownUp) -| (DownDown) node
        [pos=.2,below] {$\textsf{crash}_2$}; 
        \draw [-latex',rounded corners=3pt] (UpDown) -| (DownDown) node
        [pos=.2,above] {$\textsf{crash}_1$}; 
        \draw [-latex'] (UpDown) -- (UpUp) node [midway,above,sloped] {$\textsf{reboot}_2$};
        \draw [-latex'] (DownUp) -- (UpUp) node [midway,above,sloped] {$\textsf{reboot}_1$};
        \draw [-latex'] (DownDown) -- (DownUp)  node [midway,above,sloped] {$\textsf{reboot}_2$};
        \draw [-latex'] (DownDown) -- (UpDown) node
        [midway,above,sloped] {$\textsf{reboot}_1$};
        
        \draw [latex'-latex',dotted] (UpUp) -- (DownDown);
        \draw [latex'-latex',dotted] (DownUp) -- (UpDown);

        \path (5.5,-4) node {\begin{tabular}{r} 
          $f(\textsf{crash}_i)$ is an exponential distribution with
          parameter $\lambda_i$ \\ 
          $f(\textsf{reboot}_i)$ is a uniform distribution with
          support $[0,U_i]$ \end{tabular}};
      \end{tikzpicture}
      \caption{An example GSMP: a network with two machines \label{fig:ex:gsmp}}
    \end{figure}
    
    When machine $M_i$ is up (resp. down), a delay before event
    $\textsf{crash}_i$ (resp. $\textsf{reboot}_i$) occurs and is sampled
    according to distribution $f(\textsf{crash}_i)$
    (resp. $f(\textsf{reboot}_i)$). After that delay the event is
    triggered, unless it is preempted by some concurrent event. For
    instance, in state ``$M_1$ and $M_2$ up'' (leftmost state), delays
    for the two events $\textsf{crash}_1$ and $\textsf{crash}_2$ are
    sampled, and the shortest delay decides the next transition to be
    taken and the next state which is reached. Note that dotted arrows
    represent events that happen with probability $0$ (for instance,
    the very same delays are sampled for the two actions
    $\textsf{crash}_1$ and $\textsf{crash}_2$); we have omitted the
    corresponding labels (for readability).
\end{example}

We fix for the rest of this section a GSMP
$\calG = (Q,\calE,\ell,u,f,\bfE,\Succ)$, and
$\calT_\calG = (S_\calG,\Sigma_\calG,\kappa_\calG)$ its corresponding
STS.  To avoid too much technicalities, we assume that $\calG$ has no
fixed-delay events, that is, for every event $e$, $\ell_e<u_e$. What
we will present here would nevertheless extend to so-called
\emph{single-ticking} GSMPs~\cite{BKK+11b}.

\subsubsection{The refined region graph abstraction}

Due to the choice of the countdown-clock semantics (``clock values''
decrease down to $0$), the thick graph defined in
subsection~\ref{subsubsec:thickgraph} has to be twisted a
bit. Furthermore standard regions will not be fine enough to yield an
interesting abstraction. We will therefore refine regions using sets
of separated configurations, that we define now.

Let $\varepsilon>0$. We say that configuration $\gamma = (q,\nu)$ is
\emph{$\varepsilon$-separated} if for every $a,b \in \{0\} \cup
\{\{\nu(e)\} \mid e \in \bfE(q)\}$, either $a=b$ or
$|a-b|>\varepsilon$. We write $C_\calG^\varepsilon$ for the set of
$\varepsilon$-separated configurations.

The following lemma, stated and proven in~\cite{BKK+11b}, 
allows us to  find an adequate granularity for a refined region
abstraction.

\begin{lemma}[Lemma~1 of~\cite{BKK+11b}]
  \label{lemma1}
  There exists $\varepsilon>0$, $m \in \IN$ and $p>0$ such that for
  every $\gamma \in S_\calG$, $\Prob_{\delta_\gamma}(\F[\le m]
  C_\calG^\varepsilon) \ge p$.
\end{lemma}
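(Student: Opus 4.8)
\emph{Proof plan.}
The statement is the only place in this section that truly uses the absence of fixed-delay events, so the plan is organized around the fact that every $f_e$ is a density on $[\ell_e,u_e]$ with $\ell_e<u_e$. The first ingredient is a \emph{uniform non-concentration estimate}. For $e\in\calE$ let $\omega_e(\varepsilon)$ be the supremum, over all real shifts $\theta$ and all arcs $J$ of length $2\varepsilon$ of the circle $[0,1)$, of $\Prob(\{X+\theta\}\in J)$ where $X\sim f_e$; set $\omega(\varepsilon)=\max_{e\in\calE}\omega_e(\varepsilon)$. Since each $f_e$ is absolutely continuous with respect to Lebesgue measure and $\calE$ is finite, $\omega(\varepsilon)\to 0$ as $\varepsilon\to 0$, and — this is the point — the bound $\Prob(\{X+\theta\}\in J)\le\omega(\varepsilon)$ holds uniformly in $\theta$, even when $\theta$ is random provided it is independent of $X$. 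Moreover $\omega$ does not get worse under convolution: if $X$ is a sum of one or more such independent freshly sampled values, the estimate with the same $\omega$ still holds.

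The second and main ingredient is a \emph{bounded-randomization claim}: there are an integer $N$ and a constant $p_0>0$, both depending only on $\calG$, such that from every configuration $\gamma\in S_\calG$ one can single out a symbolic path $\pi$ (a choice, at each of at most $N$ steps, of the discrete successor and of the event forced to fire) with $\Prob_{\delta_\gamma}$-probability at least $p_0$, such that, conditionally on following $\pi$, the final configuration $\gamma'=(q',\nu')$ satisfies $\nu'(e)=X_e-T_e$ for every $e\in\bfE(q')$, where $X_e$ is a (sum of) fresh sample(s) attached to event $e$, the $T_e$ are elapsed-time shifts, and the vector $(\{\nu'(e)\})_{e\in\bfE(q')}$ has, given $\pi$, an absolutely continuous law on $[0,1)^{|\bfE(q')|}$ whose one-dimensional conditional marginals are uniformly non-concentrating with modulus bounded by a fixed multiple of $\omega$. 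Establishing this uses three things: (a) in any reachable configuration every active clock has value at most $M\stackrel{\mathrm{def}}{=}\max_e u_e$, so an originally active clock, which only decreases until it fires, can be forced to fire within a bounded number of steps; hence all ``old'' clocks can be flushed (rescheduled) in at most $N=N(\calG)$ steps, after which every active clock carries its own fresh sample — this is what yields full-dimensionality of the final law; (b) at each step one picks which event to force to fire by a dichotomy of the form ``is clock $A$ currently below clock $B$ or not'', at least one side of which has conditional probability $\ge\tfrac12$, so the conditioning imposed by $\pi$ distorts each fresh sample by at most a bounded factor and the conditioned marginals remain uniformly non-concentrating; (c) performing the flushes one at a time, and scheduling each surviving clock's final fresh sample at a distinct step, guarantees that no nontrivial linear relation is forced between the $\nu'(e)$'s, so the pushforward of the (absolutely continuous) law of the fresh samples is absolutely continuous.

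The third step combines the two. Conditionally on $\pi$, by the bounded-randomization claim each of the at most $|\calE|+1$ quantities $0$ and $\{\nu'(e)\}$, $e\in\bfE(q')$, has the property that $\{\nu'(e)\}$ lands in any fixed $\varepsilon$-window with probability at most $C\,\omega(\varepsilon)$ for a constant $C=C(\calG)$ (the first step applied to $X_e-T_e$), and, conditioning on the complementary coordinate, $|\{\nu'(e)\}-\{\nu'(e')\}|<\varepsilon$ also has probability at most $C\,\omega(\varepsilon)$ for $e\ne e'$. A union bound over the $\le|\calE|+1$ comparisons with $0$ and the $\le\binom{|\calE|+1}{2}$ pairwise comparisons gives $\Prob_{\delta_\gamma}(\gamma'\notin C_\calG^\varepsilon\mid\pi)\le c_{|\calE|}\,\omega(\varepsilon)$ for a constant $c_{|\calE|}$ depending only on $|\calE|$. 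Fixing $\varepsilon$ so small that $c_{|\calE|}\,\omega(\varepsilon)\le\tfrac12$ and taking $m=N$, we conclude $\Prob_{\delta_\gamma}(\F[\le m] C_\calG^\varepsilon)\ge p_0\cdot\tfrac12=:p>0$ for every $\gamma\in S_\calG$, which is the claim.

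The hard part is the bounded-randomization claim, and within it the simultaneous fulfilment of (a)--(c): keeping the horizon $N$ and the success probability $p_0$ uniform over the adversarially chosen initial configuration, while ensuring that after the forced flushing the surviving clocks are both genuinely spread out in all coordinates (no frozen pairwise differences — which is precisely the obstruction when two never-rescheduled clocks coexist) and still individually uniformly non-concentrating despite the conditioning that selecting $\pi$ imposes. Everything else reduces to the uniform absolute continuity of the finitely many densities $f_e$.
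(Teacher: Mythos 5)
First, a framing remark: the paper itself gives no proof of this statement --- it is imported verbatim as Lemma~1 of~\cite{BKK+11b} (``stated and proven in~\cite{BKK+11b}'') --- so your plan can only be measured against the cited proof and the natural argument. Your overall architecture is the right one and matches that proof: within a uniformly bounded horizon and with uniformly positive probability, force every ``old'' clock to be flushed so that each surviving clock carries a fresh absolutely continuous sample; then bound each of the finitely many fractional-part comparisons (with $0$ and pairwise) by a uniform non-concentration modulus $\omega(\varepsilon)\to 0$, absorb the distortion caused by conditioning on a symbolic path of probability at least $p_0$ into a constant factor, and close with a union bound for $\varepsilon$ small enough. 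Your steps two and three are sound modulo routine bookkeeping (the ``$\ge\tfrac12$'' in your dichotomy should be $\ge 1/|\calE|$, and the discrete branching probabilities of $\Succ$ contribute another uniform constant, but none of this matters).

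The substantive gap is in your ingredient (a). You justify the uniformly bounded flushing horizon by saying that an originally active clock ``only decreases until it fires'' and hence ``can be forced to fire within a bounded number of steps''. What is true without further argument is that such a clock fires or is disabled within bounded elapsed \emph{time} (at most its current value); bounding the number of \emph{steps} is false as a deterministic statement, because arbitrarily many freshly scheduled events with arbitrarily small delays (recall $\ell_e$ may be $0$) can fire before the old clock does. Converting ``bounded time'' into ``bounded number of steps with probability bounded below'' is exactly where the uniformity of $m$ and $p$ is earned, and it requires an explicit time-progress argument: in any block of $N>|Q|\cdot|\calE|$ consecutive steps some event is newly scheduled and later fires without having been disabled, with probability at least $\lambda_0$ its sampled delay exceeds some $d>0$, so each block advances time by at least $d$ with probability at least $\lambda_0$, and after $\lceil M/d\rceil$ blocks all old clocks are gone. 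This is precisely the argument the paper runs in its appendix for Section~\ref{sec:appli} to prove non-Zenoness of GSMPs; your plan needs it but does not contain it. A second, smaller point: your bound $M=\max_e u_e$ presupposes that every $u_e$ is finite, whereas the paper's definition allows $u_e=\infty$; two never-rescheduled clocks of unbounded-support events keep a frozen fractional difference, which is exactly the obstruction you correctly identify, and your flushing mechanism cannot remove them within a uniform horizon. That case needs either an extra hypothesis (e.g.\ memorylessness of unbounded-delay events) or a separate treatment, and your plan should at least flag it. With the time-progress sub-argument inserted and the finite-support caveat made explicit, your proof goes through.
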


We select $\varepsilon>0$ following Lemma~\ref{lemma1}, and
w.l.o.g. we assume $\varepsilon$ is of the form $\frac{1}{d}$ with $d
\in \IN_{>0}$.  We let $M_\calG$ be the maximal constant appearing in
constants $\{\ell_e \mid e \in \calE\}$ and $\{u_e \mid e \in \calE\
\text{and}\ u_e<\infty\}$. Each event $e \in \calE$ is virtually
assigned a clock variable $x_e$, and we consider a refinement of the
region equivalence for clocks $\{x_e \mid e \in \calE\}$
w.r.t. maximal constant $M_\calG$ and granularity $\frac{1}{d}$ as
follows. Two valuations $\nu,\nu' \in (\IRpos^\perp)^\calE$ are
equivalent whenever the following conditions hold:
\begin{enumerate}
\item for every $e \in \calE$, either both $\nu(e)$ and $\nu'(e)$ are
  stricly larger than $M_\calG$, or the integral parts of $d \cdot
  \nu(e)$ and $d \cdot \nu'(e)$ coincide;
\item for every $e_1,e_2 \in \calE$, for every $c \in \frac{1}{d}
  \cdot \IN \cap [-M_\calG;M_\calG]$, for every $\mathord{\bowtie} \in
  \{<,\le,=,\ge,>\}$, $\nu(e_1) - \nu(e_2) \bowtie c$ if and only if
  $\nu'(e_1) - \nu'(e_2) \bowtie c$.
\end{enumerate}
Note that the above conditions refine the ones given in
subsection~\ref{subsubsec:thickgraph} using diagonal constraints
(\cite{AD94}), and w.r.t. the granularity $\frac{1}{d}$ as well. We
write $R^\varepsilon_\calG$ for the set of equivalence classes, also
called regions. We realize that any region $r \in R^\varepsilon_\calG$
has either only $\varepsilon$-separated configurations, or only
non-$\varepsilon$-separated configurations. In particular,
$C_\calG^\varepsilon$ is a finite union of such regions.

We then define the abstraction $\alpha : Q \times \IRpos^\calE \to Q
\times R^\varepsilon_\calG$ by projection, and the finite Markov chain
$\calT_\calG^{\mathsf{rg},\varepsilon}$ as follows:
\begin{itemize}
\item its set of states is $Q \times R^\varepsilon_\calG$;
\item there is an edge from $(q,r)$ to $(q',r')$ whenever there exists
  $\nu \in r$ such that $\kappa_\calG((q,\nu),\{q'\} \times r')>0$;
\item from each state $(q,r) \in Q \times R^\varepsilon_\calG$, we
  associate the uniform distribution over $\{(q',r') \in Q \times
  R^\varepsilon_\calG \mid \text{there is an edge from}\ (q,r)\
  \text{to}\ (q',r')\}$.
\end{itemize}

Since $\calT_\calG^{\mathsf{rg},\varepsilon}$  is just a rescaling of a
standard region automaton, we immediately get:

\begin{lemma}
  $\calT_\calG^{\mathsf{rg},\varepsilon}$ is a finite
  $\alpha$-abstraction of $\calT_\calG$.
\end{lemma}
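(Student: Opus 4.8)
The plan is to verify that $\calT_\calG^{\mathsf{rg},\varepsilon}$ fits the definition of $\alpha$-abstraction, i.e.\ that for every $\mu \in \Dist(S_\calG)$, $\alpha_\#(\Omega_{\calT_\calG}(\mu))$ is equivalent to $\Omega_{\calT_\calG^{\mathsf{rg},\varepsilon}}(\alpha_\#(\mu))$. By the remark right after the definition of $\alpha$-abstraction, it is enough to show that for every $\mu \in \Dist(S_\calG)$ and every measurable $A \subseteq Q \times R^\varepsilon_\calG$,
\[
\Prob_\mu^{\calT_\calG}(\Cyl(S_\calG, \alpha^{-1}(A))) > 0 \iff \Prob_{\alpha_\#(\mu)}^{\calT_\calG^{\mathsf{rg},\varepsilon}}(\Cyl(Q \times R^\varepsilon_\calG, A)) > 0.
\]
Since $A$ is a finite union of singletons $\{(q',r')\}$, and probabilities are countably additive, it suffices to treat $A = \{(q',r')\}$. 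Writing things out, the left-hand side is $\int_{S_\calG} \kappa_\calG(\gamma, \alpha^{-1}(q',r'))\, \ud\mu(\gamma) > 0$, which holds iff there is a $\mu$-positive set of configurations $\gamma = (q,\nu)$ with $\kappa_\calG((q,\nu), \{q'\}\times r') > 0$; the right-hand side is $\sum_{(q,r)} \alpha_\#(\mu)(q,r)\cdot \kappa^{\mathsf{rg}}((q,r),(q',r')) > 0$, which by definition of the rescaled region automaton holds iff there is a region $(q,r)$ with $\alpha_\#(\mu)(q,r) > 0$ (equivalently $\mu(\alpha^{-1}(q,r)) > 0$) and some $\nu \in r$ with $\kappa_\calG((q,\nu),\{q'\}\times r') > 0$.

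So the crux is the following region-stability property: \emph{for a fixed region $r$ and target $(q',r')$, the kernel value $\kappa_\calG((q,\nu),\{q'\}\times r')$ is positive either for all $\nu \in r$ or for none.} This is exactly what the phrase ``just a rescaling of a standard region automaton'' buys us: the positivity of a one-step move in a GSMP depends only on (i) which event set $E_0(\gamma)$ is minimal — determined by the ordering and integral-part data that the diagonal-refined region $r$ records — (ii) whether $\Succ(q,E_0(\gamma))(q') > 0$, which depends only on $q$ and $E_0$, and (iii) whether the target region $r'$ is reachable by subtracting the common delay $d(\gamma)$ from the surviving clocks and resampling the new ones within $[\ell_e,u_e]$; since the densities $f_e$ are equivalent to Lebesgue on their supports, the resampled coordinates can land in any sub-box of positive Lebesgue measure, and whether the inherited coordinates land in the prescribed integral-part/diagonal constraints of $r'$ is again a region-invariant fact. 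The refinement by granularity $\tfrac1d$ and by diagonal constraints $\nu(e_1)-\nu(e_2)\bowtie c$ is precisely what makes ``$E_0(\gamma)$ and its induced successor data are constant on $r$'' true. This step — checking that the classical region-automaton construction for timed automata, adapted to countdown clocks and diagonal/rescaled regions, preserves one-step positivity — is the main (though routine) obstacle; it mirrors the standard region-automaton soundness proof of~\cite{AD94}.

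Once the region-stability property is in hand, both sides of the displayed equivalence reduce to ``there exists a region $(q,r)$ with $\mu(\alpha^{-1}(q,r)) > 0$ and an edge $(q,r)\to(q',r')$ in $\calT_\calG^{\mathsf{rg},\varepsilon}$,'' using for the $\Rightarrow$ direction that a $\mu$-positive set of configurations $(q,\nu)$ meets only countably many regions so at least one of those regions has positive $\mu$-mass, and for the $\Leftarrow$ direction that $\mu(\alpha^{-1}(q,r)) > 0$ together with region-stability (positivity for \emph{every} $\nu\in r$) forces $\int_{\alpha^{-1}(q,r)} \kappa_\calG(\gamma,\{q'\}\times r')\,\ud\mu > 0$. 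Finally, since $\calT_\calG^{\mathsf{rg},\varepsilon}$ is by construction a finite Markov chain (its state space $Q \times R^\varepsilon_\calG$ is finite, and each state carries a genuine probability distribution over its out-edges), it is a \emph{finite} $\alpha$-abstraction, as claimed. I would keep the write-up short, pushing the region-invariance bookkeeping into a single paragraph with a pointer to~\cite{AD94,BKK+11b}.
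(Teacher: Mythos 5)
Your proposal is correct and follows exactly the route the paper intends: the paper gives no explicit proof, asserting the lemma ``immediately'' from the observation that $\calT_\calG^{\mathsf{rg},\varepsilon}$ is a rescaled region automaton, and your write-up simply supplies the details behind that assertion (reduction of the positivity criterion to singleton targets, and region-invariance of one-step kernel positivity via the minimal event set, the successor distribution, and the inherited-versus-resampled clock decomposition). The only cosmetic remark is that a $\mu$-positive set of configurations meets only \emph{finitely} many regions, since $Q\times R^\varepsilon_\calG$ is finite, but this does not affect the argument.
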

\noindent As previously, we notice that the above abstraction is
obviously complete (since it is finite).

\subsubsection{Analyzing GSMPs}
Let $A_\calG^\varepsilon = \{(q,r) \in Q \times R_\calG^\varepsilon \mid
\alpha^{-1}(q,r) \subseteq C_\calG^\varepsilon\}$. As a 
consequence of Lemma~\ref{lemma1} we get:

\begin{proposition}
  The set $\alpha^{-1}(A_\calG^\varepsilon)$ is a finite attractor for
  $\calT_\calG$.
\end{proposition}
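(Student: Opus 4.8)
The plan is to let Lemma~\ref{lemma1} carry essentially all the weight: it already supplies a step bound $m \in \IN$ and a probability $p > 0$, uniform over all configurations of $\calG$, for reaching $C_\calG^\varepsilon$. The first thing I would record is the elementary identity $\alpha^{-1}(A_\calG^\varepsilon) = C_\calG^\varepsilon$. As noted just above the statement, every region $(q,r) \in Q \times R^\varepsilon_\calG$ is homogeneous for $\varepsilon$-separation: it consists either only of $\varepsilon$-separated configurations or only of non-$\varepsilon$-separated ones. Hence $\alpha^{-1}(q,r) \subseteq C_\calG^\varepsilon$ holds precisely when $(q,r)$ meets $C_\calG^\varepsilon$, so that $\alpha^{-1}(A_\calG^\varepsilon) = \bigcup_{(q,r) \in A_\calG^\varepsilon}\alpha^{-1}(q,r) = C_\calG^\varepsilon$. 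In particular $A_\calG^\varepsilon$ is a finite subset of $Q \times R^\varepsilon_\calG$, and $\alpha^{-1}(A_\calG^\varepsilon)$ is a finite union of regions, hence lies in $\Sigma_\calG$.

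Next I would prove that $C_\calG^\varepsilon$ is an attractor of $\calT_\calG$, i.e. $\Prob^{\calT_\calG}_\mu(\F C_\calG^\varepsilon) = 1$ for every $\mu \in \Dist(S_\calG)$. Fix $m$ and $p$ as in Lemma~\ref{lemma1}, and for $j \ge 0$ let $E_j = \ev{\calT_\calG}{\G[\le jm] \neg C_\calG^\varepsilon}$ be the (decreasing in $j$) event that $C_\calG^\varepsilon$ is avoided at all of the first $jm+1$ positions. Decomposing $\Prob^{\calT_\calG}_\mu(E_{j+1})$ according to the configuration reached after $jm$ steps — which is made rigorous by Lemma~\ref{lemma:integration} — and applying Lemma~\ref{lemma1} from that configuration (the bound $p$ being independent of it), I get $\Prob^{\calT_\calG}_\mu(E_{j+1}) \le (1-p)\,\Prob^{\calT_\calG}_\mu(E_j)$, whence $\Prob^{\calT_\calG}_\mu(E_j) \le (1-p)^{j}$. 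Since $\bigcap_{j \ge 0} E_j = \ev{\calT_\calG}{\G \neg C_\calG^\varepsilon}$, letting $j \to \infty$ gives $\Prob^{\calT_\calG}_\mu(\G \neg C_\calG^\varepsilon) = 0$, i.e. $\Prob^{\calT_\calG}_\mu(\F C_\calG^\varepsilon) = 1$. This boosting argument is exactly of the kind used in the proof sketch of Proposition~\ref{prop:attractorSound}.

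Finally, to match the format of the first hypothesis of Proposition~\ref{prop:attractorSound} (and to justify the wording ``finite attractor''), I would also observe that $A_\calG^\varepsilon$ is an attractor of the finite Markov chain $\calT_\calG^{\mathsf{rg},\varepsilon}$: from any state $(q,r)$, choosing $\nu \in r$ and applying Lemma~\ref{lemma1} gives $\Prob^{\calT_\calG}_{\delta_{(q,\nu)}}(\F[\le m] C_\calG^\varepsilon) \ge p > 0$, and transporting this bounded-horizon positivity through the $\alpha$-abstraction via Equation~\eqref{eq:PosReach} yields $\Prob^{\calT_\calG^{\mathsf{rg},\varepsilon}}_{\delta_{(q,r)}}(\F A_\calG^\varepsilon) > 0$; in a finite Markov chain any set reachable with positive probability from every state is reached almost surely (every BSCC must meet it, and BSCCs are visited infinitely often with probability $1$), so $A_\calG^\varepsilon$ is indeed a finite attractor of $\calT_\calG^{\mathsf{rg},\varepsilon}$. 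No step here is a genuine obstacle; the only points needing mild care are the region/$\varepsilon$-separation bookkeeping of the first step and the Markov-property iteration of the second, since Lemma~\ref{lemma1} already packages the quantitative core.
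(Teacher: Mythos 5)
Your proof is correct and follows exactly the route the paper intends: the proposition is stated there as a direct consequence of Lemma~\ref{lemma1}, and your elaboration (identifying $\alpha^{-1}(A_\calG^\varepsilon)$ with $C_\calG^\varepsilon$ via the homogeneity of the refined regions, then iterating the uniform bound $p$ over blocks of $m$ steps to get $\Prob^{\calT_\calG}_\mu(\G\neg C_\calG^\varepsilon)\le\lim_j(1-p)^j=0$) is precisely the standard boosting argument the authors use elsewhere (e.g.\ Proposition~\ref{prop:attractorSound}). The closing observation that $A_\calG^\varepsilon$ is a finite attractor of the abstraction is a harmless and useful addition.
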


Finally, as for STA and using~\cite[Lemma~2]{BKK+11b} (which allows to
prove that the conditions of Proposition~\ref{prop:attractorSound} are
actually satisfied), we also get:

\begin{proposition}
  $\calT_\calG^{\varepsilon,\mathsf{rg}}$ is a sound $\alpha$-abstraction
  of $\calT_\calG$.
\end{proposition}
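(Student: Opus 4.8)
The plan is to invoke Proposition~\ref{prop:attractorSound} with the finite attractor $A_\calG^\varepsilon$ of $\calT_\calG^{\varepsilon,\mathsf{rg}}$, and then apply Proposition~\ref{coro:DecSound}. First I would check the structural hypothesis (1) of Proposition~\ref{prop:attractorSound}: one easily sees that $A_\calG^\varepsilon$ is a finite attractor for $\calT_\calG^{\varepsilon,\mathsf{rg}}$ (since $C_\calG^\varepsilon$ is a finite union of regions, the set $\alpha^{-1}(A_\calG^\varepsilon) = C_\calG^\varepsilon$ equals its preimage), and that $\alpha^{-1}(A_\calG^\varepsilon)$ is an attractor for $\calT_\calG$ by the previous proposition (a consequence of Lemma~\ref{lemma1}). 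So the first bullet of Proposition~\ref{prop:attractorSound} is met, writing $A_2 = A_\calG^\varepsilon = \{s_1,\dots,s_n\}$ and $A_1 = \alpha^{-1}(A_2)$.

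The second step, which I expect to be the crux, is verifying hypothesis (2) of Proposition~\ref{prop:attractorSound}: for every $\varepsilon$-separated region $(q,r) \in A_\calG^\varepsilon$ and every $\alpha$-closed set $B \in \Sigma_\calG$, there must exist $p>0$ and $k \in \IN$ such that either $\Prob^{\calT_\calG}_\mu(\F[\le k] B) \ge p$ for all $\mu \in \Dist(\alpha^{-1}(q,r))$, or $\Prob^{\calT_\calG}_\mu(\F B) = 0$ for all such $\mu$. The key input here is \cite[Lemma~2]{BKK+11b}, which (analogously to \cite[Lemma~F.4]{BBB+14} for reactive STA) states that from an $\varepsilon$-separated configuration, the reachability probabilities of regions, within a bounded number of steps, do not depend on the precise configuration inside the region but only on the region itself. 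Concretely, for any two configurations $\gamma,\gamma' \in \alpha^{-1}(q,r)$ with $(q,r)$ $\varepsilon$-separated, and any $\alpha$-closed $B$ and any $k$, one has $\Prob^{\calT_\calG}_{\delta_\gamma}(\F[=k] B) = \Prob^{\calT_\calG}_{\delta_{\gamma'}}(\F[=k] B)$; extending by integration, $\Prob^{\calT_\calG}_\mu(\F[=k] B) = \Prob^{\calT_\calG}_{\delta_\gamma}(\F[=k] B)$ for any $\mu \in \Dist(\alpha^{-1}(q,r))$. Then, if $\Prob^{\calT_\calG}_{\delta_\gamma}(\F B) > 0$, there is some finite horizon $k$ at which the probability becomes positive, say equal to $p > 0$; since the value is independent of the chosen configuration, this $p$ and $k$ work uniformly over all $\mu \in \Dist(\alpha^{-1}(q,r))$. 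Otherwise $\Prob^{\calT_\calG}_{\delta_\gamma}(\F B) = 0$ for all $\gamma \in \alpha^{-1}(q,r)$, which is the second alternative. The one subtlety to handle carefully is that $\alpha^{-1}(q,r)$ need not be a single $\delta$-measure support, so one must pass from the Dirac case to arbitrary distributions on $\alpha^{-1}(q,r)$ via $\Prob^{\calT_\calG}_\mu(\F[\le k] B) = \int \Prob^{\calT_\calG}_{\delta_\gamma}(\F[\le k] B)\, \ud\mu(\gamma)$ and the constancy of the integrand — entirely parallel to the argument in the proof of Proposition~\ref{prop:staReactiveSound}.

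Having established the two hypotheses, Proposition~\ref{prop:attractorSound} yields that $\calT_\calG$ is decisive w.r.t. every $\alpha$-closed set. Finally, Proposition~\ref{coro:DecSound} applied to the $\alpha$-abstraction $\calT_\calG^{\varepsilon,\mathsf{rg}}$ of $\calT_\calG$ gives that $\calT_\calG^{\varepsilon,\mathsf{rg}}$ is a sound $\alpha$-abstraction of $\calT_\calG$, which is the claim. The main obstacle is really the bookkeeping around \cite[Lemma~2]{BKK+11b}: one must be sure that the ``configuration-independence'' statement there applies to all $\alpha$-closed sets (not just to regions) and genuinely gives the uniform lower bound $p$ with a uniform horizon $k$; the finiteness of the attractor $A_\calG^\varepsilon$ is then what lets us take a common $p$ and $k$ across its finitely many elements.
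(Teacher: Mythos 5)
Your proposal is correct and follows exactly the route the paper takes: the paper's (one-sentence) justification is precisely "as for STA and using [Lemma~2]{BKK+11b}, the conditions of Proposition~\ref{prop:attractorSound} are satisfied," followed implicitly by Proposition~\ref{coro:DecSound}, which is what you spell out in detail mirroring the proof of Proposition~\ref{prop:staReactiveSound}. Your expanded verification of hypothesis (2) via configuration-independence on $\varepsilon$-separated regions is the intended argument.
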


As consequences, we get the following decidability and approximability
results for GSMPs:

\begin{corollary}
  Let $\calG$ be a labelled GSMP (with no fixed-delay event), and
  $\calM$ be a DMA. Then:
  \begin{enumerate}
  \item we can decide whether $\calG$ satisfies almost-surely $\calM$;
  \item for every initial distribution $\mu$ which is numerically
    amenable w.r.t. $\calG$,\footnote{We say that a distribution $\mu$
      is numerically amenable w.r.t. $\calG$ if, given $k \in \IN$,
      given $\varepsilon>0$ and given a sequence of states and refined
      regions $(q_0,r_0), (q_1,r_1), \ldots, (q_k,r_k)$, one can
      approximate
      $\Prob^{\calT_\calG}_\mu(\Cyl((q_0,r_0),(q_1,r_1),\dots,(q_k,r_k)))$
      up to $\varepsilon$.} we can compute arbitrary approximations of
    $\Prob_\mu^{\calT_\calG}(\calM)$.
  \end{enumerate}
\end{corollary}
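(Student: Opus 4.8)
The plan is to derive both statements as instances of the generic machinery of Sections~\ref{sec:main}--\ref{sec:quantitative}, exactly as for reactive and single-clock STA (Corollaries~\ref{coro:staResults} and~\ref{coro:staOneClockResults}), taking $\calT_1=\calT_\calG$ and $\calT_2=\calT_\calG^{\mathsf{rg},\varepsilon}$. First I would verify the hypotheses of Theorem~\ref{theo:titi} and Corollary~\ref{coro:theotiti}. We already know that $\calT_\calG^{\mathsf{rg},\varepsilon}$ is a finite $\alpha$-abstraction of $\calT_\calG$, hence a DMC with a (trivial) finite attractor and, by Lemma~\ref{lemma:finite-soundness}, a complete abstraction, and that it is moreover a \emph{sound} $\alpha$-abstraction of $\calT_\calG$. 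The key observation is that $\calT_\calG\ltimes\calM$ is itself (the STS of) a GSMP: its finite control is the product of the states of $\calG$ with the locations of $\calM$, while the event set $\calE$ and the densities $(f_e)_{e\in\calE}$ are unchanged — the product with $\calM$ only refines the discrete transition structure, leaving the timing untouched — and that $\calT_\calG^{\mathsf{rg},\varepsilon}\ltimes\calM$ is precisely the refined region graph abstraction of $\calT_\calG\ltimes\calM$ (the constants $\varepsilon,m,p$ of Lemma~1 of~\cite{BKK+11b} depend only on the events and densities, hence are unchanged). Consequently Lemmas~1 and~2 of~\cite{BKK+11b} apply verbatim to the product, so that $\alpha_\calM^{-1}(A_\calG^\varepsilon\times Q)$ is a finite attractor of $\calT_\calG\ltimes\calM$, $A_\calG^\varepsilon\times Q$ is a finite attractor of $\calT_\calG^{\mathsf{rg},\varepsilon}\ltimes\calM$ (via Lemma~\ref{lemma:soundproduct}), and the bound condition $(\dag)$ holds for $\calT_\calG\ltimes\calM$. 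Proposition~\ref{prop:attractorSound} then yields that $\calT_\calG\ltimes\calM$ is decisive w.r.t. every $\alpha_\calM$-closed set, and Proposition~\ref{coro:DecSound} (equivalently, the second part of Lemma~\ref{lemma:alphabar}) yields that $\calT_\calG^{\mathsf{rg},\varepsilon}\ltimes\calM$ is a sound $\alpha_\calM$-abstraction of $\calT_\calG\ltimes\calM$. I stress that I run Proposition~\ref{prop:attractorSound} directly at the product level rather than lifting decisiveness from $\calT_\calG$ — the latter lift being one of the open questions of Remark~\ref{rk:produit}.

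With all hypotheses in place, item~(1) follows from Corollary~\ref{coro:theotiti}: $\calG$ satisfies $\calM$ almost surely from $\mu$ iff $\Prob_{\alpha_{\#}(\mu\times\delta_{q_0})}^{\calT_\calG^{\mathsf{rg},\varepsilon}\ltimes\calM}(\mathsf{Inf}\in\calF)=1$, and by Corollary~\ref{coro:ASomega} this holds iff every BSCC of $\mathsf{Graph}_{\calT_\calG^{\mathsf{rg},\varepsilon}\ltimes\calM}(B_2)$ reachable from $\alpha_{\#}(\mu\times\delta_{q_0})$ is good for $\calF$. This is decidable: the refined region graph is a rescaled region automaton with diagonal constraints, hence effectively constructible; its product with $\calM$, its BSCCs, the sets $F_C$, and the goodness test (a few reachability queries in a finite graph, per the definition of a good BSCC) are all computable. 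For item~(2), Theorem~\ref{theo:titi} gives
\[
\Prob_{\mu\times\delta_{q_0}}^{\calT_\calG\ltimes\calM}(\mathsf{Inf}\in\calF)=\sum_{C\in\mathsf{Good}^{B_2}_{\calT_\calG^{\mathsf{rg},\varepsilon}\ltimes\calM}(\calF)}\Prob_{\mu\times\delta_{q_0}}^{\calT_\calG\ltimes\calM}\big(\F\,\alpha_\calM^{-1}(C)\big),
\]
a \emph{finite} sum, so it suffices to approximate each term up to $\varepsilon/N$ with $N=|\mathsf{Good}^{B_2}_{\calT_\calG^{\mathsf{rg},\varepsilon}\ltimes\calM}(\calF)|$. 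Since $\calT_\calG\ltimes\calM$ is decisive w.r.t. the $\alpha_\calM$-closed set $\alpha_\calM^{-1}(C)$, I would invoke the approximation scheme of Subsection~\ref{subsec:approx-reach}: the avoid-set $\widetilde{\alpha_\calM^{-1}(C)}$ is again $\alpha_\calM$-closed and is computable from the finite graph $\mathsf{Graph}_{\calT_\calG^{\mathsf{rg},\varepsilon}\ltimes\calM}(B_2)$ (it is the $\alpha_\calM$-preimage of the abstract states from which $C$ is unreachable), so the cylinders $\Cyl(S_1',\dots,S_1',\alpha_\calM^{-1}(C))$ and $\Cyl(\neg\alpha_\calM^{-1}(C),\dots,\neg\alpha_\calM^{-1}(C),\widetilde{\alpha_\calM^{-1}(C)})$ are finite unions of cylinders over states-and-refined-regions; their probabilities, hence $p_n^{\mathsf{Yes}}$ and $p_n^{\mathsf{No}}$, can be approximated to any precision thanks to the numerical amenability of $\mu$ w.r.t. $\calG$, and the scheme terminates by Proposition~\ref{prop-approx-reach}. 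Combining the approximations gives the claimed arbitrary approximation of $\Prob_\mu^{\calT_\calG}(\calM)=\Prob_{\mu\times\delta_{q_0}}^{\calT_\calG\ltimes\calM}(\mathsf{Inf}\in\calF)$.

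The main obstacle is not the abstract transfer — that reduces to bookkeeping once one sees the product with $\calM$ as a GSMP with refined control, so that Propositions~\ref{prop:attractorSound} and~\ref{coro:DecSound} become applicable — but the effectiveness ingredients of item~(2): one must check that the avoid-sets arising in the scheme are $\alpha_\calM$-closed and computable from the finite abstract graph, and that the finite sums and integrals produced by enumerating region cylinders are exactly what the ``numerically amenable'' assumption (stated in the footnote over sequences of states and refined regions) lets one approximate. The countdown-clock semantics makes the kernel a product of Dirac factors (inherited clocks) and density factors (freshly sampled clocks), so these cylinder probabilities are integrals over lower-dimensional region cells, and one should be careful that the amenability hypothesis is phrased precisely enough to cover them.
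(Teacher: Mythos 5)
Your proposal is correct and follows essentially the same route as the paper: the paper's own proof simply notes that the argument of Corollaries~\ref{coro:staResults} and~\ref{coro:staOneClockResults} carries over because $\calT_\calG\ltimes\calM$ is again a GSMP with no fixed-delay events, which is precisely your key observation, after which the attractor, $(\dag)$, decisiveness and soundness at the product level follow from Propositions~\ref{prop:attractorSound} and~\ref{coro:DecSound} exactly as you argue. The only (immaterial) difference is that you derive decisiveness of the concrete product first and soundness second, whereas the STA template does it the other way around; both orderings rest on the same two propositions.
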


\begin{proof}
  Again, the proof is similar to the ones of
  Corollaries~\ref{coro:staResults}
  and~\ref{coro:staOneClockResults}. We just notice that it is obvious
  that if $\calT_\calG$ is a GSMP with no fixed-delay events, then so
  is $\calT_\calG\ltimes\calM$.
\end{proof}

\begin{rk}
    We believe our approach gives new hints into the approximate
    quantitative model-checking of GSMPs, for which, up to our
    knowledge, only few results are known. For instance
    in~\cite{AB06,BA07}, the authors approximate the probability of
    until formulas of the form ``the system reaches a target before
    time $T$ within $k$ discrete events, while staying within a set of
    safe states'' (resp. ``the system reaches a target while staying
    within a set of safe states'') for GSMPs (resp. a restricted class
    of GSMPs), and study numerical aspects. However one can notice
    that the contributions of~\cite{AB06,BA07} are more the
    mathematical analysis of integral equations that need to be solved
    than convergence of approximation schemes.

    Our approach permits to have approximation algorithms (with an
    arbitrary precision) for reachability or until properties for
    single-ticking GSMPs, or time-bounded reachability or until
    properties for the class of GSMPs with no cycle of immediate
    events (immediate events are fixed-delay events with delay
    $0$).\footnote{We recall the discussion on non-Zeno real-time
      systems of Section~\ref{nonzeno} (page~\pageref{nonzeno}), and
      realize that such GSMPs are almost-surely non-Zeno (see
      Appendix~\ref{sec:appli}).}  The numerical aspects in our
    computations can be dealt with as in~\cite{AB06,BA07}.
\end{rk}

\subsection{Stochastic time Petri nets} \label{subsec:stpn} As a last
instantiation of our general framework, we briefly discuss a
particular class of general semi-Markov processes, namely GSMPs
induced by \emph{stochastic time Petri nets}
(STPNs)~\cite{MCB84}. Explaining how STPNs fit in our framework is
interesting on its own since, on the one hand they admit a simpler
abstraction than GSMPs, and on the other hand it allows us to compare
to existing work on the transient analysis of
STPNs~\cite{HPRV-peva12,PHV16}.  STPNs form a probabilistic and timed
extension of the well-known Petri nets. These models are natural to
represent concurrent systems in which events have random durations. An
STPN is defined by a finite set of places, and a finite set of
transitions, each equipped with a probability distribution over
$\IR_{\geq 0}$. The transitions in STPNs correspond to events in
GSMPs. Now, the set of enabled events (here enabled transitions), is
determined by the current marking. As for standard Petri nets, a
\emph{marking} maps places to natural numbers, representing the number
of tokens in each place of the net. A transition is then
\emph{enabled} when the marking contains at least one token per input
place. The difference with standard Petri nets lies in the choice of
which transition will fire and when. When a transition has just fired,
one token is consumed in each input place, one is added to each output
place, and for every newly enabled transition, a delay is fired
according to its associated probability distribution. The transition
with minimum delay fires after this delay elapses, possibly disabling
transitions or enabling new ones. Given an initial marking, the
semantics of an STPN is thus an uncountable stochastic transition
system where the states are tuples consisting of a marking, and a time
to fire (the remaining delay) for each enabled transition.

  In order to fit in our framework, we consider STPNs with the
  following two restrictions, as in~\cite{HPRV-peva12,PHV16}: on the
  one hand, the underlying Petri net is bounded, \emph{i.e.}  the
  number of reachable markings is finite; and on the other hand, the
  stochastic process defined by the STPN is Markov regenerative. Let
  us explain in details this last assumption. When using arbitrary
  distributions to model the random duration of the transitions in the
  STPN, the remaining delays of transitions that stay enabled (when a
  transition just fired) have to be stored. Indeed, they will be
  compared with newly sampled durations to determine the next minimum
  delay, and thus impact the time to elapse as well as the next
  transition to fire. A \emph{regeneration point} in the stochastic
  process defined by an STPN is a state for which the remaining
  duration of each enabled transition is non-zero only for transitions
  with exponentially distributed durations~\cite{HPRV-peva12}
(in~\cite{PHV16}, the condition is slightly more general, but
    more technical to explain, hence we focus on~\cite{HPRV-peva12}).
  For such states, the non-zero remaining durations can indeed be
  forgotten and re-sampled using the same exponential distribution
  without altering the STPN semantics, thanks to the memoryless
  property of exponential distributions. An STPN is said to be
  \emph{Markov regenerative} when almost surely regeneration points
  are encountered infinitely often with probability $1$. Under our
  assumptions of boundedness of the Petri net, and Markov
  regeneration, clearly enough the STPN admits a finite attractor,
  namely the set of its regeneration points.

  To apply our results to stochastic processes generated by STPNs, we
  first identify a finite state abstraction which will happen to be a
  sound abstraction. We let $\calT_1$ be the stochastic process
  defined by an STPN, and define a finite-state Markov chain
  $\calT_2$, that corresponds to the state-class graph (see
  \emph{e.g.}~\cite{HPRV-peva12}) equipped with uniform discrete
  distributions. More precisely, the states of the abstraction
  $\calT_2$ consist of state classes, that are the equivalent of
  regions for time Petri nets (we consider state classes just before
  sampling newly enabled transitions): state classes gather
  configurations with same marking and same ordering on remaining
  delays for transitions. Given that Petri nets are bounded, and their
  set of transitions is finite, the state space of $\calT_2$ is
  finite. Some states correspond to regeneration points: those in
  which the only transitions with non-zero remaining delay are
  exponentially distributed; in that case we assume w.l.o.g. that they
  will be resampled, which allows one to describe such a state in
  $\calT_2$ using only its marking. Further, writing $\alpha$ for the
  abstraction, to build $\calT_2$, there is a transition from state
  $s$ to state $t$ as soon as there exists a transition in $\calT_1$
  between some state of $\alpha^{-1}(s)$ and $\alpha^{-1}(t)$. If
  $A = \{s_1, \cdots, s_n\}$ denotes the set of regeneration points,
  it is important to realize that under this abstraction $\alpha$, for
  every $s_i \in A$, $\alpha^{-1}(s_i) = s_i$. Due to the hypothesis
  that regeneration points are encountered almost-surely infinitely
  often, the set $A$ is an attractor both in the concrete stochastic
  process, and in the finite-state abstraction. Moreover, for every
  $s_i \in A$ and every $\alpha$-closed set $B$ of the concrete
  stochastic process $\calT_1$ such that there is a path from $s_i$ to
  $B$, we let $k_{i,B}$ be its length and $p_{i,B}$ be its
  probability, so that
  $\Prob^{\calT_1}_{\delta_{s_i}}(\F[\leq k_{i,B}] B)\geq
  p_{i,B}$. Letting $k = \max_{i,B} k_{i,B}$ and
  $p = \min_{i,B} p_{i,B}$, we obtain that for every $s_i \in A$ and
  every $\alpha$-closed set $B$, either
  $\Prob^{\calT_1}_{s_i}(\F[\leq k] B)\geq p$, or
  $\Prob^{\calT_1}_{s_i}(\F[\leq k] B) =0$.  Therefore, we can apply
  Proposition~\ref{prop:attractorSound} (see
  page~\pageref{prop:attractorSound}), to derive that $\calT_1$ is
  decisive w.r.t. every $\alpha$-closed set. Further, by
  Proposition~\ref{coro:DecSound}, we obtain that $\calT_2$ is a sound
  $\alpha$-abstraction of $\calT_1$.

  Similarly to the general case of GSMPs, we can thus conclude that
  the approximate quantitative model checking of STPNs can be done,
  provided numerical computations are amenable.

  In~\cite{PHV16}, the authors provide a technique to perform the
  quantitative verification of time-bounded until properties in
  bounded STPNs that are Markov regenerative, with the assumption of a
  bound on the number of transitions between regeneration points (we
  call it strong Markov regenerative). As discussed above, we can
  relax the hypothesis on the bounded number of transitions between
  regeneration points, since the mere Markov regeneration property
  suffices to guarantee the existence of an attractor. A simple
  criterion to ensure Markov regeneration is to assume that
  every cycle in the state class graph contains a regeneration point
  (it is in fact equivalent to the strong Markov regeneration
  property). Note also that already the (weak) Markov regeneration
  implies that the stochastic process is almost-surely non-Zeno (since
  almost-surely states with exponentially distributed events are
  visited infinitely often), and therefore the approximation schemes
  for time-bounded reachability or until properties apply. For a fair
  comparison with the work of~\cite{HPRV-peva12,PHV16}, note however
  that their contribution focuses on \emph{efficient} numerical
  computations of probabilities for time-bounded reachability or until
  properties. In contrast, we totally ignore the efficiency of
  probability computations and focus on sufficient conditions that
  enable our general framework to apply.

\section{A guided tour of STSs}
\label{sec:guide}

We now give an overview of the results presented in this paper. For
improving readability, not all precise statements are listed. For
instance, we omit the results which assume a fixed initial
distribution.  Also, few notations are borrowed from the paper, yet
the global picture is almost self-contained.

The guide should be read as follows. Given an STS $\calT$ and a
property $\varphi$, Figures~\ref{fig:qualitative}
and~\ref{fig:quantitative} provide the assumptions on $\calT$ to be
able to perform the qualitative or quantitative analysis of $\varphi$
on $\calT$. Note that when we consider abstractions $\calT_1
\xrightarrow{\alpha} \calT_2$, then we assume $\calT_1 = \calT$.
Then, Figures~\ref{fig:properties},~\ref{fig:transfer}
and~\ref{fig:abstraction} summarize the relationships between the various
notions. They should be used to know how to prove the properties that
are expected of the model, either directly or via an abstraction
(which needs to be designed).

\begin{figure}[h]
  \pgfdeclarelayer{background}
    \pgfsetlayers{background,main}
      \begin{tikzpicture}
        \node (hyp) {$\left.\begin{tabular}{r} $\calT$ satisfying
            decisiveness properties \\ $\varphi$ (repeated)
            reach. property \end{tabular}\right\}$};

        \node (conc) [right=of hyp,xshift=.5cm] {\begin{tabular}{l}
            simple $0$-reachability property on $\calT$ \end{tabular}}; 

        \draw [decorate, decoration={snake,amplitude=.05cm},->] (hyp) -- (conc) node
        [midway,below] {{\footnotesize Sec.~\ref{subsec:qual-basic}}};

      \begin{pgfonlayer}{background}
        \node[fill=LemonChiffon,inner sep=0pt,rounded
        corners,fit=(hyp) (conc)] {};
      \end{pgfonlayer}
      \end{tikzpicture}

      \medskip
      \begin{tikzpicture}
        \node (hyp) {$\left.\begin{tabular}{r} $\calT_1
            \xrightarrow{\alpha} \calT_2$ sound and complete abst. \\
            $\calT_1$ satisfying decisiveness properties \\
            w.r.t. $\alpha$-closed sets \\
            $\varphi$ (repeated)
            reach. propert
          \end{tabular}\right\}$};
        \node (conc) [right=of hyp,xshift=.5cm]{\begin{tabular}{l}
            simple $0$-reachability property on $\calT_2$ 
          \end{tabular}};
        \draw [decorate, decoration={snake,amplitude=.05cm},->] (hyp) -- (conc) node
        [midway,below]  {{\footnotesize Sec.~\ref{subsec:qual-basic-abs}}};
      \begin{pgfonlayer}{background}
        \node[fill=LemonChiffon,inner sep=0pt,rounded
        corners,fit=(hyp) (conc)] {};
      \end{pgfonlayer}
      \end{tikzpicture}

      \medskip
      \begin{tikzpicture}
        \node (hyp) {$\left.\begin{tabular}{r} $\calT$ DMC with
        finite attractor \\ $\varphi$ given by  (det.) automaton $\calM$ \end{tabular}\right\}$};

  \node (conc) [right=of hyp,xshift=.5cm] {\begin{tabular}{l}
      almost-sure reachability property in $\calT \ltimes \calM$ \\ of
      states given by the
      abstract graph \\ of the attractor of $\calT \ltimes
      \calM$ \end{tabular}};

        \draw [decorate, decoration={snake,amplitude=.05cm},->] (hyp) -- (conc) node
        [midway,below] {{\footnotesize Sec.~\ref{subsec:qual-dmc}}};
      \begin{pgfonlayer}{background}
        \node[fill=LemonChiffon,inner sep=0pt,rounded
        corners,fit=(hyp) (conc)] {};
      \end{pgfonlayer}
      \end{tikzpicture}

      \medskip
      \begin{tikzpicture}
        \node (hyp) {$\left.\begin{tabular}{r}
        $\calT_1$ STS and  $\calT_1 \xrightarrow{\alpha} \calT_2$ abst. \\
        $\calT_2$ DMC with finite attractor \\
        $\varphi$ given by  (det.) automaton $\calM$ \\
        $\calT_1 \ltimes \calM \xrightarrow{\alpha_\calM} \calT_2
        \ltimes \calM$ sound abst.
      \end{tabular}\right\}$};

  \node (conc) [right=of hyp,xshift=.5cm] {\begin{tabular}{l}
      almost-sure reachability property in $\calT_1 \ltimes \calM$ \\ of
      some states given by the abstract graph \\ of the attractor of
      $\calT_2 \ltimes \calM$ \end{tabular}};

        \draw [decorate, decoration={snake,amplitude=.05cm},->] (hyp) -- (conc) node
        [midway,below] (A) {{\footnotesize Sec.~\ref{subsec:qual-abs}}};
      \begin{pgfonlayer}{background}
        \node[fill=LemonChiffon,inner sep=0pt,rounded
        corners,fit=(hyp) (conc)] {};
      \end{pgfonlayer}
      \end{tikzpicture}
      \caption{Qualitative analysis
        : given $\calT$ an STS and $\varphi$ a property, decide
        whether $\Prob^\calT(\varphi) =0$ or $=1$. $\calT$ is replaced
        with $\calT_1$ in case of an abstraction.  The edge
        $\tikz{\draw [decorate, decoration={snake,amplitude=.05cm},->]
          (0,0) -- (1,0);}$ reads ``amounts
        to''. \label{fig:qualitative}}
\end{figure}

\begin{figure}[h]

  \pgfdeclarelayer{background}
    \pgfsetlayers{background,main}
      \begin{tikzpicture}
        \node (hyp) {$\left.\begin{tabular}{r} $\calT$ satisfying
            decisiveness properties \\ $\varphi$ (repeated)
            reach. property \end{tabular}\right\}$};

        \node (conc) [right=of hyp,xshift=.5cm] {\begin{tabular}{c}
            approx. scheme \end{tabular}}; 

        \draw [decorate, decoration={snake,amplitude=.05cm},->] (hyp) -- (conc) node
        [midway,below] (A) {\begin{tabular}[t]{c }{\footnotesize
              Sec.~\ref{subsec:approx-reach}} \\[-.1cm] {\footnotesize
              and~\ref{subsec:quant_repeated_reach}} \end{tabular}};
      \begin{pgfonlayer}{background}
        \node[fill=LemonChiffon,inner sep=0pt,rounded
        corners,fit=(hyp) (conc) (A)] {};
      \end{pgfonlayer}
      \end{tikzpicture}

      \medskip
      \begin{tikzpicture}
        \node (hyp) {$\left.\begin{tabular}{c} $\calT$ DMC with
        finite attractor \\ $\varphi$ given by (det.) automaton $\calM$ \end{tabular}\right\}$};

        \node (conc) [right=of hyp,xshift=.5cm] {\begin{tabular}{l}
            approx. scheme on $\calT \ltimes \calM$ applied to a \\
            reach. property given by the
          abstract graph \\ of the attractor of $\calT \ltimes \calM$ \end{tabular}}; 

        \draw [decorate, decoration={snake,amplitude=.05cm},->] (hyp) -- (conc) node
        [midway,below] {{\footnotesize Sec.~\ref{subsec:quant-attractor}}};
      \begin{pgfonlayer}{background}
        \node[fill=LemonChiffon,inner sep=0pt,rounded
        corners,fit=(hyp) (conc)] {};
      \end{pgfonlayer}
      \end{tikzpicture}

\medskip
      \begin{tikzpicture}
        \node (hyp) {$\left.\begin{tabular}{r}
        $\calT_1$ STS and  $\calT_1 \xrightarrow{\alpha} \calT_2$ \\
        $\calT_2$ DMC with finite attractor \\
        $\varphi$ given by (det.) automaton $\calM$ \\ 
        $\calT_1 \ltimes \calM \xrightarrow{\alpha_\calM} \calT_2
        \ltimes \calM$ sound abst.        
      \end{tabular}\right\}$};

    \node (conc) [right=of hyp,xshift=.5cm] {\begin{tabular}{l}
        approx. scheme on $\calT_1 \ltimes \calM$
        applied to  a \\ reach. 
        property given by  the abstract graph  \\ of the attractor of $\calT_2
        \ltimes \calM$ \end{tabular}};

        \draw [decorate, decoration={snake,amplitude=.05cm},->] (hyp) -- (conc) node
        [midway,below] {{\footnotesize Sec.~\ref{subsec:quantMullerAbstr}}};
      \begin{pgfonlayer}{background}
        \node[fill=LemonChiffon,inner sep=0pt,rounded
        corners,fit=(hyp) (conc)] {};
      \end{pgfonlayer}
      \end{tikzpicture}

\medskip
      \begin{tikzpicture}
        \node (hyp) {$\left.\begin{tabular}{r}
              $\calT$ RT-STS a.s. non-zeno \\
              time-bounded until or reach. property      
            \end{tabular}\right\}$};

        \node (conc) [right=of hyp,xshift=.5cm] {\begin{tabular}{l}
            natural attractors $A_\Delta$, \\
            hence approx. scheme \end{tabular}};

        \draw [decorate, decoration={snake,amplitude=.05cm},->] (hyp)
        -- (conc) node 
        [midway,below] {{\footnotesize Sec.~\ref{subsec:bounded}}};
      \begin{pgfonlayer}{background}
        \node[fill=LemonChiffon,inner sep=0pt,rounded
        corners,fit=(hyp) (conc)] {};
      \end{pgfonlayer}
      \end{tikzpicture}
      \caption{Quantitative analysis
        (Section~\ref{sec:quantitative}): given $\calT$ an STS and $\varphi$ a property, 
        compute (or approximate) $\Prob^\calT(\varphi)$. In case of the
        abstraction, $\calT_1 = \calT$. The edge $\tikz{\draw [decorate, decoration={snake,amplitude=.05cm},->] (0,0) --
        (1,0);}$ reads ``amounts to''. \label{fig:quantitative}}

\end{figure}

\begin{figure}[h]
    \pgfdeclarelayer{background}
    \pgfsetlayers{background,main}
  \begin{tikzpicture}
    \begin{scope}
      \node (decisive) {$\calT$ is $\D(\calB)$};
      \node (strongly_decisive) [right=of decisive] {$\calT$ is
        $\SD(\calB)$};
      \node (persistently_decisive) [right=of strongly_decisive]
      {$\calT$ is $\PD(\calB)$};
      \node (fair) [right=of persistently_decisive] {$\calT$ is
        $\fair(\calB)$};

      \draw [double=LemonChiffon,<->] (decisive) -- (strongly_decisive);
      \draw [double=LemonChiffon,<->] (strongly_decisive) --
      (persistently_decisive);
      \draw [double=LemonChiffon,->] (persistently_decisive) -- (fair);

      \node (attractor) [below=of decisive] {\begin{tabular}{c}
          $\calT$ DMC with \\ a finite  attractor \end{tabular}};
      \node (finite) [below=of attractor] {$\calT$ finite MC};

      \node (produit) [right=of attractor] {\begin{tabular}{c} $\calT
          \ltimes \calM$ DMC with \\ a finite
          attractor \end{tabular}};

      \draw [double=LemonChiffon,->] (attractor) -- (decisive) node [midway,left] (B)
      {$\calB = 2^S$};
      \draw [double=LemonChiffon,->] (finite) -- (attractor);
      \draw [double=LemonChiffon,->] (attractor) -- (produit) node[midway, below] {{\footnotesize Lem.~\ref{lemma:soundproduct}}};

      \begin{pgfonlayer}{background}
        \node[fill=LemonChiffon,inner sep=2pt,rounded
        corners,fit=(decisive)
        (strongly_decisive) (persistently_decisive) (produit) (finite)
        (fair) (attractor) (B)] {};
        \node[draw,dotted,inner sep=0pt,rounded corners,fit=(decisive)
        (strongly_decisive) (persistently_decisive)] {};
      \end{pgfonlayer}
    \end{scope}
  \end{tikzpicture}
  \caption{Properties of STS (Section~\ref{subsec:link} and
    Lemma~\ref{lemma:soundproduct})\label{fig:properties}}
\end{figure}

\begin{figure}[h]

    \pgfdeclarelayer{background}
    \pgfsetlayers{background,main}
    \begin{tikzpicture}
      \begin{scope}
        \node (hyp) {$\left.\begin{tabular}{r}
              $\calT_2$ DMC with finite attractor \\
       $\calT_1$ satisfying $(\dag)$ (cf
       page~\pageref{prop:attractorSound}) \\
       $\calT_1 \xrightarrow{\alpha} \calT_2$ abstraction
     \end{tabular}\right\}$};
   
   \node [right=of hyp,xshift=.5cm]  (conc) {\begin{tabular}{l} $\calT_1$ decisive
       w.r.t. $\alpha$-closed sets \end{tabular}};
   \draw [double=LemonChiffon,->] (hyp) -- (conc) node [midway,below] {{\footnotesize Prop.~\ref{prop:attractorSound}}};
   
   \node [below=of hyp,yshift=1cm] (hyp2) {$\left.\begin{tabular}{r}
         $\calT_2$ finite MC \\
         $\calT_1$ fair w.r.t. $\alpha$-closed sets \\
       $\calT_1 \xrightarrow{\alpha} \calT_2$ abstraction
       \end{tabular}\right\}$};
   
   \node [right=of hyp2,xshift=.5cm]  (conc2) {\begin{tabular}{l} $\calT_1$ decisive
       w.r.t. $\alpha$-closed sets \end{tabular}};
   \draw [double=LemonChiffon,->] (hyp2) -- (conc2) node [midway,below]
   {{\footnotesize Prop.~\ref{prop:fairness}}};
   
   \begin{pgfonlayer}{background}
     \node[fill=LemonChiffon,inner sep=0pt,rounded corners,fit=(hyp) (conc)
     (hyp2) (conc2)] {};
   \end{pgfonlayer}
 \end{scope}
\end{tikzpicture}

\bigskip
  \pgfdeclarelayer{background}
    \pgfsetlayers{background,main}
    \begin{tikzpicture}
      \begin{scope}
        \node (T2decisive) 
        {$\left.\begin{tabular}{r}$\calT_2$ decisive \\ $\calT_1 \xrightarrow{\alpha}
          \calT_2$ \textbf{sound} abstraction \end{tabular}\right\}$};
        \node (T1decisive) [right=of T2decisive,xshift=.5cm]
        {\begin{tabular}{r}$\calT_1$ decisive w.r.t. $\alpha$-closed
            sets \end{tabular}};
        
        \draw [double=LemonChiffon,->] (T2decisive) -- (T1decisive) node [midway,below]
        {{\footnotesize Prop.~\ref{coro:SoundDecisive}}};
        
        \node (attractor2) [below=of T2decisive,yshift=1cm] {$\left.\begin{tabular}{r}$A_2$ attractor of
            $\calT_2$ \\ $\calT_1 \xrightarrow{\alpha}
          \calT_2$ \textbf{sound} abstraction
          \end{tabular}\right\}$};
        \node (attractor1) [right=of attractor2,xshift=.5cm]
        {\begin{tabular}{r}$\alpha^{-1}(A_2)$ attractor of $\calT_1$ \end{tabular}};
        
        \draw [double=LemonChiffon,->] (attractor2) -- (attractor1) node
        [midway,below] (C)
        {{\footnotesize Prop.~\ref{lem:attr-via-sound}}};
      \end{scope}
      
      \begin{pgfonlayer}{background}
        \node[fill=LemonChiffon,inner sep=0pt,rounded
        corners,fit= (T2decisive) (T1decisive)
        (attractor2) (attractor1) (C)] {};
      \end{pgfonlayer}
    \end{tikzpicture}
    
\caption{Transfer of properties via abstractions \label{fig:transfer}}
\end{figure}

\begin{figure}[h]
{
  \pgfdeclarelayer{background}
    \pgfsetlayers{background,main}
  \begin{tikzpicture}
    \node (alpha) [right] {\begin{tabular}{l} $\calT_1
        \xrightarrow{\alpha} \calT_2$
      abstraction \\ $\calT_1 \ltimes \calM
      \xrightarrow{\alpha_\calM} \calT_2 \ltimes \calM$ abstraction
    \end{tabular}};
    
    \path (6,0) node [right] (danger) {\danger  \begin{tabular}[t]{c}
        Soundness/completeness of $\alpha$ does not imply \\
        soundness/completeness of $\alpha_\calM$! \end{tabular}};

      \begin{pgfonlayer}{background}
        \node[fill=magenta!10,inner sep=0pt,rounded
        corners,fit=(danger) (alpha)] {};
      \end{pgfonlayer}
  \end{tikzpicture}

\smallskip
  \textit{Condition for completeness:} \\
  \pgfdeclarelayer{background}
    \pgfsetlayers{background,main}
  \begin{tikzpicture}
    \node (decisive) [right=of finite,xshift=.5cm] {\begin{tabular}{c}
        $\calT_2$ decisive DMC \\ $\calT_1
        \xrightarrow{\alpha} \calT_2$ abstraction\end{tabular}};
    \node (complete) [right=of decisive,xshift=.5cm]
    {\begin{tabular}{c}$\calT_1 \xrightarrow{\alpha} \calT_2$
        complete abstraction
      \end{tabular}};

    \draw [double=LemonChiffon,->] (decisive) -- (complete) node [midway,below] (D)
    {{\footnotesize Lem.~\ref{lemma:completeness}}};

      \begin{pgfonlayer}{background}
        \node[fill=LemonChiffon,inner sep=0pt,rounded
        corners,fit= (decisive) (complete) (D)] {};
      \end{pgfonlayer}
\end{tikzpicture}

\medskip
\textit{Condition for soundness:} \\
  \pgfdeclarelayer{background2}
    \pgfsetlayers{background2,main}
\begin{tikzpicture}
  \node (decisive) {$\left.\begin{tabular}{r} $\calT_2$ DMC
        and  $\calT_1$ decisive \\
      w.r.t. $\alpha$-closed sets \\ $\calT_1 \xrightarrow{\alpha}
      \calT_2$ abstraction \end{tabular}\right\}$};
    \node (sound) [right=of decisive,xshift=.5cm]
    {\begin{tabular}{c}$\calT_1 \xrightarrow{\alpha} \calT_2$ sound abstraction
      \end{tabular}};

    \draw [double=LemonChiffon,->] (decisive) -- (sound) node [midway,below]
    {{\footnotesize Prop.~\ref{coro:DecSound}}};

      \begin{pgfonlayer}{background2}
        \node[fill=LemonChiffon,inner sep=0pt,rounded
        corners,fit=(decisive) (sound)] {};
      \end{pgfonlayer}
  \end{tikzpicture}
\caption{Completeness and soundness of abstractions} \label{fig:abstraction}}

\end{figure}

\section{Conclusion}
This paper deals with general stochastic transition systems (hence
possibly continuous state-space Markov chains). We defined abstract
properties of such stochastic processes, which allow one to design
general procedures for their qualitative or quantitative analysis.
The effectiveness of the approach requires some effectiveness assumption
on specific high-level formalisms that are used to describe the
stochastic process. We have demonstrated the effectiveness of the
approach on three classes of systems: stochastic timed automata,
generalized semi-Markov processes and stochastic time Petri nets. In
both cases, we recover known results; but our approach yields further
approximability results, which, up to our knowledge, are new.

We believe that, more importantly, we provide in this paper a
methodology to understand stochastic models from a verification and
algorithmics point-of-view.  Section~\ref{sec:guide} gives a
high-level description of our results, and of properties that should
be satisfied by the stochastic model in order to apply our
algorithms. In many cases, we showed that the hypotheses were really
necessary to get the expected results, by providing counter-examples
when the hypotheses are relaxed.

As future work, we plan to investigate new applications, such as
infinite-state systems that occur in parameterized
verification. Applying our results to stochastic hybrid systems is
very tempting. However, our approach heavily relies on the existence
of a sound finite or countable abstraction of the system. In case of
stochastic timed automata, we noticed that although a finite
time-abstract bisimulation always exists, decisiveness is not
guaranteed (see Example~\ref{Example:pacman}). It thus seems hard to
identifiy decisive subclasses of stochastic hybrid systems: most
often, the underlying hybrid system does not admit a finite or
countable time-abstract bisimulation~\cite{Hen96}. Nevertheless, as
mentionned at the end of Section~\ref{sec:quantitative}, one can
consider revisiting results on the time-bounded analysis of stochastic
hybrid systems (see \emph{e.g.}~\cite{CDKM11,SMA16}).
 
Also, we would like to adopt a similar generic approach for processes
with non-determinism like Markov decision processes, or even
stochastic two-player games.

Finally, let us mention that our approach could be interpreted in the
context of stochastic relations~\cite{doberkat07}, and that, for
instance, the pushforward $\alpha_{\#}$ corresponds to the Giry monad
applied to the abstraction $\alpha$. We thank an anonymous reviewer to
point us~\cite{doberkat07}, and believe this may inspire further work
to better understand abstractions of STSs.


\medskip\noindent \textbf{Acknowledgement.} We would like to warmly
thank the anonymous referees, who provided exceptionally detailed
reviews which greatly helped improving the paper.



\newpage
\appendix

\begin{center}
  {\huge\bfseries\sf --~Appendix~--}
\end{center}

Former results already stated in the core of the paper are put in a
box. New results are normally stated without box.

\section{Technical results of Section~\ref{sec:prelim}}

\subsection{Additional technical results for Subsection~\ref{section:PrelimMeasure}}

We discuss some properties on the probability measures that we defined
on paths of an STS.
While not essential for the global understanding of the paper, they
are useful in some of the coming proofs.

\medskip Recall that, if $s \in S$, the \emph{Dirac distribution over
  $s$}, denoted $\delta_s$, is defined for every measurable set $A$,
by $\delta_s(A) = 1$ if $s\in A$, and $\delta_s(A)=0$ otherwise.

Given any initial distribution $\mu$, we can decompose the probability
measure $\Prob^\calT_\mu$ into the various probability measures
$\Prob^\calT_{\delta_s}$ for $s \in S$.

\begin{lemma}
  For every $\varpi\in\calF_{\calT}$,
  \[
  \Prob_{\mu}^{\calT}(\varpi)=\int_{s_0\in S}
  \Prob_{\delta_{s_0}}^{\calT}(\varpi)\mu(\ud s_0)
  \]
\end{lemma}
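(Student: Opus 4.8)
The identity is a standard disintegration of the path measure over the initial state, and I would establish it via Dynkin's $\pi$--$\lambda$ theorem. Recall that the cylinders $\Cyl(A_0,\dots,A_n)$, $(A_i)_i \in \Sigma^{n+1}$, form a $\pi$-system (the intersection of two cylinders, after padding the shorter one with copies of $S$, is again a cylinder obtained by coordinatewise intersection) which generates $\calF_{\calT}$ by definition. So it suffices to prove the claim on cylinders and then to argue that the class of events satisfying it is a $\lambda$-system.

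First I would check the identity on a cylinder $\Cyl(A_0,\dots,A_n)$. From the inductive definition of $\Prob^{\calT}_\bullet$ applied to the Dirac distribution $\delta_{s_0}$, one gets
\[
\Prob^{\calT}_{\delta_{s_0}}(\Cyl(A_0,\dots,A_n)) = \mathds{1}_{A_0}(s_0)\cdot \Prob^{\calT}_{\kappa(s_0,\cdot)}(\Cyl(A_1,\dots,A_n)),
\]
which, as a function of $s_0$, is measurable (this is exactly the measurability underlying the iterated-integral formula for $\Prob^{\calT}_\mu(\Cyl(A_0,\dots,A_n))$ already displayed in Section~\ref{section:PrelimMeasure}, and it follows by induction on $n$ from the measurability of $\kappa(\cdot,A)$). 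Integrating this equality against $\mu$ and comparing with the inductive definition $\Prob^{\calT}_\mu(\Cyl(A_0,\dots,A_n)) = \int_{A_0}\Prob^{\calT}_{\kappa(s_0,\cdot)}(\Cyl(A_1,\dots,A_n))\,\ud\mu(s_0)$ yields the desired equality on cylinders.

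Next I would introduce
\[
\mathcal{D} = \Bigl\{\varpi \in \calF_{\calT} \ \Bigm|\ s_0 \mapsto \Prob^{\calT}_{\delta_{s_0}}(\varpi)\ \text{is measurable and}\ \Prob^{\calT}_{\mu}(\varpi) = \int_{s_0 \in S}\Prob^{\calT}_{\delta_{s_0}}(\varpi)\,\mu(\ud s_0)\Bigr\},
\]
and show it is a $\lambda$-system: $\Paths(\calT) \in \mathcal{D}$ since both sides equal $1$; if $\varpi \subseteq \varpi'$ are both in $\mathcal{D}$, then $\Prob^{\calT}_{\delta_{s_0}}(\varpi'\setminus\varpi) = \Prob^{\calT}_{\delta_{s_0}}(\varpi') - \Prob^{\calT}_{\delta_{s_0}}(\varpi)$ is measurable, and since all quantities are bounded by $1$ we may subtract under the integral to get $\varpi'\setminus\varpi \in \mathcal{D}$; and if $\varpi_n \uparrow \varpi$ with $\varpi_n \in \mathcal{D}$, then $\Prob^{\calT}_{\delta_{s_0}}(\varpi_n)\uparrow\Prob^{\calT}_{\delta_{s_0}}(\varpi)$ pointwise (continuity from below of each probability measure), so the limit is measurable and, by the monotone convergence theorem together with continuity from below of $\Prob^{\calT}_\mu$, $\varpi \in \mathcal{D}$. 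Since $\mathcal{D}$ contains the generating $\pi$-system of cylinders, Dynkin's theorem gives $\mathcal{D} \supseteq \calF_{\calT}$, i.e. $\mathcal{D} = \calF_{\calT}$, which is the claim.

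The only point requiring care is the \emph{measurability} of $s_0 \mapsto \Prob^{\calT}_{\delta_{s_0}}(\varpi)$, which must be carried along inside $\mathcal{D}$ for the integrals even to be well posed; fortunately it is built into the base case (cylinders, via the kernel axioms) and is preserved by the $\lambda$-system operations, so no extra work is needed beyond stating it explicitly.
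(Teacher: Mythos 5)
Your proof is correct and follows essentially the same route as the paper's: establish the identity on cylinders directly from the inductive definition, then extend to all of $\calF_{\calT}$ by a uniqueness-of-extension argument (the paper simply invokes ``uniqueness of the measure extension'' where you spell out the Dynkin $\pi$--$\lambda$ argument). Your version is in fact slightly more careful, since you explicitly carry the measurability of $s_0 \mapsto \Prob^{\calT}_{\delta_{s_0}}(\varpi)$ through the $\lambda$-system, a point the paper's proof leaves implicit.
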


\begin{proof}
  Observe that if the initial distribution is the Dirac distribution
  $\delta_s$ over state $s \in S$, then we have that
  \[\Prob_{\delta_s}^{\calT}(\Cyl(A_0,\ldots,A_n))=
  \begin{cases}
    0 & \text{if } s\notin A_0,\\
    \Prob_{\kappa(s,\cdot)}^{\calT}(\Cyl(A_1,\ldots,A_n)) &
    \text{otherwise.}
  \end{cases} 
  \]
  It follows that for every $\mu\in\Dist(S)$, we can write
  \[
  \Prob_{\mu}^{\calT}(\Cyl(A_0,\ldots,A_n))=\int_{s_0\in A_0}
  \Prob_{\delta_{s_0}}^{\calT}(\Cyl(A_0,\ldots,A_n)) \mu(\ud s_0)
  \]
  and thus, by uniqueness of the measure extension, for every
  $\varpi\in\calF_{\calT}$,
  \[
  \Prob_{\mu}^{\calT}(\varpi)=\int_{s_0\in S}
  \Prob_{\delta_{s_0}}^{\calT}(\varpi)\mu(\ud s_0).
  \]
  This concludes the proof of the lemma.
\end{proof}

\medskip Recall that given two probability distributions $\mu$ and
$\nu$ over some probability space $(S,\Sigma)$, $\mu$ and $\nu$ are
\emph{qualitatively equivalent} if for each $A\in\Sigma$,
$\mu(A)=0\Leftrightarrow\nu(A)=0$.
The next lemma establishes that two qualitatively equivalent initial
distributions yield two qualitatively equivalent distributions over
paths.

\begin{lemma}
  \label{lemma:equiv}
  Let $\mu$ and $\nu$ be two probability measures over
  $(S,\Sigma)$. If $\mu$ and $\nu$ are qualitatively equivalent, then
  $\Prob_\mu^{\calT}$ and $\Prob_\nu^\calT$ are also qualitatively
  equivalent.
\end{lemma}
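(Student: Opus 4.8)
The plan is to reduce everything to the decomposition $\Prob_{\mu}^{\calT}(\varpi)=\int_{s_0\in S}\Prob_{\delta_{s_0}}^{\calT}(\varpi)\,\mu(\ud s_0)$ established in the previous lemma, together with the measurability of the map $s\mapsto\Prob_{\delta_s}^{\calT}(\varpi)$. Concretely, for a fixed $\varpi\in\calF_{\calT}$ write $h_\varpi(s)=\Prob_{\delta_s}^{\calT}(\varpi)$. Once we know $h_\varpi$ is a nonnegative measurable function, the set $N_\varpi=\{s\in S\mid h_\varpi(s)>0\}=\bigcup_{k\ge 1}\{s\mid h_\varpi(s)\ge 1/k\}$ belongs to $\Sigma$, and since a nonnegative integral $\int_S h_\varpi\,\ud\mu$ vanishes exactly when $h_\varpi=0$ $\mu$-almost everywhere, we get $\Prob_{\mu}^{\calT}(\varpi)=0\iff\mu(N_\varpi)=0$, and likewise $\Prob_{\nu}^{\calT}(\varpi)=0\iff\nu(N_\varpi)=0$. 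Because $N_\varpi\in\Sigma$ and $\mu,\nu$ are qualitatively equivalent, $\mu(N_\varpi)=0\iff\nu(N_\varpi)=0$, and chaining these equivalences gives $\Prob_{\mu}^{\calT}(\varpi)=0\iff\Prob_{\nu}^{\calT}(\varpi)=0$, which is exactly the claim.

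It remains to justify that $h_\varpi$ is measurable for every $\varpi\in\calF_{\calT}$. First I would treat cylinders: the inductive definition of $\Prob^{\calT}_{\cdot}$ on cylinders gives $h_{\Cyl(A_0,\dots,A_n)}(s)=\mathds{1}_{A_0}(s)\cdot\Prob^{\calT}_{\kappa(s,\cdot)}(\Cyl(A_1,\dots,A_n))$, and an induction on $n$, using that integration against the transition kernel $\kappa(s,\cdot)$ maps a measurable function of the next state to a measurable function of $s$, shows this is measurable. Then I would invoke the $\pi$--$\lambda$ theorem on $\mathcal{D}=\{\varpi\in\calF_{\calT}\mid h_\varpi\text{ is measurable}\}$: it contains $\Paths(\calT)$, it contains the cylinders, it is closed under complementation since $h_{\varpi^c}=1-h_\varpi$, and it is closed under countable disjoint unions because each $\Prob^{\calT}_{\delta_s}$ is $\sigma$-additive and a pointwise series of measurable functions is measurable. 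As the cylinders form a $\pi$-system generating $\calF_{\calT}$, we conclude $\mathcal{D}=\calF_{\calT}$. Alternatively, one may simply reuse the measurability argument already carried out for the first item of Lemma~\ref{lem:Btilde}.

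The only genuinely delicate point is this measurability of $s\mapsto\Prob^{\calT}_{\delta_s}(\varpi)$, where the structure of the Markov kernel really enters; once it is available, the remainder is a one-line manipulation of integrals together with the definition of qualitative equivalence, and everything else is routine.
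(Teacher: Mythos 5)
Your proof is correct, but it follows a genuinely different route from the paper's. The paper first reduces the claim to cylinders (arguing that complements of cylinders are finite unions of cylinders and that countable unions of cylinders can be made disjoint), and then, for a fixed cylinder $\Cyl(A_0,\dots,A_n)$, runs an explicit backward induction: it introduces the sets $B_{n-1}=\kappa(\cdot,A_n)^{-1}(\intervaloc{0,1})\cap A_{n-1}$ and $B_i=\kappa(\cdot,B_{i+1})^{-1}(\intervaloc{0,1})\cap A_i$, proves they are measurable, shows that $\Prob^{\calT}_{\mu}(\Cyl(A_0,\dots,A_n))=0$ iff $\mu(B_0)=0$, and concludes from $\mu(B_0)=0\iff\nu(B_0)=0$. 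Your argument instead disintegrates $\Prob^{\calT}_{\mu}$ over the initial state and reduces the whole question to the single measurable set $N_\varpi=\{s\mid \Prob^{\calT}_{\delta_s}(\varpi)>0\}$, with the measurability of $s\mapsto\Prob^{\calT}_{\delta_s}(\varpi)$ supplied by an induction on cylinders followed by a $\pi$--$\lambda$ argument. On cylinders the two proofs essentially coincide (your $N_{\Cyl(A_0,\dots,A_n)}$ is the paper's $B_0$), but what your approach buys is a uniform treatment of \emph{every} $\varpi\in\calF_{\calT}$: the paper's reduction step is delicate, since null-equivalence of two measures on a generating family does not in general propagate to the generated $\sigma$-algebra (the class of events on which the equivalence holds is closed under countable unions but not under complements, so it is neither a $\lambda$-system nor a monotone class). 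Your disintegration sidesteps this entirely, at the price of having to establish measurability of $h_\varpi$ for arbitrary measurable $\varpi$ --- which you do, and which is in any case implicitly needed for the preceding decomposition lemma to make sense.
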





\begin{proof}
  We have to show that for each $\pi\in\calF_T$,
  $\Prob_{\mu}^{\calT}(\pi)=0\Leftrightarrow\Prob_{\nu}^{\calT}(\pi)=0$. Since
  the complement of each cylinder is a finite union of cylinders
  and since each denumerable unions of cylinders can be written as a
  denumerable disjoint union of cylinders, it suffices to show this
  for each cylinder $\Cyl(A_0,\ldots,A_n)$ with
  $A_0,\ldots,A_n\in\Sigma$. We have to show that for each
  $A_0,\ldots,A_n\in\Sigma$,
\[
\Prob_{\mu}^{\calT}(\Cyl(A_0,\ldots,A_n))=0\Leftrightarrow\Prob_{\nu}^{\calT}(\Cyl(A_0,\ldots,A_n))=0.
\]
It should be observed that, by symmetry, it suffices to show one of the implications. First, assume $n=0$ and fix $A_0\in\Sigma$. Then from the definition of $\Prob_{\mu}^{\calT}$ and $\Prob_{\nu}^{\calT}$ and from the hypothesis, we get that:
\[\Prob_{\mu}^{\calT}(\Cyl(A_0))=0 \Leftrightarrow \mu(A_0)=0 \Leftrightarrow\nu(A_0)=0 \Leftrightarrow\Prob_{\nu}^{\calT}(\Cyl(A_0))=0. \]
Now consider $n=1$ and fix $A_0,A_1\in\Sigma$. Suppose that $\Prob_{\mu}^{\calT}(\Cyl(A_0,A_1))=0$, i.e. from the definition:
\begin{equation}\label{eq:EquivProbaFirstCase}
\int_{s_0\in A_0} \kappa(s_0, A_1)\mu(\ud s_0)=0.
\end{equation}
Write $B=\lbrace s_0\in A_0\mid \kappa(s_0,A_1)>0\rbrace$. We can write $B=\kappa(\cdot,A_1)^{-1}(\intervaloc{0,1})\cap A_0$ which is in $\Sigma$ from the hypotheses over $\kappa$. From~\eqref{eq:EquivProbaFirstCase}, we can easily check that $\mu(B)=0$, which implies that $\nu(B)=0$ and thus
\[\int_{s_0\in A_0} \kappa(s_0, A_1)\nu(\ud s_0)=0.\]
Using again the definition, it follows that  $\Prob_{\nu}^{\calT}(\Cyl(A_0,A_1))=0$. Now, assume that $n\ge 2$, fix $A_0,\ldots, A_n\in\Sigma$ and assume that $\Prob_{\mu}^{\calT}(\Cyl(A_0,\ldots, A_n))=0$. Remember that
\begin{alignat}{7}
{} & \Prob_{\mu}^{\calT}(\Cyl(A_0,\ldots, A_n))=\notag\\
{} & \quad \int_{s_0\in A_0}\Big(\int_{s_1\in A_1}\dots\Big(\int_{s_{n-1}\in A_{n-1}} \kappa(s_{n-1}, A_n)\kappa(s_{n-2},\ud s_{n-1})\Big)\dots \kappa(s_0, \ud s_1)\Big) \mu(\ud s_0).\notag
\end{alignat}
We inductively define:
\[\begin{cases}
B_{n-1} = \kappa(\cdot, A_n)^{-1}(\intervaloc{0,1})\cap A_{n-1} & \\
B_i = \kappa(\cdot, B_{i+1})^{-1}(\intervaloc{0,1})\cap A_i & \forall 0\leq i\leq n-2.
\end{cases} \]
From the hypotheses over $\kappa$, it is easily seen that for each $0\leq i\leq n-1$, $B_i\in \Sigma$. Let us consider the value $\int_{s_{n-1}\in A_{n-1}} \kappa(s_{n-1}, A_n)\kappa(s_{n-2},\ud s_{n-1})$. From the definition of $B_{n-1}$, it holds that
\begin{alignat}{7}
\int_{s_{n-1}\in A_{n-1}} \kappa(s_{n-1}, A_n)\kappa(s_{n-2},\ud s_{n-1}) & = \int_{s_{n-1}\in B_{n-1}} \kappa(s_{n-1}, A_n)\kappa(s_{n-2},\ud s_{n-1})\notag\\
{} & = \Prob_{\kappa(s_{n-2},\cdot)}^{\calT}(\Cyl(B_{n-1}, A_n)).\notag
\end{alignat}
We thus get that
\begin{alignat}{7}
{} & \Prob_{\mu}^{\calT}(\Cyl(A_0,\ldots, A_n))=\notag\\
{} & \quad \int_{s_0\in A_0}\dots\Big(\int_{s_{n-2}\in A_{n-2}}  \Prob_{\kappa(s_{n-2},\cdot)}^{\calT}(\Cyl(B_{n-1}, A_n))\kappa(s_{n-3},\ud s_{n-2})\Big)\dots \mu(\ud s_0).\notag
\end{alignat}
We prove the two following statements: for each $0\leq i\leq n-2$,
\begin{enumerate}[(a)]
\item $\lbrace s_i\in S\mid \Prob_{\kappa(s_i,\cdot)}^{\calT}(\Cyl(B_{i+1},\ldots,B_{n-1},A_n))>0\rbrace\cap A_i= B_i$ and
\item \[\int_{s_i\in A_i} \Prob_{\kappa(s_i,\cdot)}^{\calT}(\Cyl(B_{i+1},\ldots,B_{n-1},A_n)) \kappa(s_{i-1},\ud s_i)=\Prob_{\kappa(s_{i-1},\cdot)}^{\calT}(\Cyl(B_{i},\ldots,B_{n-1},A_n)),\]
\end{enumerate}
where if $i=0$, $\kappa(s_{i-1},\cdot)$ will stand for the initial distribution $\mu$. Point (a) is here in order to establish that the sets $\lbrace s_i\in S\mid \Prob_{\kappa(s_i,\cdot)}^{\calT}(\Cyl(B_{i+1},\ldots,B_{n-1},A_n))>0\rbrace\cap A_i$ are measurable, and point (b) aims at reducing our integrals to sets whose images have positive values. It should be observed that the second point is an immediate consequence of the first point. We thus only need to prove point (a). We do this by induction over $i$. First, if $i=n-2$, we show that
\[\lbrace s_{n-2}\in S\mid \Prob_{\kappa(s_{n-2},\cdot)}^{\calT}(\Cyl(B_{n-1},A_n))>0\rbrace=\lbrace s_{n-2}\in S\mid \kappa(s_{n-2}, B_{n-1})>0\rbrace \]
which will ensure that (a) is satisfied. First assume that $s_{n-2}\in
S$ is such that 
\[
\Prob_{\kappa(s_{n-2},\cdot)}^{\calT}(\Cyl(B_{n-1},A_n))>0.
\]
 Towards a contradiction, assume that $\kappa(s_{n-2}, B_{n-1})=0$. Then it holds that
\[0=\kappa(s_{n-2}, B_{n-1}) =\Prob_{\kappa(s_{n-2},\cdot)}^{\calT}(\Cyl(B_{n-1})) \geq \Prob_{\kappa(s_{n-2},\cdot)}^{\calT}(\Cyl(B_{n-1},A_n))>0 \]
which is the needed contradiction. Now assume that $\kappa(s_{n-2},B_{n-1})>0$. Then from the definitions of $B_{n-1}$ and of $\Prob_{\kappa(s_{n-2},\cdot)}^{\calT}$, and from classical properties on integrals, it is straightforward to check that the second inclusion holds. Now suppose that point (a) holds for each $i+1\leq j\leq n-2$ for some $i\geq 0$, and let us show that it is still true for $i$. As before, it suffices to establish that
\[\lbrace s_i\in S\mid \Prob_{\kappa(s_i,\cdot)}^{\calT}(\Cyl(B_{i+1},\ldots,B_{n-1},A_n))>0\rbrace=\lbrace s_i\in S\mid \kappa(s_i, B_i)>0\rbrace. \]
The first inclusion can be verified just like in the first case. Now assume that $\kappa(s_i,B_{i+1})>0$. We know that
\begin{alignat}{7}
{} & \Prob_{\kappa(s_i,\cdot)}^{\calT}(\Cyl(B_{i+1},\ldots,B_{n-1},A_n)) =\notag\\ {} & \quad\int_{s_{i+1}\in B_{i+1}} \Prob_{\kappa(s_{i+1},\cdot)}^{\calT}(\Cyl(B_{i+2},\ldots,B_{n-1},A_n))\kappa(s_i,\ud s_{i+1}).\notag
\end{alignat}
Using the induction hypothesis over $i+1$, we get that for each
$s_{i+1}\in B_{i+1}$, 
\[
\Prob_{\kappa(s_{i+1},\cdot)}^{\calT}(\Cyl(B_{i+2},\ldots,B_{n-1},A_n))>0.
\]
And since $\kappa(s_i,B_{i+1})>0$, this induces that
\[\Prob_{\kappa(s_i,\cdot)}^{\calT}(\Cyl(B_{i+1},\ldots,B_{n-1},A_n))>0
\] which concludes that point (a) is satisfied. Hence from points (a)
and (b), we get that \begin{alignat}{7}
  \Prob_{\mu}^{\calT}(\Cyl(A_0,\ldots, A_n)) & = \Prob_{\mu}^{\calT}(\Cyl(B_0,\ldots, B_{n-1}, A_n))\notag\\
  & = \int_{s_0\in B_0}
  \Prob_{\kappa(s_0,\cdot)}^{\calT}(\Cyl(B_1,\ldots, B_{n-1},
  A_n))\mu(\ud s_0).\notag \end{alignat} Since $B_0=\lbrace s_0\in
A_0\mid \Prob_{\kappa(s_0,\cdot)}^{\calT}(\Cyl(B_1,\ldots, B_{n-1},
A_n))>0\rbrace$ and since $\Prob_{\mu}^{\calT}(\Cyl(A_0,\ldots,
A_n))=0$, it follows that $\mu(B_0)=0$. From the hypothesis, we thus
get that $\nu(B_0)$. Now observing that we can prove similarly that
$\Prob_{\nu}^{\calT}(\Cyl(A_0,\ldots, A_n)) =
\Prob_{\nu}^{\calT}(\Cyl(B_0,\ldots, B_{n-1}, A_n))$, we can establish
that $\Prob_{\nu}^{\calT}(\Cyl(A_0,\ldots, A_n)) =0$ which concludes
the proof. \end{proof}

\subsection{Missing proofs of Subsections 2.3 and 2.5}

\bigskip
\noindent\fbox{\begin{minipage}{\linewidth}
\lemmaintegration* \end{minipage}}
\label{app:lemma_integration}

\begin{proof}
  The proof is by induction on $j$. 
  Assume that $j=0$, we have to show:
  \[
  \Prob^{\calT}_\mu(\Cyl(A_0,A_1,\dots,A_n)) =
  \mu(A_0)\cdot\Prob^{\calT}_{\Omega_\calT(\nu_0)}(\Cyl(A_1,\dots,A_n)).
  \] 
  First:
\begin{alignat}{7}
\Prob_{\mu}^{\calT}(\Cyl(A_0,\ldots,A_n)) & = \Prob_{\mu}^{\calT}(\Cyl(A_0)\cap\Cyl(S, A_1,\ldots, A_n))\notag\\
{} & = \Prob_{\mu}^{\calT}(\Cyl(A_0))\cdot \Prob_{\mu}^{\calT}(\Cyl(S, A_1,\ldots,A_n)\mid\Cyl(A_0))\notag\\
{} & = \mu(A_0)\cdot \Prob_{\mu_{A_0}}^{\calT}(\Cyl(A_0,\ldots,A_n)).\notag
\end{alignat}
Now let us unfold
$\Prob^{\calT}_{\Omega_\calT(\nu_0)}(\Cyl(A_1,\dots,A_n))$:
\begin{alignat}{7}
\Prob^{\calT}_{\Omega_\calT(\nu_0)}(\Cyl(A_1,\dots,A_n)) & = \int_{s_1\in A_1} \Prob_{\kappa(s_1,\cdot)}^{\calT}(\Cyl(A_2,\ldots,A_n))(\Omega_{\calT}(\nu_0))(\ud s_1)\notag\\
{} & = \int_{s_1\in A_1} \Prob_{\kappa(s_1,\cdot)}^{\calT}(\Cyl(A_2,\ldots,A_n))\int_{s_0\in S} \kappa(s_0, \ud s_1) \nu_0(\ud s_0)\notag\\
{} & = \int_{s_0\in A_0} \Big(\int_{s_1\in A_1} \Prob_{\kappa(s_1,\cdot)}^{\calT}(\Cyl(A_2,\ldots,A_n)) \kappa(s_0, \ud s_1)\Big) \mu_{A_0}(\ud s_0)\notag\\
{} & = \int_{s_0\in A_0} \Prob_{\kappa(s_0,\cdot)}^{\calT}(\Cyl(A_1,\ldots,A_n)) \mu_{A_0}(\ud s_0)\notag\\
{} & = \Prob_{\mu_{A_0}}^{\calT}(\Cyl(A_0,\ldots,A_n)).\notag
\end{alignat}
This concludes the proof for $j=0$.

Fix $0<j\leq n$ and assume that for each  $0\leq i < j$ the
equality above holds. We will prove that it is still the case for
$j$. First, observe that if $j=n$ then the induction hypothesis states
that
\begin{alignat}{7}
\Prob^{\calT}_\mu(\Cyl(A_0,A_1,\dots,A_n)) & = \mu(A_0)\cdot\prod_{i=1}^{n-1}(\Omega_{\calT}(\nu_{i-1}))(A_i)\cdot
\Prob^\calT_{\Omega_\calT(\nu_{n-1})}(\Cyl(A_n)) \notag\\
{} & = \mu(A_0)\cdot\prod_{i=1}^{n-1}(\Omega_{\calT}(\nu_{i-1}))(A_i)\cdot \Omega_\calT(\nu_{n-1})(A_n) \notag \\
{} & = \mu(A_0)\cdot\prod_{i=1}^{n}(\Omega_{\calT}(\nu_{i-1}))(A_i)\notag
\end{alignat}
which is what we wanted. Otherwise, if $j<n$, then the hypothesis induction states that
\begin{multline*}
\Prob^{\calT}_\mu(\Cyl(A_0,A_1,\dots,A_n)) = \\
\mu(A_0)\cdot\prod_{i=1}^{j-1}(\Omega_{\calT}(\nu_{i-1}))(A_i)\cdot
\Prob^\calT_{\Omega_\calT(\nu_{j-1})}(\Cyl(A_{j},\dots,A_n)).
\end{multline*}
Then using a similar argument as in the first case, we get that
\begin{alignat}{7}
\Prob^\calT_{\Omega_\calT(\nu_{j-1})}(\Cyl(A_{j},\dots,A_n)) &=
\Omega_\calT(\nu_{j-1})(A_j) \cdot
\Prob_{\Omega_\calT(\Omega_\calT(\nu_{j-1})_{A_j})}(\Cyl(A_{j+1},\dots,A_n)) \notag
\\
&= \Omega_\calT(\nu_{j-1})(A_j) \cdot
\Prob^\calT_{\Omega_\calT(\nu_{j})}(\Cyl(A_{j+1},\dots,A_n)) \notag
\end{alignat}
since $\nu_{j} = (\Omega_\calT(\nu_{j-1}))_{A_j}$. 
This concludes the proof.
\end{proof}

\bigskip
\noindent\fbox{\begin{minipage}{\linewidth} Proof
    of the fact that the $\sigma$-algebra $\Sigma_p$ (the $\sigma$-algebra product over $S\times Q$)
    coincides with $\Sigma'$, the set of all subsets of $S \times Q$
    of the form $\bigcup_{q\in Q} C_q \times \{q\}$, where $C_q \in
    \Sigma$ for every $q \in Q$
    (stated page~\pageref{product-sigma-algebra}).
  \end{minipage}}\label{app:product-sigma-algebra}

\begin{proof}
  It suffices to show that
\begin{enumerate}[(i)]
\item $\Sigma'$ contains all rectangles;
\item $\Sigma'\subseteq \Sigma_p$; and
\item $\Sigma'$ is a $\sigma$-algebra.
\end{enumerate}
Property (i) follows from the decomposition of any rectangle $X\times
Q'$ into elements of $\Sigma'$:
\[X\times Q' = \bigcup_{q\in Q'} X\times \lbrace q\rbrace \cup
\bigcup_{q\in (Q')^c} \emptyset\times\lbrace q\rbrace.\]

\noindent Property (ii) is straightforward since for every 
$q\in Q$, $C_q\times\lbrace q\rbrace$ is a rectangle and therefore,
the union $\bigcup_{q\in Q} C_q\times\lbrace q\rbrace$ belongs to
the $\sigma$-algebra $\Sigma_p$ generated by the rectangles.

\noindent We finally establish property (iii). First $\Sigma'$ is
non-empty as $\emptyset\in \Sigma'$. Then, for $A=\bigcup_{q\in Q}
C_q\times\lbrace q\rbrace\in\Sigma'$, the complement $A^c =
\bigcup_{q\in Q} C_q^c\times\lbrace q\rbrace$ still belongs to
$\Sigma'$ since $\Sigma$ is a $\sigma$-algebra and hence for each $q$,
$C_q^c\in\Sigma$.  Similarly, we get that $\Sigma'$ is closed under
denumerable unions.
\end{proof}

\bigskip \noindent\fbox{\begin{minipage}{\linewidth} 
    \produit*
  \end{minipage}}\label{app:produit-technique}

%
\begin{proof}
We will establish a link between distributions over $\Paths(\calT)$ and
distributions over $\Paths(\calT\ltimes\calM)$. In order to do so, we
introduce some notations. Given $A_0, A_1,\ldots, A_n\in\Sigma'$ we
write for each $i$, $A_i=\bigcup_{q\in Q} A_{i,q}\times\lbrace
q\rbrace$. Also given $u_1,\ldots, u_n\in 2^\AP$ and $q\in Q$ we
inductively define
\[
\begin{cases}
q_{u_1} = q'\in Q & \text{such that } (q,u_1,q')\in E\\
q_{u_1\ldots u_i} = q'\in Q & \text{such that } (q_{u_1\ldots u_{i-1}}, u_i, q')\in E, \ \forall 2\leq i\leq n.
\end{cases}
\]
Observe that since $\calM$ is deterministic, those states are uniquely
defined. We then have the following result.


\begin{lemma}\label{lemma:ProbProduct} 
  For each initial distribution
  $\mu\in\Dist(S)$ for $\calT$, for each state $q\in Q$ of $\calM$,
  for each $n\in\IN$ and for each $A_0,\ldots, A_n\in\Sigma'$, it
  holds that \begin{multline*}
    \Prob_{\mu \times
      \delta_q}^{\calT\ltimes\calM}(\Cyl(A_0,A_1,\ldots, A_n)) =\notag \\
    \sum_{u_1,\ldots, u_n\in 2^\AP}
    \Prob_{\mu}^{\calT}(\Cyl(A_{0,q}\cap\calL^{-1}(u_1), A_{1,
      q_{u_1}}\cap\calL^{-1}(u_2),\ldots,  \\
    A_{n-1, q_{u_1\ldots
        u_{n-1}}}\cap\calL^{-1}(u_n), A_{n, q_{u_1\ldots u_n}})).
  \end{multline*} 
\end{lemma}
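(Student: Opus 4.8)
The plan is to prove the identity by induction on $n$, unfolding the recursive definitions of the cylinder measures $\Prob^\calT$ and $\Prob^{\calT\ltimes\calM}$ (both given by the construction of Section~\ref{section:PrelimMeasure}) and exploiting the shape of the product kernel $\kappa'$. The structural fact to isolate first is that, for every $(s,q)\in S\times Q$, the measure $\kappa'((s,q),\cdot)$ on $(S\times Q,\Sigma')$ is exactly $\kappa(s,\cdot)\times\delta_{q'}$, where $q'$ is the unique location with $(q,\calL(s),q')\in E$ (existence and uniqueness come from $\calM$ being deterministic and complete); this is precisely the content of the footnote accompanying the definition of $\calT\ltimes\calM$. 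In particular, if $\calL(s)=u$ then $q'=q_u$ with the notation introduced above. A second elementary remark to record is that, since $\calL:S\to 2^\AP$ and $\AP$ is finite, $S$ is the finite disjoint union $\bigsqcup_{u\in 2^\AP}\calL^{-1}(u)$ with each $\calL^{-1}(u)\in\Sigma$ by measurability of $\calL$; hence every set of the form $A_{i,q}\cap\calL^{-1}(u)$ lies in $\Sigma$, so all the cylinders appearing on the right-hand side are well-defined.

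For the base case $n=0$ I would simply compute $\Prob_{\mu\times\delta_q}^{\calT\ltimes\calM}(\Cyl(A_0))=(\mu\times\delta_q)\big(\bigcup_{q'\in Q}A_{0,q'}\times\{q'\}\big)=\mu(A_{0,q})$, which equals $\Prob_\mu^{\calT}(\Cyl(A_{0,q}))$ (the product index set over labels being empty). For the inductive step, assuming the statement for cylinders of length $n-1$ with every initial distribution and every initial location, I would start from
\[
\Prob_{\mu\times\delta_q}^{\calT\ltimes\calM}(\Cyl(A_0,\dots,A_n))=\int_{(s_0,q_0)\in A_0}\Prob_{\kappa'((s_0,q_0),\cdot)}^{\calT\ltimes\calM}(\Cyl(A_1,\dots,A_n))\,\ud(\mu\times\delta_q)(s_0,q_0).
\]
Since $\mu\times\delta_q$ is concentrated on pairs with second component $q$, this reduces to $\int_{s_0\in A_{0,q}}\Prob_{\kappa'((s_0,q),\cdot)}^{\calT\ltimes\calM}(\Cyl(A_1,\dots,A_n))\,\ud\mu(s_0)$. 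I would then split $A_{0,q}=\bigsqcup_{u_1\in 2^\AP}(A_{0,q}\cap\calL^{-1}(u_1))$, use the first remark to rewrite $\kappa'((s_0,q),\cdot)$ as $\kappa(s_0,\cdot)\times\delta_{q_{u_1}}$ on the piece where $\calL(s_0)=u_1$, apply the induction hypothesis to $\Prob_{\kappa(s_0,\cdot)\times\delta_{q_{u_1}}}^{\calT\ltimes\calM}(\Cyl(A_1,\dots,A_n))$ (a cylinder of length $n-1$, with initial location $q_{u_1}$), interchange the finite sum over the remaining labels $u_2,\dots,u_n$ with the integral, and finally recognise the resulting $\int_{s_0\in A_{0,q}\cap\calL^{-1}(u_1)}\Prob_{\kappa(s_0,\cdot)}^{\calT}(\Cyl(\cdots))\,\ud\mu(s_0)$ as one more application of the recursive definition of $\Prob_\mu^{\calT}$, with leading set $A_{0,q}\cap\calL^{-1}(u_1)$. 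Merging the outer sum over $u_1$ with the inner sum over $u_2,\dots,u_n$ yields the single sum over $u_1,\dots,u_n$ of the statement.

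The only genuinely delicate point is index bookkeeping. When the induction hypothesis is applied to the shifted cylinder $\Cyl(A_1,\dots,A_n)$, the locations reached along the run of $\calM$ are $(q_{u_1})_{u_2\dots u_j}$, and one must observe that these equal $q_{u_1 u_2\dots u_j}$ by the (obvious) compositional behaviour of the deterministic run of $\calM$; similarly, the shift $B_i:=A_{i+1}$ of the cylinder entries has to be matched with the shift of the location subscripts $A_{i,q_{u_1\dots u_i}}$. Once these indices are aligned, the remaining steps — concentration of $\delta_q$, disjointness of the partition by the sets $\calL^{-1}(u)$, and interchange of the finite sums over $2^\AP$ with the integrals (justified because $\AP$ is finite and the integrands are non-negative) — are all routine.
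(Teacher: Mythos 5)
Your proof is correct and follows essentially the same route as the paper's: induction on $n$, reduction of the integral over $A_0$ to $A_{0,q}$ via the concentration of $\mu\times\delta_q$, the partition of $A_{0,q}$ by the sets $\calL^{-1}(u_1)$, the identification $\kappa'((s_0,q),\cdot)=\kappa(s_0,\cdot)\times\delta_{q_{u_1}}$, and the application of the induction hypothesis to the shifted cylinder with initial location $q_{u_1}$. Your explicit remark on the compositionality of the subscripts $(q_{u_1})_{u_2\dots u_j}=q_{u_1u_2\dots u_j}$ is a point the paper leaves implicit, but the argument is the same.
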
 

\begin{proof}[Proof of the lemma]
  We prove it by induction
  over $n$. First if $n=0$, we have to show that for every
  $\mu\in\Dist(S)$, every $q\in Q$ and every $A_0\in\Sigma'$,
  \[\Prob_{\mu \times
    \delta_q}^{\calT\ltimes\calM}(\Cyl(A_0))=\Prob_{\mu}^{\calT}(A_{0,q})
  \] which is trivial from the definition of $\mu \times \delta_q$.
  Now fix $n\ge 0$. Assume that for each $0\leq i\leq n$, the above
  property holds true and show that it is still the case for $i=n+1$.
  Let $\mu\in\Dist(S)$, $q\in Q$ and $A_0,\ldots, A_{n+1}\in\Sigma'$.
  We have that \begin{alignat}{7}\label{eq:ProbProd}
    {} & \Prob_{\mu \times \delta_q}^{\calT\ltimes\calM}(\Cyl(A_0,\ldots, A_{n+1}))\notag\\
    {} & = \int_{(s_0, q')\in A_0}
    \Prob_{\kappa'((s_0,q'),\cdot)}^{\calT\ltimes\calM}(\Cyl(A_1,\ldots,
    A_{n+1}))\ud(\mu \times \delta_q)((s_0,q')) \notag\\
    {} & = \int_{s_0\in A_{0,q}} \Prob_{\kappa'((s_0,q),\cdot)}^{\calT\ltimes\calM}(\Cyl(A_1,\ldots, A_{n+1}))\ud\mu(s_0) \notag\\
    {} & = \sum_{u_1\in 2^\AP} \int_{s_0\in A_{0,q}\cap\calL^{-1}(u_1)} \Prob_{\kappa'((s_0,q),\cdot)}^{\calT\ltimes\calM}(\Cyl(A_1,\ldots, A_{n+1}))\ud\mu(s_0) \notag\\
    {} & = \sum_{u_1\in 2^\AP} \int_{s_0\in
      A_{0,q}\cap\calL^{-1}(u_1)}
    \Prob_{\kappa(s_0,\cdot) \times \delta_{q_{u_1}}}^{\calT\ltimes\calM}(\Cyl(A_1,\ldots,
    A_{n+1}))\ud\mu(s_0) \\
    & \hspace*{3cm} \text{from unicity of } q_{u_1}. \notag
    \end{alignat}
Using the induction hypothesis, we get that
\begin{multline*}
\Prob_{\kappa(s_0,\cdot) \times \delta_{q_{u_1}}}^{\calT\ltimes\calM}(\Cyl(A_1,\ldots, A_{n+1}))=\notag\\
\sum_{u_2,\ldots, u_{n+1}\in 2^\AP} \Prob_{\kappa(s_0,\cdot)}^{\calT}(\Cyl(A_{1,q_{u_1}}\cap\calL^{-1}(u_2), \ldots, A_{n, q_{u_1\ldots u_{n}}}\cap\calL^{-1}(u_{n+1}), A_{n+1, q_{u_1\ldots u_{n+1}}})).
\end{multline*}
Combining with~\eqref{eq:ProbProd}, we thus obtain that
\begin{multline*}
\Prob_{\mu \times \delta_q}^{\calT\ltimes\calM}(\Cyl(A_0,\ldots, A_{n+1}))=\\
\sum_{u_1,\ldots, u_{n+1}\in 2^\AP} \Prob_{\kappa(s_0,\cdot)}^{\calT}(\Cyl(A_{0,q}\cap\calL^{-1}(u_1), \ldots, A_{n, q_{u_1\ldots u_{n}}}\cap\calL^{-1}(u_{n+1}), A_{n+1, q_{u_1\ldots u_{n+1}}}))
\end{multline*}
     which concludes the proof. 
\end{proof}
The proposition is a direct consequence of the previous lemma.
\end{proof}

\section{Technical results of Section~\ref{sec:prop}}

We give here the missing proofs of Section~\ref{sec:prop}.

\subsection{Proof of Lemma~\ref{lem:Btilde}}

\noindent\fbox{\begin{minipage}{\linewidth}
\BtildeMes* \end{minipage}}
\label{app-BtildeMes}

\begin{proof}
We first prove the first point.
Remember that given $B\in\Sigma$, 
\[
\Btilde=\lbrace s\in S\mid \Prob_{\delta_s}^{\calT}(\F B)=0\rbrace.
\]
Observe that we can write:
\[\Btilde=\bigcap_{n \ge 0} \lbrace s\in S\mid \Prob_{\delta_s}^{\calT}(\Cyl(\overbrace{S,\ldots,S}^{n\text{ times}}, B))=0\rbrace. \]
It thus suffices to show that for each $n\ge 0$,
\[\lbrace s\in S\mid \Prob_{\delta_s}^{\calT}(\Cyl(\overbrace{S,\ldots,S}^{n\text{ times}}, B))=0\rbrace\in\Sigma. \]
We will use similar arguments as in the proof of Lemma~\ref{lemma:equiv}. Remember that if $n\ge 1$, it holds that $\Prob_{\delta_s}^{\calT}(\Cyl(\overbrace{S,\ldots,S}^{n\text{ times}}, B))=\Prob_{\kappa(s,\cdot)}^{\calT}(\Cyl(\overbrace{S,\ldots,S}^{n-1\text{ times}}, B))$. First, if $n=0$ then this set corresponds to the set $\lbrace s\in S\mid \delta_s(B)=0\rbrace = B^c$ which is in $\Sigma$. Now if $n=1$ then
\[\lbrace s\in S\mid \Prob_{\kappa(s,\cdot)}(\Cyl(B))=0\rbrace= (\kappa(\cdot,B))^{-1}(\lbrace 0\rbrace) \]
which is in $\Sigma$ from the hypotheses over $\kappa$. Now assume
that $n\ge 2$, it holds that
\[\Prob_{\kappa(s,\cdot)}^{\calT}(\Cyl(\overbrace{S,\ldots,S}^{n-1\text{ times}}, B))= \int_{s_1\in S}\cdots\int_{s_{n-1}\in S} \kappa(s_{n-1}, B)\kappa(s_{n-2},\ud s_{n-1})\cdots\kappa(s_1,\ud s_2)\kappa(s, \ud s_1). \]
We inductively define:
\[\begin{cases}
B_{n-1} = \kappa(\cdot, B)^{-1}(\intervaloc{0,1}) & \\
B_i = \kappa(\cdot, B_{i+1})^{-1}(\intervaloc{0,1}) & \forall 0\leq i\leq n-2.
\end{cases} \]
From the hypotheses over $\kappa$, it holds that $B_i\in\Sigma$ for each $0\leq i < n$. In the sequel, $s_0$ denotes $s$. As in the proof of Lemma~\ref{lemma:equiv}, we can show that firstly, $\int_{s_{n-1}\in S} \kappa(s_{n-1}, B)\kappa(s_{n-2},\ud s_{n-1})=\Prob_{\kappa(s_{n-2},\cdot)}^{\calT}(\Cyl(B_{n-1},B))$ and that for each $1\leq i\leq n-2$,
\begin{enumerate}[(a)]
\item $\lbrace s_i\in S\mid \Prob_{\kappa(s_i,\cdot)}^{\calT}(\Cyl(B_{i+1},\ldots,B_{n-1},B))>0\rbrace = B_i$ and
\item \[\int_{s_i\in S} \Prob_{\kappa(s_i,\cdot)}^{\calT}(\Cyl(B_{i+1},\ldots,B_{n-1},A_n)) \kappa(s_{i-1},\ud s_i)=\Prob_{\kappa(s_{i-1},\cdot)}^{\calT}(\Cyl(B_{i},\ldots,B_{n-1},A_n)).\]
\end{enumerate}
It follows that
\begin{alignat}{7}
\Prob_{\kappa(s,\cdot)}^{\calT}(\Cyl(\overbrace{S,\ldots,S}^{n-1\text{ times}},B)) & = \Prob_{\kappa(s,\cdot)}^{\calT}(\Cyl(B_1,\ldots,B_{n-1},B))\notag\\
{} & = \int_{s_1\in B_1} \Prob_{\kappa(s_1,\cdot)}^{\calT}(\Cyl(B_{2},\ldots,B_{n-1},B)) \kappa(s,\ud s_1)\notag
\end{alignat}
Now since for each $s_1\in B_1$,
$\Prob_{\kappa(s_1,\cdot)}^{\calT}(\Cyl(B_{2},\ldots,B_{n-1},B))>0$,
it holds that
\[\Prob_{\kappa(s,\cdot)}^{\calT}(\Cyl(\overbrace{S,\ldots,S}^{n-1\text{
    times}},B))=0\] if and only if $\kappa(s,B_1)=0$, i.e. if and only
if $s\notin B_0$. And since $B_0\in\Sigma$, it follows that
$B_0^c\in\Sigma$ and thus
\[B_0^c=\lbrace s\in S\mid
\Prob_{\delta_s}^{\calT}(\Cyl(\overbrace{S,\ldots,S}^{n\text{ times}},
B))=0\rbrace\in\Sigma. \]

\medskip The second property is a direct consequence of the definition
of $\widetilde{B}$.  

\medskip We now focus on the third property. Towards a contradiction,
assume that there is $\mu\in\Dist(S)$ such that
$\mu((\widetilde{B})^c)>0$ but $\Prob_{\mu}^{\calT}(\F B)=0$. It
follows that there is $s\in(\widetilde{B})^c$ such that
$\Prob_{\delta_s}^{\calT}(\F B)=0$ and thus $s\in \widetilde{B}$ which
is the wanted contradiction.


\medskip Let us show the fourth item. It should be observed that given $\mu\in\Dist(S)$, $\Prob_{\mu}^{\calT}(\F\G\Btilde)\leq\Prob_{\mu}^{\calT}(\G\F\Btilde)\leq\Prob_{\mu}^{\calT}(\F\Btilde)$. It thus suffices to show that $\Prob_{\mu}^{\calT}(\F\G\Btilde)=\Prob_{\mu}^{\calT}(\F\Btilde)$. Since $\ev{\calT}{\F \G
  \Btilde} \subseteq \ev{\calT}{\F \Btilde}$, towards a contradiction,
we assume that $\Prob_{\mu}^{\calT}(\F \Btilde \wedge \G\F
(\Btilde)^c)>0$. Since
\begin{alignat}{7}
\ev{\calT}{\F \Btilde \wedge \G\F (\Btilde)^c} & \subseteq \ev{\calT}{\bigvee_{n\ge 0}(\F[=n] \Btilde \wedge\F[>n](\Btilde)^c)} \notag\\
{} & = \bigcup_{n\ge
  0}\bigcup_{m> 0} \Cyl(\overbrace{S,\ldots,S}^{n-1\text{ times}},
\Btilde,\overbrace{S,\ldots,S}^{m\text{ times}},
(\Btilde)^c)\notag
\end{alignat}
it follows that there is $n\in\IN$ and $m>0$ such that
\[\Prob_{\mu}^{\calT}(\Cyl(\overbrace{S,\ldots,S}^{n-1\text{ times}}, \Btilde,\overbrace{S,\ldots,S}^{m\text{ times}},(\Btilde)^c))>0. \]

From Lemma~\ref{lemma:integration}, writing $\nu=\Omega_{\calT}^{(n)}(\mu)$, we get that
\[\Prob_{\nu}^{\calT}(\Cyl(\Btilde,\overbrace{S,\ldots,S}^{m\text{ times}},(\Btilde)^c))>0. \]
And from the third property proven previously, we
deduce that
\[\Prob_{\nu_{\Btilde}}^{\calT}(\F B)>0 \]
with $\nu_{\Btilde}\in\Dist(\Btilde)$ which contradicts the second
property of this lemma.

\medskip Finally, we prove the last property. It is straightforward by observing that the two events measured in this equality are exactly the same:
\[
\ev{\calT}{\F B \vee \F \widetilde{B}} =\ev{\calT}{\F B \vee (\neg B \U \widetilde{B})}.
\]
\end{proof}

\subsection{Proof of Proposition~\ref{prop:links}}

\noindent\fbox{\begin{minipage}{\linewidth}
\proplink* \end{minipage}}

\begin{proof}


  From the definitions, the following implications obviously hold
  true. For each $\calB\subseteq \Sigma$ and for each
  $\mu\in\Dist(S)$:
\begin{itemize}
\item $\calT\ \text{is}\ \SD(\mu,\calB) \implies \calT\ \text{is}\
  \D(\mu,\calB)$, and
\item $\calT\ \text{is}\ \PD(\mu,\calB) \implies \calT\ \text{is}\
  \D(\mu,\calB)$.
\end{itemize}


It then turns out that strong decisiveness and persistent decisiveness
are two equivalent notions.

\begin{lemma}\label{lemma:EquivStrPers}
  For each $\calB\subseteq \Sigma$ and for each $\mu\in\Dist(S)$, it
  holds that $\SD(\mu,\calB)$ is equivalent to $\PD(\mu,\calB)$.
\end{lemma}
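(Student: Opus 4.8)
The statement is pointwise in $\calB$: by definition $\calT$ is $\SD(\mu,\calB)$ (resp.\ $\PD(\mu,\calB)$) iff it is $\SD(\mu,B)$ (resp.\ $\PD(\mu,B)$) for every $B\in\calB$, so the plan is to prove the equivalence for a single measurable set $B$ and a single $\mu\in\Dist(S)$. Throughout one uses that $\widetilde{B}\in\Sigma$ by Lemma~\ref{lem:Btilde} (first item), so that all the path events appearing below ($\F[\ge p]B$, $\G\F B$, $\F\G\widetilde{B}$, \dots) are measurable and have well-defined probability under $\Prob_\mu^\calT$.

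For the implication $\PD(\mu,B)\Rightarrow\SD(\mu,B)$, I would work on the complement event. Writing events explicitly,
\[
\ev{\calT}{\neg(\G\F B\vee\F\widetilde{B})}=\ev{\calT}{\F\G\neg B}\cap\ev{\calT}{\G\neg\widetilde{B}}=\bigcup_{p\in\IN}\Big(\ev{\calT}{\neg\F[\ge p]B}\cap\ev{\calT}{\G\neg\widetilde{B}}\Big),
\]
since $\F\G\neg B$ holds exactly when $\neg\F[\ge p]B$ holds for some $p$. As $\G\neg\widetilde{B}$ entails $\neg\F[\ge p]\widetilde{B}$ for every $p$, each term of this countable union is contained in $\ev{\calT}{\neg(\F[\ge p]B\vee\F[\ge p]\widetilde{B})}$, whose $\Prob_\mu^\calT$-measure is $0$ by $\PD(\mu,B)$. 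Countable subadditivity then gives $\Prob_\mu^\calT\big(\neg(\G\F B\vee\F\widetilde{B})\big)=0$, i.e.\ $\calT$ is $\SD(\mu,B)$.

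For the converse $\SD(\mu,B)\Rightarrow\PD(\mu,B)$, fix $p\ge 0$. The key observation is the pair of elementary inclusions $\ev{\calT}{\G\F B}\subseteq\ev{\calT}{\F[\ge p]B}$ and $\ev{\calT}{\F\G\widetilde{B}}\subseteq\ev{\calT}{\F[\ge p]\widetilde{B}}$ (visiting $B$ infinitely often forces a visit at some index $\ge p$; being ultimately confined to $\widetilde{B}$ likewise forces a visit at index $\ge p$). Hence it suffices to show $\Prob_\mu^\calT(\G\F B\vee\F\G\widetilde{B})=1$. Now $\SD(\mu,B)$ gives $\Prob_\mu^\calT(\G\F B\vee\F\widetilde{B})=1$, while Lemma~\ref{lem:Btilde} (fourth item) gives $\Prob_\mu^\calT(\F\widetilde{B})=\Prob_\mu^\calT(\F\G\widetilde{B})$; since $\F\G\widetilde{B}$ implies $\F\widetilde{B}$, the set $\ev{\calT}{\F\widetilde{B}}\setminus\ev{\calT}{\F\G\widetilde{B}}$ is $\Prob_\mu^\calT$-null. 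As $\ev{\calT}{\G\F B\vee\F\widetilde{B}}$ is contained in $\ev{\calT}{\G\F B\vee\F\G\widetilde{B}}\cup\big(\ev{\calT}{\F\widetilde{B}}\setminus\ev{\calT}{\F\G\widetilde{B}}\big)$, subadditivity forces $\Prob_\mu^\calT(\G\F B\vee\F\G\widetilde{B})=1$, and therefore $\Prob_\mu^\calT(\F[\ge p]B\vee\F[\ge p]\widetilde{B})=1$. Since $p$ was arbitrary, $\calT$ is $\PD(\mu,B)$, which closes the equivalence.

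Both directions are essentially measure-theoretic bookkeeping over path events, and I do not expect a real obstacle. The one step that has to be handled with care is the invocation of Lemma~\ref{lem:Btilde} (fourth item), the ``$\widetilde{B}$ is a sink'' property: it is precisely what upgrades a single eventual visit of $\widetilde{B}$ into visits beyond any prescribed horizon, and hence what makes $\SD(\mu,B)\Rightarrow\PD(\mu,B)$ go through; without it the latter implication would fail.
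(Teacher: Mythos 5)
Your proof is correct and follows essentially the same route as the paper's: the direction $\PD\Rightarrow\SD$ bounds the complement event by a countable union over horizons $p$, each of measure $0$ by persistent decisiveness, and the converse upgrades $\F\widetilde{B}$ via the sink property (fourth item of Lemma~\ref{lem:Btilde}) before using the elementary inclusions into $\F[\ge p]B$ and $\F[\ge p]\widetilde{B}$. The only cosmetic difference is that you pass through $\F\G\widetilde{B}$ where the paper uses $\G\F\widetilde{B}$; both are equal in measure to $\F\widetilde{B}$ by that same lemma, so the arguments coincide.
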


\begin{proof}
  Fix $\calB\subseteq\Sigma$ and $\mu\in\Dist(S)$. Fix $B\in\calB$ and
  assume that $\calT$ is $\PD(\mu, B)$, i.e. for each $p\ge 0$,
  $\Prob_{\mu}^{\calT}(\F[\ge p] B \vee \F[\ge p] \Btilde)=1$. We
  want to show that $\calT$ is $\SD(\mu,B)$, i.e. that
  $\Prob_\mu^\calT(\G \F B \vee \F \widetilde{B}) =1$, or equivalently
  that $\Prob_{\mu}^{\calT}(\F\G B^c \wedge \G (\Btilde)^c) =0$.
  We have that:
  \begin{eqnarray*}
\Prob_{\mu}^{\calT}(\F\G B^c \wedge \G (\Btilde)^c) & \leq & \sum_{p\ge 0} \Prob_{\mu}^{\calT}(\G_{\ge p} (B^c\cap (\Btilde)^c)\\
{} & = & \sum_{p\ge 0} (1-\Prob_{\mu}^{\calT}(\F[\ge p] B \vee \F[\ge
p] \Btilde))\notag\\
{} & = & 0\ \text{from the hypothesis.}
\end{eqnarray*}
Hence we get that $\Prob_{\mu}^{\calT}(\G\F B \vee \F \Btilde)=1$ and
thus $\calT$ is $\SD(\mu, B)$ and $\SD(\mu,\calB)$ as it holds true
for each $B\in\calB$.

Now fix again $B\in\calB$ and assume that $\calT$ is $\SD(\mu,B)$,
i.e. $\Prob_{\mu}^{\calT}(\G\F B\vee \F \Btilde)=1$. From
Lemma~\ref{lemma:BTildeEquivFGF} (fourth item), we get that
$\Prob_{\mu}^{\calT}(\G\F B\vee \G\F \Btilde)=1$ and it is then
straightforward to establish that for each $p\ge 0$,
$\Prob_{\mu}^{\calT}(\F[\ge p] B\vee \F[\ge p] \Btilde)=1$. We hence
deduce that $\calT$ is $\PD(\mu, B)$ and thus $\PD(\mu,\calB)$ as it
holds true for each $B\in\calB$. This concludes the proof.
\end{proof}

Now, we have the following equivalences between the decisiveness notions.

\begin{lemma}
  For each $\calB\subseteq \Sigma$, it holds that all three notions 
  $\PD(\calB)$, $\SD(\calB)$ and $\D(\calB)$ are equivalent.
\end{lemma}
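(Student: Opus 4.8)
The plan is to prove the three-way equivalence $\PD(\calB)\iff\SD(\calB)\iff\D(\calB)$ by chaining together implications, most of which are either trivial from the definitions or already available. Recall that $\calT$ is $\D(\calB)$ (resp. $\SD(\calB)$, $\PD(\calB)$) means it is $\D(\mu,\calB)$ (resp. $\SD(\mu,\calB)$, $\PD(\mu,\calB)$) for \emph{every} initial distribution $\mu\in\Dist(S)$. From the definitions we already have, for each fixed $\mu$, that $\SD(\mu,\calB)\implies\D(\mu,\calB)$ and $\PD(\mu,\calB)\implies\D(\mu,\calB)$; and Lemma~\ref{lemma:EquivStrPers} gives $\SD(\mu,\calB)\iff\PD(\mu,\calB)$. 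Quantifying over all $\mu$, this immediately yields $\SD(\calB)\iff\PD(\calB)$ and $\SD(\calB)\implies\D(\calB)$. So the only real content is the remaining implication $\D(\calB)\implies\SD(\calB)$ (equivalently $\D(\calB)\implies\PD(\calB)$).

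First I would fix $B\in\calB$ and assume $\calT$ is $\D(\calB)$, hence in particular $\D(\mu,B)$ for every $\mu\in\Dist(S)$; the goal is to show $\calT$ is $\PD(\mu,B)$ for every $\mu$, i.e. $\Prob^\calT_\mu(\F[\ge p]B\vee\F[\ge p]\Btilde)=1$ for all $p\ge 0$. The key observation is that persistent decisiveness is, as noted just after the definition of decisiveness in the main text, nothing but decisiveness from $\Omega^{(n)}_\calT(\mu)$ for every $n\ge 0$. More precisely, I would argue that $\F[\ge p]B\vee\F[\ge p]\Btilde$ is, up to a prefix of length $p$ which is irrelevant (since $\mu(S)=1$ and $(\Omega^{(j)}_\calT(\mu))(S)=1$), the same event as $\F B\vee\F\Btilde$ read from the distribution $\nu=\Omega^{(p)}_\calT(\mu)$. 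Formally, using Lemma~\ref{lemma:integration} in the same way it is used in the proof of Lemma~\ref{lemma:attractorGF}, one gets
\[
\Prob^\calT_\mu(\F[\ge p]B\vee\F[\ge p]\Btilde)=\Prob^\calT_{\Omega^{(p)}_\calT(\mu)}(\F B\vee\F\Btilde),
\]
and since $\calT$ is $\D(\Omega^{(p)}_\calT(\mu),B)$ by the global hypothesis $\D(\calB)$ (applied to the distribution $\Omega^{(p)}_\calT(\mu)\in\Dist(S)$), the right-hand side equals $1$. As this holds for every $p\ge 0$ and every $\mu\in\Dist(S)$, we conclude $\calT$ is $\PD(\calB)$, and then $\SD(\calB)$ by Lemma~\ref{lemma:EquivStrPers}.

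Putting the pieces together: $\D(\calB)\implies\PD(\calB)\iff\SD(\calB)\implies\D(\calB)$, so all three are equivalent, which is the statement. The main obstacle, and the step I would write most carefully, is the identification of $\F[\ge p]B\vee\F[\ge p]\Btilde$ (from $\mu$) with $\F B\vee\F\Btilde$ (from $\Omega^{(p)}_\calT(\mu)$) via Lemma~\ref{lemma:integration}: one must unfold the event as a countable union of cylinders with a fixed prefix of length $p$ of copies of $S$, push the measure through the $p$ trivial integration steps exactly as in the proof of Lemma~\ref{lemma:attractorGF}, and take the limit; the key point making this clean is precisely that the prefix states are all $S$, so no conditioning is lost. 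Everything else is bookkeeping over the quantifier on $\mu$ and appeals to the already-proven Lemma~\ref{lemma:EquivStrPers}.
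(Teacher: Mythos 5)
Your proposal is correct and follows essentially the same route as the paper: the only non-trivial implication is $\D(\calB)\implies\PD(\calB)$, which the paper also establishes by pushing the measure through $p$ steps via Lemma~\ref{lemma:integration} (as in the proof of Lemma~\ref{lemma:attractorGF}) to reduce $\F[\ge p]B\vee\F[\ge p]\Btilde$ from $\mu$ to $\F B\vee\F\Btilde$ from $\Omega^{(p)}_\calT(\mu)$, and then invokes $\D(\Omega^{(p)}_\calT(\mu),B)$. The paper phrases this as an inequality on the complementary events rather than your exact equality, but the argument is the same.
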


\begin{proof}
  Fix $\calB\subseteq\Sigma$. From the above results, it only remains
  to prove that $\D(\calB)\Rightarrow\SD(\calB)$ or
  $\D(\calB)\Rightarrow\PD(\calB)$.
We prove the last one. We pick $B\in\calB$ and assume
that $\calT$ is $\D(B)$, i.e. for each $\mu\in\Dist(S)$,
$\Prob_{\mu}^{\calT}(\F B\vee \F\Btilde)=1$. Pick $\mu \in \Dist(S)$
and $i \ge 0$. We get that
\begin{eqnarray*}
\Prob_\mu^{\calT}(\G[\ge i] B^c \wedge \G[\ge i]  (\Btilde)^c) &
\le & \Prob_{\mu_i}^{\calT} (\G(B^c \cap (\Btilde)^c)) \\
 & & \text{where $\mu_i = \Omega_{\calT}^{(i)}(\mu)$, from Lemma~\ref{lemma:integration}} \\
 & & \text{and from a similar argument as in the proof of Lemma~\ref{lemma:attractorGF}} \\
 & \le & 0\ \text{since $\calT$ is $\D(B)$.}
\end{eqnarray*}
Hence for each $i\ge 0$, $\Prob_{\mu}^{\calT}(\F[\ge i] B \vee \F[\ge
i] \Btilde)=1$ and since it holds true for each $\mu\in\Dist(S)$ and each $B\in\calB$, we get that $\calT$ is $\PD(\calB)$.
\end{proof}

Finally, we show the following links between fairness and decisiveness.

\begin{lemma}
For each $\calB\subseteq \Sigma$ and for each $\mu\in\Dist(S)$, it holds that $\SD(\mu,\calB)$ implies $\fair(\mu,\calB)$, and $\SD(\calB)$ implies $\fair(\calB)$.
\end{lemma}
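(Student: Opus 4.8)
The plan is to reduce everything to the single-set statement: it suffices to prove that, for a fixed $B\in\calB$ and a fixed $\mu$, if $\calT$ is $\SD(\mu,B)$ then $\calT$ is $\fair(\mu,B)$. Granting this, $\SD(\mu,\calB)\Rightarrow\fair(\mu,\calB)$ is immediate by applying it to every $B\in\calB$, and $\SD(\calB)\Rightarrow\fair(\calB)$ follows as well since $\SD(\calB)$ means precisely $\SD(\mu,B)$ for every $B\in\calB$ and every $\mu\in\Dist(S)$, while $\fair(\calB)$ means $\fair(\mu,B)$ for every $B\in\calB$ and every $\mu\in\Dist(S)$. So from now on fix $B$, assume $\calT$ is $\SD(\mu,B)$, pick $B'\in\PreProb^\calT(B)$ with $\Prob_\mu^\calT(\G\F B')>0$, and aim to show $\Prob_\mu^\calT(\G\F B\mid\G\F B')=1$, equivalently $\Prob_\mu^\calT(\F\G B^c\wedge\G\F B')=0$.

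The first (and conceptually central) step is the purely structural observation that $B'\cap\Btilde=\emptyset$. Indeed, if some $s\in B'\cap\Btilde$, then $\delta_s\in\Dist(B')$, so by definition of $\PreProb^\calT(B)$ we would have $\Prob_{\delta_s}^\calT(\Cyl(B',B))>0$; but $\Cyl(B',B)\subseteq\ev{\calT}{\F B}$, and $\Prob_{\delta_s}^\calT(\F B)=0$ since $s\in\Btilde$, a contradiction.

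The second step uses this to separate the relevant events. Since $B'\cap\Btilde=\emptyset$, a path that is ultimately in $\Btilde$ cannot visit $B'$ infinitely often, so $\ev{\calT}{\G\F B'}$ and $\ev{\calT}{\F\G\Btilde}$ are disjoint; combined with Lemma~\ref{lem:Btilde} (fourth item), which gives $\Prob_\mu^\calT(\F\Btilde)=\Prob_\mu^\calT(\F\G\Btilde)$, this yields $\Prob_\mu^\calT(\G\F B'\wedge\F\Btilde)=0$. Hence, replacing $\G\F B'$ by $\G\F B'\wedge\neg\F\Btilde=\G\F B'\wedge\G((\Btilde)^c)$ does not change the measure, and
\[
\Prob_\mu^\calT(\F\G B^c\wedge\G\F B')=\Prob_\mu^\calT\big(\F\G B^c\wedge\G\F B'\wedge\G((\Btilde)^c)\big)\le\Prob_\mu^\calT\big(\F\G B^c\wedge\G((\Btilde)^c)\big).
\]
The right-hand side equals $1-\Prob_\mu^\calT(\G\F B\vee\F\Btilde)$, which is $0$ because $\calT$ is $\SD(\mu,B)$. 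Therefore $\Prob_\mu^\calT(\F\G B^c\wedge\G\F B')=0$, so $\Prob_\mu^\calT(\G\F B\wedge\G\F B')=\Prob_\mu^\calT(\G\F B')$, and the conditional probability is $1$; that is, $\calT$ is $\fair(\mu,B)$.

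The only delicate point I anticipate is the measure-theoretic bookkeeping linking $\F\Btilde$, $\F\G\Btilde$ and $\G((\Btilde)^c)$ through Lemma~\ref{lem:Btilde}, together with making the disjointness of $\ev{\calT}{\G\F B'}$ and $\ev{\calT}{\F\G\Btilde}$ airtight at the level of $\omega$-sequences. Both are routine once the key identity $B'\cap\Btilde=\emptyset$ is established, and I expect that identity — tying the definition of $\PreProb^\calT(B)$ to the avoid-set $\Btilde$ — to be the real heart of the argument.
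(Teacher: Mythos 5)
Your proof is correct, and while it shares the paper's overall skeleton (both arguments hinge on establishing $\Prob_{\mu}^{\calT}(\G\F B'\wedge \F \Btilde)=0$ and then conclude by the same conditional-probability computation using $\SD(\mu,B)$), you reach that key zero-measure fact by a genuinely different and arguably cleaner route. The paper proves it by contradiction: it decomposes $\ev{\calT}{\G\F B'\wedge\F\Btilde}$ into cylinders, uses Lemma~\ref{lemma:integration} to shift the initial distribution to $\nu=\Omega_{\calT}^{(n)}(\mu)$, appends a visit to $B$ after the visit to $B'$ via the definition of $\PreProb^{\calT}(B)$, and contradicts the second item of Lemma~\ref{lem:Btilde}. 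You instead first extract the purely set-theoretic fact $B'\cap\Btilde=\emptyset$ (by instantiating the universally quantified $\mu'\in\Dist(B')$ in the definition of $\PreProb^{\calT}(B)$ with a Dirac measure $\delta_s$ and noting $\Cyl(B',B)\subseteq\ev{\calT}{\F B}$), deduce the \emph{pointwise} disjointness of $\ev{\calT}{\G\F B'}$ and $\ev{\calT}{\F\G\Btilde}$, and then invoke the fourth item of Lemma~\ref{lem:Btilde} ($\Prob_{\mu}^{\calT}(\F\Btilde)=\Prob_{\mu}^{\calT}(\F\G\Btilde)$) to upgrade this to $\Prob_{\mu}^{\calT}(\G\F B'\wedge\F\Btilde)=0$. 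What your approach buys is that all the measure-theoretic heavy lifting is delegated to an already-proved item of Lemma~\ref{lem:Btilde}, avoiding any fresh cylinder manipulation or use of the integration lemma; what the paper's approach buys is that it only needs the (more elementary) second item of that lemma. Your reduction of the $\calB$-level and distribution-quantified statements to the single-$(B,\mu)$ case matches the paper's (implicit) treatment.
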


\begin{proof}
Fix $\calB\subseteq\Sigma$ and $\mu\in\Dist(S)$. Assume that $\calT$ is strongly decisive w.r.t. $\calB$ from $\mu$, that is for each $B\in\calB$, $\Prob_{\mu}^{\calT}(\G\F B\vee \F \Btilde)=1$. We want to prove that for each $B\in\calB$, for each $B'\in\PreProb(B)$ with $\Prob_{\mu}^{\calT}(\G\F B')>0$, we have that $\Prob_{\mu}^{\calT}(\G\F B\mid\G\F B')=1$.

Fix $B\in\calB$ and $B'\in\PreProb(B)$ such that $\Prob_{\mu}^{\calT}(\G\F B')>0$. We can notice that
\begin{equation}\label{eq:FairSD}
\Prob_{\mu}^{\calT}(\G\F B'\wedge \F \Btilde)=0.
\end{equation}
Indeed, towards a contradiction, assume that $\Prob_{\mu}^{\calT}(\G\F
B'\wedge \F \Btilde)>0$. Observe that
\[\ev{\calT}{\G\F B'\wedge \F \Btilde} =\bigcup_{n\ge 0} \bigcap_{m\ge 0} \bigcup_{l\ge m} \Cyl(\overbrace{S,\ldots,S}^{n\text{ times}},\Btilde,\overbrace{S,\ldots,S}^{l \text{ times}}, B'). \]  
Then, there are $n,m\in\IN$ such that
\[\Prob_{\mu}^{\calT}(\Cyl(\overbrace{S,\ldots,S}^{n\text{ times}},\Btilde,\overbrace{S,\ldots,S}^{m \text{ times}}, B'))>0. \]
It follows, from Lemma~\ref{lemma:integration} like seen previously, that there is $\nu\in\Dist(S)$ ($\nu=\Omega^{(n)}_{\calT}(\mu)$), such that
\[\Prob_{\nu}^{\calT}(\Cyl(\Btilde, \overbrace{S,\ldots,S}^{m \text{ times}}, B'))>0.\]
And since $B'\in\PreProb(B)$, we get that
\[\Prob_{\nu}^{\calT}(\Cyl(\Btilde, \overbrace{S,\ldots,S}^{m \text{ times}}, B', B))>0.\]
Hence, $\nu(\Btilde)>0$ and we can apply Lemma~\ref{lem:Btilde}
(second item) to obtain a contradiction. Hence,
equation~\eqref{eq:FairSD} holds.
We then write:
\begin{eqnarray*}
1 & = & \Prob_{\mu}^{\calT}(\G\F B\vee \F \Btilde\mid\G\F B') \quad\text{from strong decisiveness}\notag\\
{} & = & \frac{\Prob_{\mu}^{\calT}((\G\F B\vee \F \Btilde)\wedge \G\F B')}{\Prob_{\mu}^{\calT}(\G\F B')}\notag\\
{} & = & \frac{\Prob_{\mu}^{\calT}((\G\F B \wedge \G\F B')\vee(\F
  \Btilde \wedge \G\F B'))}{\Prob_{\mu}^{\calT}(\G\F B')}\notag\\
{} & = & \frac{\Prob_{\mu}^{\calT}(\G\F B\wedge \G\F B')}{\Prob_{\mu}^{\calT}(\G\F B')} \quad \text{from~\eqref{eq:FairSD}}\notag\\
{} & = & \Prob_{\mu}^{\calT}(\G\F B\mid \G\F B')
\end{eqnarray*}
which proves that $\SD(\mu,\calB)\Rightarrow \fair(\mu,\calB)$. Then,
the implication $\SD(\calB)\Rightarrow\fair(\calB)$ is immediate since
the previous implication holds for any initial distribution
$\mu\in\Dist(S)$.
\end{proof}


This concludes the proof of the proposition.
\end{proof}

\section{Technical results of Section~\ref{sec:abstractions}}

\subsection{Additional technical results for
  Subsection~\ref{subsec:abst}}
\label{app:pos}

We now establish several technical results, which make explicit how STSs
are related through an $\alpha$-abstraction. The relationship is only
qualitative, in the sense that it only relates positive reachability
probabilities, but does not relate almost-sure or lower-bounded
probabilities.

\begin{lemma}
  \label{lemma:PushForwardDelta}
  Let $\alpha:(S_1,\Sigma_1)\to(S_2,\Sigma_2)$ be a measurable
  function. Then for every $s\in S_2$ and every
  $\mu\in\Dist(\alpha^{-1}(\lbrace s\rbrace))$,
  $\alpha_{\#}(\mu)=\delta_s$.
\end{lemma}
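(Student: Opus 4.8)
The statement to prove is Lemma~\ref{lemma:PushForwardDelta}: for a measurable $\alpha : (S_1,\Sigma_1) \to (S_2,\Sigma_2)$, every $s \in S_2$ and every $\mu \in \Dist(\alpha^{-1}(\{s\}))$, we have $\alpha_{\#}(\mu) = \delta_s$. The plan is to unfold the definition of the pushforward $\alpha_{\#}$ and check that it coincides with the Dirac measure $\delta_s$ on every measurable set $M_2 \in \Sigma_2$; since a probability measure on $(S_2,\Sigma_2)$ is determined by its values on $\Sigma_2$, this suffices.

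First I would recall from the excerpt that $\alpha_{\#}(\mu)(M_2) = \mu(\alpha^{-1}(M_2))$ for every $M_2 \in \Sigma_2$, and that $\mu \in \Dist(\alpha^{-1}(\{s\}))$ means $\mu$ is a probability distribution whose support is contained in $\alpha^{-1}(\{s\})$, i.e. $\mu(\alpha^{-1}(\{s\})) = 1$ (equivalently $\mu(S_1 \setminus \alpha^{-1}(\{s\})) = 0$). Then I would do a case split on whether $s \in M_2$ or not. If $s \in M_2$, then $\alpha^{-1}(\{s\}) \subseteq \alpha^{-1}(M_2)$, so by monotonicity of $\mu$ we get $\mu(\alpha^{-1}(M_2)) \ge \mu(\alpha^{-1}(\{s\})) = 1$, hence $\alpha_{\#}(\mu)(M_2) = 1 = \delta_s(M_2)$. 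If $s \notin M_2$, then $\alpha^{-1}(M_2) \cap \alpha^{-1}(\{s\}) = \alpha^{-1}(M_2 \cap \{s\}) = \alpha^{-1}(\emptyset) = \emptyset$, so $\alpha^{-1}(M_2) \subseteq S_1 \setminus \alpha^{-1}(\{s\})$, and by monotonicity $\mu(\alpha^{-1}(M_2)) \le \mu(S_1 \setminus \alpha^{-1}(\{s\})) = 0$, hence $\alpha_{\#}(\mu)(M_2) = 0 = \delta_s(M_2)$. In both cases $\alpha_{\#}(\mu)(M_2) = \delta_s(M_2)$, so $\alpha_{\#}(\mu) = \delta_s$.

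There is essentially no obstacle here: the only mild point of care is the implicit convention that $\Dist(A)$ for $A \in \Sigma_1$ denotes distributions $\mu \in \Dist(S_1)$ with $\mu(A) = 1$ (this is how the paper uses conditional distributions $\mu_A$ and the notation $\Dist(\alpha^{-1}(s_i))$ elsewhere), and that $\{s\} \in \Sigma_2$ so that $\alpha^{-1}(\{s\}) \in \Sigma_1$ is indeed measurable — which is guaranteed because $\alpha$ is measurable and singletons are assumed measurable in the spaces considered (in particular $S_2$ is typically a denumerable set with $\Sigma_2 = 2^{S_2}$ in the applications, or more generally a standard Borel space). I would state this measurability remark in passing and then present the two-line case analysis above as the full proof.
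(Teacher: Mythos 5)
Your proof is correct and follows exactly the same route as the paper's: unfold the definition of $\alpha_{\#}$ and case-split on whether $s$ belongs to the measurable set, using $\mu(\alpha^{-1}(\{s\}))=1$ in one case and disjointness of preimages in the other. The only addition is your remark on the measurability of $\{s\}$, which the paper leaves implicit.
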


\begin{proof}
  Fix $s\in S_2$ and $\mu\in\Dist(\alpha^{-1}(\lbrace s\rbrace))$. For
  each $A\in\Sigma_2$, we have that
  $(\alpha_{\#}(\mu))(A)=\mu(\alpha^{-1}(A))$. If $s\in A$, then
  $\alpha^{-1}(\lbrace s\rbrace)\subseteq\alpha^{-1}(A)$ and thus
  $\mu(\alpha^{-1}(A))=1$. Otherwise, if $s\notin A$, then
  $\alpha^{-1}(\lbrace s\rbrace)\cap \alpha^{-1}(A)=\emptyset$ and
  thus $\mu(\alpha^{-1}(A))=0$. This directly implies that
  $\alpha_{\#}(\mu)=\delta_s$.
\end{proof}

\begin{lemma}
  \label{lemma:iterative}
  Assume that $\calT_2$ is an $\alpha$-abstraction of $\calT_1$. Then,
  for every $i \in \nats$, for every $\mu \in \Dist(s_1)$,
  $\alpha_{\#}(\Omega^{(i)}_{\calT_1}(\mu))$ is equivalent to
  $\Omega^{(i)}_{\calT_2}(\alpha_{\#}(\mu))$.
\end{lemma}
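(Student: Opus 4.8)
The plan is to prove the statement by induction on $i$, using the definition of $\alpha$-abstraction as the one-step engine and an auxiliary observation that the transformer $\Omega_{\calT_2}$ maps qualitatively equivalent measures to qualitatively equivalent measures. For $i=0$ both sides equal $\alpha_\#(\mu)$ and there is nothing to prove; for $i=1$ the statement is exactly the defining property of $\alpha$-abstraction applied to $\mu$. Recall also that $\Omega_{\calT_2}(\nu)$ is again a probability measure for every $\nu\in\Dist(S_2)$, so that all the iterates $\Omega_{\calT_2}^{(i)}(\alpha_\#(\mu))$ are well-defined, and that $\Omega_{\calT_j}^{(i+1)} = \Omega_{\calT_j}\circ\Omega_{\calT_j}^{(i)}$.

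For the inductive step, I would assume $\alpha_\#(\Omega_{\calT_1}^{(i)}(\mu))$ is equivalent to $\Omega_{\calT_2}^{(i)}(\alpha_\#(\mu))$. Writing $\nu = \Omega_{\calT_1}^{(i)}(\mu)\in\Dist(S_1)$ and applying the definition of $\alpha$-abstraction to $\nu$, one gets that $\alpha_\#\bigl(\Omega_{\calT_1}^{(i+1)}(\mu)\bigr) = \alpha_\#\bigl(\Omega_{\calT_1}(\nu)\bigr)$ is equivalent to $\Omega_{\calT_2}\bigl(\alpha_\#(\nu)\bigr) = \Omega_{\calT_2}\bigl(\alpha_\#(\Omega_{\calT_1}^{(i)}(\mu))\bigr)$. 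By the auxiliary observation applied to the two equivalent measures $\alpha_\#(\Omega_{\calT_1}^{(i)}(\mu))$ and $\Omega_{\calT_2}^{(i)}(\alpha_\#(\mu))$ (equivalent by the induction hypothesis), the measure $\Omega_{\calT_2}\bigl(\alpha_\#(\Omega_{\calT_1}^{(i)}(\mu))\bigr)$ is in turn equivalent to $\Omega_{\calT_2}\bigl(\Omega_{\calT_2}^{(i)}(\alpha_\#(\mu))\bigr) = \Omega_{\calT_2}^{(i+1)}(\alpha_\#(\mu))$. Chaining these two equivalences by transitivity of qualitative equivalence yields that $\alpha_\#\bigl(\Omega_{\calT_1}^{(i+1)}(\mu)\bigr)$ is equivalent to $\Omega_{\calT_2}^{(i+1)}(\alpha_\#(\mu))$, which closes the induction.

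The hard part — really the only nontrivial point — is the auxiliary observation: if $\mu_1,\mu_2\in\Dist(S_2)$ are qualitatively equivalent, then $\Omega_{\calT_2}(\mu_1)$ and $\Omega_{\calT_2}(\mu_2)$ are qualitatively equivalent. To establish this, I would fix $A\in\Sigma_2$ and set $N_A = \kappa_2(\cdot,A)^{-1}(\intervaloc{0,1}) = \{s\in S_2 \mid \kappa_2(s,A)>0\}$, which lies in $\Sigma_2$ because $\kappa_2(\cdot,A)$ is a measurable function. Since $\Omega_{\calT_2}(\mu_j)(A) = \int_{S_2}\kappa_2(s,A)\,\ud\mu_j(s)$ and the integrand is nonnegative, this integral is $0$ if and only if $\kappa_2(\cdot,A)=0$ holds $\mu_j$-almost everywhere, i.e.\ if and only if $\mu_j(N_A)=0$. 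As $\mu_1$ and $\mu_2$ have the same null sets, $\mu_1(N_A)=0 \iff \mu_2(N_A)=0$, hence $\Omega_{\calT_2}(\mu_1)(A)=0 \iff \Omega_{\calT_2}(\mu_2)(A)=0$; since $A$ was arbitrary, the two transformed measures are qualitatively equivalent. With this lemma in hand, the inductive argument above goes through verbatim and completes the proof of Lemma~\ref{lemma:iterative}.
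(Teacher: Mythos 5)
Your proof is correct and follows essentially the same route as the paper's: induction on $i$, with the inductive step combining the defining property of an $\alpha$-abstraction applied to $\Omega^{(i)}_{\calT_1}(\mu)$ with the fact that $\Omega_{\calT_2}$ preserves qualitative equivalence, then chaining by transitivity. The only difference is that you spell out the preservation-of-equivalence step (via the measurable set $N_A=\{s\mid\kappa_2(s,A)>0\}$) explicitly, whereas the paper merely refers to "a similar argument as in the proof of Lemma~\ref{lemma:equiv}"; your explicit argument is correct.
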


\begin{proof}
  We show this by induction on $i$. Case $i=1$ is by definition.  Fix
  some $i\ge 1$ and assume that the statement holds true for each
  $1\leq j\leq i$. By induction hypothesis, we have that
  $\alpha_{\#}(\Omega^{(i)}_{\calT_1}(\mu))$ is equivalent to
  $\Omega^{(i)}_{\calT_2}(\alpha_{\#}(\mu))$. We want to show that
  $\alpha_{\#}(\Omega^{(i+1)}_{\calT_1}(\mu))$ is equivalent to
  $\Omega^{(i+1)}_{\calT_2}(\alpha_{\#}(\mu))$.

  We first notice that
  $\Omega_{\calT_2}(\alpha_{\#}(\Omega^{(i)}_{\calT_1}(\mu)))$ is
  equivalent to $\Omega^{(i+1)}_{\calT_2}(\alpha_{\#}(\mu))$. Indeed
  write $\nu=\alpha_{\#}(\Omega^{(i)}_{\calT_1}(\mu))$ and
  $\nu'=\Omega^{(i)}_{\calT_2}(\alpha_{\#}(\mu))$. From the induction
  hypothesis, we know that $\nu$ and $\nu'$. Following a similar
  argument as in the proof of Lemma~\ref{lemma:equiv} and from the
  definition of $\Omega_{\calT_2}$, we can deduce that
  $\Omega_{\calT_2}(\nu)$ is equivalent to
  $\Omega_{\calT_2}(\nu')$. So it remains to show that
  $\Omega_{\calT_2}(\alpha_{\#}(\mu'))$ is equivalent to
  $\alpha_{\#}(\Omega_{\calT_1}(\mu'))$, when $\mu' =
  \Omega^{(i)}_{\calT_1}(\mu)$. This is by definition of an
  $\alpha$-abstraction.
\end{proof}

In other words, the above lemma states that for each $A\in\Sigma_2$
and for each $i\in \nats$,
\[
\Prob_{\mu}^{\calT_1}(\F[=i] \alpha^{-1}(A))>0\Longleftrightarrow
\Prob_{\alpha_{\#}(\mu)}^{\calT_2}(\F[=i] A)>0 \enspace.
\]
This can even be generalized to cylinders:

\begin{lemma}
  \label{coro:equiv}
  Assume that $\calT_2$ is an $\alpha$-abstraction of $\calT_1$. Then
  for every $\mu \in \Dist(S_1)$, for every $(A_i)_{0 \le i \le n} \in
  \Sigma_2^{n+1}$,
  \[
  \Prob^{\calT_1}_{\mu}(\Cyl(\alpha^{-1}(A_0),\dots
  ,\alpha^{-1}(A_n))) > 0 \Longleftrightarrow
  \Prob^{\calT_2}_{\alpha_{\#}(\mu)}(\Cyl(A_0,\dots,A_n)) >0 \enspace.
  \]
\end{lemma}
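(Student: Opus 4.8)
The plan is to prove the statement by induction on $n$, using Lemma~\ref{lemma:iterative} (and its consequence about single steps) as the base case and the characterization of $\Prob^{\calT}_\mu$ via conditional distributions from Lemma~\ref{lemma:integration} as the inductive mechanism. First I would record the base case $n=0$: by definition $\Prob^{\calT_1}_{\mu}(\Cyl(\alpha^{-1}(A_0))) = \mu(\alpha^{-1}(A_0)) = \alpha_{\#}(\mu)(A_0) = \Prob^{\calT_2}_{\alpha_{\#}(\mu)}(\Cyl(A_0))$, so the two are not merely qualitatively equivalent but equal. For $n=1$, positivity of $\Prob^{\calT_1}_{\mu}(\Cyl(\alpha^{-1}(A_0),\alpha^{-1}(A_1)))$ means $\mu(\alpha^{-1}(A_0))>0$ and, conditioning, $\Prob^{\calT_1}_{\mu_{\alpha^{-1}(A_0)}}(\Cyl(S_1,\alpha^{-1}(A_1)))>0$; since $\alpha_{\#}(\mu_{\alpha^{-1}(A_0)})$ is qualitatively equivalent to $(\alpha_{\#}(\mu))_{A_0}$ (a routine check from the definition of pushforward and conditioning, valid because $A_0$ is trivially $\alpha$-closed as $\alpha^{-1}(A_0)$ is an $\alpha$-preimage), the defining property of $\alpha$-abstraction — i.e. $\Prob_\mu^{\calT_1}(\Cyl(S_1,\alpha^{-1}(A)))>0 \iff \Prob_{\alpha_\#(\mu)}^{\calT_2}(\Cyl(S_2,A))>0$ — transfers positivity to $\calT_2$.

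For the inductive step, assume the equivalence holds for all cylinders of length $n$ and all initial distributions. Fix $(A_i)_{0\le i\le n+1}\in\Sigma_2^{n+2}$. Using the inductive definition of $\Prob^{\calT_1}_\mu$ over cylinders, $\Prob^{\calT_1}_{\mu}(\Cyl(\alpha^{-1}(A_0),\dots,\alpha^{-1}(A_{n+1}))) > 0$ iff $\mu(\alpha^{-1}(A_0))>0$ and $\Prob^{\calT_1}_{\nu}(\Cyl(\alpha^{-1}(A_1),\dots,\alpha^{-1}(A_{n+1})))>0$ where $\nu = \Omega_{\calT_1}((\mu)_{\alpha^{-1}(A_0)})$ — more precisely, this follows by writing the first integral and noting that the integrand is positive on a positive-measure set exactly when the pushed-forward condition holds. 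Then I apply the induction hypothesis to the length-$(n+1)$ cylinder $\Cyl(\alpha^{-1}(A_1),\dots,\alpha^{-1}(A_{n+1}))$ with initial distribution $\nu$: this is positive iff $\Prob^{\calT_2}_{\alpha_{\#}(\nu)}(\Cyl(A_1,\dots,A_{n+1}))>0$. Finally I invoke Lemma~\ref{lemma:iterative} (or rather the $\alpha$-abstraction property applied to $(\mu)_{\alpha^{-1}(A_0)}$) to identify $\alpha_{\#}(\nu) = \alpha_{\#}(\Omega_{\calT_1}((\mu)_{\alpha^{-1}(A_0)}))$ as qualitatively equivalent to $\Omega_{\calT_2}(\alpha_{\#}((\mu)_{\alpha^{-1}(A_0)})) = \Omega_{\calT_2}((\alpha_{\#}(\mu))_{A_0})$, and since qualitatively equivalent distributions yield qualitatively equivalent path measures (Lemma~\ref{lemma:equiv}), the positivity of the length-$(n+1)$ cylinder from $\nu$ is equivalent to its positivity from $\Omega_{\calT_2}((\alpha_{\#}(\mu))_{A_0})$. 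Reassembling via the inductive definition of $\Prob^{\calT_2}_{\alpha_{\#}(\mu)}$ on cylinders gives exactly $\Prob^{\calT_2}_{\alpha_{\#}(\mu)}(\Cyl(A_0,\dots,A_{n+1}))>0$.

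The main obstacle I anticipate is the careful bookkeeping around conditional distributions and pushforwards: verifying that $\alpha_{\#}(\mu_{\alpha^{-1}(A_0)})$ and $(\alpha_{\#}(\mu))_{A_0}$ are qualitatively equivalent (they are in fact equal when $\mu(\alpha^{-1}(A_0))>0$, since $\alpha_\#$ commutes with conditioning on $\alpha$-preimages), and tracking that the ``$>0$'' conditions thread correctly through the nested integrals of Lemma~\ref{lemma:integration}. Everything else is a mechanical induction; no measure-theoretic subtlety arises beyond what is already handled by Lemmas~\ref{lemma:equiv}, \ref{lemma:integration} and~\ref{lemma:iterative}. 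I would present the argument cleanly by first isolating the single-step transfer (essentially Equation on page~\pageref{PosReach} specialized), then running the induction.
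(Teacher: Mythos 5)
Your proposal is correct and follows essentially the same route as the paper's proof: induction on $n$, peeling off the first set via Lemma~\ref{lemma:integration}, observing that $\alpha_{\#}(\mu_{\alpha^{-1}(A_0)}) = (\alpha_{\#}(\mu))_{A_0}$, and combining the defining equivalence of $\alpha$-abstractions with Lemma~\ref{lemma:equiv} and the induction hypothesis (the paper merely applies these last two steps in the opposite order). The degenerate case $\mu(\alpha^{-1}(A_0))=0$, which the paper treats explicitly at the end, is covered implicitly by your phrasing of the positivity condition.
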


\begin{proof}
  We do the proof by induction on $n$. The case $n=0$ is obvious from
  the definition of $\alpha_{\#}$. Now fix $n\ge 1$ and assume that
  for each $0\leq k\leq n-1$, for each $\mu\in\Dist(S_1)$ and for each
  $(A_i)_{0\leq i\leq k}\in\Sigma_2^{k+1}$,
  \[
  \Prob^{\calT_1}_{\mu}(\Cyl(\alpha^{-1}(A_0),\dots \alpha^{-1}(A_k)))
  > 0 \Leftrightarrow
  \Prob^{\calT_2}_{\alpha_{\#}(\mu)}(\Cyl(A_0,\dots,A_k)) >0. 
  \] 
  We show that it is still the case for $n$. Fix $\mu\in\Dist(S_1)$
  and $(A_i)_{i\ge n+1}\in\Sigma_2^{n+2}$. We let $\nu_0 =
  \mu_{\alpha^{-1}(A_0)}$ and $\nu'_0 =
  (\alpha_{\#}(\mu))_{A_0}$. Note that we hence assume that
  $\mu(\alpha^{-1}(A_0))>0$. We first realize that $\nu'_0 =
  \alpha_{\#}(\nu_0)$. Indeed for each $A\in\Sigma_2$,
  \[
  (\alpha_{\#}(\nu_0))(A)= \nu_0(\alpha^{-1}(A))
  =\frac{\mu(\alpha^{-1}(A\cap A_0))}{\mu(\alpha^{-1}(A_0))} =
  \frac{(\alpha_{\#}(\mu))(A\cap A_0)}{(\alpha_{\#}(\mu))(A_0)}
  =\nu'_0(A). 
  \] 
  Then, applying Lemma~\ref{lemma:integration}, we get:
  \begin{alignat}{7}
  {} & \Prob^{\calT_1}_\mu(\Cyl(\alpha^{-1}(A_0),\alpha^{-1}(A_1),\dots,\alpha^{-1}(A_n)))\notag\\
  {} & \quad=\mu(\alpha^{-1}(A_0))\cdot\Prob^{\calT_1}_{\Omega_{\calT_1}(\nu_{0})}(\Cyl(\alpha^{-1}(A_1),\dots,\alpha^{-1}(A_n)))\notag
  \end{alignat}
  and 
  \[
  \Prob^{\calT_2}_{\alpha_{\#}(\mu)}(\Cyl(A_0,A_1,\dots,A_n)) =
  (\alpha_{\#}(\mu))(A_0)\cdot\Prob^{\calT_1}_{\Omega_{\calT_2}(\nu'_0)}(\Cyl(A_1,\dots,A_n)).\
  \]
  By definition of an $\alpha$-abstraction, the measures
  $\Omega_{\calT_2}(\nu'_0)$ and
  $\alpha_{\#}(\Omega_{\calT_1}(\nu_0))$ are equivalent. Hence from Lemma~\ref{lemma:equiv},
  \[\Prob^{\calT_2}_{\Omega_{\calT_2}(\nu'_0)}(\Cyl(A_1,\dots,A_n)) >0\Leftrightarrow
  \Prob^{\calT_2}_{\alpha_{\#}(\Omega_{\calT_1}(\nu_0))}(\Cyl(A_1,\dots,A_n))
  >0.\]
  From the hypothesis of induction, we get that
  \[
  \Prob^{\calT_2}_{\alpha_{\#}(\Omega_{\calT_1}(\nu_0))}(\Cyl(A_1,\dots,A_n))
  >0\Leftrightarrow\Prob^{\calT_1}_{\Omega_{\calT_1}(\nu_{0})}(\Cyl(\alpha^{-1}(A_1),\dots,\alpha^{-1}(A_n)))
  >0.
  \]
  Since $(\alpha_{\#}(\mu))(A_0)= \mu(\alpha^{-1}(A_0))$, we
  conclude: 
  \[
  \Prob^{\calT_1}_\mu(\Cyl(\alpha^{-1}(A_0),\alpha^{-1}(A_1),\dots,\alpha^{-1}(A_n)))>0\Leftrightarrow
  \Prob^{\calT_2}_{\alpha_{\#}(\mu)}(\Cyl(A_0,A_1,\dots,A_n)) >0.
  \] 
  We still have to consider the case where
  $\mu(\alpha^{-1}(A_0)=0$. In that case, $(\alpha_{\#}(\mu))(A_0)=0$
  and thus
  \[
  \Prob^{\calT_1}_\mu(\Cyl(\alpha^{-1}(A_0),\alpha^{-1}(A_1),\dots,\alpha^{-1}(A_n)))=0=\Prob^{\calT_2}_{\alpha_{\#}(\mu)}(\Cyl(A_0,A_1,\dots,A_n))
  \]
  which terminates the proof.
\end{proof}

As an immediate consequence, the positivity of properties with bounded
witnesses are preserved through $\alpha$-abstractions:

\begin{corollary}
  \label{coro:Until}
  Assume that $\calT_2$ is an $\alpha$-abstraction of $\calT_1$. Then
  for every $\mu \in \Dist(S_1)$, for every $A,B \in \Sigma_2$:
  \[
  \Prob_\mu^{\calT_1}(\ev{\calT_1}{\alpha^{-1}(A) \U \alpha^{-1}(B)})
  > 0 \Longleftrightarrow
  \Prob_{\alpha_{\#}(\mu)}^{\calT_2}(\ev{\calT_2}{A \U B}) > 0 \enspace.
  \]
\end{corollary}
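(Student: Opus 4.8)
The plan is to reduce the statement to Lemma~\ref{coro:equiv} by writing each Until event as a countable union of cylinders. First I would recall that, directly from the semantics of the Until operator given in Section~\ref{subsec:Events}, for any measurable sets $C,D$ of an STS $\calT$ one has
\[
\ev{\calT}{C \U D} = \bigcup_{i \ge 0} \Cyl(\overbrace{C, \ldots, C}^{i \text{ times}}, D),
\]
a countable union of (measurable) cylinders. Instantiating this in $\calT_1$ with $C = \alpha^{-1}(A)$ and $D = \alpha^{-1}(B)$, and in $\calT_2$ with $C = A$ and $D = B$, rephrases both sides of the claimed equivalence in terms of countably many cylinders.

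Next I would use the elementary fact that, for a countable family of measurable sets, the union has positive measure if and only if at least one member has positive measure (monotonicity for one direction, countable subadditivity for the other). Hence $\Prob_\mu^{\calT_1}(\ev{\calT_1}{\alpha^{-1}(A) \U \alpha^{-1}(B)}) > 0$ holds if and only if there exists $i \ge 0$ with $\Prob_\mu^{\calT_1}(\Cyl(\overbrace{\alpha^{-1}(A), \ldots, \alpha^{-1}(A)}^{i}, \alpha^{-1}(B))) > 0$, and symmetrically for $\calT_2$.

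Then the core step is to apply Lemma~\ref{coro:equiv} for each fixed $i$: taking $n = i$ and $A_0 = \cdots = A_{i-1} = A$, $A_i = B$ in $\Sigma_2^{i+1}$ yields
\[
\Prob_\mu^{\calT_1}(\Cyl(\overbrace{\alpha^{-1}(A), \ldots, \alpha^{-1}(A)}^{i}, \alpha^{-1}(B))) > 0 \iff \Prob_{\alpha_{\#}(\mu)}^{\calT_2}(\Cyl(\overbrace{A, \ldots, A}^{i}, B)) > 0,
\]
the degenerate case $i = 0$ (i.e.\ $\Cyl(\alpha^{-1}(B))$ versus $\Cyl(B)$) being the $n=0$ instance of that lemma. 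Combining this per-$i$ equivalence with the reduction of the previous paragraph gives the statement. I do not expect a genuine obstacle: the only points needing minimal care are verifying the cylinder decomposition of the Until event (immediate from the definition) and recalling that the cylinders involved are measurable so that the probabilities are well-defined (already established in Section~\ref{subsec:Events}).
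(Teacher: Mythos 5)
Your proof is correct and follows exactly the route the paper intends: the paper presents this corollary as an ``immediate consequence'' of Lemma~\ref{coro:equiv} without spelling out the details, and your decomposition of the Until event into a countable union of cylinders, combined with the observation that a countable union has positive measure iff some member does, is precisely the omitted argument.
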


\paragraph*{Soundness and completeness of abstractions}

When the abstract system $\calT_2$ is a DMC, soundness and
completeness have a simpler characterization, which will be useful in
the proofs.

\begin{restatable}{lemma}{dmc}
\label{lem:dmc}
  Assume $\calT_2$ is a DMC. Then:
  \begin{itemize}
  \item $\calT_2$ is an $\alpha$-abstraction of $\calT_1$ iff for every
    $s, s'\in S_2$,
    \[
    \kappa_2(s,\{s'\}) >0 \Longleftrightarrow \forall \mu \in
    \Dist(\alpha^{-1}(\lbrace s\rbrace)), \
    \Prob_{\mu}^{\calT_1}(\Cyl(S_1,\alpha^{-1}(\{s'\})))>0 \enspace.
    \]
  \item $\calT_2$ is sound iff for every $s \in S_2$ and 
    every $B\in\Sigma_2$,
\[
      \Prob^{\calT_2}_{\delta_s}(\F B) = 1 \Longrightarrow \forall
      \mu \in \Dist(\alpha^{-1}(\lbrace s\rbrace)), \
      \Prob^{\calT_1}_{\mu}(\F \alpha^{-1}(B)) = 1 \enspace.
\]
  \item $\calT_2$ is complete iff for every $s \in S_2$ and 
    every $B\in\Sigma_2$,
    \[
    \forall \mu \in \Dist(\alpha^{-1}(\lbrace s\rbrace)), \
    \Prob^{\calT_1}_{\mu}(\F \alpha^{-1}(B)) = 1 \Longrightarrow
    \Prob^{\calT_2}_{\delta_s}(\F B) = 1 \enspace.
    \] 
  \end{itemize}
\end{restatable}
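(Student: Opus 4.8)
The plan is to exploit the denumerability of $S_2$. When $\calT_2$ is a DMC, every $A\in\Sigma_2=2^{S_2}$ is a countable disjoint union of singletons, every measure on $S_2$ is a countable sum $\sum_s\nu(\{s\})\delta_s$ of Diracs, and every $\mu\in\Dist(S_1)$ splits, along the denumerably many fibres $\alpha^{-1}(\{s\})$ ($s\in S_2$), as $\mu=\sum_{s}\mu(\alpha^{-1}(\{s\}))\cdot\mu_{\alpha^{-1}(\{s\})}$ (sum over $s$ of positive mass; each $\alpha^{-1}(\{s\})\in\Sigma_1$ since $\alpha$ is measurable). Using the Dirac-decomposition $\Prob_\mu^{\calT}=\int\Prob_{\delta_{s_0}}^{\calT}\,\mu(\ud s_0)$ from the appendix and integrating fibrewise, both the one-step positivity functional $\mu\mapsto\Prob_\mu^{\calT_1}(\Cyl(S_1,\cdot))$ and the reachability functional $\mu\mapsto\Prob_\mu^{\calT_1}(\F\,\cdot)$ are additive under such countable convex combinations, so ``positive from $\mu$'' is equivalent to ``positive from some fibre component'' and ``almost-sure from $\mu$'' is equivalent to ``almost-sure from every fibre component of positive weight''; symmetric statements hold on the $\calT_2$-side for the Dirac components $\delta_s$ of $\alpha_{\#}(\mu)$, whose support is exactly $\{s\mid\mu(\alpha^{-1}(\{s\}))>0\}$. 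Combined with Lemma~\ref{lemma:PushForwardDelta} ($\alpha_{\#}(\mu)=\delta_s$ when $\mu\in\Dist(\alpha^{-1}(\{s\}))$), these observations let me move freely between the ``$\forall\mu\in\Dist(S_1)$'' phrasings in the definitions of $\alpha$-abstraction, soundness and completeness, and the ``$\forall s\in S_2$, $\forall\mu\in\Dist(\alpha^{-1}(\{s\}))$'' phrasings in the lemma. Throughout I assume w.l.o.g.\ that $\alpha$ is surjective, so that each fibre is non-empty (states of $\calT_2$ outside the image of $\alpha$ receive mass $0$ under every $\alpha_{\#}(\mu)$ and are irrelevant).

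For the first item I use the equivalent formulation of $\alpha$-abstraction recalled just after its definition: $\calT_2$ is an $\alpha$-abstraction of $\calT_1$ iff $\Prob_\mu^{\calT_1}(\Cyl(S_1,\alpha^{-1}(A)))>0\iff\Prob_{\alpha_{\#}(\mu)}^{\calT_2}(\Cyl(S_2,A))>0$ for all $\mu\in\Dist(S_1)$, $A\in\Sigma_2$. For ($\Rightarrow$) I specialise this to $\mu\in\Dist(\alpha^{-1}(\{s\}))$ and $A=\{s'\}$; since $\alpha_{\#}(\mu)=\delta_s$ and $\Prob_{\delta_s}^{\calT_2}(\Cyl(S_2,\{s'\}))=\kappa_2(s,\{s'\})$, the right-hand condition of item~1 follows at once (and, taking Diracs $\mu=\delta_{s_0}$, it is equivalent to: $\kappa_1(s_0,\alpha^{-1}(\{s'\}))>0\iff\kappa_2(\alpha(s_0),\{s'\})>0$ for all $s_0\in S_1$, $s'\in S_2$). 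For ($\Leftarrow$), starting from that pointwise equivalence and using $\kappa_1(s_0,\alpha^{-1}(A))=\sum_{s'\in A}\kappa_1(s_0,\alpha^{-1}(\{s'\}))$ and $\kappa_2(\alpha(s_0),A)=\sum_{s'\in A}\kappa_2(\alpha(s_0),\{s'\})$, the set $P_A:=\{s_0\mid\kappa_1(s_0,\alpha^{-1}(A))>0\}$ coincides with $\{s_0\mid\kappa_2(\alpha(s_0),A)>0\}$; hence both $\Prob_\mu^{\calT_1}(\Cyl(S_1,\alpha^{-1}(A)))$ and $\Prob_{\alpha_{\#}(\mu)}^{\calT_2}(\Cyl(S_2,A))$ are positive precisely when $\mu(P_A)>0$, which is the abstraction condition.

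For the soundness item (and, by a mirror argument swapping $\calT_1$ and $\calT_2$ in the last steps, the completeness item — where $\calT_2$ is tacitly an $\alpha$-abstraction of $\calT_1$), the direction ($\Rightarrow$) is immediate: given $s$, $B$, and $\mu\in\Dist(\alpha^{-1}(\{s\}))$ with $\Prob_{\delta_s}^{\calT_2}(\F B)=1$, rewrite $\delta_s=\alpha_{\#}(\mu)$ and apply the definition of soundness to the concrete distribution $\mu$. For ($\Leftarrow$), fix any $\mu\in\Dist(S_1)$ with $\Prob_{\alpha_{\#}(\mu)}^{\calT_2}(\F B)=1$; since $\calT_2$ is a DMC, $1=\sum_s\alpha_{\#}(\mu)(\{s\})\,\Prob_{\delta_s}^{\calT_2}(\F B)$ with all terms in $[0,1]$ forces $\Prob_{\delta_s}^{\calT_2}(\F B)=1$ for every $s$ with $\alpha_{\#}(\mu)(\{s\})=\mu(\alpha^{-1}(\{s\}))>0$; the right-hand hypothesis then gives $\Prob_{\mu_{\alpha^{-1}(\{s\})}}^{\calT_1}(\F\alpha^{-1}(B))=1$ for each such $s$, and the fibrewise decomposition $\Prob_\mu^{\calT_1}(\F\alpha^{-1}(B))=\sum_s\mu(\alpha^{-1}(\{s\}))\,\Prob_{\mu_{\alpha^{-1}(\{s\})}}^{\calT_1}(\F\alpha^{-1}(B))$, whose weights sum to $1$, yields $\Prob_\mu^{\calT_1}(\F\alpha^{-1}(B))=1$. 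The completeness item reverses the last three steps: fibrewise one gets $\Prob_{\mu_{\alpha^{-1}(\{s\})}}^{\calT_1}(\F\alpha^{-1}(B))=1$ on the support, the hypothesis transfers this to $\Prob_{\delta_s}^{\calT_2}(\F B)=1$, and recombining over $\alpha_{\#}(\mu)$ gives $\Prob_{\alpha_{\#}(\mu)}^{\calT_2}(\F B)=1$.

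I expect the genuine difficulty to be bookkeeping rather than conceptual: I must check that for the one-step positivity statement the quantifiers ``$\forall\mu\in\Dist(\alpha^{-1}(\{s\}))$'' and ``$\forall s_0\in\alpha^{-1}(\{s\})$'' really agree (a countable convex combination is positive iff some component is, and the Diracs $\delta_{s_0}$ are admissible $\mu$'s), and — the main point — that descending from arbitrary $\mu\in\Dist(S_1)$ to fibre-supported measures loses nothing, which is exactly where denumerability of $S_2$ and the fibrewise decomposition of $\Prob_\mu^{\calT_1}$ are used, together with the routine verifications that $\alpha^{-1}(\{s\})$ and $P_A$ are measurable and that the conditionals $\mu_{\alpha^{-1}(\{s\})}$ are well defined.
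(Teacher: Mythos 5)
Your proof is correct and follows the same route as the paper's: the paper only writes out one of the six implications (the ``fibre condition implies soundness'' direction, argued by contradiction via the decomposition of $\mu$ along the countably many fibres $\alpha^{-1}(\{s\})$ and the convex-combination identity $\Prob^{\calT_2}_{\alpha_{\#}(\mu)} = \sum_s \alpha_{\#}(\mu)(\{s\})\,\Prob^{\calT_2}_{\delta_s}$), leaving the remaining cases to the reader; your direct (non-contrapositive) version of that step and your mirror argument for completeness match it in substance, and your treatment of the first item via the pointwise kernel condition and the set $P_A$ is the natural way to fill in what the paper omits.

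One point deserves care in the first item. As displayed, the quantifier $\forall\mu\in\Dist(\alpha^{-1}(\{s\}))$ binds only the right-hand side of the biconditional, and under that literal reading the condition does \emph{not} characterise $\alpha$-abstraction: if $\kappa_2(s,\{s'\})=0$ while the fibre of $s$ contains both a state $s_0$ with $\kappa_1(s_0,\alpha^{-1}(\{s'\}))>0$ and one with $\kappa_1(\cdot,\alpha^{-1}(\{s'\}))=0$, both sides of the biconditional are false, yet $\delta_{s_0}$ witnesses failure of the abstraction property. Your parenthetical claim that ``taking Diracs'' the condition is equivalent to $\kappa_1(s_0,\alpha^{-1}(\{s'\}))>0\iff\kappa_2(\alpha(s_0),\{s'\})>0$ for all $s_0$, on which your ($\Leftarrow$) argument for item~1 rests, is valid only if the quantifier is read as scoping over the whole equivalence (i.e., the biconditional holds for \emph{each} $\mu$ on the fibre separately, so that $\kappa_2(s,\{s'\})=0$ forces \emph{every} such $\mu$, hence every Dirac, to give probability $0$). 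That is clearly the intended reading --- your ($\Rightarrow$) direction produces exactly this stronger form from the definition of abstraction --- but you should state it explicitly rather than silently identify the two; with that reading fixed, the rest of your argument, including the w.l.o.g.\ surjectivity of $\alpha$ and the measurability of the fibres and of $P_A$, goes through.
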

\begin{proof}
  We handle the case of soundness. Indeed assume that for each $s\in
  S_2$ and for each $B\in\Sigma_2$, the condition presented in the
  statement (second item) 
  holds
  true. Then fix $\mu\in\Dist(S_1)$, $B\in\Sigma_2$ and assume that
  $\Prob_{\alpha_{\#}(\mu)}^{\calT_2}(\F B)=1$ and show that
  $\Prob_{\mu}^{\calT_1}(\F \alpha^{-1}(B))=1$. Towards a
  contradiction, assume that $\Prob_{\mu}^{\calT_1}(\F
  \alpha^{-1}(B))<1$. Then, since $\calT_2$ is a DMC, there is $s\in
  S_2$ such that $\mu(\alpha^{-1}(s))>0$ and
  \[
  \Prob_{\mu_{\alpha^{-1}(s)}}^{\calT_1}(\F \alpha^{-1}(B)) <1. 
  \]
  From the hypothesis,
  it follows that
  $\Prob_{\delta_s}^{\calT_2}(\F B)<1$. Observe that since
  $\mu(\alpha^{-1}(s))>0$, we have that
  $(\alpha_{\#}(\mu))(s)>0$. Hence we get a contradiction by noticing:
  \[
  \Prob_{\alpha_{\#}(\mu)}^{\calT_2}(\F B)\leq
  (\alpha_{\#}(\mu))(s)\cdot \Prob_{\delta_s}^{\calT_2}(\F B) <1.
  \]
\end{proof}

\subsection{Missing proofs in Subsection~\ref{subsec:transfer}}

\noindent\fbox{\begin{minipage}{\linewidth}
    \MuDecisiveAbstr* \end{minipage}}
 
\medskip
In order to prove Proposition~\ref{thm:MuDecisiveAbstr}, we first show
the following technical lemma, which relates avoid-sets in $\calT_1$
and in $\calT_2$.

\begin{lemma}\label{lemma:BtildeAbstr}
  Let $\calT_2$ be an $\alpha$-abstraction of $\calT_1$. Then, for
  every $B\in\Sigma_2$:
  \(\widetilde{\alpha^{-1}(B)}=\alpha^{-1}(\Btilde). \)
\end{lemma}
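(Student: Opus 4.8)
The plan is to unwind the definition of the avoid-set on both sides and reduce the claimed equality to the preservation of positive reachability probabilities through an $\alpha$-abstraction, which is exactly Equation~\eqref{eq:PosReach} (equivalently Corollary~\ref{coro:Until}). Concretely, for $s_1 \in S_1$ we have $s_1 \in \widetilde{\alpha^{-1}(B)}$ iff $\Prob_{\delta_{s_1}}^{\calT_1}(\F \alpha^{-1}(B)) = 0$, whereas $s_1 \in \alpha^{-1}(\widetilde{B})$ iff $\alpha(s_1) \in \widetilde{B}$, i.e. iff $\Prob_{\delta_{\alpha(s_1)}}^{\calT_2}(\F B) = 0$. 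Hence it suffices to prove that $\Prob_{\delta_{s_1}}^{\calT_1}(\F \alpha^{-1}(B)) = 0$ if and only if $\Prob_{\delta_{\alpha(s_1)}}^{\calT_2}(\F B) = 0$.

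First I would observe that $\delta_{s_1} \in \Dist(\alpha^{-1}(\{\alpha(s_1)\}))$, so Lemma~\ref{lemma:PushForwardDelta} yields $\alpha_{\#}(\delta_{s_1}) = \delta_{\alpha(s_1)}$. Then I would apply Equation~\eqref{eq:PosReach} with $A = S_2$ (note that $\alpha^{-1}(S_2) = S_1$, so that $\alpha^{-1}(S_2) \U \alpha^{-1}(B)$ is just $\F \alpha^{-1}(B)$ and $S_2 \U B$ is just $\F B$) and with initial distribution $\mu = \delta_{s_1}$, obtaining
\[
\Prob_{\delta_{s_1}}^{\calT_1}(\F \alpha^{-1}(B)) > 0 \iff \Prob_{\delta_{\alpha(s_1)}}^{\calT_2}(\F B) > 0 .
\]
Negating both sides of this equivalence gives precisely $\Prob_{\delta_{s_1}}^{\calT_1}(\F \alpha^{-1}(B)) = 0 \iff \Prob_{\delta_{\alpha(s_1)}}^{\calT_2}(\F B) = 0$, and therefore $s_1 \in \widetilde{\alpha^{-1}(B)} \iff s_1 \in \alpha^{-1}(\widetilde{B})$, which is the desired set equality.

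There is essentially no genuine obstacle here: the statement is a direct consequence of the transfer of positive reachability through $\alpha$-abstractions established just before. The only points requiring a little care are the reduction from the general-initial-distribution formulation of the abstraction property to the special case of a Dirac initial distribution, which is handled cleanly by Lemma~\ref{lemma:PushForwardDelta}, and the fact that $\widetilde{(\cdot)}$ is defined pointwise via Dirac distributions, so the set equality really does boil down to this pointwise equivalence.
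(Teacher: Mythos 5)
Your proof is correct and follows essentially the same route as the paper's: both reduce the set equality to the pointwise equivalence $\Prob_{\delta_{s}}^{\calT_1}(\F \alpha^{-1}(B))=0 \iff \Prob_{\delta_{\alpha(s)}}^{\calT_2}(\F B)=0$ via Corollary~\ref{coro:Until} (i.e.\ Equation~\eqref{eq:PosReach}) and the identity $\alpha_{\#}(\delta_{s})=\delta_{\alpha(s)}$ from Lemma~\ref{lemma:PushForwardDelta}. No gaps.
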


\begin{proof}
Fix $B\in\Sigma_2$. We have the series of equivalences:
\begin{alignat}{7}
s\in \widetilde{\alpha^{-1}(B)} & \Longleftrightarrow \Prob_{\delta_s}^{\calT_1}(\F\alpha^{-1}(B))=0\notag\\
{} & \Longleftrightarrow \Prob_{\alpha_{\#}(\delta_s)}^{\calT_2}(\F
B)=0\quad \text{(Corollary~\ref{coro:Until}).}\notag
\end{alignat}
Now from Lemma~\ref{lemma:PushForwardDelta}, one can show that $\alpha_{\#}(\delta_s)=\delta_{\alpha(s)}$ by noticing that $\delta_s\in\Dist(\alpha^{-1}(\alpha(s)))$.
Hence $s\in \widetilde{\alpha^{-1}(B)}$ iff $\alpha(s)\in\Btilde$
(\emph{i.e.}\ $s\in \alpha^{-1}(\Btilde)$), which concludes the
  proof.
\end{proof}

We are now ready to prove Proposition~\ref{thm:MuDecisiveAbstr}.

\begin{proof}[Proof of Proposition~\ref{thm:MuDecisiveAbstr}]
  Fix $B\in\Sigma_2$ and assume that $\calT_2$
  is $\D(\alpha_{\#}(\mu),B)$, \emph{i.e.}\
  \begin{equation}\label{eq:hypoDecAbstr}
  \Prob^{\calT_2}_{\alpha_{\#}(\mu)} (\F B \vee \F \Btilde_2)= 1 \enspace.
  \end{equation}
  To show that $\calT_1$ is $\D(\mu,\alpha^{-1}(B))$, by
  Lemma~\ref{lemma:BtildeAbstr}, it suffices to prove that
  \[\Prob_{\mu}^{\calT_1}(\F \alpha^{-1}(B) \vee \F
  \alpha^{-1}(\Btilde_2))=1 \enspace.\]
The latter is immediate by~\eqref{eq:hypoDecAbstr} since $\calT_2$ is $\mu$-sound.
\end{proof}







\bigskip
\noindent\fbox{\begin{minipage}{\linewidth}
\attractorSound* \end{minipage}}
\label{app-attractorsound}

\begin{proof}
  Fix $B\subseteq S_2$ and $\mu\in\Dist(S_1)$. We want to show that
  $\calT_1$ is $\mu$-decisive w.r.t. $\alpha^{-1}(B)$. We therefore
  have to show that $\Prob_{\mu}^{\calT_1}(\F \alpha^{-1}(B) \vee \F
  \alpha^{-1}(\Btilde))=1$.  Towards a contradiction we assume that
  $\Prob_{\mu}^{\calT_1}(\G (\neg \alpha^{-1}(B)) \wedge \G (\neg
  \alpha^{-1}(\Btilde)))>0$, i.e. $\Prob_{\mu}^{\calT_1}(\G
  \alpha^{-1}(B^c) \wedge \G \alpha^{-1}((\Btilde)^c))>0$.
  Since $A_1 =\alpha^{-1}(A_2)$ is an attractor of $\calT_1$, we
  deduce from Lemma~\ref{lemma:attractorGF} that
  $\Prob_\mu^{\calT_1}(\G \F \alpha^{-1}(A_2))=1$, hence:
  \begin{equation}
    \Prob_{\mu}^{\calT_1}(\G \alpha^{-1}(B^c) \wedge \G
    \alpha^{-1}((\Btilde)^c) \wedge \G \F \alpha^{-1}(A_2))>0\enspace. \label{tutu}
  \end{equation}
  We let $A'_2 \subseteq A_2$ be the subset of states $s$ of $A_2$
  such that:
  \[
  \Prob_{\mu}^{\calT_1}(\G \alpha^{-1}(B^c) \wedge \G
  \alpha^{-1}((\Btilde)^c) \wedge \G \F \alpha^{-1}(\{s\}))>0\enspace.
  \]
  Due to equation~\eqref{tutu}, $A'_2$ is non-empty, and furthermore
  every such $s$ belongs to $B^c$ and $(\widetilde{B})^c$. We set
  $A'_1 = \alpha^{-1}(A'_2)$. 

  In particular, $A'_1\subseteq\alpha^{-1}((\Btilde)^c)$, hence from
  Lemma~\ref{lemma:BTildeComplementaire} (third item) we get that for
  every $\nu\in\Dist(A'_1)$, $\Prob_{\nu}^{\calT_1}(\F
  \alpha^{-1}(B))>0$.  According to hypothesis $(\dag)$, for every $s
  \in A'_2$, we can find $p_s>0$ and $k_s \in \IN$ such that for every
  $\nu_s \in \Dist(\alpha^{-1}(s))$,
  \[\Prob_{\nu_s}^{\calT_1}(\F[\leq k_s] \alpha^{-1}(B))\ge p_s. \]
  Then taking $p=\min\lbrace p_s\mid s \in A'_2\rbrace>0$ and
  $k=\max\lbrace k_s\mid s\in A'_2\rbrace\in\IN$ (since $A'_2$ is
  finite), it holds that for every $\nu\in\Dist(A'_1)$,
  \begin{equation}
    \Prob_{\nu}^{\calT_1}(\F[\leq k] \alpha^{-1}(B))\ge p \qquad
    \text{hence} \qquad \Prob_{\nu}^{\calT_1}(\G[\leq k]
    \alpha^{-1}(B^c))\leq 1-p. \label{eq:upBound}
  \end{equation}
  From~\eqref{tutu}, we deduce that:
  \[
  0<\Prob_{\mu}^{\calT_1}(\G \alpha^{-1}(B^c) \wedge \G
  \alpha^{-1}((\widetilde{B})^c)\wedge\G\F A'_1) \le
  \Prob_{\mu}^{\calT_1}(\G \alpha^{-1}(B^c) \wedge\G\F A'_1)
  \]
  Standardly in the literature (see \emph{e.g.}~\cite[Lemma
  3.4]{ABM07}), one infers immediately from~\eqref{eq:upBound} that
  \[
  \Prob_{\mu}^{\calT_1}(\G \alpha^{-1}(B^c) \wedge\G\F A'_1) \le
  \lim_{n\rightarrow \infty} (1-p)^n=0
  \]
  However we believe this is not so immediate, especially in our
  general setting, and we develop a complete proof below. Note that
  with this result, we exhibit a contradiction, which will conclude the
  proof.

\medskip
It remains to show the last inequality.
%
First we introduce some useful notations. Observe that from the
definition of $A'_1$, it holds that $A'_1\subseteq
\alpha^{-1}(B^c)$. Then for each $j\in\IN$, we will write $B^c_{[j]}$
for the finite sequence $\alpha^{-1}(B^c),\ldots,\alpha^{-1}(B^c)$
where $\alpha^{-1}(B^c)$ occurs exactly $j$ times, and similarly we
will write $(B^c\setminus A'_1)_{[j]}$ for the finite sequence
$\alpha^{-1}(B^c)\setminus A'_1,\ldots,\alpha^{-1}(B^c)\setminus A'_1$
where $\alpha^{-1}(B^c)\setminus A'_1$ occurs exactly $j$ times. Then
observe that
\begin{multline}\label{eq:charCylAFsound}
  \ev{\calT_1}{\G\F A'_1 \wedge \G \alpha^{-1}(B^c)} = \\
  \bigcap_{n\in\IN} \bigcup_{j_0\in\IN}
  \bigcup_{(j_1,\ldots,j_n)\in\IN^{n}_{\ge k}} \Cyl(B^c_{[j_0]}, A'_1,
  B^c_{[j_1]},A'_1, B^c_{[j_2]},\ldots, B^c_{[j_{n-1}]}, A'_1,
  B^c_{[j_n]}),
\end{multline}
where $\IN_{\ge k}$ denotes the set of natural numbers larger than or
equal to $k$. We depict such a cylinder and what we can infer on the
probabilities on Figure~\ref{Figure:schemeAttrSound}. As all
behaviours are always in $\alpha^{-1}(B^c)$, the big rectangle
represents this set, while the small one represents $A'_1\subseteq
\alpha^{-1}(B^c)$ which we know is reached infinitely often with
probability~$1$. The behaviours are thus decomposed according to each
visit in $A'_1$ followed by at least $k$ moves (while staying in
$A'_1$). The dashed arrows represent the $k$ first steps. Note that
within those $k$ steps, $A'_1$ could be reached but it has no
importance. What matters here is the fact that from $A'_1$, the
probability of the next $k$ steps within $\alpha^{-1}(B^c)$ is upper
bounded by $1-p$. The curled arrows hold for the next visit to $A'_1$
which we hence know that it will happen with probability $1$.

\begin{figure}[!h]
\centering
\begin{tikzpicture}
\tikzstyle{ptt}=[scale=1]
\tikzstyle{loc}=[ptt,draw,circle,minimum size =0.8cm];
\tikzstyle{locRec}=[ptt,draw,rectangle,minimum height =0.7cm, minimum width =0.9cm, rounded corners = 3pt];
\tikzstyle{locRecInv}=[ptt,rectangle,minimum height =0.7cm, minimum width =0.9cm, rounded corners = 3pt];
\tikzstyle{inv}=[ptt,circle,minimum size =1cm];
\tikzstyle{fleche}=[->,>=stealth', rounded corners=1pt];
\tikzstyle{int}=[->,>=stealth', rounded corners=1pt, dashed];
\tikzstyle{visit}=[->, >=stealth', decorate, decoration={snake,amplitude=.1cm,post length=1mm}]

\draw[rounded corners, thick] (0, 0) rectangle (13, 5);
\draw[rounded corners, thick] (0.5, 0.3) rectangle (12.5, 1.8);

\node[loc] (b1) at (0.75, 4.25) {};
\node[loc] (b2) at (3.25, 4.25) {};
\node[loc] (b3) at (5.75, 4.25) {};
\node[loc] (b4) at (8.25, 4.25) {};
\node[loc] (b5) at (10.75, 4.25) {};

\node[loc] (a1) at (2, 1.05) {};
\node[loc] (a2) at (4.5, 1.05) {};
\node[loc] (a3) at (9.5, 1.05) {};

\draw[visit] (b1) -- (a1) node[midway, below, sloped] {$\scriptstyle \text{1st}\text{ visit to}$} node[midway, above, sloped] {$\scriptstyle \leq 1$};
\draw[visit] (b2) -- (a2) node[midway, below, sloped] {$\scriptstyle \text{2nd}\text{ visit to}$} node[midway, above, sloped] {$\scriptstyle \leq 1$};
\draw[visit] (b4) -- (a3) node[midway, below, sloped] {$\scriptstyle n\text{-th}\text{ visit to}$} node[midway, above, sloped] {$\scriptstyle \leq 1$};

\draw[int] (a1) -- (b2) node[midway, above, sloped] {$\scriptstyle k \text{ steps}$} node[midway, below, sloped] {$\scriptstyle \leq 1-p$};
\draw[int] (a2) -- (b3) node[midway, above, sloped] {$\scriptstyle k \text{ steps}$} node[midway, below, sloped] {$\scriptstyle \leq 1-p$};
\draw[int] (a3) -- (b5) node[midway, above, sloped] {$\scriptstyle k \text{ steps}$} node[midway, below, sloped] {$\scriptstyle \leq 1-p$};

\draw[dotted, very thick] (6.25 ,2.5) -- (7.75, 2.5);
\draw[dotted, very thick] (11.25 ,2.5) -- (12.5, 2.5);

\node (Bc) at (12.2, 4.6) {$\alpha^{-1}(B^c)$};
\node (A) at (12.1, 0.65) {$A'_1$};

\end{tikzpicture} \caption{Scheme for the proof of Proposition~\ref{prop:attractorSound}.}
\label{Figure:schemeAttrSound}
\end{figure}

We will prove by induction over $n$ that for each $n\ge 0$ and for
each $\nu\in\Dist(S_1)$,
\begin{equation*}
  \Prob_{\nu}^{\calT_1} \Big( \bigcup_{j_0\in\IN}
  \bigcup_{(j_1,\ldots,j_n)\in\IN^{n}_{\ge k}} \Cyl(B^c_{[j_0]}, A'_1,
  B^c_{[j_1]},A'_1, B^c_{[j_2]},\ldots, B^c_{[j_{n-1}]}, A'_1,
  B^c_{[j_n]}) \Big)\leq (1-p)^{n}. 
\end{equation*}
Observe that for each $n\ge 0$, it holds that
\begin{multline*}
  \bigcup_{j_0\in\IN} \bigcup_{(j_1,\ldots,j_n)\in\IN^{n}_{\ge k}}
  \Cyl(B^c_{[j_0]}, A'_1, B^c_{[j_1]},A'_1, B^c_{[j_2]},\ldots,
  B^c_{[j_{n-1}]}, A'_1, B^c_{[j_n]}) \subseteq \\
  \bigcup_{j_0\in\IN} \bigcup_{(j_1,\ldots,j_{n-1})\in\IN^{n-1}_{\ge
      k}} \Cyl(B^c_{[j_0]}, A'_1, B^c_{[j_1]},A'_1, B^c_{[j_2]},\ldots,
  B^c_{[j_{n-1}]}, A'_1, B^c_{[k]}).
\end{multline*}
Hence it is enough to demonstrate that for each $n\ge 0$ and for each
$\nu\in\Dist(S_1)$,
\begin{equation}\label{eq:IndArgInf}
  \Prob_{\nu}^{\calT_1} \Big( \bigcup_{j_0\in\IN}
  \bigcup_{(j_1,\ldots,j_{n-1})\in\IN^{n-1}_{\ge k}} \Cyl(B^c_{[j_0]},
  A'_1, B^c_{[j_1]},A'_1, B^c_{[j_2]},\ldots, B^c_{[j_{n-1}]}, A'_1, B^c_{[k]})
  \Big)\leq (1-p)^{n}. 
\end{equation}
First fix $n=0$ and $\nu\in\Dist(S_1)$. It corresponds to the two
first arrows on Figure~\ref{Figure:schemeAttrSound}. We will show that
for each $m\ge 0$,
\[
\Prob_{\nu}^{\calT_1} \Big( \bigcup_{j=0}^m \Cyl(B^c_{[j]}, A'_1, B^c_{[k]})
\Big)\leq 1-p,
\]
that is we decompose Figure~\ref{Figure:schemeAttrSound} according to
the length of the first curled arrow. We first prove cases $m=0$ and
$m=1$ in order to illustrate what is happening, and then we will make
the general case. If $m=0$, it then holds that
\begin{alignat}{7}
  \Prob^{\calT_1}_\nu(\Cyl(A'_1, B^c_{[k]})) & =
  \nu(A'_1)\cdot\Prob^{\calT_1}_{\nu_{A'_1}}(\Cyl(A'_1, B^c_{[k]}))\notag\\
  {} & \leq \Prob^{\calT_1}_{\nu_{A'_1}}(\Cyl(B^c_{[k+1]})) \leq 1-p
\end{alignat}
where the first inequality holds from the fact that
$A'_1\subseteq\alpha^{-1}(B^c)$, and the second one
from~\eqref{eq:upBound}. Note that we assumed here that $\nu(A'_1)>0$,
but it has no importance since if $\nu(A'_1)=0$, then the inequality
trivially holds. Now if $m=1$, first observe that
\[
\Cyl(A'_1, B^c_{[k]}) \cup \Cyl(B^c, A'_1, B^c_{[k]}) = \Cyl(A'_1,
B^c_{[k]})\cup \Cyl( B^c\setminus A'_1, A'_1, B^c_{[k]})
\]
where in the second member of the equality, the union is disjoint. It
follows that, writting $\nu'_0=\nu_{B^c\setminus A'_1}$ and
$\nu_1=(\Omega_{\calT_1}(\nu'_0))_{A'_1}$:
\begin{alignat}{7}
  {} & \Prob^{\calT_1}_\nu\big( \Cyl(A'_1, B^c_{[k]}) \cup \Cyl(B^c, A'_1, B^c_{[k]}) \big) \notag\\
  {} & = \Prob^{\calT_1}_\nu( \Cyl(A'_1, B^c_{[k]}))+ \Prob^{\calT_1}_\nu( \Cyl(B^c\setminus A'_1, A'_1, B^c_{[k]}))\notag\\
  {} & \leq \nu(A'_1)\cdot (1-p) + \nu(B^c\setminus A'_1)
  \cdot(\Omega_{\calT_1}(\nu'_0))(A'_1) \cdot \Prob^{\calT_1}_{\nu_1}(
  \Cyl(A'_1, B^c_{[k]}))\ \text{from Lemma~\ref{lemma:integration}}\notag\\
  {} & \leq \nu(A'_1)\cdot (1-p) + \nu(B^c\setminus A'_1)
  \cdot(\Omega_{\calT_1}(\nu'_0))(A'_1) \cdot (1-p) \leq (1-p).\notag
\end{alignat}
Note that we again assumed here that $\nu(B^c\setminus A'_1)>0$ and $(\Omega_{\calT_1}(\nu'_0))(A'_1)>0$, which has again no importance since otherwise, the probability of one of the cylinders would be equal to $0$ and which would thus not interfere on the above inequality. We now prove the general case for $m\ge 2$. Again, we can decompose the union of the cylinders into a disjoint one as follows:
\[
\bigcup_{j=0}^m \Cyl(B^c_j, A'_1, B^c_{[k]}) = \bigcup_{j=0}^m
\Cyl((B^c\setminus A'_1)_j, A'_1, B^c_{[k]})). 
\]
We use the following notations: $\nu'_0 = \nu_{B^c\setminus A'_1}$, $\nu_0=\nu_{A'_1}$, and
\begin{itemize}
\item for each $1\leq i\leq m-1$,
  $\nu'_i=(\Omega_{\calT_1}(\nu'_{i-1}))_{B^c\setminus A'_1}$ and 
\item for each $1\leq i\leq m$,
  $\nu_i=(\Omega_{\calT_1}(\nu'_{i-1}))_{A'_1}$. 
\end{itemize}
Note that we assume again that the conditional probability are well-defined, but like in cases $m=0$ and $m=1$, we can make this supposition w.l.o.g. Then using Lemma~\ref{lemma:integration} and the observation~\eqref{eq:upBound}, we get that:
\begin{alignat}{7}
{} & \Prob^{\calT_1}_\nu\Big(\bigcup_{j=0}^m \Cyl(B^c_{[j]}, A'_1, B^c_{[k]})
\Big) = \sum_{j=0}^m \Prob^{\calT_1}_\nu (\Cyl(B^c_{[j]}, A'_1, B^c_{[k]}))
\notag\\ 
{} & = \nu(A'_1)\cdot \Prob^{\calT_1}_{\nu_0}(\Cyl(A'_1,
B^c_{[k]}))\notag\\ 
{} & + \sum_{j=1}^m \big(\nu(B^c\setminus A'_1) \cdot
\prod_{i=1}^{j-1}(\Omega_{\calT_1}(\nu'_i))(B^c\setminus A'_1) \cdot
(\Omega_{\calT_1}(\nu'_{j-1})(A'_1) \cdot
\overbrace{\Prob^{\calT_1}_{\nu_j}(\Cyl(A'_1, B^c_{[k]}))}^{\leq
  1-p}\big)\notag\\
{} & \leq (1-p) \cdot \Big( \nu(A'_1) + \sum_{j=1}^m \big(\nu(B^c\setminus A'_1) \cdot \prod_{i=1}^{j-1}(\Omega_{\calT_1}(\nu'_0))(B^c\setminus A'_1) \cdot (\Omega_{\calT_1}(\nu'_{j-1})(A'_1)\big)\Big)\notag\\
{} & = (1-p)\cdot \Prob^{\calT_1}_\nu\Big(\bigcup_{j=0}^m
\Cyl(B^c_{[j]}, A'_1) \Big)\leq 1-p\notag 
\end{alignat}
where the last equality comes again from
Lemma~\ref{lemma:integration}, but in the other sense this
time. Finally through the limit over $m$, we obtain
that~\eqref{eq:IndArgInf} is true when $n=0$. 

Now fix $n\ge 0$ and assume that for $0\leq l\leq n$ and for each
$\nu\in\Dist(S_1)$, the inequality~\eqref{eq:IndArgInf} holds true. We
get in particular that for each $\nu\in\Dist(S_1)$,  
\[
\Prob^{\calT_1}_\nu\Big( \bigcup_{j_0\in\IN}
\bigcup_{(j_1,\ldots,j_{n-1})\in\IN^{n-1}_{\ge k}} \Cyl(B^c_{[j_0]},
A'_1, B^c_{[j_1]},A'_1, B^c_{[j_2]},\ldots, B^c_{[j_{n-1}]}, A'_1, B^c_{[k]})
\Big)\leq (1-p)^n. 
\]
We want to show that~\eqref{eq:IndArgInf} is still satisfied for
$n+1$. Like in case $n=0$, we will show that for each $m\ge 0$,
\[
\Prob^{\calT_1}_\nu\Big( \bigcup_{j=0}^m
\bigcup_{(j_1,\ldots,j_{n})\in\IN^{n}_{\ge k}} \Cyl(B^c_{[j]}, A'_1,
B^c_{[j_1]},A'_1, B^c_{[j_2]},\ldots, B^c_{[j_{n}]}, A'_1, B^c_{[k]})
\Big)\leq (1-p)^{n+1}.
\]
We thus again decompose the scheme of
Figure~\ref{Figure:schemeAttrSound} according to the length of the
first arrow. In fact the proof is very similar to the case $n=0$ as
once you hit for the second time $\alpha^{-1}(B^c)$ in the scheme
(\emph{i.e.} after the first dashed arrow), the induction hypothesis
can be applied. What happens before is the exact same behaviour as in
the case for $n=0$. For each $m\ge 0$ this finite union of cylinders
can be decomposed into a finite union of disjoint sets as follows:
\begin{multline*}
\bigcup_{j=0}^m \bigcup_{(j_1,\ldots,j_{n})\in\IN^{n}_{\ge k}}
\Cyl(B^c_{[j]}, A'_1, B^c_{[j_1]},A'_1, B^c_{[j_2]},\ldots, B^c_{[j_{n}]},
A'_1, B^c_{[k]}) = \\ 
\bigcup_{j=0}^m \bigcup_{(j_1,\ldots,j_{n})\in\IN^{n}_{\ge k}}
\Cyl((B^c\setminus A'_1)_{[j]}, A'_1, B^c_{[j_1]},A'_1,
B^c_{[j_2]},\ldots, B^c_{[j_{n}]}, A'_1, B^c_{[k]}).
\end{multline*}
Then using Lemma~\ref{lemma:integration} and this decomposition into a disjoint union, it holds that
\begin{multline*}
\Prob^{\calT_1}_\nu\Big( \bigcup_{j=0}^m
\bigcup_{(j_1,\ldots,j_{n})\in\IN^{n}_{\ge k}} \Cyl(B^c_{[j]}, A'_1,
B^c_{[j_1]},A'_1, B^c_{[j_2]},\ldots, B^c_{[j_{n}]}, A'_1, B^c_{[k]}) \Big)=\\ 
\sum_{j=0}^m \alpha_j \cdot \Prob^{\calT_1}_{\mu_j}\Big(
\bigcup_{j'\in\IN} \bigcup_{(j_2,\ldots,j_{n})\in\IN^{n-1}_{\ge k}}
\Cyl(B^c_{[j']},A'_1, B^c_{[j_2]},\ldots, B^c_{[j_{n}]}, A'_1, B^c_{[k]})
\Big), 
\end{multline*}
where for each $0\leq j\leq m$, $0<\alpha_j<1$ and
$\mu_j\in\Dist(S_1)$ are given by Lemma~\ref{lemma:integration}, where
$\alpha_j$ corresponds to: 
\[
\alpha_j = \Prob^{\calT_1}_\nu(\Cyl((B^c\setminus A'_1)_{[j]}, A'_1,
B^c_{[k]})). 
\]
Note that this is possible due to the fact that we look at the union
of all $j_1\ge k$. Using the induction hypothesis and this last
equality, we get that 
\begin{alignat}{7}
{} & \Prob^{\calT_1}_\nu\Big( \bigcup_{j=0}^m
\bigcup_{(j_1,\ldots,j_{n})\in\IN^{n}_{\ge k}} \Cyl(B^c_{[j]}, A'_1,
B^c_{[j_1]},A'_1, B^c_{[j_2]},\ldots, B^c_{[j_{n}]}, A'_1, B^c_{[k]})
\Big)\notag\\ 
{} & \leq (1-p)^n \cdot \Prob^{\calT_1}_\nu\Big(\bigcup_{j=0}^m
\Cyl(B^c_{[j]}, A'_1,B^c_{[k]})\Big)\notag\\ 
{} & \leq (1-p)^{n+1}\notag
\end{alignat}
where the last inequality stands from what we have done in case $n=0$. Through the limit over $m$, we can thus deduce that~\eqref{eq:IndArgInf} is still true for $n+1$.
 
Finally coming back to~\eqref{eq:charCylAFsound}, through the limit
over $n$ this time, we conclude that 
\[
\Prob_{\mu}^{\calT_1}(\G\F A'_1 \wedge \G \alpha^{-1}(B^c))\leq
\lim_{n\to\infty} (1-p)^n =0.
\]
This concludes the proof.
\end{proof}

\bigskip
\noindent\fbox{\begin{minipage}{\linewidth}
\propfairness* \end{minipage}}
\label{app-fairness}

    \begin{proof}
      As $\calT_2$ is a finite Markov chain, it can be viewed as a
      graph. We can therefore speak of the \emph{bottom strongly
        connected components (BSCC)} of $\calT_2$ (a BSCC is a subset
      $C \subseteq S_2$ such that for all $s,s' \in C$, if $s'$ is
      reachable from $s$, then $s$ is reachable from $s'$ as well).
      We write $\mathsf{BSCC}(\calT_2)$ for the set of BSCCs of
      $\calT_2$. We define $\calC= \lbrace s\in S_2\mid\exists
      C\in\mathsf{BSCC}(\calT_2), \ s\in C\rbrace$. We first prove
      that $\Prob_\mu^{\calT_1}(\F \alpha^{-1}(\calC))=1$. In order to
      establish this, we show that for each $s\in S_2$,
      $\Prob_\mu^{\calT_1}(\G\F \alpha^{-1}(s))>0$ implies that $s\in
      \calC$.
Indeed, pick $s \in S_2$ such that:
      \[
      \Prob^{\calT_1}_\mu(\G \F \alpha^{-1}(\{s\})) >0.
      \]
      We can state that for each $k\ge 1$ and for each $s_0, s_1,\ldots, s_k\in S_2$ with $s_0=s$ and such that for each $0\leq i <k$, $\kappa_2(s_i,s_{i+1})>0$, it holds that 
      \[\Prob_\mu^{\calT_1}(\G\F \alpha^{-1}(s_k)\mid\G\F \alpha^{-1}(s))=1. \] 
      We prove this by induction over $k$. First fix $k=1$ and let $s_1\in S_2$ such that $\kappa_2(s, s_1)>0$. Then for every
      $\nu \in \Dist(\alpha^{-1}(s))$, $\Prob^{\calT_1}_\nu
      (\Cyl(\alpha^{-1}(\{s\}),\alpha^{-1}(\{s_1\}))) > 0$. Hence $\alpha^{-1}(s)
      \in \PreProb^\calT(\{\alpha^{-1}(s_1)\})$. And since $\calT_1$ is fair
      w.r.t. $\alpha$-closed sets, we get that 
      \[
      \Prob^{\calT_1}_\mu(\G \F \alpha^{-1}(\{s_1\}) \mid \G \F
      \alpha^{-1}(\{s\})) =1.
      \]
      Now fix $k> 1$ and assume that for each $1\leq j < k$ and for each $s_0,\ldots, s_j\in S_2$ with $s_0=s$ and such that for each $0\leq i <j$, $\kappa_2(s_i,s_{i+1})>0$, it holds that 
      \[\Prob_\mu^{\calT_1}(\G\F \alpha^{-1}(s_j)\mid\G\F \alpha^{-1}(s))=1. \]
      We want to show that it is still the case for $k$. Fix $s_0,s_1,\ldots, s_k\in S_2$ satisfying all the desired hypotheses. Using the induction hypothesis, we know that $\Prob_\mu^{\calT_1}(\G\F\alpha^{-1}(s_{k-1})\mid\G\F\alpha^{-1}(s_0))=1$ and $\Prob_\mu^{\calT_1}(\G\F\alpha^{-1}(s_{k})\mid\G\F\alpha^{-1}(s_{k-1}))=1$. We can then compute:
\begin{alignat}{7}
{} & \Prob_\mu^{\calT_1}(\G\F\alpha^{-1}(s_{k})\mid\G\F\alpha^{-1}(s_0))\notag\\
{} & \quad = \Prob_\mu^{\calT_1}(\G\F\alpha^{-1}(s_k) \wedge \G\F\alpha^{-1}(s_{k-1})\mid\G\F\alpha^{-1}(s_0))\notag\\
{} & \quad = \Prob_\mu^{\calT_1}(\G\F\alpha^{-1}(s_k) \mid \G\F\alpha^{-1}(s_{k-1})\wedge\G\F\alpha^{-1}(s_0))\cdot\Prob_\mu^{\calT_1}(\G\F\alpha^{-1}(s_{k-1})\mid\G\F\alpha^{-1}(s_0))\notag\\
{} & \quad = 1\notag
\end{alignat}
from the induction hypothesis. 
       This shows that for every state $s'$ which is reachable
      from $s$ in $\calT_2$, 
      \[
      \Prob^{\calT_1}_\mu(\G \F \alpha^{-1}(\{s'\}) \mid \G \F
      \alpha^{-1}(\{s\})) =1.
      \]
      
      Then fix $s'$ reachable from $s$ in $\calT_2$. We can show that $s$ is also reachable from $s'$. Towards a contradiction, assume that it is not the case. It follows that
      \[
      \Prob^{\calT_1}_\mu(\G \F \alpha^{-1}(\{s'\}) \wedge \G \F
      \alpha^{-1}(\{s\})) =0
      \]
      which is a contradiction with $\Prob^{\calT_1}_\mu(\G \F \alpha^{-1}(\{s'\}) \mid \G \F\alpha^{-1}(\{s\})) =1$ and $\Prob^{\calT_1}_\mu(\G \F \alpha^{-1}(\{s\})) >0$. We deduce thus that $s$ belongs to a BSCC of $\calT_2$. 
      
      We can now prove that $\Prob_{\mu}^{\calT_1}(\F \alpha^{-1}(\calC))=1$. Indeed  observe first that from the finiteness of $\calT_2$, it holds that for every paths $\rho=t_0 t_1 t_2 \ldots\in\Paths(\calT_1)$, there is $s\in S_2$ such that $\lbrace i\in\IN\mid t_i\in\alpha^{-1}(s)\rbrace$ is infinite. Keeping this in mind, we write $S_2=\lbrace s_1, \ldots, s_k, s_{k+1}, \ldots, s_n\rbrace$ where $k\ge 1$ and $\lbrace s_1,\ldots, s_k\rbrace=\calC$. Then we can write
\begin{alignat}{7}
\Paths(\calT_1) = & \ev{\calT_1}{\G\F\alpha^{-1}(s_1)} \cup \ev{\calT_1}{\G\F\alpha^{-1}(s_2) \wedge \F\G \neg \alpha^{-1}(s_1)}\notag\\
{} & \cup \cdots \cup \ev{\calT_1}{\G\F \alpha^{-1}(s_n) \wedge \bigwedge_{i=1}^{n-1} \F\G\neg \alpha^{-1}(s_i)}.\notag
\end{alignat}
From what we have shown previously, we now get that for each $j\ge k+1$,
\[0=\Prob_\mu^{\calT_1}(\G\F \alpha^{-1}(s_j)) \ge \Prob_\mu^{\calT_1}(\G\F \alpha^{-1}(s_j)\wedge \bigwedge_{i=1}^{j-1} \F\G\neg \alpha^{-1}(s_i)). \]
And we conclude that
\begin{alignat}{7}
1 & = \Prob_\mu^{\calT_1}(\Paths(\calT_1))\notag\\
{} & = \sum_{j=1}^k \Prob_\mu^{\calT_1}(\G\F \alpha^{-1}(s_j)\wedge \bigwedge_{i=1}^{j-1} \F\G\neg \alpha^{-1}(s_i))\notag\\
{} & \leq \Prob_\mu^{\calT_1}(\F \alpha^{-1}(\calC)).\notag
\end{alignat}
We are now able to prove that $\calT_1$ is $\D(\mu,\calB)$. Fix $B\subseteq S_2$, we want to show that $\Prob_{\mu}^{\calT_1}(\F\alpha^{-1}(B)\vee\F\alpha^{-1}(\Btilde))=1$. We have that
\begin{alignat}{7}
{} & \Prob_{\mu}^{\calT_1}(\F\alpha^{-1}(B)\vee\F\alpha^{-1}(\Btilde))\notag\\
{} & \quad =\sum_{\begin{array}{c} {\scriptstyle C\in\mathsf{BSCC}(\calT_2) \text{ s.t.}}  \\[-.1cm]  {\scriptstyle \Prob_\mu^{\calT_1}(\F\alpha^{-1}(C))>0} \end{array}} \Prob_\mu^{\calT_1}(\F\alpha^{-1}(C))\cdot\Prob_{\mu}^{\calT_1}(\F\alpha^{-1}(B)\vee\F\alpha^{-1}(\Btilde)\mid \F\alpha^{-1}(C)).\notag
\end{alignat}
Now we fix some $C\in\mathsf{BSCC}(\calT_2)$ such that $\Prob_{\mu}^{\calT_1}(\F\alpha^{-1}(C))>0$. There are two cases:
\begin{itemize}
\item first if there is $s\in C$ such that $s\in B$, then $\alpha^{-1}(s)\subseteq \alpha^{-1}(B)$ and thus $\Prob_{\mu}^{\calT_1}(\F\alpha^{-1}(B)\vee\F\alpha^{-1}(\Btilde)\mid \F\alpha^{-1}(C))=1$;
\item or for each $s\in C$, $s\in\Btilde$ which implies that $\alpha^{-1}(C)\subseteq\alpha^{-1}(\Btilde)$ and it that case again $\Prob_{\mu}^{\calT_1}(\F\alpha^{-1}(B)\vee\F\alpha^{-1}(\Btilde)\mid \F\alpha^{-1}(C))=1$.
\end{itemize}
We finally conclude that
\begin{alignat}{7}
 \Prob_{\mu}^{\calT_1}(\F\alpha^{-1}(B)\vee\F\alpha^{-1}(\Btilde)) & =\sum_{\begin{array}{c} {\scriptstyle C\in\mathsf{BSCC}(\calT_2) \text{ s.t.}}  \\[-.1cm]  {\scriptstyle \Prob_\mu^{\calT_1}(\F\alpha^{-1}(C))>0} \end{array}} \Prob_\mu^{\calT_1}(\F\alpha^{-1}(C))\notag\\
{} & = \Prob_{\mu}^{\calT_1}(\F\alpha^{-1}(\calC))=1.\notag
\end{alignat}

    \end{proof}

\section{Technical results of Section~\ref{sec:main}}

\label{appendix:main}

\noindent\fbox{\begin{minipage}{\linewidth}
\attractorproduit* \end{minipage}}
\label{app-attractorproduit}

We first prove the following lemma.

\begin{lemma}\label{lemma:soundproduct1}
  Fix $\mu\in\Dist(S)$ and assume that $A\in\Sigma$ is a
  $\mu$-attractor for $\calT$. Then for each $q\in Q$, $A\times Q$ is
  a $(\mu \times \delta_q)$-attractor for $\calT\ltimes\calM$.
\end{lemma}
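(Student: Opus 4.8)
The plan is to reduce reachability of $A\times Q$ in $\calT\ltimes\calM$ to reachability of $A$ in $\calT$ by projecting onto the first component. Define $\pi\colon\Paths(\calT\ltimes\calM)\to\Paths(\calT)$ by $\pi\big((s_0,q_0)(s_1,q_1)\cdots\big)=s_0s_1\cdots$. For all $B_0,\dots,B_n\in\Sigma$ one has $\pi^{-1}(\Cyl(B_0,\dots,B_n))=\Cyl(B_0\times Q,\dots,B_n\times Q)$, so $\pi$ is measurable from $\calF_{\calT\ltimes\calM}$ to $\calF_\calT$, and moreover $\pi^{-1}\big(\ev{\calT}{\F A}\big)=\ev{\calT\ltimes\calM}{\F(A\times Q)}$, since a run of the product visits $A\times Q$ exactly when its first component visits $A$.

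The core step is the cylinder identity
\[
\Prob^{\calT\ltimes\calM}_{\mu\times\delta_q}\big(\Cyl(B_0\times Q,\dots,B_n\times Q)\big)=\Prob^{\calT}_{\mu}\big(\Cyl(B_0,\dots,B_n)\big)
\]
for every $q\in Q$, $n\ge 0$ and $B_0,\dots,B_n\in\Sigma$. I would derive it from Lemma~\ref{lemma:ProbProduct}: each rectangle $B_i\times Q$ lies in $\Sigma'$ with decomposition $A_{i,q'}=B_i$ for every $q'\in Q$, so the right-hand side of that lemma becomes $\sum_{u_1,\dots,u_n\in 2^{\AP}}\Prob^{\calT}_\mu\big(\Cyl(B_0\cap\calL^{-1}(u_1),\,B_1\cap\calL^{-1}(u_2),\dots,B_{n-1}\cap\calL^{-1}(u_n),\,B_n)\big)$. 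Since $\{\calL^{-1}(u)\mid u\in 2^{\AP}\}$ is a finite measurable partition of $S$, the cylinders indexed by distinct tuples $(u_1,\dots,u_n)$ are pairwise disjoint and their union is exactly $\Cyl(B_0,\dots,B_n)$; finite additivity of $\Prob^{\calT}_\mu$ then collapses the sum to $\Prob^{\calT}_\mu(\Cyl(B_0,\dots,B_n))$. (Equivalently, this identity can be proved directly by induction on $n$ from the definitions of $\kappa'$ and of $\mu\times\delta_q$, using that $\calM$ is deterministic and complete so the first component always evolves according to $\kappa$.)

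From the identity, the pushforward $\pi_{\#}\Prob^{\calT\ltimes\calM}_{\mu\times\delta_q}$ and $\Prob^{\calT}_\mu$ agree on every cylinder of $\Paths(\calT)$; as cylinders generate $\calF_\calT$ and are stable under finite intersections, Carathéodory's uniqueness of extension yields $\pi_{\#}\Prob^{\calT\ltimes\calM}_{\mu\times\delta_q}=\Prob^{\calT}_\mu$ on $\calF_\calT$. Consequently
\[
\Prob^{\calT\ltimes\calM}_{\mu\times\delta_q}\big(\F(A\times Q)\big)=\Prob^{\calT\ltimes\calM}_{\mu\times\delta_q}\big(\pi^{-1}(\ev{\calT}{\F A})\big)=\Prob^{\calT}_\mu\big(\F A\big)=1,
\]
the last equality being the hypothesis that $A$ is a $\mu$-attractor for $\calT$. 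Hence $A\times Q$ is a $(\mu\times\delta_q)$-attractor for $\calT\ltimes\calM$.

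The only delicate point is the bookkeeping in the cylinder identity, namely matching the label-indexed decomposition produced by Lemma~\ref{lemma:ProbProduct} with a plain cylinder of $\calT$; this becomes routine once one observes that the $\calL^{-1}(u)$'s partition $S$. Everything else is a standard transfer-of-measure argument.
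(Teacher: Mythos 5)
Your proof is correct and follows essentially the same route as the paper's: both hinge on Lemma~\ref{lemma:ProbProduct} applied to product cylinders of the form $\Cyl(B_0\times Q,\dots,B_n\times Q)$, collapsing the label-indexed sum via the fact that the sets $\calL^{-1}(u)$ partition $S$ (the paper does this only for the cylinders $\Cyl(S',\dots,S',A\times Q)$ and then passes to the event $\F(A\times Q)$ directly). Your packaging of the computation as the full pushforward identity $\pi_{\#}\Prob^{\calT\ltimes\calM}_{\mu\times\delta_q}=\Prob^{\calT}_\mu$, justified by uniqueness of extension, is if anything slightly more careful than the paper's final limiting step over the non-nested cylinders, but it is the same argument in substance.
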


\begin{proof}
  Fix $\mu\in\Dist(S)$ and $A\in\Sigma$ such that
  $\Prob_{\mu}^{\calT}(\F A)=1$. Fix $q\in Q$. We know that
\[\ev{\calT\ltimes\calM}{\F A\times Q} = \ev{\calT\ltimes\calM}{\bigcup_{n\in\IN} \Cyl(\overbrace{S',\ldots,S'}^{n\text{ times}}, A\times Q)}. \]
Then from Lemma~\ref{lemma:ProbProduct}, we know that for each $n\in\IN$
\begin{alignat}{7}
\Prob_{\mu \times \delta_q}^{\calT\ltimes\calM}(\Cyl(\overbrace{S',\ldots,S'}^{n\text{ times}}, A\times Q)) & = \sum_{u_1,\ldots,u_n\in 2^\AP} \Prob_{\mu}^{\calT}(\Cyl(\calL^{-1}(u_1),\ldots,\calL^{-1}(u_n), A))\notag\\
{} & = \Prob_{\mu}^{\calT}(\Cyl(\overbrace{S,\ldots,S}^{n\text{ times}}, A\times Q).\notag
\end{alignat}
As this holds true for each $n\ge 0$, we thus get that $\Prob_{\mu
  \times \delta_q}(\F A\times Q)=\Prob_{\mu}^{\calT}(\F A)=1$ from the
hypothesis. This concludes the proof.
\end{proof}

\begin{proof}[Proof of Lemma~\ref{attractorproduit}]
Fix $A\in\Sigma$ such that for each $\mu\in\Dist(S)$, $\Prob_{\mu}^{\calT}(\F A)=1$. We want to prove that for each $\nu\in\Dist(S\times Q)$, $\Prob_{\nu}^{\calT\ltimes \calM}(\F A\times Q)=1$. Fix $\nu\in\Dist(S\times Q)$ and compute:
\[\Prob_{\nu}^{\calT\ltimes\calM}(\F A\times Q) = \sum_{q\in Q} \nu(S\times\lbrace q\rbrace)\cdot \Prob_{\nu_{S\times\lbrace q\rbrace}}^{\calT\ltimes \calM}(\F A\times Q). \]
Note that $\nu_{S\times\lbrace q\rbrace}$ induces a distribution
$\nu_q\in\Dist(S)$ as follows: for each $B\in \Sigma$,
$\nu_q(B)=\nu_{S\times\lbrace q\rbrace}(B\times\lbrace
q\rbrace)$. Writing $\mu=\nu_q$ it then holds that
$\nu_{S\times\lbrace q\rbrace}=\mu \times \delta_q$. We then get, from the hypothesis and Lemma~\ref{lemma:soundproduct1}, that $\Prob_{\nu_{S\times\lbrace q\rbrace}}^{\calT\ltimes \calM}(\F A\times Q)=1$ for each $q\in Q$. Hence, $\Prob_{\nu}^{\calT\ltimes\calM}(\F A\times Q) = \sum_{q\in Q} \nu(S\times\lbrace q\rbrace)=1$ which concludes the proof.
\end{proof}

\bigskip
\noindent\fbox{\begin{minipage}{\linewidth}
\alphabar* \end{minipage}}
\label{app-alphabar}

\begin{proof}

We first show that $\calT_2\ltimes\calM$ is an $\alpha_\calM$-abstraction of $\calT_1\ltimes\calM$. It suffices to show that for each $\mu\in\Dist(S_1)$, for each $q, q'\in Q$ and for each $B_{q'}\in\Sigma_2$,
\begin{equation}\label{eq:abstrProduct}
\Prob_{\mu \times \delta_q}^{\calT_1\ltimes\calM}(\Cyl(S_1\times Q,
\alpha_\calM^{-1}(B_{q'}\times\lbrace q'\rbrace)))>0\Leftrightarrow
\Prob_{(\alpha_\calM)_{\#}(\mu \times \delta_q)}^{\calT_2\ltimes\calM}(\Cyl(S_2\times Q, B_{q'}\times\lbrace q'\rbrace))>0.
\end{equation}
Fix $\mu\in\Dist(S_1)$, $q, q'\in Q$ and $B_{q'}\in \Sigma_2$. Write
$u\in 2^\AP$ for the unique label such that $(q,u,q')\in E$. In order
to prove~\eqref{eq:abstrProduct}, we will use the fact that $\calT_2$
is an $\alpha$-abstraction of $\calT_2$. And in order to make the link
with the wanted equivalence, we will use
Lemma~\ref{lemma:ProbProduct}. We can establish that
$(\alpha_\calM)_{\#}(\mu \times \delta_q) = \alpha_{\#}(\mu) \times
\delta_q$. Indeed given $p\in Q$ and $C_p\in\Sigma_2$, it holds that
\begin{alignat}{7}
(\alpha_\calM)_{\#}(\mu \times \delta_q)(C_p \times\lbrace p\rbrace) &
= (\mu \times \delta_q)(\alpha^{-1}(C_p)\lbrace p\rbrace)\notag\\
{} & = \mu(\alpha^{-1}(C_p))\cdot \delta_q(p)\notag\\
{} & = \alpha_{\#}(\mu)(\alpha^{-1}(C_p))\cdot \delta_q(p) =
(\alpha_{\#}(\mu) \times \delta_q)(C_p\times\lbrace p\rbrace).\notag
\end{alignat}
Hence we get that
\begin{alignat}{7}
\Prob_{(\alpha_\calM)_{\#}(\mu \times \delta_q)}^{\calT_2\ltimes\calM}(\Cyl(S_2\times Q, B_{q'}\times\lbrace q'\rbrace))>0 & \Leftrightarrow \Prob_{\alpha_{\#}(\mu)}^{\calT_2}(\Cyl(\calL_2^{-1}(u), B_{q'}))>0\notag\\
{} & \Leftrightarrow \Prob_{\mu}^{\calT_1}(\Cyl(\calL_1^{-1}(u), \alpha^{-1}(B_{q'})))>0\notag\\
{} & \Leftrightarrow \Prob_{\mu \times \delta_q}^{\calT_1\ltimes\calM}(\Cyl(S_1\times Q, \alpha_\calM^{-1}(B_{q'}\times\lbrace q'\rbrace)))>0\notag
\end{alignat}
where the first and third equivalences hold from Lemma~\ref{lemma:ProbProduct}, and the second equivalence holds from the fact that $\calT_2$ is an $\alpha$-abstraction of $\calT_1$.

Finally, since $\calT_1\ltimes\calM$ is decisive w.r.t $\alpha_\calM^{-1}(B)$ for each $B\in\Sigma'_2$ and since $\calT_2\ltimes\calM$ is an $\alpha_\calM$-abstraction of $\calT_1\ltimes\calM$, Proposition~\ref{coro:DecSound} allows us to conclude that $\calT_2\ltimes\calM$ is a sound $\alpha_\calM$-abstraction of $\calT_1\ltimes\calM$.
\end{proof}

\bigskip
\noindent\fbox{\begin{minipage}{\linewidth}
We give here the (partial) counter-example mentioned in
Remark~\ref{rk:produit}.  \end{minipage}}
\label{app:cex}

\begin{example}
\label{ex:cex-sound}
We illustrate Remark~\ref{rk:produit} by exhibiting an example where
soundness (w.r.t. a fixed distribution) as well as decisiveness
properties do not transfer to the product with a deterministic Muller
automaton.

Consider the DMC $\calT_1$ depicted on the left of
Figure~\ref{Figure:SoundProd} which corresponds to the random walk
over $\IN$ from Example~\ref{Example:DMCRandomWalk}, when
$p=2/3$. Consider also the finite MC $\calT_2$ on the right of the
same figure. Clearly enough, $\calT_2$ is an $\alpha$-abstraction of
$\calT_1$ for the mapping $\alpha:\IN \to \lbrace s_0,s_1,s_2\rbrace$
defined as follows: $\alpha(0)=s_0$, $\alpha(1)=s_1$ and
$\alpha(i)=s_2$ for any $i\ge 2$.

Define $\mu=\delta_0$ as the initial distribution in $\calT_1$.  For
any $B\subseteq \IN$, $\Prob_\mu^{\calT_1}(\F B)=1$ and it follows that
$\calT_2$ is a $\mu$-sound $\alpha$-abstraction of $\calT_1$. It
should be noted that it is however not sound when considering
$\mu'=\delta_1$ as initial distribution. Indeed,
$\Prob_{\mu'}^{\calT_1}(\F \lbrace 0\rbrace)<1$ though
$\Prob_{\delta_{s_1}}^{\calT_2}(\F \lbrace s_0\rbrace)=1$ (and
$\delta_{s_1} = \alpha_{\#}(\mu')$).

\begin{figure}[!h]
  \centering
  \begin{tikzpicture}
    \tikzstyle{ptt}=[scale=1] \tikzstyle{loc}=[ptt,draw,circle,minimum
    size =1cm]; \tikzstyle{inv}=[ptt,circle,minimum size =1cm];
    \tikzstyle{fleche}=[->,>=stealth', rounded corners=1pt];
    \begin{scope}[scale=.8]
    \node[loc] (lzero) at (0,0) {$0$};
    \node[loc] (lun) at (2.2,0) {$1$};
    \node[loc] (ldeux) at (4.4,0) {$2$};
    \node[inv] (lint) at (6.6, 0) {$\cdots$};
    \draw[fleche] (lzero) to[bend left=30] node[ptt,midway, above] {$1$} (lun);
    \draw[fleche] (lun) to[bend left=30] node[ptt,midway, below]
    {$\frac 1 3$} (lzero);
    \draw[fleche] (lun) to[bend left=30] node[ptt,midway, above]
    {$\frac 2 3$} (ldeux);
    \draw[fleche] (ldeux) to[bend left=30] node[ptt,midway, below]
    {$\frac 1    3$} (lun);
    \draw[fleche] (ldeux) to[bend left=30] node[ptt,midway, above]
    {$\frac 2 3$} (lint);
    \draw[fleche] (lint) to[bend left=30] node[ptt,midway, below]
    {$\frac 1 3$} (ldeux);
    \end{scope} 
    \node[loc] (szero) at (7.8, 0) {$s_0$};
    \node[loc] (sun) at (10, 0) {$s_1$};
    \node[loc] (sdeux) at (12.2, 0) {$s_2$};
    \draw[fleche] (sdeux.15)..controls +(45:1) and
    +(315:1)..(sdeux.345) node[ptt,midway, right] {$\frac 2 3$};
    \draw[fleche] (szero) to[bend left=30] node[ptt,midway, above] {$1$} (sun);
    \draw[fleche] (sun) to[bend left=30] node[ptt,midway, below]
    {$\frac 1 3$} (szero);
    \draw[fleche] (sun) to[bend left=30] node[ptt,midway, above]
    {$\frac 2 3$} (sdeux);
    \draw[fleche] (sdeux) to[bend left=30] node[ptt,midway, below]
    {$\frac 1 3$} (sun);
  \end{tikzpicture} \caption{Left, $\calT_1$ a random walk over $\IN$
    and right, its sound finite abstraction $\calT_2$.}
  \label{Figure:SoundProd}
\end{figure}

Consider now the Muller automaton of Section~\ref{sec:prelim} on the left of
Figure~\ref{Figure:MullerAutomaton}. 
As stated in
Lemma~\ref{lemma:alphabar}, it holds that $\calT_2\ltimes\calM$ is an
$\alpha_\calM$-abstraction of $\calT_1\ltimes\calM$ where for each
$n\in\IN$ and each $q\in Q$,
$\alpha_\calM((n,q))=(\alpha(n),q)$. Consider
$\mu \times \delta_{q_0}=\delta_{(0,q_0)}$ and $B=\lbrace (s_0,
q_2)\rbrace$. It then holds that
$(\alpha_{\calM})_{\#}(\mu \times \delta_{q_0})=\delta_{(s_0, q_0)}$ and
that $\alpha_\calM^{-1}(B)=\lbrace (0, q_2)\rbrace$. It is easily
observed that starting in state $(0, q_0)$ (resp. $(s_0, q_0)$) in
$\calT_1\ltimes\calM$ (resp. $\calT_2\ltimes\calM$), then if we visit
in the future a state $(0, q)$ (resp. $(s_0,q)$) we will necessarily
get that $q=q_2$. Keeping this in mind, one can see that
$\Prob_{\delta_(s_0, q_0)}^{\calT_2\ltimes\calM}(\F B
)=1$ while
\[
\Prob_{\mu \times \delta_{q_0}}^{\calT_1\ltimes\calM}(\F
\alpha_\calM^{-1}(B)) = \Prob_{\delta_{(1,
    q_1)}}^{\calT_1\ltimes\calM}(\F \alpha_\calM^{-1}(B))
=\Prob_{\mu'}^{\calT_1}(\F \lbrace 0\rbrace)<1
\]
where the first equality holds from Lemma~\ref{lemma:integration} and
the second equality holds from Lemma~\ref{lemma:ProbProduct}. This
proves that $\calT_2\ltimes\calM$ is not
$(\mu \times \delta_{q_0})$-sound for $\calT_1\ltimes\calM$.

Now, observe that $\calT_1$ is decisive w.r.t. any set of states
$B\subseteq \IN$ from $\mu$ as we have seen that
$\Prob_{\mu}^{\calT_1}(\F B)= 1$ for any set of states $B$. It should
be noted that $\calT_1$ is not decisive by considering $\mu'$ as the
initial distribution and $B=\lbrace 0\rbrace$. In this case,
$\widetilde{\lbrace 0\rbrace}=\emptyset$ and thus
$\Prob_{\mu'}^{\calT_1}(\F \lbrace 0\rbrace \vee \F \widetilde{\lbrace
  0\rbrace}) = \Prob_{\mu'}^{\calT_1}(\F \lbrace
0\rbrace)<1$. Consider now $\calT_1\ltimes\calM$, we have already
shown that $\Prob_{\mu \times \delta_{q_0}}^{\calT_1\ltimes\calM}(\F
\lbrace (0,q_2)\rbrace)<1$. It can be established that
$\widetilde{\lbrace (0, q_2)\rbrace} = (2\IN +1)\times\lbrace q_0,
q_2\rbrace\cup 2\IN\times\lbrace q_1\rbrace$ which are states not
reachable from $(0, q_0)$. We deduce that $\Prob_{\mu \times
  \delta_{q_0}}^{\calT_1\ltimes\calM}(\F \lbrace (0,q_2)\rbrace \vee
\F \widetilde{\lbrace(0, q_2)\rbrace})=\Prob_{\mu \times
  \delta_{q_0}}^{\calT_1\ltimes\calM}(\F \lbrace
(0,q_2)\rbrace)<1$. This shows that $\calT_1\ltimes\calM$ is not
decisive w.r.t. $\lbrace (0, q_2)\rbrace$ from $\mu \times
\delta_{q_0}$.
\end{example}

\section{Technical results of Section~\ref{sec:qualitative}}

\label{appendix:qualitative}

\subsection{Additional technical results for
  Subsection~\ref{subsec:qual-basic}}
\label{app:add-qual}

\begin{lemma}
\label{lemma:QualTechnical}
For every $\mu\in\Dist(S)$
\begin{enumerate}[(i)]
\item $\Prob_{\mu}^{\calT}(\F B \wedge (\neg B\U \Btilde))=0$;
\item $\Prob_{\mu}^{\calT}(\G\F B \wedge \F \Btilde)=0$.
\end{enumerate}
\end{lemma}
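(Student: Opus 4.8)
The plan is to handle the two items separately; in both cases everything reduces to properties of the avoid-set $\widetilde B$ collected in Lemma~\ref{lem:Btilde}, together with the disintegration identity of Lemma~\ref{lemma:integration}. A preliminary observation used throughout is that $\widetilde B \cap B = \emptyset$: if $s \in B$ then $\Prob_{\delta_s}^{\calT}(\F B) \ge \delta_s(B) = 1$, so $s \notin \widetilde B$; in particular $\widetilde B \subseteq B^c$.

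For item~(i), I would first record the elementary event inclusion
\[
\ev{\calT}{\F B \wedge (\neg B \U \widetilde B)} \subseteq
\bigcup_{i \ge 0}\Big(\Cyl(\underbrace{\neg B, \dots, \neg B}_{i}, \widetilde B) \cap \ev{\calT}{\F B}\Big),
\]
which holds because a witness $i$ for $\neg B \U \widetilde B$ forces $s_0,\dots,s_{i-1} \in B^c$ and $s_i \in \widetilde B$. By countable subadditivity it then suffices to prove each term has probability $0$. Expanding $\ev{\calT}{\F B} = \bigcup_{m \ge 0}\Cyl(\underbrace{S,\dots,S}_{m}, B)$ and intersecting with the cylinder above, every contribution with $m \le i$ is empty (for $m < i$ it would require $s_m \in B \cap B^c$, and for $m = i$ it would require $s_i \in B \cap \widetilde B = \emptyset$), so only cylinders of the shape $\Cyl(\underbrace{\neg B,\dots,\neg B}_{i}, \widetilde B, \underbrace{S,\dots,S}_{m-i-1}, B)$ with $m > i$ remain. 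For such a cylinder, I would apply Lemma~\ref{lemma:integration} with splitting index chosen so that the ``tail'' factor is $\Prob_{\nu}^{\calT}(\Cyl(\underbrace{S,\dots,S}_{m-i-1}, B))$ with $\nu$ the conditional of $\Omega_\calT^{(\cdot)}(\mu)$ given the prefix, hence $\nu \in \Dist(\widetilde B)$; bounding that factor by $\Prob_{\nu}^{\calT}(\F B)$ and invoking Lemma~\ref{lem:Btilde} (second item) shows it is $0$, so every term, and therefore the whole union, has probability $0$.

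For item~(ii), I would use Lemma~\ref{lem:Btilde} (fourth item), which gives $\Prob_{\mu}^{\calT}(\F \widetilde B) = \Prob_{\mu}^{\calT}(\F \G \widetilde B)$, so the set of paths satisfying $\F \widetilde B$ but not $\F \G \widetilde B$ is $\mu$-null. It then remains to note that $\ev{\calT}{\G \F B} \cap \ev{\calT}{\F \G \widetilde B} = \emptyset$: since $\widetilde B \subseteq B^c$ we have $\ev{\calT}{\F \G \widetilde B} \subseteq \ev{\calT}{\F \G B^c}$, and $\ev{\calT}{\F \G B^c}$ is exactly the complement of $\ev{\calT}{\G \F B}$. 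Combining, $\ev{\calT}{\G \F B \wedge \F \widetilde B}$ is contained in the null set $\ev{\calT}{\F \widetilde B} \setminus \ev{\calT}{\F \G \widetilde B}$, whence $\Prob_{\mu}^{\calT}(\G \F B \wedge \F \widetilde B) = 0$. The main obstacle is the bookkeeping in~(i): one must keep the $\neg B$-prefix (a cruder over-approximation by $\Cyl(S,\dots,S,\widetilde B)$ would have positive measure), and one must make sure that after passing through the prefix the relevant distribution is genuinely supported on $\widetilde B$ — this is precisely what the splitting index in Lemma~\ref{lemma:integration} delivers. Item~(ii) is comparatively routine once Lemma~\ref{lem:Btilde} (fourth item) is available.
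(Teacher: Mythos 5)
Your proof is correct, but it takes a different route from the paper's on both items, so let me compare. For item~(i) the paper argues in one line: since the $\neg B$-prefix of the until excludes visiting $B$ before $\widetilde B$ is hit, the event equals $\ev{\calT}{\neg B \U (\widetilde B \wedge \F B)}$, which is relaxed to $\ev{\calT}{\F(\widetilde B \wedge \F B)}$ and declared null ``by definition of $\widetilde B$''. That last step is exactly the content you make explicit: decomposing into cylinders, splitting with Lemma~\ref{lemma:integration} at the position where $\widetilde B$ is entered, and invoking Lemma~\ref{lem:Btilde}~(second item) on the resulting conditional distribution supported on $\widetilde B$. So your (i) is essentially a fully detailed version of the paper's argument; the only cosmetic imprecision is that the tail factor produced by Lemma~\ref{lemma:integration} is $\Prob_{\Omega_\calT(\nu_i)}^{\calT}(\Cyl(S,\dots,S,B))$ with $\nu_i\in\Dist(\widetilde B)$, which one rewrites as $\Prob_{\nu_i}^{\calT}(\Cyl(S,\dots,S,B))\le\Prob_{\nu_i}^{\calT}(\F B)=0$ — the conclusion is unaffected. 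For item~(ii) the paper again reduces to $\Prob_\mu^{\calT}(\F(\widetilde B\wedge\F B))=0$, whereas you instead combine Lemma~\ref{lem:Btilde}~(fourth item), giving $\Prob_\mu^{\calT}(\F\widetilde B)=\Prob_\mu^{\calT}(\F\G\widetilde B)$, with the set-theoretic disjointness of $\ev{\calT}{\G\F B}$ and $\ev{\calT}{\F\G\widetilde B}$ (via $\widetilde B\subseteq B^c$). This is a genuinely different and arguably cleaner argument, since it rests entirely on an already-proven item of Lemma~\ref{lem:Btilde} rather than on the informal ``null by definition'' step; the paper's version has the advantage of reusing the exact same null event as in item~(i).
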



\begin{proof}
  We first prove point (i). Since $B$ cannot be reached while we are
  in $\neg B$, it holds that
  \[\Prob_{\mu}^{\calT}(\F B \wedge (\neg B\U \Btilde))=\Prob_{\mu}^{\calT}(\neg B \U (\Btilde\wedge\F B)). \]
  Relaxing the constraint on the until, we get $\Prob^\calT_\mu(\neg B
  \U (\widetilde{B} \wedge \F B)) \leq \Prob^\calT_\mu(\F
  (\widetilde{B} \wedge \F B))$, and the latter is null by definition
  of $\widetilde{B}$. This proves the first item.

  Point (ii) is straightforward from the definition of $\Btilde$ by
  observing that $\Prob_{\mu}^{\calT}(\G\F B \wedge \F
  \Btilde)\leq\Prob^\calT_\mu(\F (\widetilde{B} \wedge \F B))=0$.
\end{proof}

\begin{lemma}
  \label{lemma:QualTechnicalPD}
  For every $\mu\in\Dist(S)$, if $\calT$ is $\PD(\mu, B)$, then
  $\Prob_{\mu}^{\calT}(\F \Btilde \wedge \F \widetilde{\Btilde})=0$.
\end{lemma}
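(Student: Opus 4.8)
The plan is to exploit that $\widetilde{B}$ and $\widetilde{\widetilde{B}}$ are disjoint and that each of them behaves like a sink, so that a path reaching \emph{both} is impossible up to a null set. Essentially all the work is already packaged in Lemma~\ref{lem:Btilde}.

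First I would record that $\widetilde{B}\cap\widetilde{\widetilde{B}}=\emptyset$. Note that $\widetilde{B}\in\Sigma$ by Lemma~\ref{lem:Btilde} (first item), so $\widetilde{\widetilde{B}}$ is well defined. If $s\in\widetilde{B}$, then every path issued from $s$ already lies in $\widetilde{B}$ at time $0$, hence $\Prob_{\delta_s}^{\calT}(\F\widetilde{B})=1\neq0$, so that $s\notin\widetilde{\widetilde{B}}$. A direct consequence of this disjointness is that the event $\ev{\calT}{\F\G\widetilde{B}\wedge\F\G\widetilde{\widetilde{B}}}$ is literally \emph{empty}: a path witnessing it would, from some index on, sit simultaneously in $\widetilde{B}$ and in $\widetilde{\widetilde{B}}$.

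Next I would apply Lemma~\ref{lem:Btilde} (fourth item) twice. Applied to $B$, it gives $\Prob_{\mu}^{\calT}(\F\widetilde{B})=\Prob_{\mu}^{\calT}(\F\G\widetilde{B})$; applied to the measurable set $\widetilde{B}$ (in the role of ``$B$''), it gives $\Prob_{\mu}^{\calT}(\F\widetilde{\widetilde{B}})=\Prob_{\mu}^{\calT}(\F\G\widetilde{\widetilde{B}})$. Since $\ev{\calT}{\F\G\widetilde{B}}\subseteq\ev{\calT}{\F\widetilde{B}}$ and $\ev{\calT}{\F\G\widetilde{\widetilde{B}}}\subseteq\ev{\calT}{\F\widetilde{\widetilde{B}}}$, these equalities say that $\ev{\calT}{\F\widetilde{B}}\setminus\ev{\calT}{\F\G\widetilde{B}}$ and $\ev{\calT}{\F\widetilde{\widetilde{B}}}\setminus\ev{\calT}{\F\G\widetilde{\widetilde{B}}}$ are $\Prob_{\mu}^{\calT}$-null. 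From the inclusion
\[
\ev{\calT}{\F\widetilde{B}\wedge\F\widetilde{\widetilde{B}}}\subseteq\ev{\calT}{\F\G\widetilde{B}\wedge\F\G\widetilde{\widetilde{B}}}\cup\bigl(\ev{\calT}{\F\widetilde{B}}\setminus\ev{\calT}{\F\G\widetilde{B}}\bigr)\cup\bigl(\ev{\calT}{\F\widetilde{\widetilde{B}}}\setminus\ev{\calT}{\F\G\widetilde{\widetilde{B}}}\bigr)
\]
and subadditivity, I conclude $\Prob_{\mu}^{\calT}(\F\widetilde{B}\wedge\F\widetilde{\widetilde{B}})=0$, the first term being empty and the other two null.

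I do not anticipate any real obstacle: the only slightly delicate point is the null-set bookkeeping in the displayed inclusion, together with the observation that $\F\G\widetilde{B}\wedge\F\G\widetilde{\widetilde{B}}$ is \emph{exactly} empty (not merely null), which is precisely what the disjointness of the first step buys. One remark worth making is that this argument uses only Lemma~\ref{lem:Btilde} and not the hypothesis $\PD(\mu,B)$ stated in the lemma; that hypothesis is innocuous and simply records the context (the proof of Proposition~\ref{prop:qualsimple}) in which the result is invoked.
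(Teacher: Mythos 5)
Your proof is correct, but it follows a genuinely different route from the paper's. The paper argues by contradiction with the $\PD(\mu,B)$ hypothesis: it decomposes $\ev{\calT}{\F\Btilde\wedge\F\widetilde{\Btilde}}$ into events of the form $\F[=n]\Btilde\wedge\F[=m]\widetilde{\Btilde}$, shows that conditioned on such an event neither $B$ nor $\Btilde$ can be reached after time $q=\max(m,n)$ (by the defining properties of $\Btilde$ and $\widetilde{\Btilde}$), and thereby contradicts persistent decisiveness at horizon $q$. You instead combine the disjointness $\Btilde\cap\widetilde{\Btilde}=\emptyset$ with two applications of the ``absorbing'' property of avoid-sets (Lemma~\ref{lem:Btilde}, fourth item, which indeed holds for an arbitrary measurable set and so may legitimately be applied to $\Btilde$ in the role of $B$, since $\Btilde\in\Sigma$ by the first item): up to null sets a path visiting $\Btilde$ eventually stays there forever, likewise for $\widetilde{\Btilde}$, and no path can eventually stay forever in two disjoint sets. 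The null-set bookkeeping in your displayed inclusion is sound. What your approach buys is a strictly stronger statement: the conclusion holds for every STS and every $\mu$, with no decisiveness hypothesis at all — you are right that $\PD(\mu,B)$ is never used — and this stronger version still suffices for the downstream uses in Proposition~\ref{prop:qualsimple} and Proposition~\ref{prop-approx-buchi}. What the paper's approach buys is essentially nothing extra here; its proof is more self-contained in that it leans only on the definitions of $\Btilde$ and $\PD$, whereas yours delegates the measure-theoretic work to the fourth item of Lemma~\ref{lem:Btilde}, whose proof is itself a nontrivial conditioning argument.
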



\begin{proof}
  Assume that $\calT$ is $\PD(\mu, B)$, i.e. for each $p\ge 0$,
  $\Prob_{\mu}^{\calT}(\F[\ge p B] \vee \F[\ge p]
  \Btilde)=1$. Towards a contradiction, we suppose that
  $\Prob_{\mu}^{\calT}(\F \Btilde \wedge \F
  \widetilde{\Btilde})>0$. Since
  \[\ev{\calT}{\F \Btilde \wedge \F \widetilde{\Btilde}} = \bigcup_{n\ge 0} \bigcup_{m\ge 0} \ev{\calT}{\F[=n] \Btilde} \cap \ev{\calT}{\F[=m] \widetilde{\Btilde}}, \]
  we deduce that there are $n,m\ge 0$ such that
  $\Prob_{\mu}^{\calT}(\F[=n] \Btilde \wedge \F[=m]
  \widetilde{\Btilde})>0$. We write $e$ for the event
  $e=\ev{\calT}{\F[=n] \Btilde \wedge \F[=m]
    \widetilde{\Btilde}}$. We can show that
  $\Prob_{\mu}^{\calT}(\F[\ge n] B\mid e)=0$ and
  $\Prob_{\mu}^{\calT}(\F[\ge m] \Btilde\mid e)=0$. Indeed we get
  that:
  \begin{alignat}{7}
    \Prob_{\mu}^{\calT}(\F[\ge n] B\mid e) & = \frac{\Prob_{\mu}^{\calT}((\F[\ge n] B)\wedge e)}{\Prob_{\mu}^{\calT}(e)}\notag\\
    {} & \leq  \frac{\Prob_{\mu}^{\calT}(\F[\ge n] B\wedge \F[=n] \Btilde )}{\Prob_{\mu}^{\calT}(e)}\notag\\
    {} & = 0\notag
\end{alignat}
from the definition of $\Btilde$. The equality
$\Prob_{\mu}^{\calT}(\F[\ge m] \Btilde\mid e)=0$ is proved
similarly. Writing $q=\max(m,n)$, it follows that
\[\Prob_{\mu}^{\calT}(\F[\ge q] B\vee \F[\ge q] \Btilde \mid e)=0. \]
And since $\Prob_{\mu}^{\calT}(e)>0$, this contradicts the fact that
$\calT$ is $\PD(\mu, B)$, which concludes the proof.
\end{proof}




\subsection{Qualitative analysis of simple properties}

\begin{proposition}
  \label{app-qualsimple}
  Let $\mu \in \Dist(S)$. Then we have the following
    implications, yielding various characterizations for the
    qualitative analysis of STSs (under specified assumptions):
  \begin{description}
  \item[Almost-sure reachability]~
    \begin{itemize}
    \item if $\Prob^\calT_\mu(\F B) = 1$ then $\Prob^\calT_\mu(\neg B
      \U \widetilde{B}) =0$;
    \item if $\calT$ is $\D(\mu,B)$ and $\Prob^\calT_\mu(\neg B \U
      \widetilde{B}) = 0$, then $\Prob^\calT_\mu(\F B) = 1$.
    \end{itemize}
  \item[Almost-sure repeated reachability]~
    \begin{itemize}
    \item if $\Prob^\calT_\mu(\G \F B) = 1$ then $\Prob^\calT_\mu(\F
      \widetilde{B}) =0$;
    \item if $\calT$ is $\SD(\mu,B)$ and $\Prob^\calT_\mu(\F
      \widetilde{B}) =0$, then $\Prob^\calT_\mu(\G \F B) = 1$.
    \end{itemize}
  \item[Positive repeated reachability]~
    \begin{itemize}
    \item if $\calT$ is $\D(\mu,\widetilde{B})$ and if
      $\Prob^\calT_\mu(\G\F B) >0$, then $\Prob^\calT_\mu(\F
      \widetilde{\widetilde{B}})>0$;
    \item if $\calT$ is $\PD(\mu,B)$ and if $\Prob^\calT_\mu(\F
      \widetilde{\widetilde{B}})>0$, then $\Prob^\calT_\mu(\G\F B)
      >0$.
    \end{itemize}
  \end{description}
\end{proposition}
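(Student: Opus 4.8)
The plan is to prove each of the six implications separately; all the analytic work is already packaged into Lemma~\ref{lem:Btilde} (properties of avoid-sets), Lemma~\ref{lemma:QualTechnical}, Lemma~\ref{lemma:QualTechnicalPD}, and the equivalence $\SD(\mu,B)\iff\PD(\mu,B)$ (Lemma~\ref{lemma:EquivStrPers}), so what remains is bookkeeping with probabilities of measurable events and elementary union/complement bounds. For \textbf{almost-sure reachability}, the ``only if'' direction splits $\Prob^\calT_\mu(\neg B\U\Btilde)$ according to whether $\F B$ holds: the part intersected with $\F B$ has measure $0$ by Lemma~\ref{lemma:QualTechnical}(i), and the part intersected with $\neg\F B$ is at most $\Prob^\calT_\mu(\neg\F B)=1-\Prob^\calT_\mu(\F B)=0$. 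For the ``if'' direction, $\D(\mu,B)$ means $\Prob^\calT_\mu(\F B\vee\F\Btilde)=1$, which by Lemma~\ref{lem:Btilde}(5) equals $\Prob^\calT_\mu(\F B\vee(\neg B\U\Btilde))$, so a union bound gives $1\le\Prob^\calT_\mu(\F B)+\Prob^\calT_\mu(\neg B\U\Btilde)=\Prob^\calT_\mu(\F B)$. The \textbf{almost-sure repeated reachability} case is verbatim the same argument, using Lemma~\ref{lemma:QualTechnical}(ii) in place of (i) for the ``only if'' direction, and the definition of $\SD(\mu,B)$ together with a union bound for the ``if'' direction.

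The only mildly delicate part is \textbf{positive repeated reachability}. For the first implication, $\Prob^\calT_\mu(\G\F B)>0$ combined with Lemma~\ref{lemma:QualTechnical}(ii) forces $\Prob^\calT_\mu(\G\F B\wedge\neg\F\Btilde)=\Prob^\calT_\mu(\G\F B)>0$, hence $\Prob^\calT_\mu(\neg\F\Btilde)>0$; since $\D(\mu,\Btilde)$ asserts $\Prob^\calT_\mu(\F\Btilde\vee\F\widetilde{\widetilde{B}})=1$, almost every path of the positive-measure event $\neg\F\Btilde$ satisfies $\F\widetilde{\widetilde{B}}$, so $\Prob^\calT_\mu(\F\widetilde{\widetilde{B}})\ge\Prob^\calT_\mu(\F\widetilde{\widetilde{B}}\wedge\neg\F\Btilde)=\Prob^\calT_\mu(\neg\F\Btilde)>0$. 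For the converse I would deliberately avoid arguing pointwise from Dirac distributions at states of $\widetilde{\widetilde{B}}$ (persistent decisiveness from $\mu$ does not hand us decisiveness from such states); instead, use $\PD(\mu,B)\iff\SD(\mu,B)$, which gives $\Prob^\calT_\mu(\neg\G\F B\wedge\neg\F\Btilde)=0$, together with Lemma~\ref{lemma:QualTechnicalPD}, which gives $\Prob^\calT_\mu(\F\Btilde\wedge\F\widetilde{\widetilde{B}})=0$. From the latter, $\Prob^\calT_\mu(\neg\F\Btilde)\ge\Prob^\calT_\mu(\F\widetilde{\widetilde{B}}\wedge\neg\F\Btilde)=\Prob^\calT_\mu(\F\widetilde{\widetilde{B}})>0$, and on that positive-measure event $\G\F B$ holds almost-surely, so $\Prob^\calT_\mu(\G\F B)\ge\Prob^\calT_\mu(\G\F B\wedge\neg\F\Btilde)=\Prob^\calT_\mu(\neg\F\Btilde)>0$.

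The main obstacle is purely organisational: matching each of the six goals with the correct auxiliary lemma and the correct decisiveness notion, and, in the positive-repeated-reachability converse, recognising that one must route through $\PD\iff\SD$ plus Lemma~\ref{lemma:QualTechnicalPD} rather than through any state-wise statement about $\widetilde{\widetilde{B}}$. There is no genuinely hard analytic step inside this proposition, \emph{provided} the auxiliary lemmas are taken as given; the substantive content (the $\F[=n]$/$\F[\ge p]$ decompositions and the appeals to Lemma~\ref{lem:Btilde}) lives in their proofs, stated just above.
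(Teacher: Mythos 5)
Your proof is correct and follows essentially the same route as the paper's: each implication is discharged with the same auxiliary facts (Lemma~\ref{lem:Btilde}, fifth item; Lemma~\ref{lemma:QualTechnical}(i)--(ii); Lemma~\ref{lemma:QualTechnicalPD}; and the equivalence $\PD(\mu,B)\iff\SD(\mu,B)$), and the only differences are cosmetic (union bounds in place of equalities by disjointness, and routing the positive-repeated-reachability steps through the event $\neg\F\Btilde$ rather than stating the conclusion directly). Nothing to correct.
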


\begin{proof}
  We start with almost-sure reachability.  We start with the first
  implication. Since the event $\ev{\calT}{\F B}$ is almost-sure, we
  have
  \[ 
  \Prob^\calT_\mu(\neg B \U \widetilde{B}) = \Prob^\calT_\mu((\neg B
  \U \widetilde{B}) \wedge \F B)
  \]
  and then it is straightforward from point (i) of
  Lemma~\ref{lemma:QualTechnical}.

  In order to prove the other implication, we need the assumption that
  $\calT$ is $\D(\mu,B)$. We have that:
  \begin{alignat}{7}
    1 = \Prob_{\mu}^\calT(\F B \vee \F\Btilde) & = \Prob_{\mu}^\calT(\F B \vee (\neg B\U \Btilde)) \quad \text{from Lemma~\ref{lem:Btilde} (fifth item)}\notag\\
    {} & = \Prob_\mu^\calT(\F B) + \Prob_{\mu}^\calT(\neg B\U\Btilde)
    \quad \text{from Lemma~\ref{lemma:QualTechnical} (point
      (i)).}\notag
  \end{alignat}
  Then from $\Prob^\calT_\mu(\neg B \U \widetilde{B})=0$, we derive
  that $\Prob^\calT_\mu(\F B) =1$.

  \medskip We now consider almost-sure repeated reachability.  Since
  the event $\ev{\calT}{\G\F B}$ is almost-sure, we have
  \[
  \Prob^\calT_\mu(\F \widetilde{B}) = \Prob^\calT_\mu(\F \widetilde{B}
  \wedge \G\F B)
  \]
  and then it is straightforward from point (ii) of
  Lemma~\ref{lemma:QualTechnical}.

  In order to prove the second item, we assume that $\calT$ is
  $\SD(\mu,B)$, i.e. $\Prob^{\calT}_\mu(\G \F B \vee \F \widetilde{B})
  = 1$. By assumption, the event $\ev{\calT}{\F \widetilde{B}}$ has
  probability $0$, and thus $\ev{\calT}{\G \F B}$ is almost-sure.

  \medskip We now consider positive repeated reachability.  For the
  first item, we only require $\calT$ to be $\D(\mu,\widetilde{B})$,
  that is $\Prob^\calT_\mu(\F \widetilde{B} \vee \F
  \widetilde{\widetilde{B}})=1$.  Since the event
  $\ev{\calT}{\F\widetilde{B} \vee \F \widetilde{\widetilde{B}}}$ is
  almost-sure, we derive the equality:
  \[
  \Prob^\calT_\mu(\G\F B) = \Prob^\calT_\mu(\G\F B \wedge
  (\F\widetilde{B} \vee \F \widetilde{\widetilde{B}})) \enspace.
  \]
  Now from point (ii) of Lemma~\ref{lemma:QualTechnical}, we get that
  $\Prob^\calT_\mu(\G\F B \wedge (\F\widetilde{B} \vee \F
  \widetilde{\widetilde{B}}))=\Prob^\calT_\mu(\G\F B \wedge \F
  \widetilde{\widetilde{B}})$.  Therefore $\Prob^\calT_\mu(\G\F B
  \wedge \F\widetilde{\widetilde{B}}) =\Prob^\calT_\mu(\G\F B) >0$,
  and thus $\Prob^\calT_\mu(\F\widetilde{\widetilde{B}}) >0$.

  Assume now that $\calT$ is $\PD(\mu,B)$ and that
  $\Prob_{\mu}^{\calT}(\F
  \widetilde{\Btilde})>0$. Lemma~\ref{lemma:QualTechnicalPD} implies
  that $\Prob_{\mu}^{\calT}(\F \Btilde)<1$. Since $\PD(\mu, B)$
  implies $\SD(\mu,B)$, it follows that $\Prob_{\mu}^{\calT}(\G\F
  B\vee \F\Btilde) =1$ and thus, $\Prob_{\mu}^{\calT}(\G\F B)>0$.
\end{proof}

\section{Technical results of Section~\ref{sec:quantitative}}
\label{appendix:quantitative}

\noindent\fbox{\begin{minipage}{\linewidth}
\approxreach* \end{minipage}}
\label{app:approxreach}

\begin{proof}
We have that:
\begin{alignat}{7}
\lim_{n \to +\infty} p_n^{\mathsf{Yes}} + p_n^{\mathsf{No}} & = \Prob_{\mu}^{\calT}(\F B) + \Prob_{\mu}^{\calT}(\neg B \U \Btilde )\notag\\
{} & = \Prob_{\mu}^{\calT}(\F B \vee (\neg B \U \Btilde)) \quad\text{from point (i) of Lemma~\ref{lemma:QualTechnical}}\notag\\
{} & = \Prob_{\mu}^{\calT}(\F B\vee \F \Btilde) \quad \text{from Lemma~\ref{lem:Btilde} (fifth item)}\notag\\
{} & = 1 \enspace.\notag
\end{alignat}
The last equality comes from the decisiveness assumption.
\end{proof}

\noindent\fbox{\begin{minipage}{\linewidth}
\quantrepreach* \end{minipage}}
\label{app:quantrepreach}
 
\begin{proof}
  Since $\calT$ is $\D(\mu, \Btilde)$, it holds that
  $\Prob^\calT_\mu(\F\Btilde \vee \F\widetilde{\Btilde})=1$. Since
  $\calT$ is $\PD(\mu, B)$, one derives from
  Lemma~\ref{lemma:QualTechnicalPD} that
  \[\Prob^\calT_\mu(\F \widetilde{\Btilde}) = 1 - \Prob^\calT_\mu(\F \Btilde).\]
We can now show that
\[1 - \Prob^\calT_\mu(\F \Btilde) =\Prob^\calT_\mu(\G\F B). \] It
comes from the fact that $\PD(\mu, B)$ is equivalent to $\SD(\mu, B)$
and from point (ii) of Lemma~\ref{lemma:QualTechnical}. This
proves the first part of the corollary.

Finally, we can directly establish from
Lemma~\ref{lemma:QualTechnicalPD}
and from the hypothesis $\D(\mu, \Btilde)$, that $\lim_{n\to +\infty}
q^{\mathsf{Yes}}_n + q^{\mathsf{No}}_n=1$.
\end{proof}

\section{Technical results of Section~\ref{sec:appli}}
\label{app:appli}
  In this section, we argue why GSMPs with no cycle of immediate
  events are almost-surely non-zeno. We call \emph{immediate event} a
  fixed-delay event with delay $0$.

\begin{lemma}
  Let $\calG=(Q, \calE, \ell, u, f, \bfE, \Succ)$ be a GSMP with no
  cycle with immediate fixed-delay events. Fix $q_0 \in Q$ an initial
  state, and $\mu$ the measure assigning probability $1$ to
  $q_0$. Then:
  \[
  \Prob_\mu^{\calT_\calG}(\{\rho \in \Paths(\calT_\calG) \mid \rho\
  \text{is zeno}\}) = 0
  \]
 \end{lemma}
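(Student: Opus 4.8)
The plan is to track the delay of each step and to decompose every infinite run into blocks delimited by steps that elapse a positive delay, then argue that infinitely often one elapses a uniformly bounded‑below amount of time. For a configuration $\gamma=(q,\nu)\in S_\calG$ write $d(\gamma)=\min_{e\in\bfE(q)}\nu(e)$ for the delay of the step taken from $\gamma$ (well defined, and measurable on $S_\calG$); a run $\rho=\gamma_0\gamma_1\cdots$ is zeno exactly when $\sum_{n\ge 0}d(\gamma_n)<\infty$, so the set in the statement is measurable, and it suffices to prove that $\sum_n d(\gamma_n)=+\infty$ almost surely.

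First I would discard a null set of pathological runs. A step from $\gamma_n$ has delay $0$ only if some event $e$ enabled in $\gamma_n$ has $\nu_n(e)=0$; an event that is inherited (and not fired) along the transition into $\gamma_n$ has strictly positive remaining time by the definition of $E_0$, while a freshly (re)sampled variable‑delay event $e$ has remaining time $0$ with probability $0$ since $f_e$ is a density, and summing this over the finitely many sampling points occurring along each finite prefix (a countable union of null events, handled via the cylinder measure and Lemma~\ref{lemma:integration}) shows that, almost surely, no variable‑delay event is ever sampled to $0$. Hence, almost surely, every zero‑delay step is caused by an immediate event freshly enabled along the transition that produced the current configuration, so any maximal block of consecutive zero‑delay configurations traces a walk in the ``immediate graph'' on the finite set $Q$; the hypothesis that $\calG$ has no cycle of immediate events makes this graph acyclic, hence such blocks have length at most $|Q|$. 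Therefore almost every run has infinitely many steps with $d(\gamma_n)>0$, and since an immediate event always has remaining time $0$, each such step fires only non‑immediate events; by finiteness of $\calE$ some non‑immediate event $e^\star$ is fired infinitely often along almost every run.

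The core of the argument is then to show $\sum_n d(\gamma_n)=+\infty$ on a run along which a non‑immediate event $e^\star$ fires infinitely often. Each firing of $e^\star$ is preceded by a fresh enabling at which $e^\star$'s remaining time is set to some value $t$ — deterministically $t=\ell_{e^\star}=u_{e^\star}\ge 1$ if $e^\star$ is fixed‑delay, sampled from $f_{e^\star}$ if $e^\star$ is variable‑delay — and between that enabling and the ensuing firing exactly $t$ time units elapse; the step–intervals carrying these count‑downs, one per firing, are pairwise disjoint, so $\sum_n d(\gamma_n)$ dominates the sum of these values $t$. If $e^\star$ is fixed‑delay this already gives $+\infty$. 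If $e^\star$ is variable‑delay, I would fix $c>0$ with $\int_c^{u_{e^\star}}f_{e^\star}>0$; conditionally on the history up to the $k$‑th enabling of $e^\star$ (a stopping time of the natural filtration on $\Paths(\calT_\calG)$) the sampled value exceeds $c$ with this fixed positive probability, so by the conditional (Lévy) second Borel–Cantelli lemma it exceeds $c$ for infinitely many $k$, again forcing divergence. Combining the two steps yields $\Prob_\mu^{\calT_\calG}(\{\rho\in\Paths(\calT_\calG)\mid \rho\ \text{is zeno}\})=0$. The main obstacle I anticipate is the rigorous treatment of Step~three: identifying the successive enablings of $e^\star$ as stopping times on $(\Paths(\calT_\calG),\calF_{\calT_\calG})$ and reading off from the kernel $\kappa_\calG$ (together with Lemma~\ref{lemma:integration}) that the newly sampled delay is then distributed as $f_{e^\star}$ and conditionally independent of the past; secondary care is needed for the union bound in Step~one and for pinning down that ``no cycle of immediate events'' means acyclicity of the immediate graph, hence the $|Q|$ bound on consecutive zero‑delay steps.
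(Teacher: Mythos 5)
Your overall route is genuinely different from the paper's: you bound maximal blocks of zero-delay steps by $|Q|$ via acyclicity of the immediate graph, extract a single non-immediate event $e^\star$ that fires infinitely often, and apply a conditional Borel--Cantelli argument to its successive samplings. The paper instead cuts the run into windows of fixed length $N>|Q|\cdot|\calE|$, proves by a pigeonhole/counting argument that every such window contains a non-immediate event that is freshly enabled and subsequently fired \emph{within the window}, and concludes that each window elapses at least $d$ time units with conditional probability bounded below. Your first two steps (the null set of zero samples, the $|Q|$ bound on zero-delay blocks, and the disjointness of the enabling-to-firing intervals each contributing exactly the sampled value) are correct and cleaner than the paper's counting in some respects.

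However, there is a genuine gap in your final step. Lévy's conditional Borel--Cantelli, applied at the successive enablings of $e^\star$, shows that infinitely many \emph{enablings} sample a value exceeding $c$. But your divergence bound $\sum_n d(\gamma_n)\ge\sum_j t_{(j)}$ only sums over the enablings that are \emph{followed by a firing} of $e^\star$: an enabling after which $e^\star$ is preempted (disabled when some other event fires and leads to a state $q'$ with $e^\star\notin\bfE(q')$) contributes only $t-\nu(e^\star)$ at the moment of disabling, which can be arbitrarily small even when $t>c$. The subsequence of ``enablings followed by firings'' is determined by the future relative to each sampling, and it is biased toward small samples (a small sampled value is more likely to win the race and fire before $e^\star$ is disabled), so you cannot transfer the Borel--Cantelli conclusion to that subsequence; conversely, conditioning on ``followed by a firing'' destroys the claim that the sample exceeds $c$ with probability $\int_c^{u_{e^\star}}f_{e^\star}$. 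The paper's window construction avoids exactly this selection problem: with conditional probability at least $\lambda_0^{N|\calE|}$ \emph{all} values sampled within a window exceed $d$, and on that event, whichever event the deterministic pigeonhole argument exhibits as freshly-enabled-and-fired certifies that the window's duration is at least $d$, independently of which event it turns out to be. To repair your proof you would need an analogous ``all samples in a bounded stretch are large'' event, or a separate argument that infinitely many of the large samples do lead to firings.
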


 \begin{proof}[Sketch]
   Let $d>0$ be smaller than any constant appearing in the
   non-immediate events of $\calG$. There is $\lambda_0>0$ such that
   for every non-immediate event $e$, $\int_{t=d}^\infty f_e(t) \ud t
   \ge \lambda_0$.

   We consider a non-stochastic interpretation of $\calG$, where
   delays of events are selected non-deterministically in the supports
   of the distributions. Pick a finite run $\rho$ that can be
   generated that way from an initial configuration, and let $\gamma =
   (q,\nu)$ be its last configuration.  In any firable sequence of
   transitions $q \xrightarrow{E_1} q_1 \ldots \xrightarrow{E_N} q_N$
   of length $N>|Q| \cdot |\calE|$ from $\gamma$, there is an event
   which is newly enabled along that sequence, and
   there is $1 \le i < k \le N$ with $e \in \bfE(q_j)$ for every $i
   \le j < k$ and $e \in E_k$.

   Towards a contradiction assume it is not the case, then this means
   that each event $e$ in $E_i$ for some $1 \le i\le N$ is either an
   immediate event or an event in $\bfE(q) \cap \bigcap_{j<i}
   \bfE(q_j) \cap \bigcap_{j<i}E_j^c$ (that is, $e$ was already
   enabled in $q$, it is fired by $E_i$, and was not disabled
   inbetween). There can be at most $|\bfE(q)| \le |\calE|$ such
   events which are not immediate. Furthermore, by assumption, there
   is no cycle with only immediate events. Hence as soon as $N > |Q|
   \cdot |\calE|$, this is not possible. Hence, this implies the above
   claim.

   Hence with probability lower-bounded by $\lambda_0$, the duration
   of a continuation of $\rho$ along that sequence of edges will be
   larger than $d$. Hence, providing more details here, we deduce that
   almost-surely, runs will diverge.
 \end{proof}

\end{document}